\newcommand{\appref}[1]{\hyperref[#1]{{Appendix~\ref*{#1}}}}
\newcommand{\be}{\begin{eqnarray} \begin{aligned}}
\newcommand{\ee}{\end{aligned} \end{eqnarray} }
\newcommand{\benn}{\begin{eqnarray*} \begin{aligned}}
\newcommand{\eenn}{\end{aligned} \end{eqnarray*}}
\newcommand*{\textfrac}[2]{{{#1}/{#2}}}
\newcommand*{\bbR}{\mathbb{R}}
\newcommand*{\cB}{\mathcal{B}}
\newcommand*{\cF}{\mathcal{F}}
\newcommand*{\cG}{\mathcal{G}}
\newcommand*{\cH}{\mathcal{H}}
\newcommand*{\cJ}{\mathcal{J}}
\newcommand*{\cL}{\mathcal{L}}
\newcommand*{\cS}{\mathcal{S}}
\newcommand*{\cU}{\mathcal{U}}
\newcommand*{\cT}{\mathcal{T}}
\newcommand*{\cW}{\mathcal{W}}
\newcommand*{\supp}{\mathrm{supp}}
\newcommand{\bc}{\begin{center}}
\newcommand{\ec}{\end{center}}
\newtheorem{theorem}{Theorem}[section]
\newtheorem{lemma}[theorem]{Lemma}
\newtheorem{definition}[theorem]{Definition}
\newtheorem{corollary}[theorem]{Corollary}
\def\01{\{0,1\}}
\newcommand{\proj}[1]{|#1\rangle\langle#1|}
\newcommand*{\ExpE}{\mathbb{E}}
\newcommand*{\CZ}{\mathsf{CZ}}
\newcommand{\base}[2]{[{#2}]}
\newcommand*{\Uelem}[2]{\cal{U}_{\mathsf{elem}}^{#1,#2}} 
\newcommand*{\Uelemunbounded}[2]{\cal{U}^{#1,#2}} 
\DeclareSymbolFont{cyrletters}{OT2}{wncyr}{m}{n}
\DeclareMathSymbol{\Sha}{\mathalpha}{cyrletters}{"58}
\newcommand*{\size}{\mathsf{size}}
\newcommand{\gateerror}[1]{\mathsf{err}_{#1}}
\newcommand*{\post}{\mathsf{post}}
\newcommand*{\Var}{\sigma^2}
\newcommand*{\Varroot}{\sigma}
\newcommand*{\symradius}{\mathsf{symradius}}
\newcommand*{\energy}{\mathsf{energy}}
\newcommand*{\energytotal}{\mathsf{energy}_{\mathsf{tot}}}
\newcommand*{\tr}{\mathsf{tr}}
\newcommand*{\encmapgkp}{\mathsf{Enc}}
\renewcommand{\cal}[1]{\mathcal{#1}}
\newcommand{\bb}[1]{\mathbb{#1}}
\newcommand*{\bittransfer}[2]{\mathsf{Transf}^{#1}_{#2}}
\newcommand*{\vac}{\mathsf{vac}}
\newcommand*{\discretize}{\mathsf{discretize}}
\newcommand*{\sampscheme}{\mathsf{SAMP}}
\newcommand*{\poly}{\mathsf{poly}}
\newcommand*{\logicalgates}{\cG_n}
\newcommand{\gkpcoderect}[3]{\Sha\cal{GKP}_{#1}^{#2}[#3]}
\renewcommand{\ctrl}{\mathsf{ctrl}} 
\newcommand*{\subsset}{\mathsf{Subs}}
\newcommand*{\squeezingparam}{\overline{g}}
\newcommand*{\logicalQ}{\overline{Q}}
\newcommand*{\diam}{\mathsf{diam}}
\begin{document}

\title{Trading modes against energy}
\author{Lukas Brenner}
\author{Beatriz Dias}
\author{Robert Koenig}
\affil{Department of Mathematics, School of Computation, Information and Technology, \\ Technical University of Munich, 85748 Garching, Germany}
\affil{Munich Center for Quantum Science and Technology, 80799 Munich, Germany}
\date{\today}
\maketitle

\begin{abstract}
We ask how much energy is required to 
weakly simulate an~$n$-qubit quantum circuit (i.e., produce samples from its output distribution) by a unitary circuit 
in a hybrid qubit-oscillator model. The latter consists of a certain number of bosonic modes coupled to a constant  number of qubits by a Jaynes-Cummings Hamiltonian.  
We find that efficient approximate weak simulation of  an~$n$-qubit quantum circuit  of polynomial size with inverse polynomial error is possible  with
 (1) a linear number of bosonic modes and  a polynomial amount of energy,  or
 (2) a sublinear (polynomial) number of modes and a subexponential amount of energy, or
 (3) a constant number of modes and an exponential amount of energy. Our construction encodes qubits  into high-dimensional approximate Gottesman-Kitaev-Preskill (GKP) codes. It provides new insight into the trade-off between system size (i.e., number of modes) and the amount of energy required to perform quantum computation in the continuous-variable setting.
\end{abstract}

\tableofcontents

\section{Introduction}
Ever since the discovery of the first quantum algorithms, the question of which physical systems are 
most suited for realizing universal quantum computation has been under intense debate. There are now a handful of  competitive contenders which are being pursued experimentally, but the jury is still out on which approach is most promising in the long run. Furthermore, even when focusing on a single concrete physical platform, it often remains challenging to figure out how to best use the available resources.

A major reason for the difficulties arising when trying to use naturally occurring quantum systems for computation is the fact that these are typically associated with infinite-dimensional Hilbert spaces. In contrast, fundamental quantum algorithms are typically phrased in terms of qubits as basic building blocks.  The idealization of qubits as information carriers, and the fact that  multi-qubit operations can be approximated by finite universal gate sets (as shown by Solovay and Kitaev~\cite{Kitaev_1997}), 
is highly convenient  from several perspectives. For example, it brings significant simplifications to the problem of realizing fault-tolerant, i.e., noise-resilient computations by allowing to focus on a number of basic primitives such as magic state distillation~\cite{knill2004faulttolerantpostselectedquantumcomputation,BravyiKitaevMagic}. It facilitates the design of quantum algorithms, e.g., for computational problems arising in discrete mathematics. In addition, it is also of key importance when  trying to assess the power of quantum computing, i.e., when  studying questions related to computational complexity.  By the discrete nature of qubit-based computations 
(manifested, e.g., in efficient circuit descriptions), it can naturally be related and compared to basic (classical) computational models appearing in theoretical (classical) computer science.

Proposals for how to emulate the behavior of a qubit-based quantum computer by using infinite-dimensional systems (also referred to as oscillators or bosonic modes in the following) have been studied early on. A central goal here is to identify a set of elementary operations which 
\begin{enumerate}[(i)]
\item\label{it:computationallyuniversal}
 allow for universal quantum computation and
\item\label{it:experimentallyfeasible}
 which are experimentally feasible, i.e.,  realizable by basic physical components.
\end{enumerate}
 For example, bosonic linear optics operations (i.e., Gaussian unitaries and measurements applied to Gaussian states) clearly satisfy property~\eqref{it:experimentallyfeasible} with generators given by basic linear optics elements such as half-way mirrors. Unfortunately, however, bosonic linear optics operations do not satisfy property~\eqref{it:computationallyuniversal}: As shown by Knill, Laflamme and Milburn~\cite{KnillLaflammeMilburn2001}, 
corresponding computations can efficiently be simulated classically. Under standard complexity-theoretic assumptions, this means that these operations are not (quantum) computationally universal.

To meet both requirements~\eqref{it:computationallyuniversal} and~\eqref{it:experimentallyfeasible}, various models for CV quantum computation have been proposed which extend linear optics by different non-Gaussian operations. These schemes differ in the set of experimentally allowed operations, as well as the associated resource requirements. For example,
Cerf, Adami, and Kwiat~\cite{cerfadamikwiat98} gave a protocol for realizing an~$n$-qubit quantum computation by Gaussian operations, single-photon states and photon number counting (i.e., number-state measurements). This scheme requires an exponential number of modes, and thus suffers from a lack of scalability.

Knill, Laflamme and Milburn~\cite{KnillLaflammeMilburn2001} (KLM) subsequently gave a  protocol for universal computation based on
Gaussian operations and adaptive photon counting (i.e., photon number measurements). We refer to~\cite{KokMunroNemotoRaphDowling07} for a review of this and related protocols. The KLM protocol brings the required number of modes to simulate~$n$~qubits down to a polynomial in~$n$. It also motivated the complexity-theoretic result~\cite{AaronsonArkhipov}
 (now known under the term boson sampling), where evidence for the computational power of a computational model based on analogous circuits but with non-adaptive measurements was provided (see e.g., \cite{madsen2022quantum} for experimental work in this direction). 

In a different direction, Lloyd and Braunstein~\cite{LloydBraunstein99} argued that combining Gaussian operations (beam splitters, phase shifters and squeezers) with  evolution under a non-linear Kerr-type (or, in fact, any higher-order) Hamiltonian provides computational universality. Their arguments center on the Lie algebra generated by such evolutions and are thus primarily a proof-of-principle demonstration. In particular, no explicit procedure for translating a multi-qubit computation into the CV setting is provided. A rigorous analysis of this approach was recently given by Chabaud et al.~\cite{chabaud2024bcomplexity}. In particular, the authors of~\cite{chabaud2024bcomplexity} show that a model of CV quantum computation (more precisely, a certain complexity class) based on Gaussian unitary operations, evolution under a certain cubic Hamiltonian and number state measurements contains the class BQP of bounded-error polynomial-time quantum computation. In other words, these operations  provide quantum computational universality.

Given the extensive body of prior work showing how to exploit CV quantum systems for quantum computation, why should one try to propose and study new schemes? There are at least two main reasons for doing so: 
\begin{enumerate}[(I)]
\item\label{it:firstnewscheme}
First, the proposed schemes still make use of  several different types  of non-Gaussian operations which  may be experimentally challenging to implement. In particular, 
all the schemes mentioned above make use of photon number measurements. Such measurements are typically significantly more challenging to realize than homodyne or heterodyne detection (i.e., Gaussian measurements). The proposals~\cite{LloydBraunstein99,chabaud2024bcomplexity} additionally use non-linear unitary gates which  are also highly non-trivial to realize in experiments. (We note that the cubic phase gate considered in~\cite{chabaud2024bcomplexity} has also been proposed as a way of obtaining a universal gate set in quantum fault-tolerance based on Gottesman-Kitaev-Preskill (GKP) codes~\cite{gkp}, but its use in that context has also been questioned~\cite{hastrupmikkelmen21}.) New schemes can try to reduce the use of such sources of non-Gaussianity, or at least eliminate the use of different types of non-Gaussian operations such as unitaries and measurements.

\item\label{it:secondnewscheme}
Second, and more importantly, significant resource-theoretic aspects of quantum computation using CV schemes remain largely unexplored and require further study. Most significantly, unlike for qubits, the number of bosonic modes involved in a computation is not the only relevant measure determining scalability. Instead, it is necessary to consider the amount of energy required in a computation. We note that  -- while the importance of this aspect has been recognized in earlier work -- a detailed analysis for the considered schemes has mostly been missing.
Ref.~\cite{chabaud2024bcomplexity}
emphasizes the need to further study  
computational complexity under energy limitations. (The model CVBQP studied therein involves Gaussian operations, the cubic phase gate, and number state measurements, but does not incorporate energy considerations.) To our knowledge there are no prior results on the trade-off between system size (such as the number of modes) and the amount of energy expended when realizing a quantum computation.
\end{enumerate}
\subsection*{Our contribution}
We contribute to both points~\eqref{it:firstnewscheme} and~\eqref{it:secondnewscheme} above:

\begin{description}
\item[A new scheme for quantum computation in hybrid qubit-oscillator setups:] 
We introduce a new efficient scheme for realizing an~$n$-qubit quantum computation using oscillators coupled to a constant number of qubits.

Our scheme is distinguished by the fact that the set of operations supplementing (Gaussian) linear optics is different from those considered earlier, and quite minimal. In addition to bosonic modes equipped with linear optics operations, we use a constant number of auxiliary qubits, qubit operations and -- most importantly -- qubit-oscillator couplings of Jaynes-Cummings type. This simple set of operations is natively available 
in several setups such as superconducting circuits~\cite{Eickbusch_2022, CampagneEikbushetal20}. 
We refer to~\cite{liu2024hybridoscillatorqubitquantumprocessors} for an up-to-date and detailed review of the state of the art, and an extensive discussion of physical realizations of the operations we use here.  We stress that  unlike prior work (relying on photon number measurements), our constructions only use homodyne (position-) measurements on the oscillators. That is, all operations on the oscillators are Gaussian, and the only source of non-Gaussianity is the coupling to the qubits. 

While providing a significant simplification from a practical point of view
in suitable experimental setups, the use of these alternative operations to implement circuits  also means that the construction relies on a quite different encoding of qubit states in oscillators. Instead of using e.g., number states (or certain linear combinations thereof), our scheme is based on so-called comb states. These  can be viewed as code states of higher-dimensional approximate GKP codes~\cite{gkp}. Although our construction does not incorporate fault-tolerance considerations at present, the choice of this kind of encoding should facilitate the design of corresponding error correction procedures.

\item[Energy-versus-system size tradeoff analysis:]
We provide a detailed analysis showing how the number of modes can be traded off against the energy required: We introduce a family of protocols
for weakly simulating an~$n$-qubit computation, with each protocol covering a different range of system parameters~$(m,\energy,\varepsilon)$. Here~$m$~denotes the 
number of modes involved,~$\energy$ is a parameter determining an upper bound on the maximal amount of energy created in the execution of the protocol (as defined below), and~$\varepsilon$~determines the level of accuracy of simulation (in~$L^1$-distance). By covering different choices of system parameters, our construction becomes accessible to a larger range of experimental systems.

In addition to this practical aspect, this rigorous achievability result provides insights into the fundamental trade-off between the system size (quantified by the number~$m$ of modes) and the amount of energy required (quantified by~$\energy$). 
We also establish new lower bounds on the amount of energy required to effectively encode an~$n$-qubit Hilbert space into a number of oscillators. These provide evidence that at least in some limiting cases, our construction is optimal (i.e., requires the minimal amount of energy possible).

We note that these results make first steps in the direction  of formalizing  computational complexity of CV quantum computation under energy constraints, a fundamental question put forward in~\cite{chabaud2024bcomplexity}. Indeed, it is straightforward to define  complexity classes analogous to CVBQP (introduced in~\cite{chabaud2024bcomplexity}) 
which capture the power of a hybrid qubit-oscillator model given a tuple~$(m,\energy)$ of system parameters (scaling with the problem size). Our results 
on simulating~$n$-qubit circuits can then be specialized to the statement that the corresponding complexity classes contain~BQP. 

\end{description}

\subsubsection*{Outline}
In Section~\ref{sec:problemstatement} we introduce the physical setup and the computational problem we consider. In Section~\ref{sec:mainresult} we give our main result. In Section~\ref{sec:proofmain} we give the proof of this result. Finally, we conclude in Section~\ref{sec:discussion}.

\section{Problem statement\label{sec:problemstatement}}
Let us state the problem we consider in detail. We first define the qubit-oscillator model. We then formally introduce the computational problem of sampling from the output distribution of an~$n$-qubit circuit. 

\paragraph{The qubit-oscillator model.}The hybrid qubit-oscillator model  (see e.g.,~\cite{liu2024hybridoscillatorqubitquantumprocessors} for a recent review) describes a setting with~$m$ oscillators and~$r$~qubits, i.e., with Hilbert space~$L^2(\mathbb{R})^{\otimes m}\otimes(\mathbb{C}^2)^{\otimes r}$.  It assumes that in addition to
\begin{enumerate}[(A)]
\item \label{it:elem1}
preparation of the vacuum state~$\ket{\vac}$ on any mode and of the computational basis state~$\ket{0}$ on any qubit, as well as
\item \label{it:elem2}
computational basis measurement of any qubit, and homodyne (position) measurement of any mode,
\end{enumerate}
 the following unitary operations are available:
\begin{enumerate}[(a)]
\item\label{it:elementaryhybridfirst}

Single-qubit unitaries on any qubit or two-qubit unitaries on any pair of qubits. Without loss of generality, we may restrict to single-qubit Cliffords and the $T$-gate, as well as two-qubit controlled-phase gates~$\CZ$.
\item
Single-mode displacements of (any) constant strength on any mode, i.e., any operator of the form~$e^{i(\alpha Q_k+\beta P_k)}$  with~$|\alpha|,|\beta|$ constants (independent of e.g., the problem size considered). Here~$Q_k,P_k$ denote the canonical position- and momentum operators on the~$k$-th mode for~$k\in \{1,\ldots,m\}$.
\item
Single-mode squeezing operators of constant strength applied to any mode. For a constant~$\alpha>0$, this is defined as~$M_\alpha=e^{-i(\log \alpha) (Q_kP_k+P_kQ_k)/2}$  when acting on mode~$k\in \{1,\ldots,m\}$. 
\item\label{it:elementaryhybridlast}
Qubit-controlled single-mode phase space displacements of bounded strength. This takes the form
\begin{align}
\ctrl_je^{i(\alpha Q_k-\beta P_k)}&= I\otimes \proj{0}_j+ e^{i(\alpha Q_k-\beta P_k)} \otimes \proj{1}_j
\end{align}
when controlled on the~$j$-th qubit and acting on the~$k$-th mode, where~$|\alpha|,|\beta|$ are constants,~$j\in \{1,\ldots,r\}$  and~$k\in \{1,\ldots,m\}$. 
\end{enumerate}
We call the set of unitary operations~\eqref{it:elementaryhybridfirst}--\eqref{it:elementaryhybridlast} acting on the space~$L^2(\mathbb{R})^{\otimes m} \otimes (\mathbb{C}^2)^{\otimes r}$ the set of elementary (unitary) operations in the hybrid (qubit-oscillator) model and denote it by~$\Uelem{m}{r}$. 
(We note that this set of unitary operations is more restricted than the one considered e.g., in~\cite{brenner2024complexity}, where additionally (controlled) single-mode phase space rotations are included.)

Here we study the computational power of this model, or, more precisely, of non-adaptive quantum circuits composed of these operations. (Non-adaptivity here refers to the fact that measurement results are not used to (classically) control further  operations.) 
 Specifically, we ask if circuits in the qubit-oscillator model
 with parameters~$(m,r)$ can (approximately) sample from the output distribution of an~$n$-qubit quantum circuit. Clearly, this is trivial if the number~$r$ of available qubits satisfies~$r\geq n$. We will thus focus on the case where~$r$ is constant (in fact,~$r=3$ in our construction).

\paragraph{The output distribution of an~$n$-qubit quantum circuit.}
Let us define the sampling problem considered in more detail. 
Consider an~$n$-qubit system and the set~$\logicalgates$ consisting of all single-qubit~$T$-gates, single-qubit Clifford gates, and two-qubit controlled-phase gates~$\CZ$ on any pair of qubits. Let~$U=U_s\cdots U_1$ be a circuit  consisting of~$s$~gates, where~$U_t\in \logicalgates$ for every~$t\in \{1,\ldots,s\}$. We  write~$s=\size(U)$ for the (circuit) size of~$U$. 
We are interested in the output distribution 
\begin{align}
p(x)&=|\langle x,U0^n\rangle |^2\qquad\textrm{ for }\qquad x\in\{0,1\}^n\ \label{eq:samplingdistributionmain}
\end{align}
of measurement outcomes when applying such a circuit to the initial state~$\ket{0^n}=\ket{0}^{\otimes n}$, and subsequently measuring all qubits in the computational basis.

\paragraph{Sampling in the qubit-oscillator model.}We ask if the distribution~\eqref{eq:samplingdistributionmain} on~$n$~bits can approximately be  sampled from by a circuit~$V$ in a hybrid qubit-oscillator setup with~$m$ oscillators and~$r$~qubits.   Concretely, we consider efficient  procedures of the following form, where efficiency is expressed in terms of the number of qubits~$n$ and the circuit size~$s=\size(U)$ of the circuit~$U$:
\begin{enumerate}[(1)]
\item\label{it:firststepprep}
The state~$\ket{\Phi^{(0)}}=\ket{\vac}^{\otimes m}\otimes \ket{0}^{\otimes r}$ is prepared initially.
\item
An efficient (non-adaptive) 
 qubit-oscillator circuit~$V=V_T\cdots V_1$, where~$V_t\in \Uelem{m}{r}$ for every~$t\in \{1,\ldots,T\}$,
  is applied to~$\ket{\Phi^{(0)}}$.
  Efficiency here means that the size~$T=\size(V)$ of the qubit-oscillator circuit is at most~$T=\poly(n,s)$, i.e., polynomial in the number~$n$ of qubits and the size~$s=\size(U)$ of the circuit~$U$ considered.
  \item
  In the resulting state~$V\ket{\Phi^{(0)}}$,   every bosonic mode is measured using a homodyne (position) measurement, and 
 every qubit is measured in the computational basis. 
  This results in a measurement outcome
\begin{align}
(y,z)&=\left((y_1,\ldots,y_m),(z_1,\ldots,z_r)\right)\in\mathbb{R}^m\times\{0,1\}^r\ .
\end{align}
\item\label{it:laststepsampling}
An (efficiently computable) ``post-processing'' function~$\post:\mathbb{R}^m\times\{0,1\}^r\rightarrow \{0,1\}^n$ is applied to the measurement outcomes, yielding an output~$x=\post(y,z)$. 
\end{enumerate}
 We note that for ``efficient computability'' to make sense, we have to restrict to machine-precision arithmetic in principle. However, it will be clear from our results that these are robust to rounding errors. 
  For brevity, we therefore omit a more detailed discussion of this aspect. We note, however, that it has important physical implications: For example,  the homodyne measurements do not need to be sharp (in the sense of resolving any arbitrarily small length-scale).

We call a pair~$\left(V,\post\right)$ 
defining a procedure specified by   
Steps~\eqref{it:firststepprep}--\eqref{it:laststepsampling}
a sampling scheme on~$m$ oscillators and~$r$~qubits. In more detail, let us fully specify the distribution~$q$ over outputs~$x\in \{0,1\}^n$ produced by such a sampling scheme. The 
state produced by the procedure (before the measurements) can be written as 
\begin{align}
\ket{\Phi^{(t)}}&=V\ket{\Phi^{(0)}}=\sum_{z\in \{0,1\}^r}\sqrt{\lambda(z)} \ket{\Psi^{(t)}_z}\otimes\ket{z}\in L^2(\mathbb{R}^m)\otimes (\mathbb{C}^2)^{\otimes r}
\end{align}
where~$\lambda(z)\geq 0$,~$\sum_{z\in\{0,1\}^r}\lambda(z)=1$ and~$\{|\Psi^{(t)}_z\rangle\}_{z\in \{0,1\}^r}\subset L^2(\mathbb{R}^m)$ are normalized (but not necessarily pairwise orthogonal) states of the~$m$~oscillators. This implies that the  measurement outcome~$z\in \{0,1\}^r$ on the qubits is obtained with probability~$\lambda(z)$.
Conditioned on getting an outcome~$z\in \{0,1\}^r$ when measuring the qubits, the conditional probability density function for obtaining
an outcome~$y\in \mathbb{R}^m$ from the 
homodyne measurements is given by
\begin{align}
f_{Y|Z=z}(y)&= |\Psi^{(t)}_z(y)|^2\qquad\textrm{ for }\qquad y\in\mathbb{R}^m\ .
\end{align}
After post-processing, the output~$x\in\{0,1\}^n$ is therefore produced with probability
\begin{align}
q(x)&=\sum_{z\in \{0,1\}^r} \lambda(z) \int_{\post_z^{-1}(\{x\})} f_{Y|Z=z}(y)dy\qquad\textrm{ where }\qquad \post_z(y)=\post(y,z)\ .\label{eq:qdistribution}
\end{align}
In summary, our procedure produces a sample~$x\in \{0,1\}^n$ from the distribution~$q$.

Let us write~$\sampscheme_{m,r}$ for the set of all distributions of the form~\eqref{eq:qdistribution} produced by   sampling schemes on~$m$~oscillators and~$r$~qubits. Our 
goal is to approximate the distribution~$p$ defined by Eq.~\eqref{eq:samplingdistributionmain} by a sampling scheme in~$L^1$-norm. Given a fault-tolerance/approximation parameter~$\varepsilon\in (0,1)$, our question therefore is the following:  
\begin{quote}
{\bf Question 1}: Is there is a distribution~$q\in \sampscheme_{m,r}$ such that~$\|q-p\|_1 \leq \varepsilon$? That is, can the distribution~$p$ be sampled from with error~$\varepsilon$ 
in the qubit-oscillator model using~$m$ oscillators and~$r$~qubits?
\end{quote}

In fact, we are interested in a more refined question: We would  like to know if there is a sampling scheme with limited energy.  We define the (amount of) energy  of a qubit-oscillator state~$\rho\in \cB(L^2(\mathbb{R})^{\otimes m}\otimes (\mathbb{C}^2)^{\otimes r})$ as 
\begin{align}
\energy\left(\rho
\right)&=\max_{\alpha\in \{1,\ldots,m\}}
\tr\left((Q_\alpha^2+P_\alpha^2)\rho\right)\ , \label{eq:defsqueezemain}
\end{align}
where $Q_\alpha$ and $P_\alpha$ are the canonical position and momentum operators on the mode~$B_\alpha$. 
In other words, $\energy(\rho)$ is the maximum amount of energy contained in any single mode of~$\rho$. 
 Finally, we define the (maximal) energy of a qubit-oscillator circuit~$V=V_T\cdots V_1$ on input~$\ket{\Phi^{(0)}}$ as 
\begin{align}
\energy(V)&=\max_{0\leq t\leq T}  \energy\left(
V_t\cdots V_1\ket{\Phi^{(0)}}\right)\, ,
\end{align}
In other words, the energy of the circuit~$V$ is the maximal energy of any state encountered in the execution of~$V$. For a given amount~$\energy\in (0,\infty)$ of energy, let us write~$\sampscheme_{m,r}(\energy)$ for the set of distributions obtained by sampling schemes~$(V,\post)$ whose energy is limited by~$\energy(V)\leq \energy$. 
The refined question we address then is the following:
\begin{quote}
{\bf Question 2}: Is there is a distribution~$q\in \sampscheme_{m,r}(\energy)$ such that~$\|q-p\|_1 \leq \varepsilon$? In other words, is it possible to use~$m$~oscillators and~$r$~qubits to sample from~$p$ with error~$\varepsilon$  without  generating more energy than specified by~$\energy$?
\end{quote}

\section{Main result: Energy versus space tradeoff\label{sec:mainresult}}
Our main result is the following:
\begin{theorem}\label{thm:main}
There are constants~$C,\alpha,\beta,\gamma>0$ such that the following holds.
Let~$p$ be the output distribution 
of an~$n$-qubit circuit
$U=U_s\cdots U_1$  of size~$s$, see Eq.~\eqref{eq:samplingdistributionmain}. 
Let~$m\in\mathbb{N}$ be a certain number of modes and~$\energy>0$ an upper bound on the amount of available energy. Then there is a distribution~$q\in \sampscheme_{m+1,3}(\energy)$
such that
\begin{align}
\|q-p\|_1 \leq C\cdot (s+m)^{\alpha} \cdot 2^{\beta n/m} \cdot \energy^{-\gamma}=:\varepsilon\ . \label{eq:epsimnmrelationship}
\end{align}
In other words, the distribution~$p$ can be sampled from with an error~$\varepsilon$ in~$L^1$-distance using~$m+1$~oscillators,~$3$ qubits and energy bounded by~$\energy$.
\end{theorem}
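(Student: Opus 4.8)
The plan is to simulate the $n$ logical qubits by packing them into the $m$ data oscillators, using the one extra oscillator and the three physical qubits as a fixed-size workspace through which the gates of $U$ are applied one at a time. \emph{Encoding.} Partition the $n$ logical qubits into $m$ registers of $k:=\lceil n/m\rceil$ qubits (padding the last register with $\ket{0}$'s), and store the $j$-th register in the $j$-th oscillator as a qudit of dimension $D:=2^k$ in an approximate (rectangular) GKP code. First I would fix, as functions of $\energy$ and $D$, the two GKP parameters — peak squeezing and envelope width — so that (i) the resulting comb states have energy at most $\energy$ in the sense of Eq.~\eqref{eq:defsqueezemain}, and (ii) their deviation from the corresponding ideal GKP codewords is at most some $\eta=\eta(D,\energy)$ with $\eta\lesssim(D/\energy)^{c}$ for a universal constant $c>0$. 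Since $D=2^{\Theta(n/m)}$, this already exhibits the source of the $2^{\beta n/m}$ factor: $\eta\lesssim 2^{\Theta(n/m)}\energy^{-c}$.

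\emph{Gadgets in $\Uelem{m+1}{3}$.} From displacements, constant-strength squeezings, qubit operations, and qubit-controlled displacements only, I would build the following approximately unitary gadgets: (a) preparation of an approximate GKP qudit of dimension $D$ on an oscillator (to refill a register after a logical qubit has been removed); (b) a bit-transfer gadget $\bittransfer{}{}$ coherently swapping the least-significant logical qubit of a GKP register with a physical qubit; and (c) a bit-rotation gadget on a GKP register — essentially multiplication by $2$ modulo $D$, realized by a bounded squeezing together with a wrap-around correction — so that, conjugating $\bittransfer{}{}$ with its powers, any of the $k$ logical qubits of a register can be addressed. Single-qubit Cliffords, the $T$-gate, and $\CZ$ on the physical qubits are already elementary. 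Each gadget's error would then be bounded in terms of $\eta$, giving a per-gadget trace-distance error $\delta\lesssim\poly(D)\cdot\energy^{-\gamma'}=2^{\Theta(n/m)}\energy^{-\gamma'}$.

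\emph{Compilation, energy, and error.} Given $U=U_s\cdots U_1$, I would replace each $U_t$ (acting on one or two logical qubits) by: transfer the relevant logical qubit(s) to physical qubits using $\poly(k)$ bit-rotation and bit-transfer gadgets, apply the corresponding elementary qubit gate, transfer back; the final homodyne measurement of all oscillators, together with a post-processing function decoding each comb reading to its $k$-bit logical value, realizes the measurement of $p$. Counting elementary operations gives $T=\size(V)=\poly(n,s,m)$. For the energy bound I would check that every intermediate state has energy $O(\energy)$: the GKP states do by construction, constant-strength displacements add $O(1)$, and the squeezings inside a gadget are undone before it ends, so the transient blow-up is a factor $2^{O(k)}$ absorbed into the $2^{\beta n/m}$ term (or kept $O(1)$ by interleaving); rescaling the constants keeps us within budget. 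By the triangle inequality over the $T$ gadget calls the state produced by $V$ is within trace distance $T\delta$ of the ideally encoded final state, and since trace distance does not increase under the measurement and the classical post-processing, $\|q-p\|_1\le T\delta\le C(s+m)^{\alpha}2^{\beta n/m}\energy^{-\gamma}$, which is Eq.~\eqref{eq:epsimnmrelationship}.

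\emph{Main obstacle.} The crux is step (b): realizing the coherent bit-transfer (and the GKP-preparation) gadget from this restricted gate set — with \emph{no} two-mode Gaussian gates, the only coupling being qubit-controlled single-mode displacements — and proving that its error is controlled by $\eta$ uniformly in $D$. The secondary difficulty is the energy accounting for the intra-gadget squeezings, which must not accumulate across the $\poly$-many invocations, together with the routine check that the homodyne measurements need only resolve length scales polynomial in $1/\energy$, so that the final ``machine-precision'' decoding is legitimate.
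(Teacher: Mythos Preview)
Your proposal is correct and follows essentially the same route as the paper: pack $\ell=\lceil n/m\rceil$ logical qubits per mode into a $2^\ell$-dimensional rectangular GKP code, use the extra mode and three qubits as catalytic workspace, implement each gate of $U$ by bit-transferring the relevant logical qubits onto physical qubits and back, read out by homodyne plus rounding, and bound the $L^1$-error by a telescoping triangle inequality over $O(s)$ gadget calls. The paper handles the energy accounting more systematically via ``fine-grained moment-limiting functions'' rather than your informal argument, and it imports the bit-transfer gadget (your identified crux) from a companion paper, but the architecture and the final chain of inequalities are the same.
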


We obtain explicit constants in the proof of Theorem~\ref{thm:main} (see Eq.~\eqref{eq:mainbound} in Section~\ref{sec:proofmain}).  
As an immediate consequence of Theorem~\ref{thm:main} we obtain the following upper bound on the amount of energy 
required, given a number~$m=m(n)$ of available modes  and a desired error~$\varepsilon=\varepsilon(n)$ (both possibly given as a function of the number~$n$~of qubits in a circuit family). 
\begin{corollary} \label{cor:squeezingtradeoff}
There are constants~$C, \delta, \mu>0$ such that the following holds. 
  Consider the problem of sampling from the output distribution(s)~$p_n$ of a (family of)~$s(n)$-size circuit(s) on~$n$~qubits. Let~$ m(n)$ be the number of modes used.
  Let~$\varepsilon(n)$ 
 be a desired $L^1$-distance error. Here~$s,m:\mathbb{N}\rightarrow\mathbb{N}$ 
  and~$\varepsilon:\mathbb{N}\rightarrow (0,2]$ are functions of the number of qubits~$n$ considered. Define the function
\begin{align}
  \energy(n):=C\cdot (2^{n/m(n)})^\delta \cdot (s(n)/\varepsilon(n))^\mu\ .
\end{align} 
Then the following holds. There is a distribution
$q_n\in \sampscheme_{m(n)+1,3}(\energy(n))$
such that~$\|q_n-p_n\|_1 \leq \varepsilon(n)$.

In particular, for a circuit of polynomial size~$s(n)=O(\mathsf{poly}(n))$, 
an inverse polynomial sampling error~$\varepsilon(n)=O(1/\mathsf{poly}(n))$ can be achieved using~$3$~qubits and
\begin{enumerate}[(i)]
\item
a linear number~$m=\Theta(n)$ of modes with polynomial energy~$\energy(n)=O(\mathsf{poly}(n))$, or
\item
a sublinear, polynomial number~$m=\Theta(n^\alpha)$,~$\alpha \in (0,1)$ of modes with subexponential energy~$\energy(n)=2^{O(n^{1-\alpha})}$, or 
\item
a constant number~$m=\Theta(1)$ of modes and exponential energy~$\energy(n)=2^{O(n)}$. 
\end{enumerate}
\end{corollary}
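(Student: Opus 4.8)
The plan is to derive Corollary~\ref{cor:squeezingtradeoff} directly from Theorem~\ref{thm:main} by treating the energy budget as a free parameter and \emph{inverting} the error estimate~\eqref{eq:epsimnmrelationship}. Write~$C_0,\alpha,\beta,\gamma$ for the constants of Theorem~\ref{thm:main}. Applied with~$m=m(n)$ modes and an energy budget~$\energy(n)$, the theorem furnishes a distribution~$q_n\in\sampscheme_{m(n)+1,3}(\energy(n))$ with
\[
\|q_n-p_n\|_1\le C_0\,(s+m)^{\alpha}\,2^{\beta n/m}\,\energy(n)^{-\gamma}.
\]
Since this bound is monotonically decreasing in~$\energy(n)$, and since~$\sampscheme_{m+1,3}(\energy_1)\subseteq\sampscheme_{m+1,3}(\energy_2)$ whenever~$\energy_1\le\energy_2$ (a scheme with energy at most~$\energy_1$ trivially also respects the larger budget), it suffices to pick~$\energy(n)$ large enough that the right-hand side is at most~$\varepsilon(n)$. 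Solving~$C_0(s+m)^{\alpha}2^{\beta n/m}\energy^{-\gamma}\le\varepsilon$ for~$\energy$ gives the threshold
\[
\energy(n)\ \ge\ \bigl(C_0\,(s+m)^{\alpha}\,2^{\beta n/m}/\varepsilon\bigr)^{1/\gamma}=C_0^{1/\gamma}\,(s+m)^{\alpha/\gamma}\,(2^{n/m})^{\beta/\gamma}\,\varepsilon^{-1/\gamma}.
\]

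The next step is to match this threshold to the advertised form~$\energy(n)=C\,(2^{n/m})^{\delta}(s/\varepsilon)^{\mu}$. The exponential factor is reproduced exactly by setting~$\delta=\beta/\gamma$, so the only remaining task is to absorb~$C_0^{1/\gamma}(s+m)^{\alpha/\gamma}\varepsilon^{-1/\gamma}$ into a single factor~$C\,(s/\varepsilon)^{\mu}$; this is where the only genuine bookkeeping lies, and it is the main (though minor) obstacle. Using~$\varepsilon\le 2$ one converts the isolated~$\varepsilon^{-1/\gamma}$ into part of~$(s/\varepsilon)^{\mu}$ at the cost of a constant, provided~$\mu\ge\max(1,\alpha)/\gamma$; and the factor~$(s+m)^{\alpha/\gamma}$ collapses to~$s^{\alpha/\gamma}$ as soon as~$m=O(s)$. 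To guarantee the latter I would note that we may assume~$s\ge n$ without loss of generality: padding~$U$ with identity gates (e.g.\ insertions~$HH=I$) leaves the output distribution~$p$ unchanged, and for any circuit acting non-trivially on all~$n$ output qubits this holds automatically. Since the constructions in question use~$m=O(n)$ modes, we then have~$m=O(s)$, so~$(s+m)^{\alpha/\gamma}\le(2s)^{\alpha/\gamma}$ folds into~$s^{\mu}$ for a suitable constant. Choosing~$C,\delta,\mu$ accordingly makes~$\energy(n)$ exceed the threshold above, and Theorem~\ref{thm:main} then delivers the desired~$q_n$ with~$\|q_n-p_n\|_1\le\varepsilon(n)$.

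Finally I would specialize to the three regimes. The key uniform observation is that~$s(n)=O(\poly(n))$ and~$\varepsilon(n)=\Omega(1/\poly(n))$ force~$(s/\varepsilon)^{\mu}=\poly(n)$ in every case, so the growth of~$\energy(n)$ is dictated entirely by the factor~$(2^{n/m})^{\delta}$. For~$m=\Theta(n)$ one has~$n/m=\Theta(1)$, hence~$(2^{n/m})^{\delta}=O(1)$ and~$\energy(n)=\poly(n)$; for~$m=\Theta(n^{a})$ with~$a\in(0,1)$ one has~$n/m=\Theta(n^{1-a})$, hence~$(2^{n/m})^{\delta}=2^{O(n^{1-a})}$, which dominates the~$\poly(n)$ factor (as~$\poly(n)=2^{O(\log n)}=2^{o(n^{1-a})}$) and yields subexponential energy; and for~$m=\Theta(1)$ one has~$n/m=\Theta(n)$, giving exponential energy~$\energy(n)=2^{O(n)}$. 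This reproduces cases~(i)--(iii). The substantive content — the explicit simulation and its energy analysis — resides entirely in Theorem~\ref{thm:main}; the corollary itself is a matter of inverting the bound and chasing the constants through the three asymptotic windows.
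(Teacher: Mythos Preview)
Your proposal is correct and matches the paper's approach: the paper states the corollary as ``an immediate consequence of Theorem~\ref{thm:main}'' and gives no further proof, so the inversion of~\eqref{eq:epsimnmrelationship} that you carry out is exactly what is intended. Your additional care in reconciling the factor~$(s+m)^{\alpha}$ with the advertised form~$(s/\varepsilon)^{\mu}$ via the padding argument~$s\ge n$ and~$m=O(n)$ is a legitimate detail the paper glosses over, and it is adequate for the three regimes~(i)--(iii) actually stated.
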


\section{Proof of the main result} \label{sec:proofmain}
In this section we prove the main result stated in Theorem~\ref{thm:main}. More precisely, we give a concrete scheme that follows the general outline described in Steps~\eqref{it:firststepprep}--\eqref{it:laststepsampling}. It leverages approximate GKP states as the fundamental physical information carriers together with three physical qubits to realize computation.

\begin{figure}[t]
\centering
\includegraphics{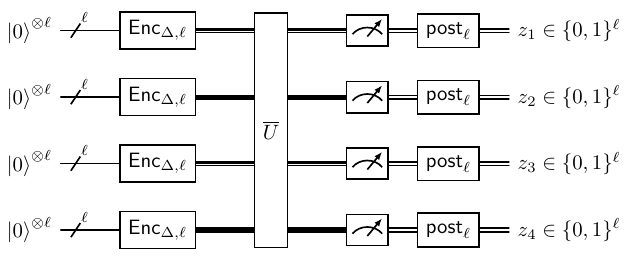}
\caption{Realization of a (logical) circuit given by a unitary~$U$ on~$n=\ell m$ qubits  
 using~$m$~bosonic modes. The illustration is for~$m=4$. 
Blocks of~$\ell$ (logical) qubits are encoded into a GKP-code~$\gkpcoderect{\Delta}{\star}{2^\ell}\subset L^2(\mathbb{R})$ encoding a~$2^\ell$-dimensional qudit. We use a gadget~$G$ realizing each (logical) single-qubit gate~$\overline{G}$, as well as a gadget~$\CZ$ implementing a (logical) two-qubit~$\overline{\CZ}$-gate. Substituting each logical gate by such a gaget, we obtain a physical implementation~$U$ of the (logical) unitary~$\overline{U}$.
Each measurement is a (homodyne) position-measurement. The measurement result is post-processed using a function~$\post_\ell$ in order to emulate a measurement in the computational (multiqubit) basis.
\label{fig:implementationbasicidea}}
\end{figure}

\paragraph{\bf Encoding subsets of qubits into oscillators. } Fig.~\ref{fig:implementationbasicidea} shows the basic idea of our construction.  It makes essential use of a one-parameter family~$\{\gkpcoderect{\Delta}{\star}{d}\}_{\Delta>0}$ of  approximate Gottesman-Kitaev-Preskill (GKP) codes which encode a qudit of dimension~$d\geq 2$ into a subspace of~$L^2(\mathbb{R})$.  For any integer~$d\geq 2$ and real parameter~$\Delta>0$, the code~$\gkpcoderect{\Delta}{\star}{d}$ is spanned by an orthonormal basis~$\{\ket{\Sha_{\Delta}^\star(j)_d}\}_{j=0}^{d-1}$ of ``comb-like'' GKP states (see Fig.~\ref{fig:comblikestate} and Appendix~\ref{sec:appendixshastates} for detailed definitions). Throughout, we choose the code space dimension~$d=2^\ell$ as a power of~$2$.

We are interested in encoding~$n$ (logical) qubits~$\logicalQ_1\cdots \logicalQ_n$ into~$m\leq n$ oscillators. To this end, we set 
\begin{align}
K&:= m-(n\pmod m)\ .
\end{align}
Then~$K\in \{0,\ldots,m-1\}$
and~$n':=n+K$ is divisible by~$m$. We introduce~$K$~``dummy'' (logical) qubits~$\overline{D}_1\cdots \overline{D}_K$ in the state~$\ket{0}$, and extend the given $n$-qubit (logical) circuit~$U$ to act trivially on these qubits. 
The resulting circuit~$U'=U_{\logicalQ_1\cdots \logicalQ_n}\otimes I_{\overline{D}_1\cdots \overline{D}_K}$ acts on~$n'$ (logical) qubits. 

Let us organize the~$n'$ qubits~$\logicalQ_1\cdots \logicalQ_{n'}:=\logicalQ_1\cdots \logicalQ_n\overline{D}_1\cdots \overline{D}_K$ into 
$m$ blocks~$S_1\cdots S_m$ of size
\begin{align}
\ell&=n'/m\ 
\end{align}
each. That is, we group the~$n'=m\ell$ qubits as
\begin{align}
S_1&=\logicalQ_1\cdots \logicalQ_{\ell} \ ,\\
S_2&=\logicalQ_{\ell+1}\cdots \logicalQ_{2\ell}\ ,\\
\vdots\\
S_m&=\logicalQ_{(m-1)\ell+1}\cdots \logicalQ_{m\ell} \ .
\end{align}
In total, we obtain~$m$ blocks~$S_1,\ldots,S_m$ where each~$S_j\cong (\mathbb{C}^2)^{\otimes \ell}$ consists of~$\ell$~qubits, for~$j\in \{1,\ldots,m\}$. 

We now encode each block~$S_j\cong (\mathbb{C}^2)^{\otimes \ell}$,~$j\in \{1,\ldots,m\}$ of qubits into a~$2^\ell$-dimensional subspace of a single oscillator, namely the approximate 
 GKP code~$\gkpcoderect{\Delta}{\star}{2^\ell}\subset L^2(\mathbb{R})$.
 The squeezing parameter~$\Delta>0$ will be chosen below.  
  In more detail, we identify the sets~$\{0,1\}^\ell$ and~$\{0,\ldots,2^\ell-1\}$ using the bijection (i.e., binary representation)
\begin{align}
\begin{matrix}
\iota_\ell : & \{0,1\}^\ell & \rightarrow & \{0,\ldots,2^\ell-1\}\\
&(x_{\ell-1},\ldots,x_0)&\mapsto &\base{2}{x_{\ell-1},\ldots,x_0}:=\sum_{j=0}^{\ell-1}x_j 2^j
\end{matrix}\  .\label{eq:iotaelldefinition}
\end{align}
We then use the isometric encoding map
\begin{align}
\begin{matrix}
\encmapgkp_{\Delta,\ell}:& (\mathbb{C}^2)^{\otimes \ell } & \rightarrow & L^2(\mathbb{R})\\
       &\ket{x_{\ell-1},\ldots,x_0} & \mapsto & \ket{\Sha_{\Delta}^\star(\base{2}{x_{\ell-1},\ldots,x_0})_{2^\ell}}\ 
\end{matrix}
\end{align}
to encode~$\ell$~qubits into the approximate GKP-code~$\gkpcoderect{\Delta}{\star}{2^\ell}$.  
This encoding map is used for each of the~$m$ blocks, i.e., 
we use the map
\begin{align}
\begin{matrix}
\encmapgkp_{\Delta,\ell}^{\otimes m}: &(\mathbb{C}^2)^{\otimes n'}  &\rightarrow  &L^2(\mathbb{R})^{\otimes m}\\
& \ket{x} & \mapsto & (\encmapgkp_{\Delta,\ell}|x^{(1)}\rangle)\otimes\cdots\otimes (\encmapgkp_{\Delta,\ell}|x^{(m)}\rangle)
\end{matrix} \label{eq:multi-modeencoding}
\end{align}
to encode~$n'=m\ell$-qubit states into the~$m$~oscillators. Here we
identified~$\left((\mathbb{C}^2)^{\otimes \ell }\right)^{\otimes m}\cong (\mathbb{C}^2)^{\otimes n'}$
by linearly extending the bijection 
\begin{align}
\begin{matrix}
\{0,1\}^{n'}&\rightarrow & (\{0,1\}^{\ell})^m\\
x&\mapsto &(x^{(1)},\ldots,x^{(m)})
\end{matrix}
\end{align}
where~$x=(x_{2^{n'-1}},\ldots,x_0)$ and~$x^{(\alpha)}_j=x_{(\alpha-1)\ell+j}$ for~$(\alpha,j)\in \{1,\ldots,m\}\times \{0,\ldots,\ell-1\}$. 

\begin{figure}[H]
  \centering
  \begin{subfigure}{\textwidth}
    \includegraphics[height = 5.6cm]{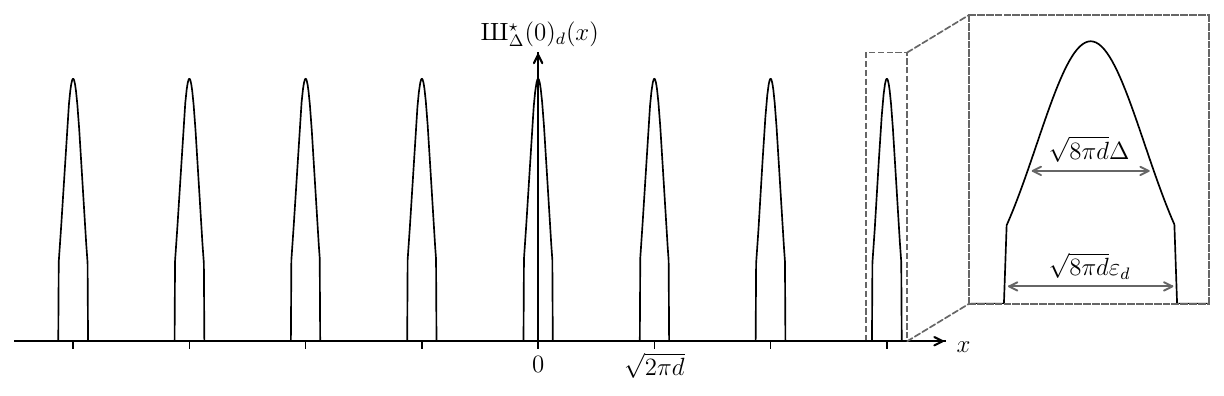}
    \caption{The state~$\Sha_{\Delta}^\star(0)_d\in L^2(\mathbb{R})$.\label{fig:insetvd}}
  \end{subfigure}\\
  \vspace{5ex}
  \begin{subfigure}{\textwidth}
    \includegraphics[height = 5.6cm]{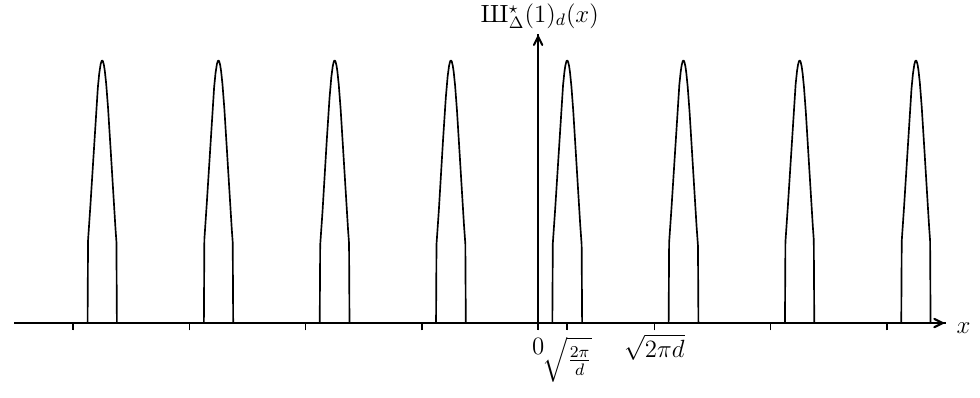}
    \caption{The state~$\Sha_{\Delta}^\star(1)_d\in L^2(\mathbb{R})$. It is obtained from~$\Sha_{\Delta}^\star(0)_d$ by a shift of~$\sqrt{2\pi/d}$.}
  \end{subfigure}   
 
\caption{
The ``comb'' states~$\Sha_{\Delta}^\star(j)_d$ for~$d=4$,~$j =0$ and~$j=1$ (see Appendix~\ref{sec:appendixshastates} for their definition). 
The parameter~$\Delta$ determines the width of Gaussians defining the individual peaks, while the width of their support (obtained by truncation, i.e., restriction) is proportional to~$\varepsilon_d :=1/(2d)$. The truncation of the individual peaks (best visible in the inset in Fig.~\ref{fig:insetvd})
 ensures that the family~$\{|\Sha_\Delta^\star(j)_d\rangle\}_{j=0}^{d-1}$ is orthogonal
 and defines a $d$-dimensional code space~$\gkpcoderect{\Delta}{\star}{d}\subset L^2(\mathbb{R})$. 
The depicted states each have~$L_{\Delta,d} = 8$ peaks. 
Unlabeled tick marks are shown at integer multiples of~$\sqrt{2\pi d}$.}
\label{fig:comblikestate}
\end{figure}

We divide the presentation of our scheme into three steps: initial state preparation, logical unitaries, and logical measurement. We then present an error analysis for our scheme and bound the amount of energy generated. This results in the bound~\eqref{eq:epsimnmrelationship}.

\paragraph{\bf Initial state preparation.}
In addition to the~$m$ bosonic modes~$B_1\cdots B_m\cong L^2(\mathbb{R})^{\otimes m}$ encoding the~$n'=m\ell$ qubits, our scheme uses an auxiliary mode we denote by~$B_{\mathsf{aux}}\cong L^2(\mathbb{R})$, and three auxiliary qubits denoted~$Q_1Q_2Q_3\cong (\mathbb{C}^2)^{\otimes 3}$. It starts by preparing an initial state which is approximately of the form
\begin{align}
\ket{\Phi^{\mathsf{ideal}}_{\mathsf{init}}(m,\ell,\Delta)}:=\ket{\Sha_{\Delta}^\star(0)_{2^\ell}}^{\otimes m}_{B_1\cdots B_m}\otimes 
\ket{\Sha_{\Delta,\ell}^{\mathsf{aux}}(0)_2}_{B_{\mathsf{aux}}}\otimes \ket{0}^{\otimes 3}_{Q_1Q_2Q_3}\ .\label{eq:idealinitialstatedef}
\end{align}
We note that  the state
$\ket{\Sha_{\Delta}^\star(0)_{2^\ell}}^{\otimes m}$ on the modes~$B_1\cdots B_m$ 
is an encoding 
of the logical state~$\ket{0^{n'}}:=\ket{0}^{\otimes n'}$ of the~$n'$~qubits, and can be prepared by creating~$m$~copies of the state
\begin{align}
\encmapgkp_{\Delta,\ell}\ket{0^{\ell}}&=\ket{\Sha_{\Delta}^\star(0)_{2^\ell}} \ .
\label{eq:msqeezing}
\end{align}
The state~$\ket{\Sha_{\Delta,\ell}^{\mathsf{aux}}(0)_2}$
on the auxiliary mode~$B_{\mathsf{aux}}$ is a code state of a certain  approximate GKP code encoding a logical qubit with parameters depending on~$\Delta$ and~$\ell$ (see Appendix~\ref{sec:rectGKP} for a rigorous definition).

Importantly, (approximations to) the  states~$\ket{\Sha_{\Delta}^\star(0)_{2^\ell}}$ and~$\ket{\Sha_{\Delta,\ell}^{\mathsf{aux}}(0)_2}$ can be created by  efficient protocols in the qubit-oscillator model: For both states, there  is a preparation circuit  (denoted $U^{\mathsf{prep}}$ and $V^{\mathsf{prep}}$, respectively) using only one oscillator and one qubit (i.e., with~$m=r=1$) and a logarithmic number of elementary operations which achieves a polynomial error in~$\Delta$ (as measured by the trace distance). This was shown in~\cite{brenner2024complexity}
 (see Theorem~\ref{thm:preparationproceduregmvd} in the appendix for details).
 Here we give a derived construction, see Fig.~\ref{fig:unitarycircuitWprepapp}, with parameter choices adapted for our purposes. It generates an approximation to the state $\ket{\Phi^{\mathsf{ideal}}_{\mathsf{init}}(m,\ell,\Delta)}$ (see Eq.~\eqref{eq:idealinitialstatedef}). We additionally establish an upper bound on the amount of energy generated in this protocol, see Lemma~\ref{lem:squeezingWprep} in Appendix~\ref{sec: momlimitsimplementation}.

\begin{theorem}(Initial state preparation)\label{thm:initialstateprep}
Let~$\ell\in\mathbb{N}$ and~$\Delta\in (0,1/4)$ be such that~$\Delta \leq 2^{-(\ell+1)}$.  Let~$m\in\mathbb{N}$. Consider the state $\ket{\Phi^{\mathsf{ideal}}_{\mathsf{init}}}:=\ket{\Phi^{\mathsf{ideal}}_{\mathsf{init}}(m,\ell,\Delta)}\in L^2(\mathbb{R})^{\otimes (m+1)}\otimes (\mathbb{C}^2)^{\otimes 3}$ defined by Eq.~\eqref{eq:idealinitialstatedef}. There is a circuit
\begin{align}
W^{\mathsf{prep}}=W_{\size(W^{\mathsf{prep}})}\cdots W_{1}
\end{align}
on~$L^2(\mathbb{R})^{\otimes (m+1)}\otimes \mathbb{C}^2$ 
composed of
\begin{align}
\size(W^{\mathsf{prep}})\leq 42m \log 1/\Delta\label{eq:sizeupperboundUprep}
\end{align}
elementary operations belonging to~$\Uelem{m}{1}$ such that the
output state
\begin{align}
\ket{\Phi_{\mathsf{init}}}_{B_1\cdots B_mB_{\mathsf{aux}}Q_1Q_2Q_3}:=\left(W^{\mathsf{prep}}_{B_1\cdots B_mB_{\mathsf{aux}}Q_1}\otimes I_{Q_2Q_3}\right)\left(
\ket{\vac}^{\otimes m+1}_{B_1\cdots B_mB_{\mathsf{aux}}}\otimes \ket{0}^{\otimes 3}_{Q_1Q_2Q_3}\right)
\end{align}
when applying~$W^{\mathsf{prep}}$ to  $m+1$~bosonic modes prepared in the vacuum state and three qubits in the state~$\ket{0}$ satisfies
\begin{align}
\varepsilon_{prep}&:=\left\|
\proj{\Phi_{\mathsf{init}}}-
\proj{\Phi^{\mathsf{ideal}}_{\mathsf{init}}}
\right\|_1\leq 50m \left(\sqrt{\Delta}+2^{2\ell}\Delta^2\right)\ .\label{eq:claimdistancemfold}
\end{align}
\end{theorem}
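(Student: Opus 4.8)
The plan is to build $W^{\mathsf{prep}}$ out of the single-mode, single-qubit preparation circuits $U^{\mathsf{prep}}$ and $V^{\mathsf{prep}}$ guaranteed by the prior work (Theorem~\ref{thm:preparationproceduregmvd}), run in sequence on the $m+1$ modes, and then to propagate the per-mode trace-distance errors additively. Concretely, I would first invoke Theorem~\ref{thm:preparationproceduregmvd} with the parameter choices specialized so that the target states are $\encmapgkp_{\Delta,\ell}\ket{0^\ell}=\ket{\Sha_\Delta^\star(0)_{2^\ell}}$ (a $2^\ell$-dimensional comb state) on each mode $B_1,\ldots,B_m$, and $\ket{\Sha_{\Delta,\ell}^{\mathsf{aux}}(0)_2}$ on $B_{\mathsf{aux}}$. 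This yields, for each mode $k$, a circuit $U^{\mathsf{prep}}_k$ acting on that mode together with the single shared ancilla qubit $Q_1$, using $O(\log 1/\Delta)$ elementary gates from $\Uelem{1}{1}$, whose output approximates the ideal comb state to trace distance $\delta_k$; the bookkeeping step is to check that the hypothesis $\Delta\le 2^{-(\ell+1)}$ ensures the comb states are genuinely orthonormal (the truncation-width condition from Fig.~\ref{fig:comblikestate}) and that the per-mode error from the cited theorem is at most something like $\sqrt{\Delta}+2^{2\ell}\Delta^2$ (up to a small absolute constant). I would define $W^{\mathsf{prep}}:=V^{\mathsf{prep}}_{B_{\mathsf{aux}}Q_1}\,U^{\mathsf{prep}}_{B_mQ_1}\cdots U^{\mathsf{prep}}_{B_1Q_1}$, all reusing $Q_1$ sequentially and leaving $Q_2,Q_3$ untouched.

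The size bound~\eqref{eq:sizeupperboundUprep} then follows by summing: $m$ copies of $U^{\mathsf{prep}}$ plus one copy of $V^{\mathsf{prep}}$, each of size at most (some constant)$\cdot\log 1/\Delta$, giving $\size(W^{\mathsf{prep}})\le 42m\log 1/\Delta$ once the constants from Theorem~\ref{thm:preparationproceduregmvd} are inserted (the factor $42$ being whatever the cited construction delivers, possibly after absorbing the single $V^{\mathsf{prep}}$ into the $m$-dependent term using $m\ge 1$). For the error bound~\eqref{eq:claimdistancemfold}: since the ideal target is a product state and each sub-circuit acts on a distinct mode (sharing only the ancilla $Q_1$, which is returned to $\ket{0}$ by each step so the steps genuinely compose on disjoint supports up to the ancilla), I would use the standard fact that trace distance is subadditive under tensoring of independently-prepared approximations — i.e., $\bigl\|\proj{\psi_1}\otimes\cdots\otimes\proj{\psi_N}-\proj{\phi_1}\otimes\cdots\otimes\proj{\phi_N}\bigr\|_1\le\sum_i\bigl\|\proj{\psi_i}-\proj{\phi_i}\bigr\|_1$, together with unitary invariance of the trace norm. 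This gives $\varepsilon_{prep}\le\sum_{k=1}^m\delta_k+\delta_{\mathsf{aux}}\le (m+1)\bigl(\text{const}\cdot(\sqrt{\Delta}+2^{2\ell}\Delta^2)\bigr)\le 50m(\sqrt{\Delta}+2^{2\ell}\Delta^2)$, using $m+1\le 2m$ and adjusting the constant.

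The main obstacle I anticipate is not the error summation (which is routine) but the careful matching of parameters in the cited preparation theorem: the prior result is phrased for a generic approximate GKP code with its own parameters, and one must verify that the comb code $\gkpcoderect{\Delta}{\star}{2^\ell}$ and the auxiliary code $\ket{\Sha_{\Delta,\ell}^{\mathsf{aux}}(0)_2}$ arise as the special cases claimed, that the squeezing/displacement strengths used remain \emph{constant} (as required for membership in $\Uelem{m}{1}$, which only allows constant-strength gates) rather than scaling with $\ell$ or $\Delta$, and that the error estimate genuinely takes the stated form $\sqrt{\Delta}+2^{2\ell}\Delta^2$ — the second term presumably coming from the overlap/tail corrections when the comb has $d=2^\ell$ teeth. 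A secondary subtlety is that sequential reuse of $Q_1$ requires each $U^{\mathsf{prep}}$ to disentangle and reset the ancilla to $\ket{0}$ by the end (otherwise residual ancilla correlations spoil the product structure); if the cited circuit does this only approximately, that approximation error must be folded into the $\delta_k$ as well, which is harmless but should be noted.
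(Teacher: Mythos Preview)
Your proposal is correct and follows essentially the same approach as the paper: define $W^{\mathsf{prep}}=V^{\mathsf{prep}}_{B_{\mathsf{aux}}Q_1}U^{\mathsf{prep}}_{B_mQ_1}\cdots U^{\mathsf{prep}}_{B_1Q_1}$ using the single-mode preparation circuits (specialized via intermediate lemmas to the comb and auxiliary states, each with size $\le 21\log 1/\Delta$ and error $\le 25(\sqrt{\Delta}+2^{2\ell}\Delta^2)$), then sum sizes and propagate errors by the triangle inequality using $m+1\le 2m$. Your anticipated subtlety about ancilla reuse is exactly right---since $Q_1$ is only \emph{approximately} reset to $\ket{0}$, the ``subadditivity under tensoring'' formulation doesn't literally apply, and the paper handles this by the inductive triangle-inequality argument you sketch at the end (replacing the actual intermediate state by the ideal one step-by-step, each replacement costing one per-mode error), which yields the same additive bound.
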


\begin{figure}[H]
  \centering
  \includegraphics[width = 1\textwidth]{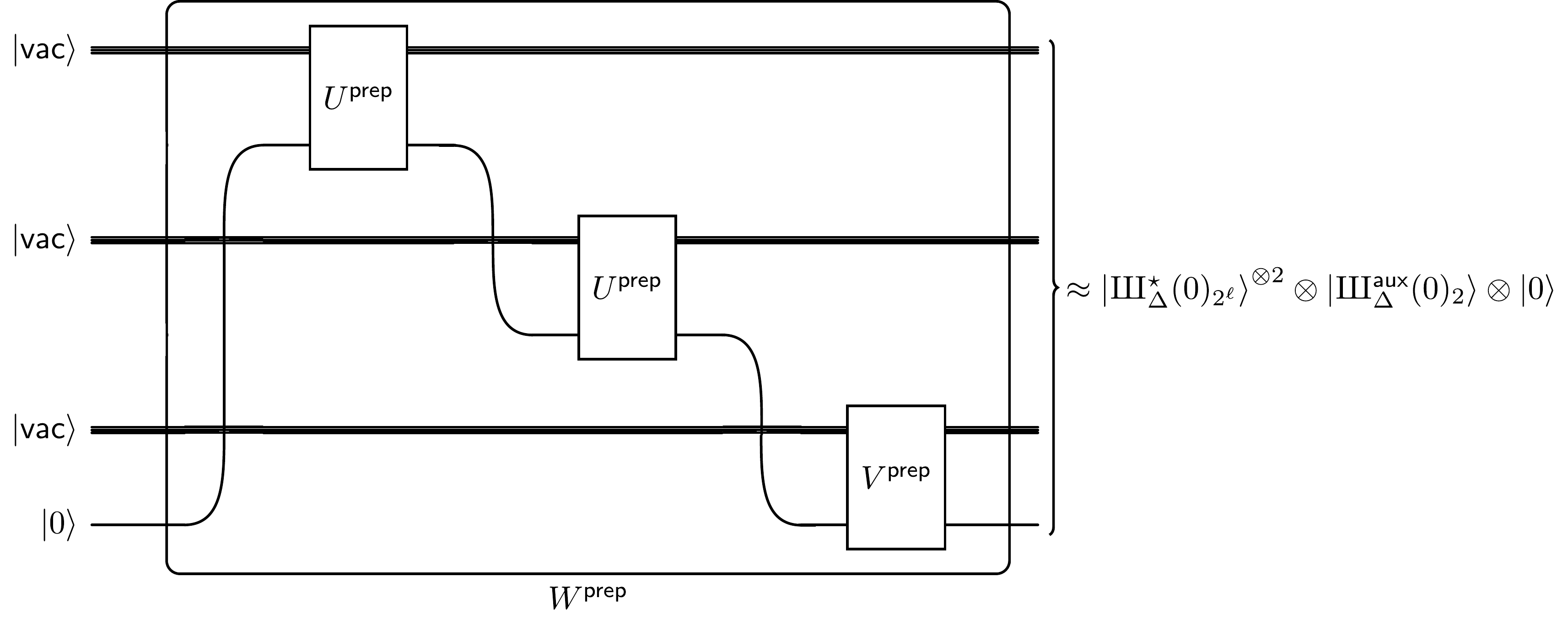}
   \caption{
   The unitary circuit~$W^{\mathsf{prep}}$ preparing an approximation of the state~$\ket{\Sha_\Delta^\star(0)_{2^\ell}}^{\otimes m} \otimes \ket{\Sha_{\Delta,\ell}^{\mathsf{aux}}(0)_2}\otimes\ket{0}$ (for~$m=2$) starting from~$m+1$ copies of the vacuum state~$\ket{\vac}$ and a single qubit initialized in the state~$\ket{0}$.   Each unitary~$U^\mathsf{prep}$ approximately stabilizes the qubit state~$\ket{0}$ while approximately generating the state~$\ket{\Sha_\Delta^\star(0)_{2^\ell}}$, i.e.,  it acts as~$U^\mathsf{prep}(\ket{\vac} \otimes \ket{0}) \approx \ket{\Sha_\Delta^\star(0)_{2^\ell}} \otimes \ket{0}$. Similarly, the unitary~$V^{\mathsf{prep}}$  acts as~$V^{\mathsf{prep}}(\ket{\vac}\otimes \ket{0}) \approx \ket{\Sha_{\Delta,\ell}^{\mathsf{aux}}(0)_2} \otimes \ket{0}$.}\label{fig:unitarycircuitWprepapp}
\end{figure}

\paragraph{\bf Logical unitaries. \label{par:logicalgates}}
We need to argue that we can perform (encoded) computations using this encoding, i.e., in the code space~$\gkpcoderect{\Delta}{\star}{2^\ell}^{\otimes m}$ encoding our~$n'=m\ell$ logical qubits. We have previously given a corresponding construction and an analysis of the associated error in~\cite{cliffordshybrid2025}. Here we only give a high-level sketch and state the relevant parameters, see Theorem~\ref{thm:implementationlogicalqubit} below.

The recompilation procedure of Ref.~\cite{cliffordshybrid2025}  takes as input an $n'=m\ell$-qubit (logical) circuit~$U=U_s\cdots U_1$ consisting of $s$ two-qubit gates, where $U_t$ acts on any pair of qubits for each $t\in \{1,\ldots,s\}$.
It produces a unitary circuit~$W_U=W_T\cdots W_1$ consisting of  $T=O(s\ell^2)$  elementary operations~$W_1,\ldots,W_T\in \Uelem^{m+1,3}$ in the hybrid qubit-oscillator model with $m+1$~modes and $3$~qubits. 
(The construction of $W_U$ proceeds in a gate-by-gate-fashion: Each logical two-qubit unitary~$U_t$ for $t\in \{1,\ldots,s\}$ is implemented by 
a circuit as illustrated in Fig.~\ref{fig:W_Uimplement}.) 

 \begin{figure}[h]
  \centering
  \includegraphics[width=\textwidth]{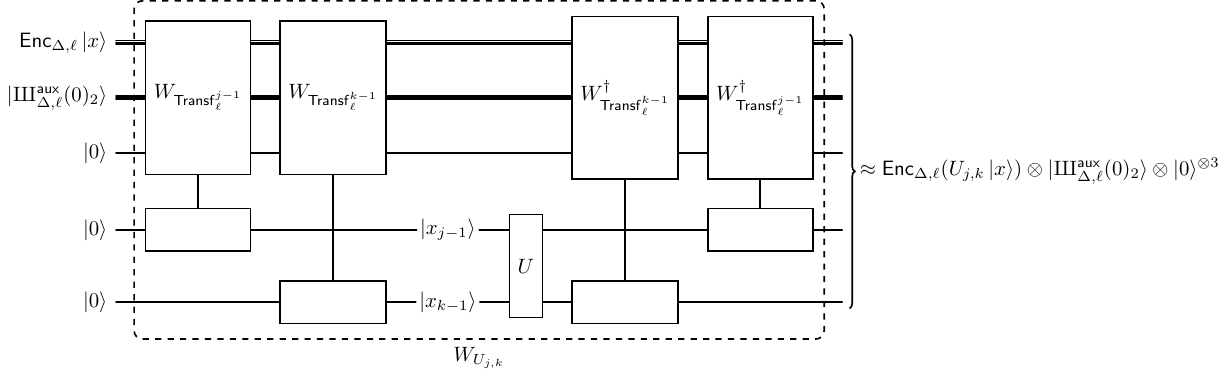}
  \caption{
  Illustration of construction
from  Ref.~\cite{cliffordshybrid2025}: An    implementation~$W_{U_{j,k}}$ of 
a logical two-qubit unitary~$U=U_{j,k}$ acting on the~$j$-th and~$k$-th qubits for~$j,k \in \{1,\dots, m \ell\}$. The illustration is for the special case  $m=1$. 
When acting on an encoded computational basis state~$\ket{x}\in (\mathbb{C}^2)^{\otimes m\ell}$, $x=(x_{m\ell-1},\ldots,x_0)$ 
 in the code space~$\cL_\Delta(m,\ell)$ (see Eq.~\eqref{eq:eqldeltamell}), two bit-transfer unitaries
$W_{\bittransfer{j-1}{\ell}},W_{\bittransfer{k-1}{\ell}}$ are applied to transfer the $j$-th and the $k$-th bits $x_{j-1},x_{k-1}$ onto the two auxiliary qubits. 
 (Here the bit-transfer unitary~$W_{\bittransfer{r}{\ell}}$ approximately acts as~$W_{\bittransfer{r}{\ell}}(\ket{\Sha_{\Delta}^\star([x_{\ell-1}, \dots, x_0])_{2^\ell}} \otimes \ket{\Sha_{\Delta,\ell}^{\mathsf{aux}}(0)_2} \otimes \ket{0}^{\otimes 2}) \approx \ket{\Sha_{\Delta}^{\star}([x_{\ell-1}, \dots,x_{r+1}, 0, x_{r-1},\dots, x_0])_{2^\ell}} \otimes \ket{\Sha_{\Delta,\ell}^{\mathsf{aux}}(0)_2} \otimes \ket{0} \otimes \ket{x_r}$ for~$r \in \{0,\dots,\ell-1\}$.)
  Subsequently, the two-qubit unitary~$U$ can be applied at the physical level to the qubits. Finally,
  the bits are transferred back to the $m$ bosonic modes (the first bold line)
  by acting with the adjoint bit-transfer unitaries.  The state of the second mode (second bold line) acts as a catalyst in this process. Importantly, the bit-transfer unitaries can be realized by $O(\ell^2)$ elementary unitary operations belonging to~$\Uelem^{2,2}$ (respectively $\Uelem^{m+1,2}$), with an error scaling as $O(2^{2\ell}\Delta)$, see Theorem~\ref{thm:implementationlogicalqubit}. We refer to~\cite[Corollary D.6]{cliffordshybrid2025} for a detailed analysis. \label{fig:W_Uimplement}}
  \end{figure}

The unitary $W_U$ constitutes an approximate physical implementation of $U$ when the $\ell\cdot m$~qubits 
are encoded in the code space
\begin{align}
\cL_\Delta(m,\ell)&:=(\gkpcoderect{\Delta}{\star}{2^\ell})^{\otimes m}\otimes \mathbb{C}(\ket{\ket{\Sha_{\Delta,\ell}^{\mathsf{aux}}(0)_2}}\otimes \ket{0}^{\otimes 3})\ \label{eq:eqldeltamell}
\end{align}
(We note that $\gkpcoderect{\Delta}{\star}{2^\ell})^{\otimes m}\cong (\mathbb{C}^2)^{\otimes \ell m}$, hence the first~$m$ bosonic modes contain the logical qubits in this encoding.)
That is, we have 
\begin{align}
W_{U} \left((\encmapgkp_{\Delta,\ell}^{\otimes m}\ket{\Psi})\otimes \ket{\Sha_{\Delta,\ell}^{\mathsf{aux}}(0)_2} \otimes\ket{0}^{\otimes 3}\right)\approx (\encmapgkp_{\Delta,\ell}^{\otimes m}(U\ket{\Psi}))\otimes \ket{\Sha_{\Delta,\ell}^{\mathsf{aux}}(0)_2}\otimes\ket{0}^{\otimes 3}\label{eq:approximateequalitygate}
\end{align}
for any $\ket{\Psi}\in (\mathbb{C}^2)^{\otimes m\ell}$.   The error in the approximation~\eqref{eq:approximateequalitygate} can be bounded as follows:
For a code subspace~$\cL\subset\cH$ of a physical Hilbert space~$\cH$ and a unitary implementation~$W_U:\cH\rightarrow\cH$ of an ideal logical unitary~$U^{\mathsf{ideal}}:\cL\rightarrow\cL$, we call
\begin{align}
\gateerror{\cL}(W_{U},U^{\mathsf{ideal}}):=\left\|(\cW_U-\cU^{\mathsf{ideal}})\circ \Pi_\cL\right\|_\diamond
\end{align}
the logical gate error of the implementation~$W_U$, see~\cite{cliffordslinearoptics2025} for a detailed discussion of this quantity. Here~$\Pi_\cL(\rho)=\pi_\cL\rho \pi_\cL^\dagger$ is defined in terms of the orthogonal projection~$\pi_{\cL}:\cH\rightarrow\cL$ onto~$\cL$, whereas
\begin{align}
\begin{matrix}
\cW_U&:\cB(\cH) & \rightarrow &\cB(\cH)\\
&\rho & \mapsto & \cW_U(\rho):=W_U\rho W_U^\dagger
\end{matrix}
\end{align}
and
\begin{align}
\begin{matrix}
\cU^{\mathsf{ideal}}&:\cB(\cL) & \rightarrow &\cB(\cL)\\
&\rho & \mapsto & \cU^{\mathsf{ideal}}(\rho):=U^{\mathsf{ideal}}\rho (U^{\mathsf{ideal}})^\dagger
\end{matrix}
\end{align}
are the completely positive trace-preserving (CPTP) maps corresponding to the implementation~$W_U$ and an ideal implementation of the gate~$U^{\mathsf{ideal}}$, respectively. 
It was shown in~\cite[Corollary D.6]{cliffordshybrid2025} that the constructed implementation~$W_{U_{j,k}}$  of any logical two-qubit unitary~$U_{j,k}$  has gate error  bounded as 
\begin{align}
\gateerror{\cL_\Delta(1,\ell)}(W_{U_{j,k}}, U^{\mathsf{ideal}}_{j,k})&\leq 600\cdot 2^{2\ell }\Delta\ .\label{eq:gaterrorboundcomposablex}
\end{align}
The following was shown in~\cite[Theorem E.3]{cliffordshybrid2025}, see also the remark thereafter.
 \begin{theorem}(Implementation of logical  qubit circuits~\cite[Theorem E.3]{cliffordshybrid2025})\label{thm:implementationlogicalqubit}
 Consider~$m\ell$ qubits encoded in the space~$\cL_\Delta(m,\ell)$ (cf. Eq.~\eqref{eq:eqldeltamell}), i.e., into
~$m$ copies of the code~$\gkpcoderect{\Delta}{\star}{2^\ell}$ using an auxiliary mode in the state~$\ket{\Sha_{\Delta,\ell}^{\mathsf{aux}}(0)_2}$ and three auxiliary qubits in the state~$\ket{0}$.
 Let~$U=U_s\cdots U_1$ be a unitary circuit on~$n'=m\ell$~qubits of size~$s$, i.e., composed of~$s$ one- and two-qubit gates~$U_1,\ldots,U_s$. Then there is a unitary circuit~$W_U=W_{T}\cdots W_1$
 on~$L^2(\mathbb{R})^{\otimes (m+1)}\otimes (\mathbb{C}^2)^{\otimes 3}$ composed of  
 \begin{align}
 T&\le 340 s\ell^2
 \end{align}
 elementary operations~$W_1,\ldots,W_T\in \Uelem{m+1}{3}$ such that  the logical gate error of the implementation satisfies
 \begin{align}
 \gateerror{\cL_{\Delta}(m,\ell)}(W_{U},U)\leq 600s\cdot 2^{2\ell} \Delta\ .
 \end{align}
 \end{theorem}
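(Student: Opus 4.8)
The statement is quoted from \cite{cliffordshybrid2025}, so the plan is to reconstruct its proof from the building blocks recalled above. The construction is a gate-by-gate recompilation: given the logical circuit $U=U_s\cdots U_1$, each one- or two-qubit gate $U_t$ is replaced by a gadget $W_{U_t}$ as in Fig.~\ref{fig:W_Uimplement}, and one sets $W_U:=W_{U_s}\cdots W_{U_1}$. For a two-qubit gate on logical qubits $j,k$, the gadget $W_{U_{j,k}}$ consists of: (i) two bit-transfer unitaries $W_{\bittransfer{j-1}{\ell}}$, $W_{\bittransfer{k-1}{\ell}}$, which move the two relevant logical bits out of the GKP-encoded modes onto two auxiliary qubits while leaving the auxiliary mode in the catalyst state $\ket{\Sha_{\Delta,\ell}^{\mathsf{aux}}(0)_2}$; (ii) the physical two-qubit gate $U_t$ applied directly to those two auxiliary qubits; and (iii) the adjoint bit-transfers, which return the transformed bits to the modes and restore the ancillas. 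A one-qubit gate is handled identically using a single bit-transfer.

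\textbf{Size count.} First I would bound the number of elementary operations. By \cite[Corollary D.6]{cliffordshybrid2025} each bit-transfer $W_{\bittransfer{r}{\ell}}$ is realized by $O(\ell^2)$ operations from $\Uelem{m+1}{2}$. A two-qubit gadget uses four bit-transfers, one physical gate, and $O(1)$ further elementary operations, hence $O(\ell^2)$ operations; summing over the $s$ gates gives $T=O(s\ell^2)$, and carrying the explicit constants of the bit-transfer construction yields $T\le 340\,s\ell^2$. Three qubits suffice throughout: two carry the transferred logical bits and one is used internally by the bit-transfer subroutine, so all operations lie in $\Uelem{m+1}{3}$.

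\textbf{Error analysis.} The key single-gate estimate is Eq.~\eqref{eq:gaterrorboundcomposablex}: for $m=1$, $\gateerror{\cL_\Delta(1,\ell)}(W_{U_{j,k}},U^{\mathsf{ideal}}_{j,k})\le 600\cdot 2^{2\ell}\Delta$. To pass to general $m$, note that $W_{U_t}$ acts nontrivially only on the (at most two) modes carrying qubits $j,k$, on the auxiliary mode, and on the three qubits, and trivially on the remaining $m-1$ modes; since the diamond norm is stable under tensoring with the identity and $\cL_\Delta(m,\ell)=(\gkpcoderect{\Delta}{\star}{2^\ell})^{\otimes m}\otimes\mathbb{C}(\ket{\Sha_{\Delta,\ell}^{\mathsf{aux}}(0)_2}\otimes\ket{0}^{\otimes 3})$ factorizes accordingly, the same bound $600\cdot 2^{2\ell}\Delta$ holds for $\gateerror{\cL_\Delta(m,\ell)}(W_{U_t},U_t)$ for every $t$. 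It then remains to compose: writing $\cL:=\cL_\Delta(m,\ell)$, $\eta_t:=\gateerror{\cL}(W_{U_t},U_t)$, using that each ideal $U_t$ maps $\cL$ into itself, a telescoping/triangle-inequality argument together with submultiplicativity of the diamond norm under composition of CP maps gives $\gateerror{\cL}(W_{U_s}\cdots W_{U_1},\,U_s\cdots U_1)\le\sum_{t=1}^{s}\eta_t$. Here the definition of $\gateerror{}$ via the projection channel $\Pi_\cL$ is essential: in the $t$-th telescoping term one replaces the first $t-1$ ideal gates by their implementations, uses that $\cU_{t-1}\cdots\cU_1\circ\Pi_\cL$ has range in $\cB(\cL)$ so that $(\cW_{U_t}-\cU_t)$ acting on it coincides with $(\cW_{U_t}-\cU_t)\circ\Pi_\cL$, and bounds the remaining factors by $1$ in diamond norm. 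Summing $\eta_t\le600\cdot2^{2\ell}\Delta$ over $t\in\{1,\ldots,s\}$ yields $\gateerror{\cL_\Delta(m,\ell)}(W_U,U)\le600\,s\cdot2^{2\ell}\Delta$, as claimed.

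\textbf{Main obstacle.} Given Eq.~\eqref{eq:gaterrorboundcomposablex} and the bit-transfer gate count, what remains here --- the telescoping and the locality argument in $m$ --- is essentially routine. The genuinely delicate part, which lies in the proofs of \cite[Corollary D.6 and Theorem E.3]{cliffordshybrid2025}, is establishing the \emph{composable} single-gate bound in the first place: one must show that the bit-transfer unitary acts as advertised not merely on individual code states but uniformly on $\cB(\cL)$ in diamond norm (so that the triangle inequality applies along the whole circuit), control the catalytic auxiliary mode (which must leave and re-enter the state $\ket{\Sha_{\Delta,\ell}^{\mathsf{aux}}(0)_2}$ between gadgets up to the stated error), and verify that the constant $600$ survives the four-fold composition of bit-transfers inside each gadget.
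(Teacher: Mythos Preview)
Your reconstruction is correct and matches the approach the paper sketches (the paper itself does not prove this theorem but imports it from \cite[Theorem~E.3]{cliffordshybrid2025}, giving only the gadget description in Fig.~\ref{fig:W_Uimplement} and the single-gate bound Eq.~\eqref{eq:gaterrorboundcomposablex}). One minor refinement: your tensor-stability step for passing from $\cL_\Delta(1,\ell)$ to $\cL_\Delta(m,\ell)$ covers the case where both logical qubits $j,k$ lie in the same mode; when they lie in different modes the gadget touches two data modes plus the catalyst, so Eq.~\eqref{eq:gaterrorboundcomposablex} as stated does not apply directly, and one instead composes the individual bit-transfer error bounds (each acting on a single data mode) before tensoring out the remaining $m-2$ spectator modes --- this is routine and is what \cite[Corollary~D.6/Theorem~E.3]{cliffordshybrid2025} handles.
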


\paragraph{\bf Logical measurement.}
We need to argue that an encoded measurement can approximately be realized  by suitably  post-processing the result of a homodyne measurement on the oscillators. Here we introduce the corresponding post-processing procedure. It is derived from the fact that logical information encoded in the code~$\gkpcoderect{\Delta}{\star}{2^\ell}$ can be read out by homodyne (position) measurement and suitable post-processing of the measurement result. This is expressed by the following lemma.

\begin{lemma}[Logical measurement for the code~$\gkpcoderect{\Delta}{\star}{2^\ell}$]\label{lem:logicalmeasurementgkpcoderectdeltastar}
There is an efficiently computable function~$\post_\ell :\mathbb{R}\rightarrow\{0,1\}^\ell$
such that 
\begin{align}
\mathsf{supp}(\encmapgkp_{\Delta,\ell}\ket{x})\subset 
\post_\ell^{-1}(\{x\})\qquad \textrm{ for all }\qquad x\in \{0,1\}^\ell\ .\label{eq:encmappostcontainmone}
\end{align}
as well as
\begin{align}
\post_\ell^{-1}(\{x\})\cap \mathsf{supp}(\encmapgkp_{\Delta,\ell}\ket{x'}))&=\emptyset\qquad  \textrm{ for  any pair }\qquad x\neq x'\in \{0,1\}^\ell\  .\label{eq:encmappostcontainmtwo}
\end{align}
In particular, if~$z\in \mathbb{R}$ is the measurement result when applying a homodyne position-measurement to an encoding~$\ket{\overline{\Psi}}\in \gkpcoderect{\Delta}{\star}{2^\ell}$
of a state~$\ket{\Psi}\in (\mathbb{C}^2)^{\otimes \ell}$, 
then the post-processed output~$x:=\post_\ell(z)$
is distributed according to the distribution~$p(x)=|\langle x,\Psi\rangle|^2$,~$x\in \{0,1\}^\ell$ of measurement outcomes when applying a computational basis measurement to~$\ket{\Psi}$.
\end{lemma}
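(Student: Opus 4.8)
The plan is to read the post-processing function $\post_\ell$ directly off the peak structure of the comb states, so that the two support conditions reduce to a short rounding argument. Recall from Appendix~\ref{sec:appendixshastates} that, writing $d:=2^\ell$ and $c_d:=\sqrt{2\pi/d}$, the state $\ket{\Sha_\Delta^\star(j)_d}$ is (up to normalization) a finite sum of Gaussians truncated to intervals, with one peak supported in an interval of half-width $w_d$ about each point of $\{(j+kd)\,c_d:k\in\mathbb{Z}\}$ lying inside the truncation window. As emphasized in the caption of Fig.~\ref{fig:comblikestate}, the truncation is chosen precisely so that the family $\{\ket{\Sha_\Delta^\star(j)_d}\}_{j=0}^{d-1}$ is orthonormal; since these states are superpositions of everywhere-nonnegative truncated Gaussians, orthogonality of the different logical states is equivalent to disjointness of their peak intervals, that is, to $w_d<c_d/2$ (consistent with the support width being $\propto\varepsilon_d=1/(2d)$, since $1/(2d)<\tfrac12\sqrt{2\pi/d}$ for all $d\geq1$). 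Writing $S_j:=\bigcup_{k\in\mathbb{Z}}\big[(j+kd)c_d-w_d,\ (j+kd)c_d+w_d\big]$, this gives $\supp\big(\encmapgkp_{\Delta,\ell}\ket{x}\big)=\supp\big(\ket{\Sha_\Delta^\star(\iota_\ell(x))_d}\big)\subseteq S_{\iota_\ell(x)}$ for every $x\in\{0,1\}^\ell$, while the sets $S_0,\dots,S_{d-1}$ are pairwise disjoint because distinct peak centers differ by at least $c_d$ and each interval has half-width $w_d<c_d/2$.

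Given this, I would \emph{define} the post-processing function by
\begin{align}
\post_\ell(z):=\iota_\ell^{-1}\!\Big(\round{\,z\,\sqrt{d/(2\pi)}\,}\bmod d\Big),\qquad z\in\mathbb{R},
\end{align}
where $\round{\cdot}$ denotes rounding to the nearest integer and $\iota_\ell$ is the binary-representation bijection of Eq.~\eqref{eq:iotaelldefinition}; this is a total map $\mathbb{R}\to\{0,1\}^\ell$ and is manifestly efficiently computable (scale by a fixed constant, round, reduce mod $2^\ell$, write in binary), modulo the machine-precision remarks already made in the main text. To check Eq.~\eqref{eq:encmappostcontainmone}, fix $x$, set $j=\iota_\ell(x)$, and take $z\in\supp(\encmapgkp_{\Delta,\ell}\ket{x})\subseteq S_j$; then $z=(j+kd)c_d+t$ for some $k\in\mathbb{Z}$ with $|t|\leq w_d<c_d/2$, so $z\sqrt{d/(2\pi)}=(j+kd)+t/c_d$ rounds to $j+kd$, and reducing mod $d$ yields $j$ since $0\le j\le d-1$; hence $\post_\ell(z)=\iota_\ell^{-1}(j)=x$. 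Condition Eq.~\eqref{eq:encmappostcontainmtwo} is then automatic: $\post_\ell$ is single-valued, so $\post_\ell^{-1}(\{x\})\cap\post_\ell^{-1}(\{x'\})=\emptyset$ whenever $x\neq x'$, while Eq.~\eqref{eq:encmappostcontainmone} applied to $x'$ places $\supp(\encmapgkp_{\Delta,\ell}\ket{x'})$ inside $\post_\ell^{-1}(\{x'\})$.

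For the ``in particular'' statement I would expand $\ket{\Psi}=\sum_{x\in\{0,1\}^\ell}\alpha_x\ket{x}$, so that the encoded wavefunction is $\overline{\Psi}(z)=\sum_x\alpha_x\,\Sha_\Delta^\star(\iota_\ell(x))_d(z)$ and the homodyne outcome $z$ has probability density $|\overline{\Psi}(z)|^2$; on the set $\post_\ell^{-1}(\{x\})$ every term with label $x'\neq x$ vanishes by Eq.~\eqref{eq:encmappostcontainmtwo}, so, using Eq.~\eqref{eq:encmappostcontainmone} and the normalization of the comb states,
\begin{align}
\int_{\post_\ell^{-1}(\{x\})}|\overline{\Psi}(z)|^2\,dz=|\alpha_x|^2\int_{\mathbb{R}}|\Sha_\Delta^\star(\iota_\ell(x))_d(z)|^2\,dz=|\alpha_x|^2=|\langle x,\Psi\rangle|^2 ,
\end{align}
which is exactly the claimed distribution of $\post_\ell(z)$. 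I expect the only genuinely delicate point to be extracting the separation inequality $w_d<c_d/2$ from the precise definition of the $\Sha^\star$ states in Appendix~\ref{sec:appendixshastates} --- equivalently, confirming that the truncation windows of peaks belonging to distinct logical values never overlap --- but this is essentially the orthonormality already recorded there; everything downstream is the elementary rounding computation above.
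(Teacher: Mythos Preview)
Your proposal is correct and follows essentially the same route as the paper: both define $\post_\ell$ via nearest-integer rounding of $z\sqrt{d/(2\pi)}$ modulo $d$ composed with $\iota_\ell^{-1}$, and both reduce the two support conditions to the disjointness of the peak windows of the comb states. One small imprecision: with the paper's parameter choice $\varepsilon_d=1/(2d)$, the post-squeezing half-width is $\sqrt{2\pi d}\cdot\varepsilon_d=\sqrt{\pi/(2d)}=c_d/2$ \emph{exactly}, not strictly less as you wrote; the supports of distinct logical states therefore touch at isolated endpoints (a measure-zero set), and the rounding map is ambiguous only at those same half-integer points, which is why the paper adds ``breaking ties arbitrarily'' and why the probabilistic conclusion is unaffected.
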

\begin{proof}
Clearly, it suffices to establish Eqs.~\eqref{eq:encmappostcontainmone}
and~\eqref{eq:encmappostcontainmtwo} for a suitably chosen function~$\post_\ell$.

The following immediately follows from the definition (see Section~\ref{sec:rectGKP} in the appendix) of the state~$\ket{\Sha^\star_\Delta(j)_{2^\ell}}$. For~$j\in \{0,\ldots,2^{\ell}-1\}$,  the function~$\Sha^\star_\Delta(j)_{2^\ell}$  has individual peaks (local maxima)
located at points belonging to the set
\begin{align}
\cS(j) &= \left\{ \sqrt{2\pi \cdot 2^{-\ell}} j+\sqrt{2\pi\cdot  2^{\ell}}z \mid z \in\{-L_{\Delta,2^{\ell}}/2, \ldots, L_{\Delta,2^{\ell}}/2-1\} \right\}\ ,\label{eq:sjsupport}
\end{align}
where~$L_{\Delta,2^{\ell}}= 2^{2(\lceil \log_2 1/\Delta \rceil - \ell)}$ (cf.  Fig.~\ref{fig:comblikestate}).
Furthermore, the support of the function is 
\begin{align}
\mathsf{supp} (\Sha^\star_\Delta(j)_{2^\ell})&=\cS(j)+[-\sqrt{\pi \cdot 2^{-(\ell+1)}},\sqrt{\pi \cdot 2^{-(\ell+1)}}]\label{eq:shafctsup}
\end{align}
where we write~$A+B:=\{a+b\ |\ a\in A, b\in B\}$ for the Minkowski sum of two subsets~$A,B\subset \mathbb{R}$. 
In particular, two states~$\ket{\Sha^\star_\Delta(j)_{2^\ell}}$ and~$\ket{\Sha^\star_\Delta(k)_{2^\ell}}$ have disjoint support for~$j\neq k$, and  a homodyne measurement of the position-operator applied to the state~$\ket{\Sha^\star_\Delta(j)_{2^\ell}}$ and subsequent application of  the function 
\begin{align}
\begin{matrix}
\discretize_{2^\ell}:& \mathbb{R}&\rightarrow & \mathbb{Z}_{2^\ell}\\
&x& \mapsto & \mathsf{round}\left(x/\sqrt{2\pi\cdot 2^\ell}\right) \pmod {2^\ell}\ ,
\end{matrix}\label{eq:postprocessingdcode}
\end{align}
to the measurement outcome returns~$j$ with certainty. Here~$\mathsf{round}:\mathbb{R}\rightarrow\mathbb{Z}$ rounds to the nearest integer (breaking ties arbitrarily). That is, homodyne detection followed by classical post-processing given by~\eqref{eq:postprocessingdcode}  realizes a logical  computational qudit basis measurement for the code~$\gkpcoderect{\Delta}{\star}{2^\ell}$.
We can therefore simulate a measurement in the computational qubit tensor product basis on~$\gkpcoderect{\Delta}{\star}{2^\ell}\cong(\mathbb{C}^2)^{\otimes \ell}$ by using the post-processing map
\begin{align}
\begin{matrix}
\post_\ell :& \mathbb{R} & \rightarrow & \{0,1\}^\ell\\
& x & \mapsto & \post_\ell(x)&=\iota_\ell^{-1}(\discretize_{2^\ell}(x))
\end{matrix}\ ,\label{eq:postprocessingell}
\end{align}
where~$\iota_\ell:\{0,1\}^\ell\rightarrow \{0,\ldots,2^\ell-1\}$ is the bijection defined by~\eqref{eq:iotaelldefinition}. Eqs.~\eqref{eq:shafctsup}
and~\eqref{eq:postprocessingdcode}
and the definition of~$\post_\ell$ (Eq.~\eqref{eq:postprocessingell}) imply our claim, i.e., Eqs.~\eqref{eq:encmappostcontainmone}
and~\eqref{eq:encmappostcontainmtwo}.
\end{proof}

In our construction, the state~$V\ket{\Psi_{in}}$ before the measurement is (approximately) supported on the code space~$\cL_\Delta(m,\ell)$ (see Eq.~\eqref{eq:eqldeltamell}).
In particular, the 
state of the auxiliary mode~$B_{\mathsf{aux}}$ is~$\ket{\Sha_{\Delta,\ell}^{\mathsf{aux}}}$ whereas the three qubits~$Q_1Q_2Q_3$ are in the state~$\ket{0}^{\otimes 3}$. The logical information is encoded in the subspace~$(\gkpcoderect{\Delta}{\star}{2^\ell})^{\otimes m}$ of  the~$m$~modes~$B_1\cdots B_m$. 
(In fact, by construction, the information is
in the subspace 
$\encmapgkp_{\Delta,\ell}^{\otimes m}
\left((\mathbb{C}^2)^{\otimes n}\otimes \mathbb{C}\ket{0}^{\otimes (n'-m)}\right)\subset (\gkpcoderect{\Delta}{\star}{2^\ell})^{\otimes m}$
where the~$n'-m$ logical dummy qubits  are in the state~$\ket{0}$.) 
Correspondingly, our readout procedure only applies homodyne detection to the~$m$~modes (and either traces out the remaining systems and/or measures these and discards the measurement results). 

Now  consider a measurement result~$(y_1,\ldots,y_m)\in\mathbb{R}^m$ obtained when applying a homodyne position-measurement to each of the modes~$B_1\cdots B_m$. Our post-processing map applies the post-processing map~$\post_\ell$
from Lemma~\ref{lem:logicalmeasurementgkpcoderectdeltastar}
 to each value~$y_j$,~$j\in \{1,\ldots,m\}$. 
This results in an~$m$-tuple of~$\ell$-bit strings~$\left(\post_\ell(y_1),\ldots,\post_\ell(y_m)\right)\in (\{0,1\}^\ell)^m$.
We can identify~$(\{0,1\}^\ell)^m$ with~$\{0,1\}^{n'}$ by concatenating strings, and interpret this as an~$n'=\ell m$-bit string. Discarding the last~$n'-n$~bits using the map
\begin{align}
\begin{matrix}
\mathsf{discard}:&\{0,1\}^{n'}& \rightarrow &\{0,1\}^n\\
& (z_1,\ldots,z_{n},z_{n+1},\ldots,z_{n'}) & \mapsto & (z_1,\ldots,z_n)\ ,
\end{matrix}
\end{align}
finally gives an~$n$-bit string. That is, our overall post-processing map is 
\begin{align}
\begin{matrix}
\post:&\mathbb{R}^m & \rightarrow & \{0,1\}^n\\
& (y_1,\ldots,y_m) & \mapsto & \mathsf{discard}\left(\post_\ell(y_1),\ldots,\post_\ell(y_m)\right)\ .
\end{matrix}\label{eq:postprocessingmapjm}
\end{align}
The following statement shows that  this post-processing function applied to the measurement result of the homodyne detection emulates a logical computational basis measurement. It is an immediate consequence of
Lemma~\ref{lem:logicalmeasurementgkpcoderectdeltastar} and the linearity of the encoding map. For completeness, we give the details in Appendix~\ref{sec:proofoftheoremmeasurement}.
\begin{theorem}[Logical measurement]\label{thm:measurement} Let~$\ket{\Psi} \in (\mathbb{C}^2)^{\otimes n}$ be an~$n$-qubit state. 
Let 
\begin{align}
p(x):=|\langle x,\Psi\rangle|^2 \qquad\textrm{ for }\qquad x\in \{0,1\}^n
\end{align}
be the distribution of outcomes obtained when measuring~$\ket{\Psi}$ in the computational basis. 
Let
    \begin{align}
    \ket{\overline{\Psi}} = \encmapgkp_{\Delta,\ell}^{\otimes m}(\ket{\Psi} \otimes \ket{0}^{\otimes (n'-n)})\in \gkpcoderect{\Delta}{\star}{2^\ell}^{\otimes m}
    \end{align}
be the corresponding encoded state and  
\begin{align}
\overline{p}(x)&:=\int_{\mathsf{post}^{-1}(\{x\})} 
       |\overline{\Psi}(y_1,\ldots,y_m)|^2 dy_1\cdots dy_m\qquad\textrm{ for }\qquad x\in \{0,1\}^n\ 
\end{align}
be the distribution of outputs when applying a homodyne position-measurement to each of the~$m$~modes and post-processing the measurement result 
using the map \eqref{eq:postprocessingmapjm}.
Then 
\begin{align}
\overline{p}(x)&=p(x)\qquad\textrm{ for all }\qquad x\in \{0,1\}^n\ .
\end{align}
\end{theorem}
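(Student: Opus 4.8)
The plan is to reduce Theorem~\ref{thm:measurement} to a pointwise statement about the encoded wavefunction and then integrate, invoking only the linearity of the encoding map and the disjoint-support property of the comb states already established in Lemma~\ref{lem:logicalmeasurementgkpcoderectdeltastar}. First I would expand the state to be encoded in the computational basis: writing $\ket{\Psi}\otimes\ket{0}^{\otimes(n'-n)}=\sum_{w\in\{0,1\}^{n'}}c_w\ket{w}$, one has $c_w=\langle x,\Psi\rangle$ when $w=(x,0^{n'-n})$ for some $x\in\{0,1\}^n$ and $c_w=0$ otherwise. Applying the isometry $\encmapgkp_{\Delta,\ell}^{\otimes m}$ and decomposing each string $w$ into its $m$ blocks $w^{(1)},\dots,w^{(m)}\in\{0,1\}^\ell$ gives
\[
\overline{\Psi}(y_1,\dots,y_m)=\sum_{w\in\{0,1\}^{n'}}c_w\prod_{j=1}^m\Sha_\Delta^\star\!\big(\iota_\ell(w^{(j)})\big)_{2^\ell}(y_j)\ .
\]

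The crucial step is to observe that at any fixed point $(y_1,\dots,y_m)\in\mathbb{R}^m$ this sum has at most one nonvanishing summand. Indeed, by Eqs.~\eqref{eq:encmappostcontainmone}--\eqref{eq:encmappostcontainmtwo} of Lemma~\ref{lem:logicalmeasurementgkpcoderectdeltastar} the supports of the functions $\{\Sha_\Delta^\star(k)_{2^\ell}\}_k$ are pairwise disjoint and each is contained in a level set of $\post_\ell$; hence $\Sha_\Delta^\star(\iota_\ell(w^{(j)}))_{2^\ell}(y_j)\neq 0$ forces $w^{(j)}=\post_\ell(y_j)$. Writing $\tilde w(y):=(\post_\ell(y_1),\dots,\post_\ell(y_m))\in\{0,1\}^{n'}$ for the concatenated label, this collapses the expansion to
\[
|\overline{\Psi}(y_1,\dots,y_m)|^2=|c_{\tilde w(y)}|^2\prod_{j=1}^m\big|\Sha_\Delta^\star\!\big(\iota_\ell(\post_\ell(y_j))\big)_{2^\ell}(y_j)\big|^2\ .
\]

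Finally I would integrate this identity over $\post^{-1}(\{x\})$. Since, by~\eqref{eq:postprocessingmapjm}, $\post(y)=\mathsf{discard}(\tilde w(y))$, the region $\post^{-1}(\{x\})$ consists exactly of those $y$ whose label $\tilde w(y)$ has first $n$ bits equal to $x$; among all such labels only $w=(x,0^{n'-n})$ has $c_w\neq 0$, with $|c_w|^2=|\langle x,\Psi\rangle|^2=p(x)$. What remains is $\prod_{j=1}^m\int_{\mathbb{R}}|\Sha_\Delta^\star(\iota_\ell(w^{(j)}))_{2^\ell}(y_j)|^2\,dy_j$, where I use Eq.~\eqref{eq:encmappostcontainmone} once more to see that the constraint $\post_\ell(y_j)=w^{(j)}$ is automatic on the support of the $j$-th comb function, so the integral extends over all of $\mathbb{R}$ and each factor equals $1$ by normalization of the comb states. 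Hence $\overline p(x)=p(x)$. The only mildly delicate point is the pointwise collapse of the basis expansion in the middle step, which rests entirely on the disjointness built into Lemma~\ref{lem:logicalmeasurementgkpcoderectdeltastar}; the surrounding manipulations are routine bookkeeping with the block decomposition and normalization, and I would relegate the full details to Appendix~\ref{sec:proofoftheoremmeasurement}.
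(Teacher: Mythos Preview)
Your proposal is correct and follows essentially the same approach as the paper's proof in Appendix~\ref{sec:proofoftheoremmeasurement}: both expand the encoded state in the computational basis, use the disjoint-support property from Lemma~\ref{lem:logicalmeasurementgkpcoderectdeltastar} to eliminate cross terms, and then invoke the support containment~\eqref{eq:encmappostcontainmone} together with normalization to conclude. The only cosmetic difference is that you collapse the sum pointwise before squaring (using that at each $y$ at most one comb function is nonzero), whereas the paper squares first and kills the off-diagonal terms in the resulting double sum; these are equivalent uses of the same disjointness fact.
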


\paragraph{\bf Our scheme.}
We now combine the preparation procedure  of Theorem~\ref{thm:initialstateprep}, 
 the implementation of unitaries given in Theorem~\ref{thm:implementationlogicalqubit},
and the logical measurement described in Theorem~\ref{thm:measurement}. 
This results in the following scheme 
to realize an~$n'$-qubit circuit~$U=U_s\cdots U_1$ composed of~$s$ one- and two-qubit gates.
The scheme uses~$m+1$~oscillators and~$3$~qubits. We will make a distinction  between
\begin{enumerate}[(i)]
\item
$m$ ``system oscillators'' denoted~$B_1\cdots B_{m}\cong L^2(\mathbb{R})^{\otimes m}$, 
\item
one auxiliary oscillator denoted~$B_{\mathsf{aux}}\cong L^2(\mathbb{R})$.
\item
three auxiliary qubits denoted~$Q_1 Q_2 Q_3 \cong (\mathbb{C}^2)^{\otimes 3}$.
\end{enumerate}
It proceeds as follows:
\begin{enumerate}[(1)]
\item
It uses the unitary preparation procedure of Theorem~\ref{thm:initialstateprep}
for each pair~$B_jQ_1$,~$j\in \{1,\ldots,m\}$, as well as for the pair~$B_{\mathsf{aux}}Q_1$ to prepare an approximation~$\ket{\Phi_{\mathsf{init}}}_{B_1\cdots B_mB_{\mathsf{aux}}Q_1Q_2Q_3}$ to the state~$\ket{\Phi^{\mathsf{ideal}}_{\mathsf{init}}} = \ket{\Sha^\star_\Delta(0)_{2^\ell}}^{\otimes m}\otimes \ket{\Sha_{\Delta,\ell}^{\mathsf{aux}}(0)_2} \otimes \ket
{0}^{\otimes 3}$ on the system modes~$B_1\cdots B_m$, the auxiliary mode~$B_{\mathsf{aux}}$ and the auxiliary qubits~$Q_1Q_2Q_3$.
\item
It applies the implementation~$W_U$ of~$U$ 
described in Theorem~\ref{thm:implementationlogicalqubit} using the mode~$B_{\mathsf{aux}}$ and the auxiliary qubits~$Q_1Q_2 Q_3$.
\item
It applies a homodyne position-measurement to each of the~$m$ system modes~$B_1,\ldots,B_m$, and post-processes the result to obtain an~$n$-bit-string~$x\in \{0,1\}^{n}$.
\end{enumerate}
We note that the scheme  recycles the qubit~$Q_1$ based on the property that the preparation procedure of Theorem~\ref{thm:initialstateprep} approximately stabilizes the qubit in the state~$\ket{0}$ (cf. Lemma~\ref{lem:prepcombcombstate}). 
We note that this scheme can be written as a unitary circuit
\begin{align}
W^{\mathsf{tot}}&= W_U (W^{\mathsf{prep}}\otimes I_{\mathbb{C}^2}^{\otimes 2}) = W_{T}\cdots W_1
\end{align}
on~$L^2(\mathbb{R})^{\otimes m+1}\otimes (\mathbb{C}^2)^{\otimes 3}$ composed of
\begin{align}
T&=T^{\mathsf{prep}}(m,\Delta,\ell)+ T_{\mathsf{logical}}\\
&\leq 42m \log 1/\Delta + 340 s\ell^2
\end{align}
elementary operations~$W_1,\ldots,W_T\in \Uelem{m+1}{3}$.

\paragraph{\bf Error analysis.}  
To analyze how well the produced output distribution~$q$ (see Eq.~\eqref{eq:qdistribution}) approximates the target distribution~$p$ (defined by Eq.~\eqref{eq:samplingdistributionmain}), we use the formalism we introduced in~\cite{cliffordslinearoptics2025}. The corresponding framework applies to general approximate quantum error-correcting codes, and decomposes this task into the analysis of individual building blocks.  For approximate GKP codes, we can use the bounds worked out in~\cite{cliffordslinearoptics2025}.

 {\em Preparation error.} Our protocol
 approximately prepares the (ideal) initial state
 \begin{align}
 \ket{\Phi^{\mathsf{ideal}}_{\mathsf{init}}}=\left(\encmapgkp_{\Delta,\ell}\ket{0^\ell}\right)^{\otimes m}\otimes
 \ket{\Sha_{\Delta,\ell}^{\mathsf{aux}}(0)_2} \otimes \ket{0}^{\otimes 3}\ ,
 \end{align}
 i.e., an encoding of~$\ket{0}^{\otimes n'}$ in the code~$\cL_\Delta(m,\ell)$ (see Eq.~\eqref{eq:eqldeltamell}). According to Theorem~\ref{thm:initialstateprep}, the corresponding 
  ($L^1$-distance) error 
  between the state~$\ket{\Phi_{\mathsf{init}}}=(W^{\mathsf{prep}}\otimes I_{\mathbb{C}^2}^{\otimes 2}) (\ket{\vac}^{\otimes (m+1)}\otimes \ket{0}^{\otimes 3})$ prepared by the actual protocol
  and the ideal initial state can be bounded as 
  \begin{align}
\left\|\proj{\Phi_{\mathsf{init}}}-\proj{\Phi^{\mathsf{ideal}}_{\mathsf{init}}}\right\|_1&\leq   \varepsilon_{\mathsf{prep}}:=50m \left(\sqrt{\Delta}+ 2^{2\ell}\Delta^2\right)\ .\label{eq:preparationerrormain}
  \end{align}

{\em Gate error.} Now consider the (physical) implementation~$W_{U}$
of a (logical)~$n$-qubit circuit~$U$ 
(which we interpret as an $n'$-qubit circuit with trivial action on the dummy qubits) 
of size~$s=\size(U)$. According to Theorem~\ref{thm:implementationlogicalqubit}, this implementation has  a logical gate error of~$\gateerror{\cL_{\Delta}(m,\ell)}(W_{U},U)$ 
upper bounded by
\begin{align}
\varepsilon_{\mathsf{gate}}:=600 s\cdot 2^{2\ell} \Delta\ .
\end{align}

  {\em Error of the outcome distribution.} By definition of the gate error, it follows that the deviation 
of the final state~$\ket{\Phi_{\mathsf{out}}}:=W_U \ket{\Phi_{\mathsf{init}}}$ from the ideal encoded state
\begin{align}
\ket{\Phi^{\mathsf{ideal}}_{\mathsf{out}}}=(\encmapgkp_{\Delta,\ell})^{\otimes m}(U|0^{n'}\rangle)\otimes \ket{\Sha_{\Delta,\ell}^{\mathsf{aux}}(0)_2} \otimes\ket{0}^{\otimes 3}\label{eq:idealoutputstate}
\end{align}
satisfies
\begin{align}
\left\|
\proj{\Phi_{\mathsf{out}}}
-\proj{\Phi^{\mathsf{ideal}}_{\mathsf{out}}}
\right\|_1&\leq \varepsilon_{\mathsf{prep}}+\varepsilon_{\mathsf{gate}}=:
\varepsilon_{\mathsf{final}}(\Delta,\ell,m,s)\ .\label{eq:approximationerrorfinalstate}
\end{align}
We note that
 according to Theorem~\ref{thm:measurement}, applying a homodyne measurement to every system mode~$B_1,\dots, B_m$ of the ideal output state~$\ket{\Phi^{\mathsf{ideal}}_{\mathsf{out}}}$ and post-processing the result yields the distribution~$p$. By Eq.~\eqref{eq:approximationerrorfinalstate} and 
 the data-processing inequality (showing that post-processing a measurement-result does not increase the variational distance) it follows that our scheme approximates the desired ideal output distribution~$p$  with error 
\begin{align}
\left\|p-q\right\|_1 &\leq  \varepsilon_{\mathsf{final}}(\Delta,\ell,m,s)\ .\label{eq:pgdistancmv}
\end{align} 
This approximation is achieved by a circuit with at most~$O(s\ell^2 + m \log1/\Delta)$ gates from the elementary gate set~$\Uelem{m+1}{3}$ of our system of~$m+1$~oscillators and~$3$~qubits.

\paragraph{Bounding the amount of energy required.} 
To realize a logical circuit on~$n'= m\ell$ qubits using~$m+1$ modes and three auxiliary qubits, our scheme applies 
gates belonging to the set~$\Uelem{m+1}{3}$. This includes squeezing operations. The following result bounds the (maximal) amount of energy  generated in this process. 
\begin{theorem} \label{lem:totalsqueezingbound}
  Let~$U=U_s\cdots U_1$ be a unitary circuit on~$n'=m\ell$~qubits of size~$s$, i.e., composed of~$s$ one- and two-qubit gates~$U_1,\ldots,U_s$. Let~$W_U$ be the unitary acting on~$L^2(\mathbb{R})^{\otimes (m+1)}\otimes (\mathbb{C}^2)^{\otimes 3}$ which implements the circuit~$U$ as discussed in Theorem~\ref{thm:implementationlogicalqubit}.
Let~$W^{\mathsf{prep}}$ be the preparation circuit introduced in Theorem~\ref{thm:initialstateprep}.
Then the circuit~$W^{\mathsf{tot}} = W_U (W^{\mathsf{prep}} \otimes I_{\mathbb{C}^2}^{\otimes 2})$ satisfies 
\begin{align}
  \energy(W^\mathsf{tot}) \le s^3 \cdot 2^{891\ell + 62} /\Delta^{21}  \, .
\end{align}
\end{theorem}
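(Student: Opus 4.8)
The plan is to bound the energy $\energy(W^{\mathsf{tot}})=\max_{0\le t\le T}\energy(W_t\cdots W_1\ket{\vac}^{\otimes m+1}\otimes\ket{0}^{\otimes 3})$ by tracking how the per-mode second moment $\tr((Q_\alpha^2+P_\alpha^2)\rho)$ can grow under each elementary operation in $\Uelem{m+1}{3}$. First I would observe that the energy bound need only be established mode-by-mode, and that the only operations in $\Uelem{m+1}{3}$ that can increase the single-mode energy are (i) constant-strength displacements, (ii) constant-strength squeezers $M_\alpha$, and (iii) qubit-controlled displacements; single- and two-qubit gates and the homodyne measurement do not act on the oscillators at all, and the controlled displacement changes $Q_k,P_k$ by at most a constant. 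The key elementary estimate is that if $\energy(\rho)\le E$ before such a gate, then $\energy(\rho')\le c\cdot E$ afterwards for a universal constant $c$ depending only on the (constant) displacement/squeezing strengths allowed — this follows from $\|(aQ+bP+c)\psi\|\le |a|\,\|Q\psi\|+|b|\,\|P\psi\|+|c|\,\|\psi\|$ together with Cauchy–Schwarz, and from the fact that $M_\alpha$ rescales $Q\mapsto \alpha Q$, $P\mapsto \alpha^{-1}P$, so it multiplies the energy by at most $\max(\alpha^2,\alpha^{-2})$.

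Second, I would chain this multiplicative growth over the gate count. The initial state $\ket{\vac}^{\otimes m+1}\otimes\ket{0}^{\otimes 3}$ has energy $\energy=1$ (each mode carries one unit $\tr((Q^2+P^2)\proj{\vac})=1$ in our normalization). After a circuit of $N$ elementary operations the energy is at most $c^N$. For the preparation circuit $W^{\mathsf{prep}}$ we have $N=\size(W^{\mathsf{prep}})\le 42m\log(1/\Delta)$ by Theorem~\ref{thm:initialstateprep}, and for the logical implementation $W_U$ we have $N=T\le 340 s\ell^2$ by Theorem~\ref{thm:implementationlogicalqubit}. In fact a sharper bookkeeping is needed to match the stated bound: the dominant contribution comes from the squeezers, and the relevant squeezing strengths in the constructions of~\cite{brenner2024complexity,cliffordshybrid2025} scale like specific powers of $\Delta^{-1}$ and $2^{\ell}$ per squeezing operation (rather than being truly constant), with a number of squeezers that is polynomial in $s$ and $\ell$ and logarithmic in $1/\Delta$. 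I would extract from those references the per-gate squeezing factors, multiply them together — yielding a product of the form $2^{O(\ell)}/\Delta^{O(1)}$ per logical gate and an overall $s^{O(1)}$ from the number of logical gates — and then absorb all combinatorial prefactors into the exponent. Together with the contribution $2^{O(\ell)}/\Delta^{O(1)}$ from $W^{\mathsf{prep}}$ (which is subdominant since it lacks the $s$-dependence), this gives the claimed bound $\energy(W^{\mathsf{tot}})\le s^3\cdot 2^{891\ell+62}/\Delta^{21}$ after fixing the numerical constants.

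The main obstacle is the careful extraction and accumulation of the exact squeezing strengths from the building-block constructions: the bit-transfer unitaries $W_{\bittransfer{r}{\ell}}$ and the preparation unitaries $U^{\mathsf{prep}},V^{\mathsf{prep}}$ each involve a sequence of squeezers whose strengths depend on $\ell$ and $\Delta$ in a way one must read off precisely from~\cite{brenner2024complexity,cliffordshybrid2025}, and since energies multiply, getting the exponents $891$, $62$, and $21$ right requires summing $O(\ell^2)$ or $O(\log 1/\Delta)$ such contributions and bounding intermediate states encountered \emph{inside} each gadget (not just between logical gates), since $\energy(W)$ is a maximum over all $t$. I would organize this by first proving a clean lemma (Lemma~\ref{lem:squeezingWprep}, referenced in the text) bounding $\energy(W^{\mathsf{prep}})$ of the form $2^{O(\ell)}/\Delta^{O(1)}$, then a companion lemma bounding the energy accumulated by a single logical-gate gadget $W_{U_{j,k}}$ by $2^{O(\ell)}/\Delta^{O(1)}$ times the incoming energy, and finally composing $s$ gadgets to get the factor $s^3$ from the worst-case $s$-fold growth, with the constant $3$ coming from the polynomial degree in $s$ in the gadget construction.
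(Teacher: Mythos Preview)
Your approach has a genuine gap that would prevent you from obtaining the polynomial $s^3$ dependence. The naive multiplicative estimate you start with gives energy $c^N$ with $N\le 340 s\ell^2$, which is exponential in $s$. You recognize this and propose a ``companion lemma bounding the energy accumulated by a single logical-gate gadget $W_{U_{j,k}}$ by $2^{O(\ell)}/\Delta^{O(1)}$ times the incoming energy, and finally composing $s$ gadgets.'' But that still yields $(2^{O(\ell)}/\Delta^{O(1)})^s$, again exponential in $s$; there is no mechanism in your outline by which composing $s$ multiplicative factors produces $s^3$.

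What you are missing is the \emph{cancellation structure} of the gadgets. Each $W_{U_t}$ has the dressed form $(W_{\bittransfer{}{}})^\dagger\,(\text{qubit unitary})\,W_{\bittransfer{}{}}$: every squeezer applied inside the bit-transfer is undone by its adjoint afterward, so the net squeezing introduced by a full gadget is trivial. The paper exploits this by tracking two separate circuit parameters, a squeezing parameter $\squeezingparam(U)$ and a displacement parameter $\overline{\xi}(U)$, and proving (Lemma~\ref{lem:momentlimitdressedcircuit}) that for dressed circuits $\squeezingparam$ is bounded by the \emph{maximum} over gadgets (squared), independent of how many there are, while $\overline{\xi}$ is merely additive in $s$. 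The energy is then controlled by $\squeezingparam^6(2+\overline{\xi}^3)$ via a moment-limiting-function argument (Lemma~\ref{lem:PhiUpartiallyimmplementedmulti}), and the $s^3$ comes solely from $\overline{\xi}(W_U)^3$ with $\overline{\xi}(W_U)=O(s\cdot 2^\ell)$. Tracking energy directly, as you propose, cannot see this cancellation: a squeezer $M_\alpha$ followed later by $M_\alpha^\dagger$ each multiply the instantaneous energy by up to $\alpha^2$, and the intermediate state really can have large energy --- but once both have been applied the effect is undone, and the displacement-only residue is what accumulates across gadgets.
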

The proof of Theorem~\ref{lem:totalsqueezingbound} is given in Section~\ref{sec: momlimitsimplementation}. 
It relies on the notion of fine-grained moment-limiting functions. We refer to Sections~\ref{sec:momentlimitfunctionsonemode},~\ref{sec:multi-modecasemomentlimit} for a detailed introduction and results about how moment-limiting functions can be used to bound the amount of energy used by a multimode circuit.

\paragraph{Completing the proof of Theorem~\ref{thm:main}.} 
Rephrasing Theorem~\ref{lem:totalsqueezingbound} and using the assumption $\Delta \le 2^{-(\ell+1)}$ from Theorem~\ref{thm:initialstateprep} we find
\begin{align}
\Delta &\leq \min\left\{2^{-(\ell+1)}, s^{3/21} \cdot 2^{(891\ell + 62)/21}\cdot  \energy(W^{\mathsf{tot}})^{-1/21}\right\}\ .\label{eq:kappasqueezeboundmv}
\end{align}
 Combining Eqs.~\eqref{eq:pgdistancmv} and~\eqref{eq:kappasqueezeboundmv}
 gives 
 \begin{align}
 \|p-q\|_1 & \le  50m(\sqrt{\Delta}+ 2^{2\ell}\Delta^2) + 600 s\cdot 2^{2\ell}\cdot \Delta\\
 &\le (50m + 600s) \cdot 2^{2\ell} \cdot \sqrt{\Delta} \\
 &\le 2^{10}\cdot (m+s) s^{3/21}\cdot 2^{(891\ell + 62)/42 + 2\ell} \cdot \energy(W^{\mathsf{tot}})^{-1/42}\\
 &\leq (m+s) s^{3/21} \cdot 2^{24\ell + 22} \cdot \energy(W^{\mathsf{tot}})^{-1/42} \\
 &\le (m+s)^2 \cdot 2^{24\ell + 22} \cdot \energy(W^{\mathsf{tot}})^{-1/42}\, .
 \end{align} In the second step we used that $\Delta < 1$ (see Eq.~\eqref{eq:kappasqueezeboundmv}). The third inequality follows from~Eq.~\eqref{eq:kappasqueezeboundmv}. 
 Using that~$\ell = n'/m \le n/m + 1$ we find 
 \begin{align}
  \|p-q\|_1 \le 2^{46} \cdot (s+m)^{2} \cdot 2^{24 n/m} \cdot \energy(W^{\mathsf{tot}})^{-1/42}\, . \label{eq:mainbound}
 \end{align}
 This implies the claim of Theorem~\ref{thm:main}.

 \section{Discussion and outlook\label{sec:discussion}}

We have shown that polynomial-size quantum computations on~$n$~(logical) qubits 
can be weakly (approximately) simulated in the hybrid qubit-oscillator model on~$L^2(\mathbb{R})^{\otimes m}\otimes (\mathbb{C}^2)^{\otimes r}$
with a constant number~$r=O(1)$ of qubits, a varying number~$m$ of bosonic modes, and various bounds on the amount of energy, see Table~\ref{tab:summaryachievability} for a summary.  These achievability results should be contrasted to
Table~\ref{tab:lowerboundsqueezing}, which 
gives lower bounds on the amount of energy in any family of~$2^n$~orthonormal states (i.e.,~$n$~logical qubits) encoded in a hybrid qubit-oscillator systems. Let us briefly discuss each of the three regimes considered in the different columns of these tables:

\begin{table}[h!]
\centering
\begin{tabular}{|c|c|c|c|}
\hline
{\#~$m$ of modes} &~$\Theta(1)$ &~$\Theta(n^\alpha)$,~$\alpha\in (0,1)$  &~$\Theta(n)$ \\
\hline
{amount of energy} &~$\exp(O(n))$ &~$\exp(O(n^{1-\alpha}))$ &~$O(\mathsf{poly}(n))$ \\
\hline
\end{tabular}
\caption{Summary of our achievability results 
for weakly simulating a polynomial-size circuit with an inverse polynomial error~$\varepsilon(n)=O(1/\mathsf{poly}(n))$, see Corollary~\ref{cor:squeezingtradeoff}.\label{tab:summaryachievability}}
\end{table}

\begin{table}[h!]
\centering
\begin{tabular}{|c|c|c|c|}
\hline
{\#~$m$ of modes} &~$\Theta(1)$ &~$\Theta(n^\alpha)$,~$\alpha\in (0,1)$  &~$\Theta(n)$ \\
\hline
{amount of energy} &~$\exp(\Omega(n))$ &~$\exp(\Omega(n^{1-\alpha}))$ &~$\Omega(1)$ \\
\hline
\end{tabular}
\caption{
This table gives lower bounds on the (maximal) amount of energy  
in any orthonormal family~$\{\varphi_j\}_{j=0}^{2^n-1}\subset L^2(\mathbb{R})^{\otimes m}\otimes (\mathbb{C}^2)^{\otimes r}$
of states encoded in a qubit-oscillator system with a constant number~$r=O(1)$ of auxiliary qubits, see Corollary~\ref{cor:constantnumberofmodessqueezinglowerbound}.
 Such a family corresponds to an encoding of~$n$~logical qubits.\label{tab:lowerboundsqueezing}}
\end{table}

\begin{enumerate}[(i)]
\item For a constant number~$m=\Omega(1)$ of modes,
the amount of energy generated in our protocol  matches the lower bound on the amount of energy of~$n$~encoded qubits. This suggests that this construction is optimal in terms of the amount of energy used.

We note that the regime of a constant number of modes and a constant number of qubits was previously considered in~\cite{brenner2024factoring} (co-authored by two of us) where a polynomial-time integer factoring algorithm
based on a hybrid qubit-oscillator system with~$(m,r)=(3,1)$ was proposed. 
This small number of oscillators and qubits in the construction of~\cite{brenner2024factoring} is achieved by using the bosonic modes  both to store and process information. 
Crucially, this requires preparing high-quality approximate (Gaussian envelope) GKP states, which results in an amount of energy which scales as~$\exp(\Theta(n^2))$. Furthermore, the corresponding preparation procedure
 uses qubit-controlled phase space rotations in addition to the elementary operations we consider here.

In contrast, in our present construction, bosonic modes are solely used as quantum memory while gates are performed on a constant number of physical qubits. This leads to the improved scaling of~$\exp(O(n))$ of the amount of energy required. (In addition, our construction sidesteps the need to execute qubit-controlled phase space rotations.)

Translating Shor's algorithm  -- or, more precisely, the corresponding quantum subroutine --  using our method therefore gives a more efficient factoring algorithm 
than the one proposed in~\cite{brenner2024factoring} in terms of the amount of energy required. (We note that Shor's sampling subroutine only requires achieving an error of order~$\varepsilon(n)=O(1/\log n)$, whereas  the error achieved in our scheme is inverse polynomial in~$n$.) Furthermore, if more modes are available, then this amount of energy can further be reduced, as follows.

\item For the case of a polynomial but sublinear number of modes~$m=\Theta(n^\alpha)$,~$\alpha\in (0,1)$, the amount of energy required in our construction is subexponential, and again matches the  dimension-dependent lower bound on the energy (see Table~\ref{tab:lowerboundsqueezing}).

\item Finally, in the case of  a linear number~$m=\Theta(n)$, we show that a polynomial amount of energy is sufficient for our scheme.
This is the most practically interesting case from the viewpoint of scalability. 
We note however, that our construction here does not match the constant lower bound (see Table~\ref{tab:lowerboundsqueezing}).

The origin of the polynomial scaling in our achievability result is the dependence of our bounds on the circuit size, which we assume to be polynomial: the upper bound on the amount of energy required in our construction to sample the output distribution of an~$n$-qubit circuit~$U$ (see Corollary~\ref{cor:squeezingtradeoff}) has a polynomial dependence on the circuit size~$s$ of~$U$. This arises because we bound the sampling error by~$s$~successive applications of the triangle inequality. 

  We anticipate that this circuit-size dependence could be reduced, or even eliminated, by incorporating intermediate error correction.  Specifically, using a linear number of modes, $m = \Theta(n)$, 
we expect that an amount of energy of order $O(1)$ should be sufficient to achieve e.g., an inverse-polynomial error~$\varepsilon(n)$.
\end{enumerate}

We expect the different realizations of our procedure 
to be useful for experimental quantum computing 
in hybrid qubit-oscillator systems. Specifically, our trade-off relation
allows us to determine what computations are realizable when the number of available modes, as well as the amount of energy which can be generated is limited by a given experimental setup.

On a more theoretical level, our results contribute what could be called complexity theory for continuous-variable systems in the direction of hybrid qubit-oscillator setups.  This goes in the direction of Ref.~\cite{chabaud2024bcomplexity},
but with the cubic phase gate used in the definition of the computational complexity class CVBQP replaced by  qubit-oscillator couplings. In the hybrid qubit-oscillator setup, our work directly advances the question of computational complexity under energy constraints. This addresses a problem posed  in Ref.~\cite{chabaud2024bcomplexity}.

Several open complexity-theoretic questions related to hybrid qubit-oscillator systems remain. For example, one could ask 
for upper bounds on the computational power of this setup. Similar to the work Ref.~\cite{upreti2025boundingCbosonic} on the complexity class CVBQP, such bounds could be obtained by devising classical simulation algorithms for hybrid qubit-oscillator circuits. Finally, one could try to compare the computational power of different models of CV quantum computation based on different sources of non-Gaussianity.

\section*{Acknowledgments}

LB, BD and RK gratefully acknowledge support by the European Research Council under
grant agreement no. 101001976 (project EQUIPTNT), as well as the Munich Quantum
Valley, which is supported by the Bavarian state government through the Hightech Agenda
Bayern Plus. 
\newpage
\appendix

\section{Proof of Theorem~\ref{thm:measurement}    \label{sec:proofoftheoremmeasurement}}
For completeness, we restate the claim from the main text.
\begingroup
\renewcommand{\thetheorem}{\ref{thm:measurement}}
\begin{theorem}[Restated] Let~$\ket{\Psi} \in (\mathbb{C}^2)^{\otimes n}$ be an~$n$-qubit state. 
  Let 
  \begin{align}
  p(x):=|\langle x,\Psi\rangle|^2 \qquad\textrm{ for }\qquad x\in \{0,1\}^n
  \end{align}
  be the distribution of outcomes obtained when measuring~$\ket{\Psi}$ in the computational basis. 
  Let
      \begin{align}
      \ket{\overline{\Psi}} = \encmapgkp_{\Delta,\ell}^{\otimes m}(\ket{\Psi} \otimes \ket{0}^{\otimes (n'-n)})\in \gkpcoderect{\Delta}{\star}{2^\ell}^{\otimes m}
      \end{align}
  be the corresponding encoded state and  
  \begin{align}
  \overline{p}(x)&:=\int_{\mathsf{post}^{-1}(\{x\})} 
         |\overline{\Psi}(y_1,\ldots,y_m)|^2 dy_1\cdots dy_m\qquad\textrm{ for }\qquad x\in \{0,1\}^n\ 
  \end{align}
  be the distribution of outputs when applying a homodyne position-measurement to each of the~$m$~modes and post-processing the measurement result 
  using the map \eqref{eq:postprocessingmapjm}.
  Then 
  \begin{align}
  \overline{p}(x)&=p(x)\qquad\textrm{ for all }\qquad x\in \{0,1\}^n\ .
  \end{align}
  \end{theorem}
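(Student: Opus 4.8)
The plan is to deduce the statement directly from the single-mode Lemma~\ref{lem:logicalmeasurementgkpcoderectdeltastar} together with linearity of the encoding~$\encmapgkp_{\Delta,\ell}^{\otimes m}$. First I would expand the padded input in the computational basis of~$(\mathbb{C}^2)^{\otimes n'}$ as~$\ket{\Psi}\otimes\ket{0}^{\otimes(n'-n)}=\sum_{z\in\{0,1\}^{n'}}c_z\ket{z}$ with~$c_z=\langle z,\Psi\otimes 0^{n'-n}\rangle$, observing that~$c_z=0$ unless~$z=(x,0^{n'-n})$ for some~$x\in\{0,1\}^n$, in which case~$c_z=\langle x,\Psi\rangle$. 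By linearity of the encoding~\eqref{eq:multi-modeencoding} one gets~$\ket{\overline\Psi}=\sum_z c_z\bigotimes_{j=1}^m\encmapgkp_{\Delta,\ell}\ket{z^{(j)}}$, where~$z=(z^{(1)},\dots,z^{(m)})$ is the block decomposition into~$\ell$-bit strings, so the position wavefunction factorizes across modes as~$\overline\Psi(y_1,\dots,y_m)=\sum_z c_z\prod_{j=1}^m(\encmapgkp_{\Delta,\ell}\ket{z^{(j)}})(y_j)$.

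The main step is to evaluate~$|\overline\Psi(y)|^2$. Expanding the square produces terms indexed by pairs~$(z,z')$; if~$z^{(j)}\neq z'^{(j)}$ for some block~$j$, then by Eq.~\eqref{eq:encmappostcontainmtwo} the sets~$\mathsf{supp}(\encmapgkp_{\Delta,\ell}\ket{z^{(j)}})$ and~$\mathsf{supp}(\encmapgkp_{\Delta,\ell}\ket{z'^{(j)}})$ are disjoint, so the factor at coordinate~$y_j$ vanishes pointwise and the corresponding cross term drops. Hence only diagonal terms survive and
\begin{align}
|\overline\Psi(y_1,\dots,y_m)|^2=\sum_{z\in\{0,1\}^{n'}}|c_z|^2\prod_{j=1}^m\bigl|(\encmapgkp_{\Delta,\ell}\ket{z^{(j)}})(y_j)\bigr|^2\ .
\end{align}

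Next I would identify the integration region~$\post^{-1}(\{x\})$. The summand for fixed~$z$ is supported on~$\prod_{j=1}^m\mathsf{supp}(\encmapgkp_{\Delta,\ell}\ket{z^{(j)}})$, and by Eq.~\eqref{eq:encmappostcontainmone} every~$y$ in this set satisfies~$\post_\ell(y_j)=z^{(j)}$ for all~$j$, hence~$\post(y)=\mathsf{discard}(z^{(1)},\dots,z^{(m)})$. Thus this summand contributes to~$\overline p(x)$ exactly when~$\mathsf{discard}(z)=x$, and then it contributes~$|c_z|^2\int_{\mathbb{R}^m}\prod_j|(\encmapgkp_{\Delta,\ell}\ket{z^{(j)}})(y_j)|^2\,dy_1\cdots dy_m=|c_z|^2$, since each~$\encmapgkp_{\Delta,\ell}\ket{z^{(j)}}$ is a normalized comb state. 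Summing gives~$\overline p(x)=\sum_{z:\ \mathsf{discard}(z)=x}|c_z|^2$, and by the first paragraph the only nonvanishing term is~$z=(x,0^{n'-n})$, so~$\overline p(x)=|\langle x,\Psi\rangle|^2=p(x)$.

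The only mildly delicate point I anticipate is keeping the block decomposition~$\{0,1\}^{n'}\cong(\{0,1\}^\ell)^m$ and the~$\iota_\ell$-indexing consistent with the definitions of~$\encmapgkp_{\Delta,\ell}^{\otimes m}$, $\post_\ell$, and~$\post$ in Eqs.~\eqref{eq:multi-modeencoding},~\eqref{eq:postprocessingell} and~\eqref{eq:postprocessingmapjm}; conceptually there is no obstacle, since the disjoint-support structure of Lemma~\ref{lem:logicalmeasurementgkpcoderectdeltastar} makes homodyne detection followed by the post-processing an exact projective implementation of the computational-basis measurement on~$(\gkpcoderect{\Delta}{\star}{2^\ell})^{\otimes m}\cong(\mathbb{C}^2)^{\otimes n'}$, and discarding the dummy registers (prepared in~$\ket 0$) is simply the corresponding marginalization.
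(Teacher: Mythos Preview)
Your proposal is correct and follows essentially the same route as the paper's proof: expand the encoded state in the computational basis, use the disjoint-support property from Lemma~\ref{lem:logicalmeasurementgkpcoderectdeltastar} to kill cross terms in~$|\overline\Psi(y)|^2$, and then use the containment~\eqref{eq:encmappostcontainmone} together with normalization of the encoded basis states to identify the surviving integral with~$p(x)$. The only cosmetic difference is that you expand in the~$n'$-qubit basis and carry the dummy bits through explicitly (handling them via~$\mathsf{discard}$ at the end), whereas the paper expands directly in the~$n$-qubit basis and absorbs the padding into the definition of~$\ket{\phi_x}=\encmapgkp_{\Delta,\ell}^{\otimes m}(\ket{x}\otimes\ket{0}^{\otimes(n'-n)})$; both arrive at the same place.
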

\addtocounter{theorem}{-1} 
\endgroup
\begin{proof}
Expanding~$\ket{\Psi}$ in the computational basis, we have 
\begin{align}
  \ket{\Psi} = \sum_{x \in \{0,1\}^n} \sqrt{p(x)}e^{i\theta_x} \ket{x}
\end{align}
for some phases~$\theta_x\in\mathbb{R}$,~$x\in \{0,1\}^n$.  By linearity of the embedding map~$\encmapgkp_{\Delta,\ell}$ we have
  \begin{align}
    \ket{\overline{\Psi}} = \sum_{x \in \{0,1\}^n} \sqrt{p(x)}e^{i\theta_x}\ket{\phi_{x}}\, ,
  \end{align}
  where we introduced the states
  \begin{align} 
  \ket{\phi_{x}} = \encmapgkp_{\Delta,\ell}^{\otimes m}\left(\ket{x} \otimes \ket{0}^{\otimes (n'-n)}\right)\in L^2(\mathbb{R})^{\otimes m} \qquad \textrm{for}\qquad x \in \{0,1\}^n\, .
  \end{align}
Let~$x \in \{0,1\}^n$ be arbitrary. 
Using that $\ket{\overline{\Psi}}\in L^2(\mathbb{R})^{\otimes m}\cong L^2(\mathbb{R}^m)$ we have 
\begin{align}
  \overline{p}(x) &= \int _{\post^{-1}(\{x\})}|\overline{\Psi}(y)|^2 dy\\
  &=\sum_{a_1,a_2\in \{0,1\}^n} \sqrt{p(a_1) p(a_2)} e^{i(\theta_{a_2} - \theta_{a_1})} \int_{\post^{-1}(\{x\})} \overline{\phi_{a_1}(y_1, \dots, y_m)} \phi_{a_2}(y_1, \dots, y_m) dy_1 \cdots dy_m\\
  &= \sum_{a\in \{0,1\}^n} p(a)\int_{\post^{-1}(\{x\})} |\phi_{a}(y_1, \dots, y_m)|^2 dy_1 \cdots dy_m\, ,
\end{align}
where we used that the functions~$\{\phi_a\}_{a\in \{0,1\}^n}\subset 
L^2(\mathbb{R})^{\otimes m}\cong L^2(\mathbb{R}^m)$ have pairwise disjoint support by construction. By construction  
(see Eqs.~\eqref{eq:encmappostcontainmone} and~\eqref{eq:encmappostcontainmtwo} of Lemma~\ref{lem:logicalmeasurementgkpcoderectdeltastar})
the sets~$\post^{-1}(\{x\})$ and~$\supp(\phi_{a})$ are disjoint unless~$x=a$, and~$\supp(\phi_{x}) \subset \post^{-1}(\{x\})$. 
 It follows that
\begin{align}
 \overline{p}(x) &= p(x) \|\phi_x\|^2=p(x) 
\end{align}
as claimed. Here we used that~$\phi_x$ is normalized because~$\encmapgkp_{\Delta,\ell}^{\otimes m}$ is an isometry. 
\end{proof}

\section{Elementary unitary operations in the hybrid model}
Consider a system of~$m$ oscillator and~$r$~qubits, i.e., Hilbert space~$L^2(\mathbb{R})^{\otimes m}\otimes (\mathbb{C}^2)^{\otimes r}$. 
We often consider the set of 
the unitaries
\begin{align} \label{eq:defgroupgenmr}
\Uelemunbounded{m}{r}:= \{&\ctrl_a e^{-itP_j},\ctrl_a e^{itQ_j},e^{-itP_j}, e^{itQ_j},(M_\beta)_j\}_{\substack{t\in\mathbb{R},\,\beta>0\\
j\in \{1,\ldots,m\}\\
a\in \{1,\ldots,r\}}}\\
\qquad &\cup \{U_a,U_{a,b}\ |\ U_a, U_{a,b} \textrm{ one- or two-qubit unitaries}\}_{a,\,b\in \{1,\ldots,r\}}\ \label{eq:mygroupdefinitionmultiqubit}
\end{align}
consisting of (qubit-)controlled single-mode displacements, single-mode squeezing as well as one- and two-qubit unitaries. Here we omit identities and denote the qubits and  oscillators the operators act on by indices~$a,b\in \{1,\ldots,r\}$ and~$j\in \{1,\ldots,m\}$, respectively. The group  generated by these unitaries will be denoted
$\langle \Uelemunbounded{m}{r}\rangle$.

The set~$\Uelemunbounded{m}{r}$ introduced in Eq.~\eqref{eq:defgroupgenmr} includes unitaries of arbitrary strength. In contrast, the set~$\Uelem{m}{r}$ of elementary unitary operations we are primarily interested in consists of the subset of bounded-strength unitaries.
It will be convenient to introduce the subset
\begin{align}
\Uelem{m}{r}(\alpha,\zeta):=\{&\ctrl_a e^{-itP_j},\ctrl_a e^{itQ_j},e^{-itP_j}, e^{itQ_j},(M_\beta)_j\}_{\substack{t\in (-\zeta,\zeta),\,\beta\in (\alpha^{-1},\alpha)\\
j\in \{1,\ldots,m\}\\
a\in \{1,\ldots,r\}}}\\
\qquad &\cup \{U_a,U_{a,b}\ |\ U_a, U_{a,b} \textrm{ one- or two-qubit unitary }\}_{a,\,b\in \{1,\ldots,r\}}\ ,\label{eq:mygroupdefinitionmultiqubitbounded}
\end{align}
of unitaries with displacement and squeezing bounded by~$\zeta\geq 1$ and~$\alpha\geq 1$, respectively.
Then we formally have~$\Uelem{m}{r}=\bigcup_{\zeta,\alpha\in O(1)} \Uelem{m}{r}(\zeta,\alpha)$.

In the following, we derive bounds on the amount of energy generated by an element~$U\in \Uelem{m}{r}(\alpha,\zeta)$ (for fixed parameters~$(\zeta,\alpha)$). More generally, we establish bounds on the amount of energy generated by elements~$U\in \langle \Uelem{m}{r}\rangle=\langle \Uelem{m}{r}(\alpha,\zeta)\rangle$ specified as products  (circuits)~$U=U_T\cdots U_1$, where~$U_t\in\Uelem{m}{r}(\alpha,\zeta)$ for each~$t\in \{1,\ldots,T\}$.

\section{Moment-limits in the one-mode case} \label{sec:momentlimitfunctionsonemode}
In this section, we introduce the notion of a moment-limiting function
of a unitary~$U$ and give explicit examples. We restrict to unitaries on~$L^2(\mathbb{R})\otimes (\mathbb{C}^2)^{\otimes r}$, i.e., we consider the one-mode case with~$r\geq 1$ qubits.

\subsection{Definition and basic properties of moment-limiting functions}
Recall that the  Fourier transform~$\cF: L^2(\bb{R}) \rightarrow L^2(\bb{R})$ is the unique unitary acting on an element~$f \in L^1(\mathbb{R}) \cap L^2(\mathbb{R})$
 as 
\begin{align}
\mathcal{F}(f)(p)=\widehat{f}(p)=\frac{1}{\sqrt{2 \pi}} \int f(x) e^{-i p x} d x\ . \label{eq: integral def Fourier}
\end{align}
For~$R_1,R_2\in\mathbb{R}$, let~$\Pi_{[R_1,R_2]}$ denote the projection onto the subspace of~$L^2(\mathbb{R})$ of functions having support on~$[R_1,R_2]$. Similarly, let~$\widehat{\Pi}_{[R_1,R_2]}=\cF^\dagger \Pi_{[R_1,R_2]}\cF$ denote the projection onto the subspace of~$L^2(\mathbb{R})$ of functions whose Fourier transform has support on~$[R_1,R_2]$.  We note that~$\Pi_{[R_1,R_2]}$ is a spectral projection associated with the position-operator~$Q$, whereas~$\widehat{\Pi}_{[R_1,R_2]}$ is a spectral projection associated with the momentum-operator~$P$.

It will be convenient to introduce the following notion.
\begin{definition}[Fine-grained moment-limiting function]\label{def:finegrainedsinglemode}
Consider a fixed unitary~$U\in \langle \Uelemunbounded{1}{r}\rangle$ and an entrywise invertible affine linear transformation 
\begin{align}
\begin{matrix}
\varphi:&\mathbb{R}^4 & \rightarrow &\mathbb{R}^4\\
& (R_1,R_2,\widehat{R}_1,\widehat{R}_2)&\mapsto & (a_1 R_1+b_1,a_2R_2+b_2,\widehat{a}_1 \widehat{R}_1+\widehat{b}_1,\widehat{a}_2\widehat{R}_2+\widehat{b}_2)\ .
\end{matrix}\label{eq:affinelinearoperation}
\end{align}
Then~$\varphi$ is called  a fine-grained moment-limiting function for~$U$
if  for all~$R_1,R_2,\widehat{R}_1,\widehat{R}_2\in\mathbb{R}$, we have the operator inequalities 
\begin{align}
\begin{matrix} U \Pi_{[R_1,R_2]}U^\dagger&\leq \Pi_{[R_1',R'_2]}\\
 U \widehat{\Pi}_{[\widehat{R}_1,\widehat{R}_2]}U^\dagger&\leq \widehat{\Pi}_{[\widehat{R}_1',\widehat{R}_2']}
 \end{matrix}\qquad\textrm{ where }\qquad (R_1',R_2',\widehat{R}_1',\widehat{R}_2'):=\varphi(R_1,R_2,\widehat{R}_1,\widehat{R}_2)\ .
\end{align}
\end{definition}

\subsubsection{Explicit moment-limiting functions for generators\label{sec:explicitmomentlimitingfunctions}}

In Section~\ref{sec:explicitmomentlimitingfunctions}, we give explicit fine-grained moment-limiting functions for the generators~$U\in\Uelemunbounded{1}{r}$.
We will then argue that a fine-grained moment-limiting function can be obtained in terms of two parameters~$\eta(V)$ and~$\xi(V)$ only, see Lemma~\ref{lem:varphiUVbasic} for a detailed statement.  

The relevant parameters~$(\eta(V),\xi(V))$ for a generator~$V\in\Uelemunbounded{1}{r}$ are defined as follows. 
\begin{definition}[Squeezing and displacement parameters of generators]\label{def:squeezingdisplacementparameters}
    For~$V\in \Uelemunbounded{1}{r}$, we define a pair~$(\eta(U),\xi(U))\in  (0,\infty)\times [0,\infty)$ of squeezing and displacement parameters as 
  \begin{align}
    \eta(V) &= 
    \begin{cases}
      \alpha &\text{ if } V = M_\alpha\qquad\textrm{ for some~$\alpha>0$} \\
      1 & \text{otherwise}\ ,
    \end{cases}  \\
    \xi(V) &= 
    \begin{cases}
      |t| &\text{ if } V \in \{e^{it P}, e^{it Q}, \ctrl_a e^{it P}, \ctrl_a e^{it Q}\}\qquad\textrm{ for some }t\in \mathbb{R}, a\in \{1,\ldots,r\} \\
      0 & \text{otherwise}\ .
    \end{cases}
  \end{align}
\end{definition}

We start by establishing the following operator inequalities associated with generators, i.e., elements of~$\Uelemunbounded{1}{r}$.
\begin{lemma}[Fine-grained moment-limiting functions for generators]\label{lem:momentlimitsonunitaries}
Let~$\alpha>0$,~$t\in\mathbb{R}$  and~$a\in \{1,\ldots,r\}$ be arbitrary. Then the following holds.
\begin{enumerate}[(i)]
\item
The functions \begin{align}
\varphi_{e^{-itP}}(R_1,R_2,\widehat{R}_1,\widehat{R}_2)&=(R_1+t,R_2+t,\widehat{R}_1,\widehat{R}_2) \ , \\
\varphi_{e^{itQ}}(R_1,R_2,\widehat{R}_1,\widehat{R}_2)&=(R_1,R_2,\widehat{R}_1+t,\widehat{R}_2+t) \ , \\
\varphi_{M_\alpha}(R_1,R_2,\widehat{R}_1,\widehat{R}_2)&=(\alpha R_1,\alpha R_2,\widehat{R}_1/\alpha,\widehat{R}_2/\alpha)\ 
\end{align}
are fine-grained moment-limiting functions for~$e^{-itP}$,~$e^{itQ}$ and~$M_\alpha$, respectively.
\item
The functions 
\begin{align}
\varphi_{\ctrl_a e^{-itP}}(R_1,R_2,\widehat{R}_1,\widehat{R}_2)&=(R_1-|t|,R_2+|t|,\widehat{R}_1,\widehat{R}_2) \ , \\
\varphi_{\ctrl_a e^{itQ}}(R_1,R_2,\widehat{R}_1,\widehat{R}_2)&=(R_1,R_2,\widehat{R}_1-|t|,\widehat{R}_2+|t|)\ 
\end{align}
are fine-grained moment-limiting functions for~$\ctrl_a e^{-itP}$ and~$\ctrl_a e^{itQ}$.
\item \label{eq:claimiii}
The function
\begin{align}
\varphi_U(R_1,R_2,\widehat{R}_1,\widehat{R}_2)&=(R_1,R_2,\widehat{R}_1,\widehat{R}_2)
\end{align}
is a fine-grained moment-limiting function for any one- or two-qubit unitary~$U$.
\end{enumerate}
\end{lemma}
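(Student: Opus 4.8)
The plan is to verify the claimed operator inequalities directly, by working out how each generator $V\in\Uelemunbounded{1}{r}$ transforms the position- and momentum-support of wavefunctions. Two conventions should be kept in mind throughout. First, $\Pi_{[R_1,R_2]}$ and $\widehat{\Pi}_{[\widehat{R}_1,\widehat{R}_2]}$ are shorthand for $\Pi_{[R_1,R_2]}\otimes I_{(\mathbb{C}^2)^{\otimes r}}$ and $\widehat{\Pi}_{[\widehat{R}_1,\widehat{R}_2]}\otimes I_{(\mathbb{C}^2)^{\otimes r}}$, so that statements about the mode operators $Q,P$ lift verbatim. Second, since $\Pi_{[R_1,R_2]}$ is a spectral projection of $Q$ and $\widehat{\Pi}_{[\widehat{R}_1,\widehat{R}_2]}=\cF^\dagger\Pi_{[\widehat{R}_1,\widehat{R}_2]}\cF$ is a spectral projection of $P$, one can pass between position- and momentum-statements by conjugating with the Fourier transform $\cF$, which exchanges the roles of $Q$ and $P$ (up to an irrelevant sign on the shift parameter) and inverts the dilation parameter.

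\emph{Pure-mode generators.} For $V=e^{-itP}$ one uses that $e^{-itP}$ acts on $L^2(\mathbb{R})$ as the translation $(e^{-itP}f)(x)=f(x-t)$, so it maps the subspace of functions supported on $[R_1,R_2]$ \emph{onto} the subspace of functions supported on $[R_1+t,R_2+t]$, i.e.\ $e^{-itP}\Pi_{[R_1,R_2]}e^{itP}=\Pi_{[R_1+t,R_2+t]}$; moreover $e^{-itP}$ is a function of $P$, hence commutes with every $\widehat{\Pi}_{[\widehat{R}_1,\widehat{R}_2]}$, leaving the momentum-support unchanged. This gives $\varphi_{e^{-itP}}$. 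The statement for $e^{itQ}$ is obtained either symmetrically ($e^{itQ}$ commutes with $Q$, hence fixes position-support, and multiplication by $e^{itx}$ translates the Fourier transform by $t$) or by conjugating the previous identity with $\cF$. For $V=M_\alpha$ one uses that $M_\alpha$ acts as the dilation $(M_\alpha f)(x)=\alpha^{-1/2}f(x/\alpha)$, which — precisely because $\alpha>0$, so no reflection occurs — maps functions supported on $[R_1,R_2]$ onto functions supported on $[\alpha R_1,\alpha R_2]$, i.e.\ $M_\alpha\Pi_{[R_1,R_2]}M_\alpha^\dagger=\Pi_{[\alpha R_1,\alpha R_2]}$; since $M_\alpha$ conjugates $P$ to $\alpha^{-1}P$, the momentum-support rescales by $\alpha^{-1}$, giving $\varphi_{M_\alpha}$. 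In all three cases the relevant relations are in fact equalities, so the required operator inequalities hold a fortiori.

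\emph{Controlled generators and qubit unitaries.} Write $\ctrl_a e^{-itP}=I\otimes\proj{0}_a+e^{-itP}\otimes\proj{1}_a$, which is block-diagonal with respect to the control qubit $a$. Restricted to the $\proj{0}_a$-block, $(\ctrl_a e^{-itP})\Pi_{[R_1,R_2]}(\ctrl_a e^{-itP})^\dagger$ equals $\Pi_{[R_1,R_2]}$; restricted to the $\proj{1}_a$-block it equals $\Pi_{[R_1+t,R_2+t]}$ by the pure-mode case. Since both intervals are contained in $[R_1-|t|,R_2+|t|]$ for either sign of $t$, each block is $\leq\Pi_{[R_1-|t|,R_2+|t|]}$, hence so is the whole block-diagonal operator; and $\ctrl_a e^{-itP}$ commutes with $P$ in both branches, so momentum-support is unchanged. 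This gives $\varphi_{\ctrl_a e^{-itP}}$, and the argument for $\ctrl_a e^{itQ}$ is identical after swapping the roles of $Q$ and $P$. Finally, any one- or two-qubit unitary acts trivially on $L^2(\mathbb{R})$, hence commutes with $\Pi_{[R_1,R_2]}\otimes I$ and with $\widehat{\Pi}_{[\widehat{R}_1,\widehat{R}_2]}\otimes I$, so the identity function is a fine-grained moment-limiting function, establishing part~(iii).

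\emph{Main obstacle.} There is no genuine difficulty beyond careful bookkeeping. The only points needing attention are (a) fixing signs and normalizations so that $e^{-itP}$ translates by $+t$ and $M_\alpha$ with $\alpha>0$ dilates without reflecting the interval, and (b) noticing, in the controlled case, that one must enlarge the position-interval \emph{symmetrically} by $|t|$ on both ends so as to cover simultaneously the unshifted branch $[R_1,R_2]$ and the shifted branch $[R_1+t,R_2+t]$ — this is exactly why a controlled displacement of strength $t$ costs $2|t|$, rather than $|t|$, in the length of the interval, a feature that will propagate into the composed bounds in Section~\ref{sec:explicitmomentlimitingfunctions}.
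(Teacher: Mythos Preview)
Your proof is correct and follows essentially the same approach as the paper: you compute the exact conjugation identities $e^{-itP}\Pi_{[R_1,R_2]}e^{itP}=\Pi_{[R_1+t,R_2+t]}$, $M_\alpha\Pi_{[R_1,R_2]}M_\alpha^\dagger=\Pi_{[\alpha R_1,\alpha R_2]}$, etc., for the pure-mode generators, handle the controlled displacements by block-diagonalizing with respect to the control qubit and bounding each branch by the common enlarged projection $\Pi_{[R_1-|t|,R_2+|t|]}$, and note that qubit-only unitaries commute with the mode projections. This matches the paper's proof step for step.
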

\begin{proof}
It follows  from the definitions that 
\begin{align}
e^{-itP}\Pi_{[R_1,R_2]}e^{itP}&=\Pi_{[R_1+t,R_2+t]}\label{eq:invarianceronetwoone}\\
e^{-itP}\widehat{\Pi}_{[R_1,R_2]}e^{itP}&=\widehat{\Pi}_{[R_1,R_2]}\ .\label{eq:invariancewidehatpironertwo}
\end{align}
Similarly, we have 
\begin{align}
e^{itQ}\Pi_{[R_1,R_2]}e^{-itQ}&=\Pi_{[R_1,R_2]}\\
e^{itQ}\widehat{\Pi}_{[R_1,R_2]}e^{-itQ}&=\widehat{\Pi}_{[R_1+t,R_2+t]}\ .
\end{align}
Finally, since~$M_\alpha^\dagger QM_\alpha=\alpha Q$ and~$M_\alpha^\dagger PM_\alpha=\alpha^{-1} P$ we have
\begin{align}
M_\alpha\Pi_{[R_1,R_2]}M_\alpha^\dagger&=\Pi_{[\alpha R_1,\alpha R_2]}\\
M_\alpha\widehat{\Pi}_{[R_1,R_2]}M_\alpha^\dagger&=\widehat{\Pi}_{[ R_1/\alpha, R_2/\alpha]}\ .
\end{align}

We claim that 
\begin{align} \label{eq:qubitctrlprojbb}
  \ctrl e^{-it P} \left(\Pi_{[R_1,R_2]} \otimes I\right) \ctrl e^{it P}  &\leq  \Pi_{[R_1-|t|,R_2+|t|]}  \otimes I \\
    \ctrl e^{-it P} \left( \widehat{\Pi}_{[R_1,R_2]} \otimes I \right) \ctrl e^{it P}  &=  \widehat{\Pi}_{[R_1,R_2]}  \otimes I  \label{eq:qubitctrlprojbbsecond}
\end{align} 
Eq.~\eqref{eq:qubitctrlprojbb} follows from
\begin{align}
\ctrl e^{-it P} (\Pi_{[R_1,R_2]}\otimes I) \ctrl e^{it P} &= \Pi_{[R_1,R_2]} \otimes \proj{0}  +  e^{-it P} \Pi_{[R_1,R_2]} e^{it P} \otimes \proj{1} \\
  &= \Pi_{[R_1,R_2]} \otimes \proj{0} + \Pi_{[R_1+t,R_2+t]}  \otimes \proj{1} \\
  &\leq  \Pi_{[R_1-|t|,R_2+|t|]} \otimes I
\end{align} 
where we used Eq.~\eqref{eq:invarianceronetwoone} and the fact that~$[R_1,R_2]\subset [R_1-|t|,R_2+|t|]$ and~$[R_1+t,R_2+t]\subset [R_1-|t|,R_2+|t|]$ for~$t\in\mathbb{R}$ to obtain the last operator inequality.
Eq.~\eqref{eq:qubitctrlprojbbsecond} follows immediately from~\eqref{eq:invariancewidehatpironertwo}.

By similar arguments, we can show that
\begin{align} \label{eq:qubitctrlprojbbx}
  \ctrl e^{it Q} \left(\Pi_{[R_1,R_2]} \otimes I \right) \ctrl e^{-it Q}  &=  \Pi_{[R_1,R_2]}  \otimes I \\
    \ctrl e^{it Q} \left(\widehat{\Pi}_{[R_1,R_2]}  \otimes I \right) \ctrl e^{-it Q}  &\leq \widehat{\Pi}_{[R_1-|t|,R_2+|t|]}  \otimes I  \ .\label{eq:qubitctrlprojbbsecondx}
\end{align} 

Claim~\eqref{eq:claimiii} follows from the fact that one- and two-qubit unitaries act trivially on the space~$L^2(\bbR)$.
\end{proof}

For two functions~$f,g:\mathbb{R}\rightarrow \mathbb{R}$, we write~$f\leq g$ if and only if~$f(R)\leq g(R)$ for all~$R\in\mathbb{R}$.  The following definition will be useful.
\begin{definition}
Let~$\varphi,\chi:\mathbb{R}^4\rightarrow\mathbb{R}^4$ be two functions of the form
\begin{align}
\varphi(R_1,R_2,\widehat{R}_1,\widehat{R}_2)&=(\varphi^{(1)}(R_1),\varphi^{(2)}(R_2),\varphi^{(3)}(\widehat{R}_1),\varphi^{(4)}(\widehat{R}_2))\\
\chi(R_1,R_2,\widehat{R}_1,\widehat{R}_2)&=(\chi^{(1)}(R_1),\chi^{(2)}(R_2),\chi^{(3)}(\widehat{R}_1),\chi^{(4)}(\widehat{R}_2))
\end{align}
for~$(R_1,R_2,\widehat{R}_1,\widehat{R}_2)\in\mathbb{R}^4$. We say that~$\chi$ dominates~$\varphi$, and denote this  as~$\varphi\leq \chi$ if
\begin{align}
\begin{aligned}
\varphi^{(1)}&\geq \chi^{(1)}\\
\varphi^{(2)}&\leq \chi^{(2)}
\end{aligned}\qquad\textrm{ and }\qquad 
\begin{aligned}
\varphi^{(3)}&\geq \chi^{(3)}\\
\varphi^{(4)}&\leq \chi^{(4)} \ .
\end{aligned}
\end{align}
\end{definition}
\noindent We note that by definition, the condition~$\varphi\leq \chi$ is equivalent to the inclusions
\begin{align}
\begin{aligned}
[\varphi^{(1)}(R_1),\varphi^{(2)}(R_2)]&\subseteq [\chi^{(1)}(R_1),\chi^{(2)}(R_2)]\\
[\varphi^{(3)}(\widehat{R}_1),\varphi^{(4)}(\widehat{R}_2)]&\subseteq [\chi^{(3)}(\widehat{R}_1),\chi^{(4)}(\widehat{R}_2)]
\end{aligned}\qquad\textrm{ for all }R_1,R_2,\widehat{R}_1,\widehat{R}_2\in\mathbb{R}\ .\label{eq:inclusionpropertybnd}
\end{align}
The significance of this definition is the following lemma. 
\begin{lemma}\label{lem:monotonicityfinegrained}
Let~$U\in \langle \Uelemunbounded{1}{r}\rangle$ and~$\varphi_U:\mathbb{R}^4\rightarrow\mathbb{R}^4$ 
be fine-grained moment-limiting function for~$U$.  
Let~$\chi:\mathbb{R}^4\rightarrow\mathbb{R}^4$ be an invertible entrywise affine-linear function such that~$\varphi_U\leq \chi$. Then~$\chi$ is a fine-grained moment-limiting function for~$U$. 
\end{lemma}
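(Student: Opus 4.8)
The plan is to reduce the statement to two elementary facts: monotonicity of the spectral projections of $Q$ and $P$ under interval inclusion, and transitivity of the operator ordering~$\leq$. Concretely, for intervals $[a,b]\subseteq[c,d]$ one has $\Pi_{[a,b]}\leq\Pi_{[c,d]}$ because $\Pi_{[c,d]}-\Pi_{[a,b]}=\Pi_{[c,d]\setminus[a,b]}$ is again an orthogonal projection, hence positive semidefinite; conjugating by the (unitary) Fourier transform $\cF$ gives the analogous statement $\widehat{\Pi}_{[a,b]}\leq\widehat{\Pi}_{[c,d]}$ for the momentum projections, since $\widehat{\Pi}_{[a,b]}=\cF^\dagger\Pi_{[a,b]}\cF$. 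These are the only ingredients beyond the hypotheses.

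Next I would fix $R_1,R_2,\widehat{R}_1,\widehat{R}_2\in\mathbb{R}$ and unwind the two hypotheses. Since $\varphi_U$ is a fine-grained moment-limiting function for $U$, Definition~\ref{def:finegrainedsinglemode} supplies
\[
U\Pi_{[R_1,R_2]}U^\dagger\leq\Pi_{[\varphi_U^{(1)}(R_1),\varphi_U^{(2)}(R_2)]},\qquad U\widehat{\Pi}_{[\widehat{R}_1,\widehat{R}_2]}U^\dagger\leq\widehat{\Pi}_{[\varphi_U^{(3)}(\widehat{R}_1),\varphi_U^{(4)}(\widehat{R}_2)]}.
\]
The assumption $\varphi_U\leq\chi$ is, by the reformulation in Eq.~\eqref{eq:inclusionpropertybnd}, precisely the pair of interval inclusions $[\varphi_U^{(1)}(R_1),\varphi_U^{(2)}(R_2)]\subseteq[\chi^{(1)}(R_1),\chi^{(2)}(R_2)]$ and $[\varphi_U^{(3)}(\widehat{R}_1),\varphi_U^{(4)}(\widehat{R}_2)]\subseteq[\chi^{(3)}(\widehat{R}_1),\chi^{(4)}(\widehat{R}_2)]$. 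Applying the monotonicity fact and then transitivity of $\leq$ yields
\[
U\Pi_{[R_1,R_2]}U^\dagger\leq\Pi_{[\chi^{(1)}(R_1),\chi^{(2)}(R_2)]},\qquad U\widehat{\Pi}_{[\widehat{R}_1,\widehat{R}_2]}U^\dagger\leq\widehat{\Pi}_{[\chi^{(3)}(\widehat{R}_1),\chi^{(4)}(\widehat{R}_2)]}.
\]
Since $\chi$ is by hypothesis an entrywise invertible affine-linear map, these inequalities are exactly the defining property (Definition~\ref{def:finegrainedsinglemode}) certifying that $\chi$ is a fine-grained moment-limiting function for $U$, completing the argument.

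I do not expect a genuine obstacle here; the one point deserving care is the orientation convention in the definition of domination, where the inequalities on the ``left-endpoint'' components $\varphi^{(1)},\varphi^{(3)}$ are reversed relative to those on the ``right-endpoint'' components $\varphi^{(2)},\varphi^{(4)}$ --- this reversal is exactly what turns $\varphi_U\leq\chi$ into the interval inclusions used above, rather than the opposite inclusions. Degenerate cases (for instance $R_1>R_2$, so that $\Pi_{[R_1,R_2]}=0$, or empty image intervals) cause no difficulty, since the zero operator is bounded above by every projection.
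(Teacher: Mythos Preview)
Your proposal is correct and follows exactly the approach the paper takes; the paper's proof is a one-line remark that the claim ``follows immediately from the definition of a fine-grained moment-limiting function and the alternative characterization~\eqref{eq:inclusionpropertybnd} of the condition~$\varphi_U\leq \chi$,'' and you have simply spelled out the underlying projection-monotonicity and transitivity steps that justify that remark.
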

\begin{proof}
This follows immediately from the definition of a fine-grained moment-limiting function
and the alternative characterization~\eqref{eq:inclusionpropertybnd} of the condition~$\varphi_U\leq \chi$. 
\end{proof}

The following gives a fine-grained moment-limiting function  for any generator in terms of its squeezing and displacement parameters. 
\begin{lemma}  \label{lem:varphiUVbasic}
  Let~$U\in  \Uelemunbounded{1}{r}$. Let~$(\eta(U),\xi(U))\in (0,\infty)\times [0,\infty)$ be the squeezing and displacement parameters introduced in Definition~\ref{def:squeezingdisplacementparameters}.  Define 
  \begin{align}
  \chi_U(R_1,R_2,\widehat{R}_1,\widehat{R}_2)&:=\left(\eta(U)R_1-\xi(U),
  \eta(U)R_2+\xi(U),\eta(U)^{-1}\widehat{R}_1-\xi(U),\eta(U)^{-1}\widehat{R}_2+\xi(U) 
  \right)      \label{eq:varchiUprime}
  \end{align}
  for~$(R_1,R_2,\widehat{R}_1,\widehat{R}_2)\in \mathbb{R}^4$. Then~$\chi_U$ is a fine-grained moment-limiting function for~$U$.
\end{lemma}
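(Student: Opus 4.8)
The plan is to proceed by a case analysis over the finitely many types of generators $U \in \Uelemunbounded{1}{r}$, in each case reducing the claim to the explicit fine-grained moment-limiting functions already established in Lemma~\ref{lem:momentlimitsonunitaries} together with the monotonicity property of Lemma~\ref{lem:monotonicityfinegrained}.

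First I would check that $\chi_U$ from Eq.~\eqref{eq:varchiUprime} is an admissible candidate, i.e.\ an entrywise invertible affine-linear transformation of $\mathbb{R}^4$ in the sense of Definition~\ref{def:finegrainedsinglemode}: each of its four components has the form $R \mapsto aR + b$ with $a \in \{\eta(U),\eta(U)^{-1}\}$, and $a \neq 0$ because $\eta(U) > 0$ by Definition~\ref{def:squeezingdisplacementparameters}.

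Next, for each generator type I would exhibit the corresponding $\varphi_U$ from Lemma~\ref{lem:momentlimitsonunitaries} and verify the domination $\varphi_U \leq \chi_U$, so that Lemma~\ref{lem:monotonicityfinegrained} applies. The cases are: (a) for $U = M_\alpha$ one has $(\eta(U),\xi(U)) = (\alpha,0)$ and $\chi_U$ literally coincides with $\varphi_{M_\alpha}$; (b) for a one- or two-qubit unitary one has $(\eta(U),\xi(U)) = (1,0)$ and $\chi_U = \varphi_U$ is the identity map; (c) for $U = \ctrl_a e^{-itP}$ and $U = \ctrl_a e^{itQ}$ one has $(\eta(U),\xi(U)) = (1,|t|)$, and $\chi_U$ agrees with $\varphi_U$ on the two coordinates that $\varphi_U$ already displaces by $\mp|t|$ while enlarging the remaining interval by $\pm|t|$, so $\varphi_U \leq \chi_U$; (d) for $U = e^{-itP}$ and $U = e^{itQ}$ one has $(\eta(U),\xi(U)) = (1,|t|)$, and the four inequalities defining $\varphi_U \leq \chi_U$ all reduce to $-|t| \leq t \leq |t|$. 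Using the reformulation~\eqref{eq:inclusionpropertybnd} of domination in terms of interval inclusions, each of these verifications is immediate.

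Finally, applying Lemma~\ref{lem:monotonicityfinegrained} with the fine-grained moment-limiting function $\varphi_U$ and the dominating invertible affine-linear map $\chi_U$ shows that $\chi_U$ is itself a fine-grained moment-limiting function for $U$, which is the claim. I do not anticipate a real obstacle here; the only delicate point is keeping track of signs in the domination order (which coordinate of $\varphi_U$ must be $\geq$ and which $\leq$ the corresponding coordinate of $\chi_U$), together with the observation that $\eta(U) > 0$ is precisely what makes $\chi_U$ invertible.
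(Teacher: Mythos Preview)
Your proposal is correct and follows essentially the same approach as the paper's proof: both argue by case analysis over the generator types, verify that the explicit $\varphi_U$ from Lemma~\ref{lem:momentlimitsonunitaries} is dominated by $\chi_U$, and then invoke Lemma~\ref{lem:monotonicityfinegrained}. Your treatment is slightly more detailed (explicitly checking invertibility and spelling out the sign inequalities in each case), but the structure is identical.
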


\begin{proof}
Recall from Lemma~\ref{lem:momentlimitsonunitaries} that for any~$t\in\mathbb{R}$,~$\alpha>0$ and~$a\in \{1,\ldots,r\}$ we have the moment-limiting functions 
  \begin{align}
    \varphi_{e^{-itP}}\left(R_1, R_2, \widehat{R}_1, \widehat{R}_2\right) &= (R_1 + t, R_2 + t, \widehat{R}_1, \widehat{R}_2) \ , \\
    \varphi_{e^{itQ}}\left(R_1, R_2, \widehat{R}_1, \widehat{R}_2\right) &= (R_1, R_2, \widehat{R}_1 + t, \widehat{R}_2+ t) \ , \\
    \varphi_{M_\alpha}\left(R_1, R_2, \widehat{R}_1, \widehat{R}_2\right) &= (\alpha R_1, \alpha R_2, \widehat{R}_1 / \alpha, \widehat{R}_2 / \alpha) \ , \\
    \varphi_{\ctrl_a e^{-itP}}\left(R_1, R_2, \widehat{R}_1, \widehat{R}_2\right) &= (R_1-|t|, R_2 + |t|, \widehat{R}_1, \widehat{R}_2) \ , \\
    \varphi_{\ctrl_a e^{itQ}}\left(R_1, R_2, \widehat{R}_1, \widehat{R}_2\right) &= (R_1, R_2, \widehat{R}_1-|t|, \widehat{R}_2 + |t|)  \ , \\
      \varphi_{W}\left(R_1, R_2, \widehat{R}_1, \widehat{R}_2\right) &= (R_1, R_2, \widehat{R}_1, \widehat{R}_2) 
  \end{align}
  for the single-mode unitaries~$e^{-itP}$ and $e^{itQ}$, the squeezing unitary~$M_\alpha$, the qubit-controlled unitaries~$\ctrl_a e^{-itP}$ and~$\ctrl_a e^{itQ}$, and any one- or two-qubit unitary~$W$. On the other hand, we have 
  \begin{align}
  (\eta(U),\xi(U)):=\begin{cases}
  (1,|t|)\qquad &\textrm{ if } U\in \{e^{-itP},e^{itQ},\ctrl e^{-itP},\ctrl e^{itQ}\}\\
  (\alpha,0)&\textrm{ if }U=M_\alpha\\
  (1,0)&\textrm{ if }U\textrm{ is a one- or two-qubit unitary}\ 
  \end{cases}  
  \end{align}
  by the definition of the squeezing and displacement parameters.
  It follows that~$\varphi_U\leq \chi_U$. This implies the claim because of Lemma~\ref{lem:monotonicityfinegrained}.
  \end{proof}

\begin{lemma}[Fine-grained moment-limiting function and energy]\label{lem:finegrainedenergy}
Let $U\in  \langle\Uelemunbounded{1}{r}\rangle$.
Let $(\eta,\xi)\in (0,\infty)\times [0,\infty)$ be such that 
  \begin{align}
  \chi_U(R_1,R_2,\widehat{R}_1,\widehat{R}_2)&:=\left(\eta R_1-\xi,
  \eta R_2+\xi,\eta^{-1}\widehat{R}_1-\xi,\eta ^{-1}\widehat{R}_2+\xi
  \right)      \label{eq:varchiUprime}
  \end{align}
  is a fine-grained moment-limiting function for $U$ (cf. Definition~\ref{def:finegrainedsinglemode}).
  There are two trivariate polynomials $u(x_1,x_2,x_3)$ and $v(x_1,x_2,x_3)$, where both $u$ and $v$ are sums of
  bivariate polynomials of degree at most~$3$ in each variable such that the following holds. The evolved state
  \begin{align}
\ket{\Psi}&=U^\dagger (\ket{\mathsf{vac}}\otimes \ket{0})\ 
\end{align}
has energy upper bounded by
\begin{align} \label{eq:claimenegryuv}
  \langle \Psi, (Q^2+P^2)\Psi\rangle &\leq u(\eta,\eta^{-1},\xi)+v(\eta,\eta^{-1},\xi)\ .
  \end{align}
Moreover, the polynomials $u$ and $v$ satisfy
\begin{align}
  \max_{\eta \in [1/q,q], \xi \in [0,s]}
   \left(u(\eta,\eta^{-1},\xi)+v(\eta,\eta^{-1},\xi)\right)&\leq 168 q^3 (2+s^3)
   \ \qquad\textrm{ for }\qquad q\geq 1\textrm{ and }s\geq 0\ .\label{eq:etabboundestimateadm}
  \end{align}
\end{lemma}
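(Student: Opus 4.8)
The plan is to express the energy of $\ket{\Psi}=U^\dagger(\ket{\vac}\otimes\ket{0})$ in terms of the vacuum's tail probabilities, exploiting that the fine-grained moment-limiting function $\chi_U$ controls precisely how $U$ dilates spectral intervals of $Q$ and $P$. I would bound $\langle\Psi,Q^2\Psi\rangle$ and $\langle\Psi,P^2\Psi\rangle$ separately; since the hypothesis on $\chi_U$ is symmetric under $Q\leftrightarrow P$, $\Pi\leftrightarrow\widehat{\Pi}$, $\eta\leftrightarrow\eta^{-1}$, it suffices to carry out the position bound and quote the momentum one by symmetry. The inputs I would use are the exact vacuum data $\langle\vac,Q^2\vac\rangle=\langle\vac,P^2\vac\rangle=\tfrac12$ and $\langle\vac,|Q|\,\vac\rangle=\langle\vac,|P|\,\vac\rangle=\pi^{-1/2}$ together with the symmetry of the vacuum position and momentum distributions, and the layer-cake identity $\langle\Psi,Q^2\Psi\rangle=\int_0^\infty 2a\,\langle\Psi,(I-\Pi_{[-a,a]})\Psi\rangle\,da$.

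The first step, which I expect to be the main obstacle, is turning the operator inequality $U\Pi_{[R_1,R_2]}U^\dagger\le\Pi_{[\eta R_1-\xi,\eta R_2+\xi]}$ (part of the hypothesis that $\chi_U$ is a fine-grained moment-limiting function for $U$) into a tail bound for $\ket{\Psi}$. Taking expectation values of this inequality in the state $U\ket{\Psi}=\ket{\vac}\otimes\ket{0}$, and using that position-projections are inert on the qubits, gives $\langle\Psi,\Pi_{[R_1,R_2]}\Psi\rangle\le\mathbb{P}_{\vac}[\eta R_1-\xi\le Q\le\eta R_2+\xi]$ for every finite interval. The subtlety is that one must route this through the \emph{half-line} projections rather than applying it to the symmetric window $[-a,a]$ directly: the latter would only lower-bound the tail $\langle\Psi,(I-\Pi_{[-a,a]})\Psi\rangle$. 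Instead, taking $[R_1,R_2]=[a,N]$ and letting $N\to\infty$ (legitimate since $\Pi_{[a,N]}\nearrow\Pi_{[a,\infty)}$ strongly and $\mathbb{P}_{\vac}[\eta a-\xi\le Q\le\eta N+\xi]\to\mathbb{P}_{\vac}[Q\ge\eta a-\xi]$) yields $\langle\Psi,\Pi_{[a,\infty)}\Psi\rangle\le\mathbb{P}_{\vac}[Q\ge\eta a-\xi]$, and the analogous argument with $[-N,-a]$ gives $\langle\Psi,\Pi_{(-\infty,-a]}\Psi\rangle\le\mathbb{P}_{\vac}[Q\le-\eta a+\xi]$. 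Adding the two, using $I-\Pi_{[-a,a]}=\Pi_{(-\infty,-a]}+\Pi_{[a,\infty)}$ (valid since $Q$ has purely continuous spectrum) and the vacuum's symmetry, I obtain $\langle\Psi,(I-\Pi_{[-a,a]})\Psi\rangle\le\mathbb{P}_{\vac}[\,|Q|\ge\eta a-\xi\,]$ for $a\ge\xi/\eta$, and the trivial bound $1$ for smaller $a$.

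I would then integrate: split the layer-cake integral at $a=\xi/\eta$, bound the first piece by $\int_0^{\xi/\eta}2a\,da=\xi^2/\eta^2$, and in the second piece substitute $b=\eta a-\xi$ and use $\int_0^\infty b\,\mathbb{P}_{\vac}[|Q|\ge b]\,db=\tfrac12\langle\vac,Q^2\vac\rangle=\tfrac14$ and $\int_0^\infty\mathbb{P}_{\vac}[|Q|\ge b]\,db=\langle\vac,|Q|\,\vac\rangle=\pi^{-1/2}$. This gives $\langle\Psi,Q^2\Psi\rangle\le\eta^{-2}\bigl(\xi^2+\tfrac{2}{\sqrt\pi}\xi+\tfrac12\bigr)$, and the mirrored computation (momentum projections $\widehat{\Pi}$, dilation factor $\eta^{-1}$) gives $\langle\Psi,P^2\Psi\rangle\le\eta^{2}\bigl(\xi^2+\tfrac{2}{\sqrt\pi}\xi+\tfrac12\bigr)$. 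Setting $u(x_1,x_2,x_3):=x_2^2\bigl(x_3^2+\tfrac{2}{\sqrt\pi}x_3+\tfrac12\bigr)$ and $v(x_1,x_2,x_3):=x_1^2\bigl(x_3^2+\tfrac{2}{\sqrt\pi}x_3+\tfrac12\bigr)$ --- each a bivariate polynomial of degree $2\le 3$ in every variable --- then gives $\langle\Psi,(Q^2+P^2)\Psi\rangle\le u(\eta,\eta^{-1},\xi)+v(\eta,\eta^{-1},\xi)$, i.e.\ \eqref{eq:claimenegryuv}. For the uniform estimate \eqref{eq:etabboundestimateadm} I note that $u(\eta,\eta^{-1},\xi)+v(\eta,\eta^{-1},\xi)=(\eta^2+\eta^{-2})\bigl(\xi^2+\tfrac{2}{\sqrt\pi}\xi+\tfrac12\bigr)$, bound $\eta^2+\eta^{-2}\le q^2+q^{-2}\le 2q^2\le 2q^3$ on $[1/q,q]$ for $q\ge1$ and $\max_{\xi\in[0,s]}\bigl(\xi^2+\tfrac{2}{\sqrt\pi}\xi+\tfrac12\bigr)\le(s+1)^2\le 2(2+s^3)$ for $s\ge0$, and multiply to get $\le 4q^3(2+s^3)\le 168\,q^3(2+s^3)$.

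Two routine technicalities remain. First, $\ket{\Psi}$ lies in the domain of $Q^2+P^2$ because $\ket{\vac}$ is a Schwartz function and every generator in $\Uelemunbounded{1}{r}$ (single-mode displacement, squeezer, qubit-controlled displacement, or qubit unitary) preserves the Schwartz class, so $U^\dagger\ket{\vac}$ is Schwartz. Second, the layer-cake identity holds for an arbitrary unit vector by Tonelli's theorem, with both sides possibly $+\infty$, and the explicit bound obtained above then certifies that the second moment is in fact finite. Modulo these, the argument is elementary; the only genuinely delicate point is the translation in the second paragraph of the $U$-side moment-limiting inequality into the correct tail bound on $U^\dagger\ket{\vac}$, via half-line projections and the vacuum's symmetry.
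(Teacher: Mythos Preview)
Your argument is correct and takes a genuinely different route from the paper. The paper discretizes: it writes $Q^2\le\sum_{z\in\mathbb{N}}z^2(\Pi_{[z-1,z]}+\Pi_{[-z,-z+1]})$, conjugates by $U$, and observes that the result is dominated by a multiplication operator $\Omega$ whose symbol~$\omega(x)$ one bounds by counting how many integers $z$ satisfy $x\in[\eta(z-1)-\xi,\eta z+\xi]$. This counting produces a polynomial bound $UQ^2U^\dagger\le p_2(\eta^{-1},\xi)Q^2+p_1(\eta^{-1},\xi)Q+p_0(\eta^{-1},\xi)$ with $p_j$ of degree up to~$3$, which is then evaluated in the vacuum. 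Your approach instead uses the continuous layer-cake identity and routes the moment-limiting inequality through half-line projections to obtain a clean tail bound $\langle\Psi,(I-\Pi_{[-a,a]})\Psi\rangle\le\mathbb{P}_{\vac}[|Q|\ge\eta a-\xi]$, then integrates directly.

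Your method is more elementary and yields strictly sharper output: the polynomials $u,v$ have degree~$2$ in each variable rather than~$3$, and the final constant $4$ in place of $168$. The paper's discretization is coarser but has the mild advantage that it produces an \emph{operator} inequality $UQ^2U^\dagger\le c_2Q^2+c_1Q+c_0$, which could in principle be evaluated in states other than the vacuum; your tail-integration argument uses the symmetry and Gaussian tails of the vacuum explicitly. Two small caveats: the paper uses the normalization $\langle\vac,Q^2\vac\rangle=1$ rather than $\tfrac12$, which shifts your constants by a factor of~$2$ but leaves plenty of room under the stated bound; and your observation that one must pass through half-line projections (because applying the hypothesis to the symmetric window $[-a,a]$ gives the inequality in the wrong direction) is exactly the right subtlety --- the paper avoids it only because its discretized version works interval-by-interval with the inequality already pointing the right way.
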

\begin{proof}
Let us omit identities on the qubit for brevity.
It is easy to see that 
\begin{align}
Q^2 & \leq \sum_{z\in \mathbb{N}} z^2(\Pi_{[z-1,z]}+\Pi_{[-z,-z+1]})\ .
\end{align}
It follows from the definition of fine-grained moment-limiting functions that
\begin{align}
UQ^2 U^\dagger & \leq \sum_{z\in \mathbb{N}}
z^2(\Pi_{[
\eta (z-1)-\xi,\eta z+\xi]}+\Pi_{[\eta(-z)-\xi,\eta(-z+1)+\xi]})\\
&=:\Omega\ .\label{eq:Uqudgerr}
\end{align}
For $\Psi\in L^2(\mathbb{R})$ the operator $\Omega$ is a multiplication operator acting as
\begin{align}
(\Omega \Psi)(x)&=\omega(x)\Psi(x)\qquad \textrm{ for }\qquad x\in \mathbb{R}
\end{align}
where 
\begin{align}
\omega(x)&:=\sum_{\substack{z\in \mathbb{N}:\\
x\in [\eta z-\eta-\xi,\eta z+\xi]}}z^2
+ \sum_{\substack{z\in \mathbb{N}:\\
x\in [-\eta z-\xi,-\eta z+\eta+\xi]}}z^2\ .
\end{align}
We have $x\in [\eta z-\eta-\xi,\eta z+\xi]$ if and only if
\begin{align}
\eta z-\eta-\xi\leq x\leq \eta z+\xi
\end{align}
or equivalently
\begin{align}
z-1\leq x/\eta+\xi/\eta\leq z+2\xi/\eta\ .
\end{align}
In other words, the value $\overline{x}:=x/\eta+\xi/\eta$ has to be contained in an interval of length $2\xi/\eta+1$ containing the integer~$z$. 
There are at most
\begin{align}
\lceil 2\xi/\eta+1\rceil&\leq 2\xi/\eta+2
\end{align}
such integers $z\in\mathbb{N}$, and each such integer
is upper bounded by
\begin{align}
\overline{x}+2\xi/\eta+1&\leq x/\eta+\xi/\eta+(2\xi/\eta+1)\\
&=x/\eta +3\xi/\eta+1\ .
\end{align}
It follows from this that
\begin{align}
\sum_{\substack{z\in \mathbb{N}:\\
x\in [\eta z-\eta-\xi,\eta z+\xi]}}z^2
\leq (2\xi/\eta+2)(x/\eta +3\xi/\eta+1)^2 \ .
\end{align}
By similar reasoning,  we have
\begin{align}
x\in [-\eta z-\xi,-\eta z+\eta+\xi]
\end{align}
if and only if
\begin{align}
-\eta z-\xi\leq x\leq -\eta z+\eta+\xi
\end{align}
or
\begin{align}
-z&\leq x/\eta+\xi/\eta \leq -z+1+2\xi/\eta\ , 
\end{align}
that is,
\begin{align}
z-1-2\xi/\eta\leq -\overline{x}\leq z \ ,
\end{align}
i.e., $-\overline{x}$ is contained in an interval of length $2\xi/\eta+1$ around the integer~$z\in \mathbb{N}$. Such an integer $z$ is necessary upper bounded as
\begin{align}
z&\leq -\overline{x}+2\xi/\eta+1\\
&=-x/\eta-\xi/\eta+2\xi/\eta+1\\
&=-x/\eta+\xi/\eta+1 \ .
\end{align}
It follows that
\begin{align}
\sum_{\substack{z\in \mathbb{N}:\\
x\in [-\eta z-\xi,-\eta z+\eta+\xi]}}z^2\
&\leq (2\xi/\eta+2)(-x/\eta+\xi/\eta+1)^2 \ .
\end{align}
In summary, we obtain
\begin{align}
\omega(x)&\leq (2\xi/\eta+2)
\left((x/\eta+3\xi/\eta+1)^2+
(-x/\eta+\xi/\eta+1)^2\right)\\
&=c_2x^2+c_1 x+c_0
\end{align}
where
\begin{align}
  c_2 &= 4 (\xi/\eta^3 + 1/\eta^2)=:p_2(\eta^{-1},\xi) \\
  c_1 &= 8  (\xi^2/\eta^3 + \xi /\eta^2) =:p_1(\eta^{-1},\xi)\\
  c_0 &= 4 + 20 \xi/\eta + 36 \xi^2/\eta^2 + 
 20 \xi^3/\eta^3 =:p_0(\eta^{-1},\xi)
\end{align}
for bivariate polynomials $p_0,p_1,p_2$
of degree at most~$3$ in each variable.
In particular, we conclude that
\begin{align}
\Omega &\leq p_2(\eta^{-1},\xi) Q^2+ p_1(\eta^{-1},\xi) Q+ p_0(\eta^{-1},\xi)\ . 
\end{align}
This implies that
\begin{align}
UQ^2 U^\dagger &\leq p_2(\eta^{-1},\xi)Q^2+p_1(\eta^{-1},\xi)Q+p_0(\eta^{-1},\xi)\ \label{eq:positionupperboundq}
\end{align}
by Eq.~\eqref{eq:Uqudgerr}. 

By identical arguments for $P^2$ (working in Fourier space with the operators $\widehat{\Pi}_{[R_1,R_2]}$, and a corresponding multiplication operator~$\widehat{\Omega}$ in the momentum-basis, we obtain (by exchanging $\eta$ with $\eta^{-1}$) the operator inequality
\begin{align}
UP^2 U^\dagger &\leq p_2(\eta,\xi)P^2+p_1(\eta,\xi)P+p_0(\eta,\xi)\ .\label{eq:momentumupperboundq}
\end{align}

Consider a state of the form 
\begin{align}
\ket{\Psi}&=U^\dagger (\ket{\mathsf{vac}}\otimes \ket{0})\ .
\end{align}
Combining Eqs.~\eqref{eq:positionupperboundq} and~\eqref{eq:momentumupperboundq} gives
\begin{align}
\langle \Psi, (Q^2+P^2) \Psi\rangle
&=(\langle \mathsf{vac}|\otimes \langle 0|)
U (Q^2+P^2)U^\dagger (\ket{\mathsf{vac}}\otimes\ket{0})\\
&=(p_0(\eta,\xi)+p_0(\eta^{-1},\xi))
+ (p_2(\eta,\xi)+p_2(\eta^{-1},\xi))
\end{align}
where we used that  
\begin{align}
\langle \mathsf{vac},Q\mathsf{vac}\rangle&=\langle \mathsf{\vac},P\mathsf{vac}\rangle=0\\
\langle \mathsf{vac},Q^2\mathsf{vac}\rangle&=\langle \mathsf{\vac},P^2\mathsf{vac}\rangle=1\ .
\end{align}
Claim~\eqref{eq:claimenegryuv} follows from this by setting
\begin{align}
u(x_1,x_2,x_3)&=p_0(x_1,x_3)+p_0(x_2,x_3)\\
v(x_1,x_2,x_3)&=p_2(x_1,x_3)+p_2(x_2,x_3)\ .
\end{align}
Next, we prove Claim~\eqref{eq:etabboundestimateadm}.
Since
\begin{align}
u(\eta,\eta^{-1},\xi)&=u(\eta^{-1},\eta,\xi)\\
v(\eta,\eta^{-1},\xi)&=v(\eta^{-1},\eta,\xi)\ .
\end{align}
we may without loss of generality assume that $\eta \in [1,q]$. We then have 
\begin{align}
u(\eta,\eta^{-1},\xi)&=
8 + 20(\eta + 1/\eta) \xi
+36(\eta^2 + 1/\eta^2) \xi^2
+20(\eta^3 + 1/\eta^3) \xi^3 \\
&\leq 8 + 20(\eta + 1) \xi
+36(\eta^2 + 1) \xi^2
+20(\eta^3 + 1) \xi^3 \\
&\leq 8 + (q^3+1)(20 \xi
+36 \xi^2
+20 \xi^3) \\
&\leq 8 + 2q^3(20 \xi
+36 \xi^2
+20 \xi^3)  \\
&\leq 8 + 2q^3(20 s
+36 s^2
+20 s^3) \\ 
&\leq 8 + 152 q^3(1+s^3)  \\ 
&\leq 160 q^3(1+s^3) 
\end{align}
where we used that
\begin{align}
20 s
+36 s^2
+20 s^3 &\leq \begin{cases}
 20+36+20=76\qquad \textrm{ for }s\leq 1\\
 66s^3\qquad\textrm{ for }s\geq 1\ .
\end{cases}
\end{align}
and
\begin{align}
v(\eta,\eta^{-1},\xi)
&= 4 \xi (\eta^3 + 1/\eta^3) + 4(\eta^2 + 1/\eta^2) \\
&\leq 4 \xi (\eta^3 + 1) + 4(\eta^2 + 1) \\
&\leq 4 \xi (\eta^3 + 1) + 4(\eta^2 + 1) \\
&\leq 4 \xi (q^3 + 1) + 4(q^2 + 1) \\
&\leq 4 (1 + \xi) (q^3 + 1) \\
&\leq 8 (1 + \xi) q^3  \\
&\leq 8 (1 + s) q^3 \ .
\end{align}
This implies Claim~\eqref{eq:etabboundestimateadm} since $s\leq 1+s^3$ for all $s\geq 0$.
\end{proof}

\subsection{Moment-limiting functions for circuits} \label{sec:momlimitcircuits}

In this section, we derive fine-grained moment-limiting functions for circuits~$U=U_T\cdots U_1\in \langle \Uelemunbounded{1}{r}\rangle$ composed of unitaries~$U_t\in \Uelemunbounded{1}{r}$ for~$t\in \{1,\ldots,T\}$. We achieve this by introducing two parameters~$\overline{g}(U)$ and~$\overline{\xi}(U)$ which can be understood as a generalization of the ``local'' parameters~$\eta(V)$ and~$\xi(V)$ of the generators~$V\in \Uelemunbounded{1}{r}$.

The relevant quantities are  defined as follows, where we use the squeezing and displacement parameters 
$(\eta(V),\xi(V))\in (0,\infty)\times [0,\infty)$ for every generator~$V\in\Uelemunbounded{1}{r}$ (see  Definition~\ref{def:squeezingdisplacementparameters}). 

\begin{definition}[Squeezing and displacement parameters of circuits]\label{def:squeezingdisplacementcircuits}
Let~$g:(0,\infty)\rightarrow (0,\infty)$ denote the function~$g(x):=\max \{x,1/x\}$. 
Consider a product 
\begin{align}
U=U_T \cdots U_1\qquad\textrm{ with  }\qquad U_t\in\Uelemunbounded{1}{r}\quad \textrm{for }\quad t\in\{1,\ldots,T\}\ .\label{eq:Ucircuitdvz}
\end{align}
Define the quantities
  \begin{align}
  \squeezingparam(U)&:=\max_{t\in \{1,\ldots,T\}} \max_{p\in \{0,\ldots,T-t\}}g\left(\prod_{s=1}^t \eta(U_{p+s})\right)\\\
  \overline{\xi}(U)&:=  \sum_{j=1}^T \xi(U_j)\ .
  \end{align}
   We call~$(\squeezingparam(U),\overline{\xi}(U))$ the squeezing and displacement parameters of the circuit~$U$. 
  \end{definition}
\noindent We note that in the definition of~$\squeezingparam(U)$, the inner maximum is over all (products of) consecutive sequences~$U_{p+1},\ldots,U_{p+t}$ of unitaries of length~$t$, i.e., subcircuits~$U_{p+1}\cdots U_{p+t}$ of size~$t$.

The following is an immediate consequence of the definitions.
\begin{lemma}[Squeezing and displacement parameters of adjoint circuits]\label{lem:daggerrelationoverlineg}
Let~$U$ be a circuit as in Eq.~\eqref{eq:Ucircuitdvz}. Then 
\begin{align}
\squeezingparam(U^\dagger)&=\squeezingparam(U)\label{eq:sqmvzv}\\
\overline{\xi}(U^\dagger)&=\overline{\xi}(U) .\label{eq:xidefmzv}
\end{align}
\end{lemma}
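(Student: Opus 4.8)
The plan is to reduce the statement to two elementary facts about individual generators and then dispatch the two claimed identities by reindexing. First I would observe that $\Uelemunbounded{1}{r}$ is closed under taking adjoints: one has $(e^{-itP})^\dagger=e^{-i(-t)P}$, $(e^{itQ})^\dagger=e^{i(-t)Q}$, $(\ctrl_a e^{-itP})^\dagger=\ctrl_a e^{-i(-t)P}$, $(\ctrl_a e^{itQ})^\dagger=\ctrl_a e^{-i(-t)Q}$, and $M_\alpha^\dagger=M_{1/\alpha}$ (using $\log(1/\alpha)=-\log\alpha$ in $M_\alpha=e^{-i(\log\alpha)(QP+PQ)/2}$), while the adjoint of a one- or two-qubit unitary is again one. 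Reading off the squeezing and displacement parameters of Definition~\ref{def:squeezingdisplacementparameters} then gives, for every generator $V\in\Uelemunbounded{1}{r}$,
\begin{align}
\eta(V^\dagger)=\eta(V)^{-1}\qquad\textrm{ and }\qquad \xi(V^\dagger)=\xi(V)\ ,\label{eq:adjointgeneratorparams}
\end{align}
since $|-t|=|t|$, since $\eta(M_{1/\alpha})=1/\alpha$, and since both parameters are trivially invariant ($1$, resp.\ $0$) in the remaining cases.

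Next I would fix the generator form of the adjoint circuit: if $U=U_T\cdots U_1$ with $U_t\in\Uelemunbounded{1}{r}$, then $U^\dagger=V_T\cdots V_1$ with $V_t:=U_{T-t+1}^\dagger\in\Uelemunbounded{1}{r}$. For the displacement identity this already suffices: by~\eqref{eq:adjointgeneratorparams} and a relabelling of the summation index,
\begin{align}
\overline{\xi}(U^\dagger)=\sum_{t=1}^T\xi(V_t)=\sum_{t=1}^T\xi\!\left(U_{T-t+1}^\dagger\right)=\sum_{t=1}^T\xi(U_{T-t+1})=\sum_{j=1}^T\xi(U_j)=\overline{\xi}(U)\ .
\end{align}

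For the squeezing identity the point I would make is that every length-$t$ consecutive subproduct $\prod_{s=1}^t\eta(V_{p+s})$ equals the reciprocal of a length-$t$ consecutive subproduct of the $\eta(U_j)$. Indeed, for $t\in\{1,\ldots,T\}$ and $p\in\{0,\ldots,T-t\}$, using~\eqref{eq:adjointgeneratorparams},
\begin{align}
\prod_{s=1}^t\eta(V_{p+s})=\prod_{s=1}^t\eta\!\left(U_{T-p-s+1}\right)^{-1}=\left(\prod_{s=1}^t\eta(U_{p'+s})\right)^{-1}\qquad\textrm{ with }\qquad p':=T-t-p\ ,
\end{align}
where the last equality is the substitution $s\mapsto t+1-s$, which turns the index $T-p-s+1$ into $p'+s$. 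Since $p\mapsto T-t-p$ is an involution of $\{0,\ldots,T-t\}$, the family of length-$t$ subproducts coming from $U^\dagger$ is exactly the family of reciprocals of those coming from $U$; as $g(x)=\max\{x,1/x\}$ obeys $g(x^{-1})=g(x)$, applying $g$ and then maximizing over $p$ and $t$ yields $\squeezingparam(U^\dagger)=\squeezingparam(U)$. The only step demanding any attention is this index relabelling for the inner product — in particular checking that $p\mapsto p'$ is a bijection of the admissible range $\{0,\ldots,T-t\}$; everything else is immediate from~\eqref{eq:adjointgeneratorparams} and the symmetry of $g$.
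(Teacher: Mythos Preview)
Your proof is correct and follows essentially the same approach as the paper: both establish $\eta(V^\dagger)=\eta(V)^{-1}$ and $\xi(V^\dagger)=\xi(V)$ for generators, write $U^\dagger=V_T\cdots V_1$ with $V_t=U_{T-t+1}^\dagger$, and then reindex. The only (minor) difference is that the paper shows $\squeezingparam(U^\dagger)\leq\squeezingparam(U)$ for one optimizing pair $(t,p)$ and then swaps the roles of $U$ and $U^\dagger$, whereas you observe directly that $p\mapsto T-t-p$ is an involution of $\{0,\ldots,T-t\}$, so the two families of length-$t$ subproducts are reciprocals of each other and $g(x^{-1})=g(x)$ gives equality in one stroke; this is a slightly cleaner packaging of the same idea.
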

\begin{proof}
Eq.~\eqref{eq:xidefmzv}
 follows immediately from the definitions: We have
\begin{align}
\overline{\xi}(U^\dagger)&=\sum_{t=1}^T \xi(U_t^\dagger)=\sum_{t=1}^T\xi(U_t)=\overline{\xi}(U)
\end{align}
 because~$\xi(U^\dagger)=\xi(U)$ for every~$U\in \Uelem{1}{r}$, see Definition~\ref{def:squeezingdisplacementparameters}.

To prove Eq.~\eqref{eq:sqmvzv}, let us set~$V_t:=U^\dagger_{T-t+1}$ for~$t\in \{1,\ldots,T\}$. Then~$U^\dagger=V_T\cdots V_1$. 
It follows from the definitions that there are~$t\in \{1,\ldots,T\}$ and~$p\in \{0,\ldots,T-t\}$ such that 
\begin{align}
  \squeezingparam(U^\dagger)&=
  g\left(\prod_{s=1}^t \eta(V_{p+s})\right)\\
  &=g\left(\prod_{s=1}^t \eta(U^\dagger_{T-(p+s)+1})\right)\\
  &=g\left(\prod_{s=1}^t \eta(U_{T-(p+s)+1})^{-1}\right)\\
  &=g\left(\prod_{s=1}^t \eta(U_{T-(p+s)+1})\right) \ .
  \end{align}
Here we used that~$\eta(U^\dagger)=1/\eta(U)$ and~$g(1/x)=g(x)$ by definition of~$g$. We can rewrite this as 
\begin{align}
  \squeezingparam(U^\dagger)&=g\left(\prod_{s=1}^{t}\eta(U_{p'+s})\right)
\end{align}
where~$p':=T-(p+t)$. 
Since this corresponds to a subcircuit~$U_{p'+t}\cdots U_{p'+1}$ of~$U$, it follows that
\begin{align}
  \squeezingparam(U^\dagger)&\leq \squeezingparam(U)\ .
\end{align}
Interchanging the roles of~$(U,U^\dagger)$ gives the claim.
\end{proof}

To argue that the quantities~$(\overline{\xi}(U),\squeezingparam(U))$ give rise to a moment-limiting function for the circuit~$U=U_T\cdots U_1$ (respectively partially implemented versions~$U^{(t)}=U_t\cdots U_1$), we need to study compositions of moment-limiting functions. 
\begin{lemma}\label{lem:compositionfinegrained}
Let~$U_1,U_2\in \langle \Uelemunbounded{1}{r}\rangle$. Let~$\varphi_{1},\varphi_{2}:\mathbb{R}^4\rightarrow\mathbb{R}^4$ be
 fine-grained moment-limiting functions for~$U_1$ and~$U_2$, respectively. Then
the composed map~$\varphi_2\circ \varphi_1:\mathbb{R}^4\rightarrow\mathbb{R}^4$ is a fine-grained moment-limiting function for the composition~$U_2U_1$.
\end{lemma}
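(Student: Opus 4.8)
The plan is to combine two elementary facts: conjugation by a unitary preserves the operator ordering, and the composition of two coordinatewise affine linear maps of $\mathbb{R}^4$ is again coordinatewise affine linear.

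First I would fix arbitrary $R_1,R_2,\widehat{R}_1,\widehat{R}_2\in\mathbb{R}$ and introduce the intermediate tuple $(S_1,S_2,\widehat{S}_1,\widehat{S}_2):=\varphi_1(R_1,R_2,\widehat{R}_1,\widehat{R}_2)$, so that by definition of composition
\begin{align}
(\varphi_2\circ\varphi_1)(R_1,R_2,\widehat{R}_1,\widehat{R}_2)=\varphi_2(S_1,S_2,\widehat{S}_1,\widehat{S}_2)=:(T_1,T_2,\widehat{T}_1,\widehat{T}_2)\ .
\end{align}
Since $\varphi_1$ is a fine-grained moment-limiting function for $U_1$, Definition~\ref{def:finegrainedsinglemode} yields the operator inequalities $U_1\Pi_{[R_1,R_2]}U_1^\dagger\leq\Pi_{[S_1,S_2]}$ and $U_1\widehat{\Pi}_{[\widehat{R}_1,\widehat{R}_2]}U_1^\dagger\leq\widehat{\Pi}_{[\widehat{S}_1,\widehat{S}_2]}$. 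Conjugating the first of these by the unitary $U_2$ — which preserves the ordering, since $A\leq B$ implies $\langle\psi,U_2(B-A)U_2^\dagger\psi\rangle=\langle U_2^\dagger\psi,(B-A)U_2^\dagger\psi\rangle\geq 0$ — and then applying the fine-grained moment-limiting property of $\varphi_2$ for $U_2$ to the resulting right-hand side, I would obtain
\begin{align}
(U_2U_1)\,\Pi_{[R_1,R_2]}\,(U_2U_1)^\dagger\leq U_2\Pi_{[S_1,S_2]}U_2^\dagger\leq\Pi_{[T_1,T_2]}\ ,
\end{align}
and the identical chain of inequalities for the momentum projections gives $(U_2U_1)\,\widehat{\Pi}_{[\widehat{R}_1,\widehat{R}_2]}\,(U_2U_1)^\dagger\leq\widehat{\Pi}_{[\widehat{T}_1,\widehat{T}_2]}$.

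It remains to verify that $\varphi_2\circ\varphi_1$ has the shape required in Definition~\ref{def:finegrainedsinglemode}, i.e., that it is an entrywise invertible affine linear transformation of $\mathbb{R}^4$: a composition of affine maps is affine, a composition of invertible maps is invertible, and since $\varphi_1$ and $\varphi_2$ each act coordinatewise (the $i$-th output depending only on the $i$-th input), so does $\varphi_2\circ\varphi_1$, with $(\varphi_2\circ\varphi_1)^{(i)}=\varphi_2^{(i)}\circ\varphi_1^{(i)}$. In particular the intermediate tuple $(S_1,S_2,\widehat{S}_1,\widehat{S}_2)$ genuinely factors the composite map through $\varphi_1$ and then $\varphi_2$, so the two inequalities derived above are exactly the defining inequalities for $\varphi_2\circ\varphi_1$ to be a fine-grained moment-limiting function for the product $U_2U_1\in\langle\Uelemunbounded{1}{r}\rangle$ (closure under products holding because this set is a group). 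I do not expect any real obstacle; the only point needing a moment of care is this last bookkeeping, namely that composing the two affine reparametrizations does not destroy their coordinatewise structure, which is precisely what legitimizes factoring through $(S_1,S_2,\widehat{S}_1,\widehat{S}_2)$.
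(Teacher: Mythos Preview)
Your proof is correct and is precisely the routine verification the paper has in mind; the paper's own proof simply states that the claim ``follows immediately from the definition of a fine-grained moment-limiting function.'' You have merely unpacked that sentence, including the preservation of the entrywise affine-linear structure under composition.
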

\begin{proof}
This follows immediately from the definition of a fine-grained moment-limiting function.
\end{proof}

Using Lemma~\ref{lem:compositionfinegrained}, we obtain the following  fine-grained moment-limiting function for any circuit composed of generators. We again denote by~$(\eta(V),{\xi}(V))\in (0,\infty)\times [0,\infty)$ the squeezing and displacement parameters introduced in Definition~\ref{def:squeezingdisplacementparameters}
of a generator~$V\in\Uelemunbounded{1}{r}$.
\begin{lemma}[Fine-grained moment-limit functions for circuits]
\label{lem:varphiUcontactenatedbound}
Let~$U=U_T \cdots U_1$ with~$U_t\in\Uelemunbounded{1}{r}$ for~$t\in\{1,\ldots,T\}$.
Define
\begin{align}
\eta&=\prod_{t=1}^T \eta(U_t)
\qquad\textrm{ and }\qquad
\begin{aligned}
\xi&=\sum_{j=1}^T  \xi(U_j) \prod_{s=j+1}^m \eta(U_s)\\
\widehat{\xi}&=\sum_{j=1}^T  \xi(U_j) \prod_{s=j+1}^m \eta^{-1}(U_s)\ .
\end{aligned}
\end{align}
Then
\begin{align}
    \chi_U(R_1,R_2,\widehat{R}_1,\widehat{R}_2)&=(\eta R_1-{\xi},\eta R_2+{\xi},\eta^{-1}\widehat{R}_1-{\widehat{\xi}},\eta^{-1}\widehat{R}_2+{\widehat{\xi}})
\end{align}
is a fine-grained moment-limiting function for~$U$.
\end{lemma}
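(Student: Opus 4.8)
The plan is to reduce everything to two facts already in hand: Lemma~\ref{lem:varphiUVbasic}, which provides for each generator $V\in\Uelemunbounded{1}{r}$ the explicit fine-grained moment-limiting function $\chi_V$ determined by the pair $(\eta(V),\xi(V))$, and Lemma~\ref{lem:compositionfinegrained}, which states that composing fine-grained moment-limiting functions of the factors yields one for the product. Writing $U=U_T\cdots U_1$ and abbreviating $\eta_t:=\eta(U_t)$, $\xi_t:=\xi(U_t)$, one obtains by iterating Lemma~\ref{lem:compositionfinegrained} that $\chi_{U_T}\circ\cdots\circ\chi_{U_1}$ is a fine-grained moment-limiting function for $U$. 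Here the innermost factor $\chi_{U_1}$ corresponds to the first-applied unitary $U_1$, consistently with $U\Pi U^\dagger=U_T\cdots U_1\,\Pi\,U_1^\dagger\cdots U_T^\dagger$. It then remains to verify that this composition coincides with the map $\chi_U$ displayed in the statement.

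This last step is a direct computation which I would carry out by induction on $T$. The base case $T=1$ is exactly Lemma~\ref{lem:varphiUVbasic}. For the inductive step, assume $\chi_{U_{T-1}}\circ\cdots\circ\chi_{U_1}=\chi_{U_{T-1}\cdots U_1}$ with the stated parameters $(\eta^{(T-1)},\xi^{(T-1)},\widehat\xi^{(T-1)})$, and compose with $\chi_{U_T}$ on the outside. The first coordinate transforms as $R_1\mapsto\eta^{(T-1)}R_1-\xi^{(T-1)}\mapsto\eta_T(\eta^{(T-1)}R_1-\xi^{(T-1)})-\xi_T$, which has slope $\eta_T\eta^{(T-1)}=\prod_{t=1}^T\eta_t$ and offset $\eta_T\xi^{(T-1)}+\xi_T=\sum_{j=1}^T\xi_j\prod_{s=j+1}^T\eta_s$; the second coordinate is the mirror image with $+\xi$ in place of $-\xi$; and the third and fourth coordinates are governed by the same recursion with every $\eta_t$ replaced by $\eta_t^{-1}$, giving slope $\eta^{-1}$ and offset $\widehat\xi=\sum_{j=1}^T\xi_j\prod_{s=j+1}^T\eta_s^{-1}$. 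These are precisely the quantities $\eta$, $\xi$, $\widehat\xi$ in the statement, so the composition equals $\chi_U$. Finally, since $\eta=\prod_{t=1}^T\eta_t>0$ because each $\eta(U_t)>0$ by Definition~\ref{def:squeezingdisplacementparameters}, each component of $\chi_U$ is an invertible affine-linear function, so $\chi_U$ is admissible in the sense of Definition~\ref{def:finegrainedsinglemode}.

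There is no conceptual obstacle here; the only point requiring care is the index bookkeeping in the telescoping offsets. In particular one should note that the products defining $\xi$ and $\widehat\xi$ run up to $T$ (the upper limit $m$ in the statement should read $T$), and that the displacement $\xi_j$ picked up at step $j$ is rescaled only by the squeezing of the \emph{later} unitaries $U_{j+1},\ldots,U_T$ — which is exactly what the nested composition $\chi_{U_T}\circ\cdots\circ\chi_{U_{j+1}}$ produces when applied to the constant $\pm\xi_j$ generated by $\chi_{U_j}$.
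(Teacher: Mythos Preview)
Your proof is correct and follows essentially the same route as the paper: both use Lemma~\ref{lem:varphiUVbasic} for each factor, iterate Lemma~\ref{lem:compositionfinegrained} to obtain $\chi_{U_T}\circ\cdots\circ\chi_{U_1}$ as a fine-grained moment-limiting function for $U$, and then compute this composition explicitly (the paper via a separate affine-composition formula proved by induction, you by induction directly on the circuit). Your observation that the upper product index $m$ in the statement should read $T$ is also correct.
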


\begin{proof}
We first note that affine-linear functions compose as follows.
For~$\alpha,\xi>0$
define~$f_{\alpha,\xi}(R)=\alpha R+\xi$. Let~$\alpha_j>0$ and~$\xi_j\in\mathbb{R}$ for~$j\in\{1,\ldots, m\}$. Then
\begin{align}
\label{eq:affineeq} f_{\alpha_m,\xi_m}\circ \cdots\circ f_{\alpha_1,\xi_1} &= f_{\prod_{j=1}^m \alpha_j, \sum_{j=1}^m \xi_j\prod_{s=j+1}^m \alpha_s} \ . 
\end{align}
Eq.~\eqref{eq:affineeq} can be shown by induction.

For~$t\in \{1,\ldots,T\}$, let us write~$\eta_t:=\eta(U_t)$ and~$\xi_t:=\xi(U_t)$, and let us define
\begin{align}
  \chi_t(R_1,R_2,\widehat{R}_1,\widehat{R}_2)&:=(\eta_t R_1-\xi_t,\eta_t R_2+\xi_t,\eta_t^{-1}\widehat{R}_1-\xi_t,\eta_t^{-1}\widehat{R}_2+\xi_t)\ .
\end{align}
Then~$\chi_t$ is a fine-grained moment-limiting function for~$U_t$ according to Lemma~\ref{lem:varphiUVbasic}. With Lemma~\ref{lem:compositionfinegrained} (used inductively), it follows that 
\begin{align}
  \chi&:=\chi_T\circ \cdots \circ\chi_1
\end{align}
is a fine-grained moment-limiting function for~$U$.  
Straightforward computation using Eq.~\eqref{eq:affineeq} gives
\begin{align}
\chi(R_1,R_2,\widehat{R}_1,\widehat{R}_2)&=(\eta R_1-\xi,\eta R_2+\xi,\eta^{-1}\widehat{R}_1-\xi,\eta^{-1}\widehat{R}_2+\xi)
\end{align}
where
\begin{align}
\eta&=\prod_{t=1}^T \eta_t
\qquad\textrm{ and }\qquad
\begin{aligned}
\xi&=\sum_{j=1}^T  \xi_j \prod_{s=j+1}^m \eta_s\\
\widehat{\xi}&=\sum_{j=1}^T  \xi_j \prod_{s=j+1}^m \eta^{-1}_s\ .
\end{aligned}
\end{align}
This is the claim.
\end{proof}

By a partial implementation of a circuit~$U_T\cdots U_1$ we mean a product~$U_t\cdots U_1$ with~$t<T$. We show 
that the energy of any intermediate state in a partially implement circuit can be bounded as follows. This result is for the case of $1$~mode and $r$~qubits. 

\begin{lemma}[Fine-grained moment-limiting function for (partially implemented) circuits and energy: single-mode case]  \label{lem:PhiUpartiallyimmplementedfinegrained}
Let $r\in\mathbb{N}_0$.
  Let~$U=U_T \cdots U_1$ with~$U_t\in\langle \Uelemunbounded{1}{r}\rangle$ for~$t\in\{1,\ldots,T\}$ be given. Define    \begin{align}
   U^{(t)}&=\begin{cases}
   I\qquad&\textrm{ for } t=0\\
      U_t\cdots U_1&\textrm{ otherwise}\ 
      \end{cases}
      \end{align}
      and
      \begin{align}
      \ket{\Psi^{(t)}}&:=(U^{(t)})^\dagger(\ket{\mathsf{vac}}\otimes\ket{0}^{\otimes r})\ 
      \end{align}
      for each~$t\in \{0,\ldots,T\}$. 
      Let~$(\squeezingparam(U),\overline{\xi}(U))$ be the squeezing and displacement parameters of the circuit~$U$ introduced in Definition~\ref{def:squeezingdisplacementcircuits}.
        Then 
  \begin{align}
  \langle \Psi^{(t)},(Q^2+P^2)\otimes I_{(\mathbb{C}^2)^{\otimes r}}\Psi^{(t)}\rangle &\leq 168 \squeezingparam(U)^6\cdot \left(2+\overline{\xi}(U)^3\right)\qquad\textrm{ for  each }\qquad t\in \{0,\ldots,T\}\ .
   \end{align}
        \end{lemma}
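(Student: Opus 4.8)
The plan is to fix $t\in\{0,\ldots,T\}$, write $U^{(t)}=U_t\cdots U_1$ (with $U^{(0)}=I$), and reduce the claim to an application of Lemma~\ref{lem:finegrainedenergy} to the circuit $U^{(t)}$; for this it suffices to exhibit a \emph{symmetric} fine-grained moment-limiting function for $U^{(t)}$, of the form $(\eta R_1-\xi,\eta R_2+\xi,\eta^{-1}\widehat R_1-\xi,\eta^{-1}\widehat R_2+\xi)$, whose parameters $(\eta,\xi)$ are controlled by $\squeezingparam(U)$ and $\overline{\xi}(U)$. First I would apply Lemma~\ref{lem:varphiUcontactenatedbound} to the subcircuit $U^{(t)}=U_t\cdots U_1$, obtaining the fine-grained moment-limiting function
\[
\chi_{U^{(t)}}(R_1,R_2,\widehat R_1,\widehat R_2)=\bigl(\eta^{(t)}R_1-\xi^{(t)},\ \eta^{(t)}R_2+\xi^{(t)},\ (\eta^{(t)})^{-1}\widehat R_1-\widehat\xi^{(t)},\ (\eta^{(t)})^{-1}\widehat R_2+\widehat\xi^{(t)}\bigr)
\]
with $\eta^{(t)}=\prod_{j=1}^t\eta(U_j)$, $\xi^{(t)}=\sum_{j=1}^t\xi(U_j)\prod_{s=j+1}^t\eta(U_s)$, and $\widehat\xi^{(t)}=\sum_{j=1}^t\xi(U_j)\prod_{s=j+1}^t\eta(U_s)^{-1}$. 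Since this function is in general asymmetric between its position and momentum entries, I would set $\xi^\ast:=\max\{\xi^{(t)},\widehat\xi^{(t)}\}$ and observe that the symmetric map obtained from $\chi_{U^{(t)}}$ by replacing both $\xi^{(t)}$ and $\widehat\xi^{(t)}$ with $\xi^\ast$ only enlarges each of the two intervals in the alternative characterisation~\eqref{eq:inclusionpropertybnd}, hence dominates $\chi_{U^{(t)}}$ in the sense preceding Lemma~\ref{lem:monotonicityfinegrained}; being invertible and entrywise affine-linear, by that lemma it is again a fine-grained moment-limiting function for $U^{(t)}$, now of the symmetric form required by Lemma~\ref{lem:finegrainedenergy} with $(\eta,\xi)=(\eta^{(t)},\xi^\ast)$.

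Next I would bound these parameters using Definition~\ref{def:squeezingdisplacementcircuits}. Taking the block starting at $p=0$ of length $t$ shows $g(\eta^{(t)})\le\squeezingparam(U)$, i.e.\ $\eta^{(t)}\in[1/\squeezingparam(U),\squeezingparam(U)]$ (for $t=0$ the product is empty and equals $1$, and $\squeezingparam(U)\ge1$ since every $g$-value is $\ge1$). For the displacement, each tail product $\prod_{s=j+1}^t\eta(U_s)$ with $j\le t$ is the product over the block starting at $p=j$ of length $t-j$, hence $\prod_{s=j+1}^t\eta(U_s)\le g\bigl(\prod_{s=j+1}^t\eta(U_s)\bigr)\le\squeezingparam(U)$, and likewise $\prod_{s=j+1}^t\eta(U_s)^{-1}\le g\bigl(\prod_{s=j+1}^t\eta(U_s)\bigr)\le\squeezingparam(U)$; therefore $\xi^{(t)},\widehat\xi^{(t)}\le\squeezingparam(U)\sum_{j=1}^t\xi(U_j)\le\squeezingparam(U)\,\overline{\xi}(U)$, so that $\xi^\ast\le\squeezingparam(U)\,\overline{\xi}(U)$ as well.

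Finally I would invoke Lemma~\ref{lem:finegrainedenergy} for $U^{(t)}$ with the symmetric moment-limiting function constructed above, and then estimate~\eqref{eq:etabboundestimateadm} with $q:=\squeezingparam(U)\ge1$ and $s:=\squeezingparam(U)\,\overline{\xi}(U)\ge0$ (legitimate since $\eta^{(t)}\in[1/q,q]$ and $\xi^\ast\in[0,s]$), which yields
\[
\langle\Psi^{(t)},(Q^2+P^2)\otimes I_{(\mathbb{C}^2)^{\otimes r}}\,\Psi^{(t)}\rangle\ \le\ 168\,q^3(2+s^3)\ =\ 168\,\squeezingparam(U)^3\bigl(2+\squeezingparam(U)^3\overline{\xi}(U)^3\bigr)\ \le\ 168\,\squeezingparam(U)^6\bigl(2+\overline{\xi}(U)^3\bigr),
\]
the last step using $\squeezingparam(U)\ge1$. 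Since $t$ was arbitrary this proves the claim. The only real work is the bookkeeping in the second paragraph: one must recognize that the nested tail products $\prod_{s=j+1}^t\eta(U_s)$ produced by the affine-composition rule~\eqref{eq:affineeq} are precisely the quantities over which the inner maximum over $p$ in the definition of $\squeezingparam(U)$ ranges, and dispose of the position/momentum asymmetry via the domination lemma before the symmetric energy estimate of Lemma~\ref{lem:finegrainedenergy} can be brought to bear.
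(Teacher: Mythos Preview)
Your proof is correct and follows essentially the same route as the paper's: apply Lemma~\ref{lem:varphiUcontactenatedbound} to the subcircuit $U^{(t)}$, symmetrise via $\xi^\ast=\max\{\xi^{(t)},\widehat{\xi}^{(t)}\}$ using Lemma~\ref{lem:monotonicityfinegrained}, then feed the resulting parameters into Lemma~\ref{lem:finegrainedenergy} with $q=\squeezingparam(U)$ and $s=\squeezingparam(U)\,\overline{\xi}(U)$. Your justification that the tail products $\prod_{s=j+1}^t\eta(U_s)$ are precisely the blocks appearing in the definition of $\squeezingparam(U)$ is in fact more explicit than the paper's ``it is easy to check''.
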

\begin{proof}
Let~$t\in \{1,\ldots,T\}$ be arbitrary. Let~$\chi_{U^{(t)}}:\mathbb{R}^4\rightarrow \mathbb{R}^4$
be the function
  \begin{align}
  \chi_{U^{(t)}}(R_1,R_2,\widehat{R}_1,\widehat{R}_2)&:=\left(\eta^{(t)}R_1-\xi^{(t)},
  \eta^{(t)}R_2+\xi^{(t)},(\eta^{(t)})^{-1}\widehat{R}_1-\widehat{\xi}^{(t)},(\eta^{(t)})^{-1}\widehat{R}_2+\widehat{\xi}^{(t)}
  \right) 
  \end{align}
defined using 
\begin{align}
\eta^{(t)}&=\prod_{s=1}^t \eta(U_s)
\qquad\textrm{ and }\qquad
\begin{aligned}
\xi^{(t)}&=\sum_{j=1}^t  \xi(U_j) \prod_{r=j+1}^m \eta(U_r)\\
\widehat{\xi}^{(t)}&=\sum_{j=1}^t  \xi(U_j) \prod_{r=j+1}^m \eta^{-1}(U_r)\ .
\end{aligned}
\end{align}
According to 
Lemma~\ref{lem:varphiUcontactenatedbound}, the function~$\chi_{U^{(t)}}$ is a fine-grained moment-limiting function for~$U^{(t)}$. Defining
\begin{align}
b^{(t)}&:=\max\{\xi^{(t)},\widehat{\xi}^{(t)}\}
\end{align}
it follows that
\begin{align}
  \psi_{t}(R_1,R_2,\widehat{R}_1,\widehat{R}_2)&:=\left(\eta^{(t)}R_1-b^{(t)},
  \eta^{(t)}R_2+b^{(t)},(\eta^{(t)})^{-1}\widehat{R}_1-b^{(t)},(\eta^{(t)})^{-1}\widehat{R}_2+b^{(t)}
  \right) 
  \end{align}
is a fine-grained moment-limiting function for $U^{(t)}$ (see Lemma~\ref{lem:monotonicityfinegrained}). It thus follows from Lemma~\ref{lem:finegrainedenergy}
that
\begin{align}
  \langle \Psi^{(t)},(Q^2+P^2)\Psi^{(t)}\rangle &\leq 
  u(\eta^{(t)},(\eta^{(t)})^{-1},b^{(t)})+v(\eta^{(t)},(\eta^{(t)})^{-1},b^{(t)})\
  \end{align}
 It is easy to check that
 \begin{align}
\max \{\eta^{(t)},(\eta^{(t)})^{-1}\}&\leq \squeezingparam(U)\\
b^{(t)}&\leq \squeezingparam(U)\overline{\xi}(U)\ ,
 \end{align}
 that is, 
 \begin{align}
 \eta^{(t)}&\in [\squeezingparam(U)^{-1},\squeezingparam(U)]\\
b^{(t)}&\leq \squeezingparam(U)\overline{\xi}(U)
 \end{align}
 for any $t\in \{0,\ldots,T\}$.  
 The claim thus follows from  
 \begin{align}
 \max_{\eta \in [1/q,q], b\leq s}
  \left(u(\eta,\eta^{-1},b)+v(\eta,\eta^{-1},b)\right)&\leq 
  168 q^3 (2+s^3)\ \qquad\textrm{ for }\qquad q\geq 1\textrm{ and }s\geq 0\ ,
 \end{align}
see Eq.~\eqref{eq:etabboundestimateadm} in Lemma~\ref{lem:finegrainedenergy}, with 
 \begin{align}
 q&=\squeezingparam(U)\\
 s&=\squeezingparam(U)\overline{\xi}(U)\ , 
 \end{align}
 which implies that
 \begin{align}
 \langle \Psi^{(t)},(Q^2+P^2)\Psi^{(t)}\rangle
 &\leq  168 \squeezingparam(U)^3\cdot \left(2+\squeezingparam(U)^3 \overline{\xi}(U)^3\right)\\
 &\leq 168 \squeezingparam(U)^6\cdot \left(2+\overline{\xi}(U)^3\right)\ .
 \end{align}Here we used that $\squeezingparam(U)\geq 1$ by definition.  This is the claim. 
 \end{proof}

\subsection{Squeezing and displacement parameters in terms of subcircuits} \label{sec:momlimsubcircuits}
In this section, we show how the squeezing and displacement parameters of circuits (see Definition~\ref{def:squeezingdisplacementcircuits}) can be bounded in terms of the squeezing and displacement parameters of the respective subcircuits.

\begin{lemma}[Squeezing and displacement parameters in terms of subcircuits]\label{lem:squeezingdisplacementsubcircuits} 

Let~$\{U^{(a)}\}_{a\in \{1,\ldots,L\}}\subset \langle \Uelemunbounded{1}{r}\rangle$ be a family of circuits, where for each~$a\in \{1,\ldots,L\}$, the circuit~$U^{(a)}$ is of the form
\begin{align}
U^{(a)}&=U_{T^{(a)}}^{(a)}\cdots U^{(a)}_1\qquad\textrm{ with }\qquad U^{(a)}_t\in \Uelemunbounded{1}{r}\quad\textrm{ for each }\quad t\in \{1,\ldots,T^{(a)}\}\ .
\end{align}
Define the quantities
\begin{align}
\begin{aligned}
\eta(U^{(a)})&=\prod_{t=1}^{T^{(a)}}\eta(U_{t}^{(a)})\\
\overline{\xi}(U^{(a)})&=\sum_{t=1}^{T^{(a)}}\xi(U_t^{(a)})
\end{aligned}\qquad\textrm{ for each }\qquad a\in \{1,\ldots,L\}\ .
\end{align}
Consider the circuit
\begin{align}
U&=\prod_{a=1}^L U^{(a)}=\prod_{a=1}^L \left(U_{T^{(a)}}^{(a)}\cdots U_1^{(a)}\right)\ .
\end{align}
Then the squeezing and displacement parameters~$(\overline{g}(U),\overline{\xi}(U))$ of the circuit~$U$ (see Definition~\ref{def:squeezingdisplacementcircuits}) satisfy
\begin{align}
\overline{\xi}(U)&= \sum_{a=1}^L \overline{\xi}(U^{(a)})\label{eq:immediatevdaxiU} \ ,\\
\squeezingparam(U)&\leq \prod_{a=1}^L \overline{g}(U^{(a)}) \label{eq: bound galpha circuitone} \ .
\end{align}
Furthermore, we have 
\begin{align}
\squeezingparam(U)&\leq \left(\max_{a\in \{1,\ldots,L\}}\squeezingparam(U^{(a)})\right)^2\cdot \prod_{a=1}^L g(\eta(U^{(a)}))\ \label{eq: bound galpha circuittwo}
\end{align}
where we again use the function~$g(x)=\max\{x,1/x\}$. 
\end{lemma}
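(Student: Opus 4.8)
The plan is to prove \eqref{eq:immediatevdaxiU}, \eqref{eq: bound galpha circuitone} and \eqref{eq: bound galpha circuittwo} directly from Definition~\ref{def:squeezingdisplacementcircuits}, by writing $U$ as one long sequence of generators --- the concatenation of the generator sequences of $U^{(1)},\ldots,U^{(L)}$ in the order fixed by the product --- and tracking how a consecutive run of generators of $U$ splits across the blocks. Equation~\eqref{eq:immediatevdaxiU} is then immediate: $\overline{\xi}$ is by definition the sum of $\xi(\cdot)$ over all generators of a circuit, and this sum splits over the $L$ blocks.

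For the two bounds on $\squeezingparam(U)$ I would first isolate the submultiplicativity of $g(x)=\max\{x,1/x\}$: for positive reals, $g(x_1\cdots x_k)\le\prod_i g(x_i)$. This follows from $g(xy)\le g(x)g(y)$ --- since $xy\le g(x)g(y)$ and $(xy)^{-1}=x^{-1}y^{-1}\le g(x)g(y)$ --- together with induction, and I would also record the trivial facts $g\ge 1$, and that $g$ applied to the $\eta$-product along any consecutive run of generators lying inside a single block $U^{(a)}$ (a prefix, a suffix, an interior run, or the whole block) is at most $\squeezingparam(U^{(a)})=\overline{g}(U^{(a)})$, by the very definition of $\overline{g}(U^{(a)})$.

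Next I would fix an arbitrary consecutive subcircuit of $U$, say $U_{p+1},\ldots,U_{p+t}$ in the concatenated sequence, beginning among the generators of block $a_0$ and ending among those of block $a_1\ge a_0$. If $a_0=a_1$ the run sits inside one block and $g\big(\prod_{s=1}^t\eta(U_{p+s})\big)\le\squeezingparam(U^{(a_0)})$, which is at most either right-hand side. Otherwise the run is a suffix of block $a_0$, followed by the full blocks $U^{(a_0+1)},\ldots,U^{(a_1-1)}$, followed by a prefix of block $a_1$, so $\prod_{s=1}^t\eta(U_{p+s})=S_{a_0}\cdot\big(\prod_{a=a_0+1}^{a_1-1}\eta(U^{(a)})\big)\cdot P_{a_1}$ with $S_{a_0},P_{a_1}$ the $\eta$-products of the suffix and prefix. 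Submultiplicativity of $g$ then gives
\begin{align}
g\Big(\textstyle\prod_{s=1}^t\eta(U_{p+s})\Big)\le g(S_{a_0})\cdot\prod_{a=a_0+1}^{a_1-1}g\big(\eta(U^{(a)})\big)\cdot g(P_{a_1})\ .
\end{align}
For \eqref{eq: bound galpha circuitone} I bound each factor by $\overline{g}$ of the corresponding block to get $\prod_{a=a_0}^{a_1}\overline{g}(U^{(a)})\le\prod_{a=1}^L\overline{g}(U^{(a)})$, using $\overline{g}\ge 1$. For \eqref{eq: bound galpha circuittwo} I instead bound $g(S_{a_0}),g(P_{a_1})\le\max_a\squeezingparam(U^{(a)})$ and keep the middle factors, getting $\big(\max_a\squeezingparam(U^{(a)})\big)^2\cdot\prod_{a=1}^L g(\eta(U^{(a)}))$, using $g(\eta(U^{(a)}))\ge 1$ to extend the product to all $a$. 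Taking the maximum over $t$ and $p$ finishes both bounds.

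I do not expect a genuine obstacle here; the only care needed is the bookkeeping of the suffix / full-blocks / prefix decomposition (including the degenerate cases $a_0=a_1$ and $a_1=a_0+1$ with empty middle product) and getting the $g$-submultiplicativity argument exactly right.
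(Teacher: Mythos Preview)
Your proposal is correct and matches the paper's argument essentially step for step: the same concatenation of generator sequences, the same submultiplicativity $g(xy)\le g(x)g(y)$ together with $g\ge 1$, and the same suffix/full-blocks/prefix decomposition of a consecutive run to obtain both \eqref{eq: bound galpha circuitone} and \eqref{eq: bound galpha circuittwo}. The only cosmetic difference is that the paper handles \eqref{eq: bound galpha circuitone} by simply noting that the maximizing run factors as a product of subcircuits of consecutive blocks (without separating suffix/prefix from full blocks), whereas you use the finer decomposition for both bounds; this changes nothing substantive.
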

We note that each term~$\overline{\xi}(U^{(a)})$ in Eq.~\eqref{eq:immediatevdaxiU}
is the squeezing parameter of a full implementation of~$U^{(a)}$. Similarly, each term~$g(\eta(U^{(a)}))$
is associated with the squeezing introduced by a full implementation of~$U^{(a)}$.
In contrast, 
the scalar~$\squeezingparam(U^{(a)})$ quantifies the squeezing
for a possibly partial implementation of~$U^{(a)}$ (see Definition~\ref{def:squeezingdisplacementcircuits}). 
\begin{proof}
Let~$T:=\sum_{a=1}^L T^{(a)}$ be the size of~$U$ when decomposed into elements of~$\Uelemunbounded{1}{r}$. Let us write
\begin{align}
U&:=U_T\cdots U_1\qquad\textrm{ where }\qquad U_t\in \Uelemunbounded{1}{r}\qquad\textrm{ for each }\qquad t\in \{1,\ldots,T\}. 
\end{align}
By definition, we have
\begin{align}
\overline{\xi}(U)&=\sum_{t=1}^T 
\xi(U_t)=\sum_{a=1}^L \overline{\xi}(U^{(a)})\ 
\end{align}
which is Claim~\eqref{eq:immediatevdaxiU}.

Let us show Eq.~\eqref{eq: bound galpha circuitone}. Suppose that for some~$t\in \{1,\ldots,T\}$ and~$p\in \{0,\ldots,T-t\}$, we have 
 \begin{align}
 \overline{g}(U)&=g\left(\prod_{s=1}^t \eta(U_{p+s})\right)\ ,\label{eq:overlinegudefn}
 \end{align}
 i.e., the maximum is achieved on the subcircuit~$U_{p+t}\cdots U_{p+1}$. 
 It is easy to check that the subcircuit is a product
 \begin{align}
 U_{p+t}\cdots U_{p+1}&=V^{(a+b-1)}\cdots V^{(a)}
 \end{align}
 for some~$a$ and $b$, where each factor~$V^{(a)}$ is a subcircuit (product of consecutive gates) of~$U^{(a)}$.
 Using the identity
 \begin{align}
g(xy)\leq g(x)g(y)\qquad\textrm{ for all }\qquad x,y>0\label{eq:productpropertyg}
\end{align}
and Eq.~\eqref{eq:overlinegudefn} it follows that 
\begin{align}
\overline{g}(U)&\leq \overline{g}(V^{(a+b-1)})\cdots \overline{g}(V^{(a)})\\
&\leq \overline{g}(U^{(a+b-1)})\cdots \overline{g}(U^{(a)})\\
&\leq \prod_{a=1}^L \overline{g}(U^{(a)})
\end{align}
where in the last line, we used that 
\begin{align}
g(x)\geq 1\qquad\textrm{ for all } x>0\ .\label{eq:gxgeqbound}
\end{align}
This establishes Eq.~\eqref{eq: bound galpha circuitone}.

The proof of Eq.~\eqref{eq: bound galpha circuittwo} proceeds in a similar fashion. For any~$t\in \{1,\ldots,T\}$ and~$p\in \{0,\ldots,T-t\}$, there exist
$a,c$ and~$u,v$ such that
\begin{align}
\prod_{s=1}^t U_{p+s}
&=\left(U_v^{(c+1)}\cdots U_1^{(c+1)}\right)
\left(\prod_{b=a}^c U^{(b)}\right)\left(
U^{(a-1)}_{T^{(a-1)}}\cdots U^{(a-1)}_{u}
\right)\\
&=\left(U_v^{(c+1)}\cdots U_1^{(c+1)}\right)
\left(\prod_{b=a}^c
U^{(b)}_{T^{(b)}}\cdots U^{(b)}_1
\right)\left(
U^{(a-1)}_{T^{(a-1)}}\cdots U^{(a-1)}_{u}
\right)\ .
\end{align}
In other words, any consecutive product of unitaries~$\{U_t\}_{t=1}^T$ 
is a product of 
\begin{enumerate}[(i)]
\item
a partial implementation of~$U^{(a-1)}$ or the identity,
\item
a full implementation of each~$U^{(b)}$, with~$b\in \{a,\ldots,c\}$,
\item
a partial implementation of~$U^{(c+1)}$  or the identity.
\end{enumerate}
By the same reasoning as before, we have 
\begin{align}
\prod_{r=1}^t\eta(U_{p+s})&=
\left(\eta(U_v^{(c+1)})\cdots \eta(U_1^{(c+1)})\right)
\left(\prod_{b=a}^c
\eta(U^{(b)}_{T^{(b)}})\cdots \eta(U^{(b)}_1)
\right)\left(
\eta(U^{(a-1)}_{T^{(a-1)}})\cdots \eta(U^{(a-1)}_{u})
\right)\\
&=\left(\eta(U_v^{(c+1)})\cdots \eta(U_1^{(c+1)})\right)
\left(\prod_{b=a}^c
\eta(U^{(b)})
\right)\left(
\eta(U^{(a-1)}_{T^{(a-1)}})\cdots \eta(U^{(a-1)}_{u})
\right) \ .
\end{align}
Using Eq.~\eqref{eq:productpropertyg} 
we obtain
\begin{align}
g\left(
\prod_{s=1}^t\eta(U_{p+s})\right)
&\leq  g\left(\eta(U_v^{(c+1)})\cdots \eta(U_1^{(c+1)})\right)
\left(\prod_{b=a}^c
g(\eta(U^{(b)}))
\right)
g\left(
\eta(U^{(a-1)}_{T^{(a-1)}})\cdots \eta(U^{(a-1)}_{u})
\right)\\
&\leq \squeezingparam(U^{(c+1)})\left(\prod_{b=a}^c
g(\eta(U^{(b)}))
\right)
\squeezingparam(U^{(a-1)})\ .
\end{align}
Using Eq.~\eqref{eq:gxgeqbound} 
we can bound this as
\begin{align}
g\left(\prod_{s=1}^t\eta(U_{p+s})\right)
&\leq \left(\max_{a\in \{1,\ldots,L\}}\squeezingparam(U^{(a)})\right)^2\cdot \prod_{a=1}^L g(\eta(U^{(a)}))\ .
\end{align}
Since~$p$ and~$t$ were arbitrary,
we obtain Eq.~\eqref{eq: bound galpha circuittwo} as claimed.
\end{proof}

\begin{figure}[H]
  \centering
  \includegraphics[scale=0.9]{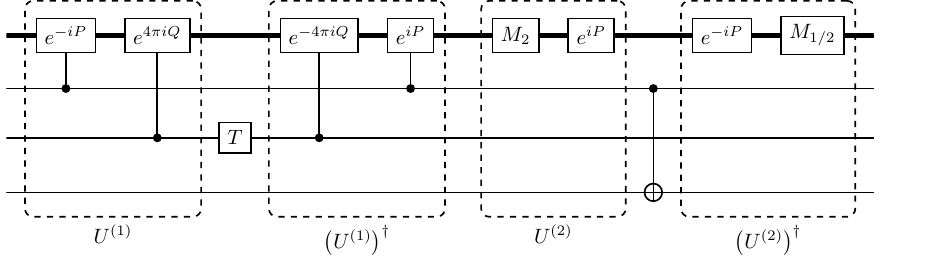}
  \caption{ An example of a dressed circuit introduced in Lemma~\ref{lem:momentlimitdressedcircuit} with qubit gates~$V^{(1)} = T_2$ and~$V^{(2)} = \mathsf{CNOT}_{1,3}$. The bold wires represent oscillators whereas the remaining three wires represent physical qubits.} 
  \label{fig:dressedcircuit}
\end{figure}

We often consider circuits where gates acting on qubits only are conjugated with
unitaries acting on oscillators. We refer to these as dressed circuits, see  Fig.~\ref{fig:dressedcircuit} for an example. 
We have the following:
\begin{lemma}[Moment-limits for dressed circuits] \label{lem:momentlimitdressedcircuit}
   Let~$\{U^{(a)}\}_{a\in \{1,\dots,L\}}\subset \Uelemunbounded{1}{r}$ be a family of circuits  where for each~$a\in \{1,\ldots,L\}$, the circuit~$U^{(a)}$ is of the form
   \begin{align}
   U^{(a)}&=U_{T^{(a)}}^{(a)}\cdots U^{(a)}_1\qquad\textrm{ with }\qquad U^{(a)}_t\in \Uelemunbounded{1}{r}\qquad\textrm{ for each }\qquad t\in \{1,\ldots,T^{(a)}\} \ .
   \end{align}
    Let~$\{V^{(a)}\}_{a \in \{1,\dots,L\}}$ be a sequence of one- or two-qubit unitaries on~$(\mathbb{C}^2)^{\otimes r}$.   Define the circuit 
  \begin{align}
  U = \prod_{a=1}^{L}\left(\left(U^{(a)}\right)^\dagger \left(I_{L^2(\mathbb{R})}\otimes V^{(a)}\right) U^{(a)}\right) \, .
\end{align}
  Then 
\begin{align}
\overline{\xi}(U)&= 2\sum_{a=1}^L \overline{\xi}(U^{(a)})\label{eq:xiubarvaz}\\
\squeezingparam(U)&\leq \left(\max_{a\in \{1,\ldots,L\}}\squeezingparam(U^{(a)})\right)^2\ .\label{eq:sqzingparm}
\end{align}
\end{lemma}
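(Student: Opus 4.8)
The plan is to apply Lemma~\ref{lem:squeezingdisplacementsubcircuits} to the particular decomposition of the dressed circuit~$U$ into subcircuits. First I would observe that~$U$ is a product of~$L$ blocks~$B^{(a)} := (U^{(a)})^\dagger (I_{L^2(\mathbb{R})}\otimes V^{(a)}) U^{(a)}$, and each block~$B^{(a)}$ is itself a circuit in~$\langle\Uelemunbounded{1}{r}\rangle$ of the form treated in Lemma~\ref{lem:squeezingdisplacementsubcircuits}, once we spell out its gate sequence: it consists of the gates of~$U^{(a)}$, followed by the single qubit gate~$V^{(a)}$ (which has~$\eta(V^{(a)})=1$ and~$\xi(V^{(a)})=0$), followed by the gates of~$(U^{(a)})^\dagger$. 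So I would set up notation writing~$B^{(a)}$ as an ordered product of~$2T^{(a)}+1$ generators, and record that for the adjoint part~$\xi((U^{(a)}_t)^\dagger)=\xi(U^{(a)}_t)$ and~$\eta((U^{(a)}_t)^\dagger)=1/\eta(U^{(a)}_t)$.

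For the displacement parameter~\eqref{eq:xiubarvaz}: by Eq.~\eqref{eq:immediatevdaxiU} of Lemma~\ref{lem:squeezingdisplacementsubcircuits} applied to the decomposition~$U=\prod_{a=1}^L B^{(a)}$, we get~$\overline{\xi}(U)=\sum_{a=1}^L \overline{\xi}(B^{(a)})$. It then remains to compute~$\overline{\xi}(B^{(a)})$, which is just the sum of~$\xi$ over all generators in~$B^{(a)}$: this is~$\overline{\xi}(U^{(a)}) + 0 + \overline{\xi}((U^{(a)})^\dagger) = 2\overline{\xi}(U^{(a)})$ using Lemma~\ref{lem:daggerrelationoverlineg} (specifically~\eqref{eq:xidefmzv}). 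Summing over~$a$ gives~\eqref{eq:xiubarvaz}.

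For the squeezing parameter~\eqref{eq:sqzingparm}: the key point is that each block~$B^{(a)}=(U^{(a)})^\dagger(I\otimes V^{(a)})U^{(a)}$ has total squeezing~$\eta(B^{(a)})=\eta((U^{(a)})^\dagger)\cdot 1\cdot \eta(U^{(a)}) = \eta(U^{(a)})^{-1}\eta(U^{(a)})=1$, so~$g(\eta(B^{(a)}))=1$. Then by Eq.~\eqref{eq: bound galpha circuittwo} of Lemma~\ref{lem:squeezingdisplacementsubcircuits} applied to the decomposition~$U=\prod_{a=1}^L B^{(a)}$ we obtain~$\squeezingparam(U)\leq (\max_a \squeezingparam(B^{(a)}))^2\cdot\prod_a g(\eta(B^{(a)})) = (\max_a \squeezingparam(B^{(a)}))^2$. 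So it suffices to bound~$\squeezingparam(B^{(a)})\leq \squeezingparam(U^{(a)})$ for each~$a$. This last bound is where the main work lies: I would apply Eq.~\eqref{eq: bound galpha circuitone} — no, that gives~$\overline{g}(U^{(a)})\overline{g}((U^{(a)})^\dagger)=\overline{g}(U^{(a)})^2$, too weak — so instead I would use Eq.~\eqref{eq: bound galpha circuittwo} again, now for~$B^{(a)}$ decomposed as a three-fold product~$(U^{(a)})^\dagger$, then~$V^{(a)}$, then~$U^{(a)}$: this yields~$\squeezingparam(B^{(a)})\leq (\max\{\squeezingparam((U^{(a)})^\dagger),\squeezingparam(V^{(a)}),\squeezingparam(U^{(a)})\})^2\cdot g(\eta((U^{(a)})^\dagger))g(\eta(V^{(a)}))g(\eta(U^{(a)}))$. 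Using~$\squeezingparam((U^{(a)})^\dagger)=\squeezingparam(U^{(a)})$ (Lemma~\ref{lem:daggerrelationoverlineg}), $\squeezingparam(V^{(a)})=g(\eta(V^{(a)}))=1$, and~$g(\eta((U^{(a)})^\dagger))g(\eta(U^{(a)}))=g(\eta(U^{(a)}))^{-1}\cdot$—hmm, actually~$g(\eta((U^{(a)})^\dagger))=g(1/\eta(U^{(a)}))=g(\eta(U^{(a)}))$, so this product is~$g(\eta(U^{(a)}))^2$, which can be as large as~$\squeezingparam(U^{(a)})^2$, giving only~$\squeezingparam(B^{(a)})\leq\squeezingparam(U^{(a)})^4$ — still too weak by the stated bound.

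The main obstacle, then, is obtaining the sharp bound~$\squeezingparam(B^{(a)})\leq\squeezingparam(U^{(a)})$ rather than a weaker power of it; the three-fold-product application of~\eqref{eq: bound galpha circuittwo} above is lossy because it does not exploit the cancellation between~$U^{(a)}$ and~$(U^{(a)})^\dagger$. The right approach is to go back to the \emph{definition} of~$\squeezingparam$ directly for the block~$B^{(a)}$: $\squeezingparam(B^{(a)})$ is the maximum of~$g(\prod \eta)$ over consecutive subsequences of the gate list of~$B^{(a)}$. Any such consecutive subsequence is a suffix of the~$(U^{(a)})^\dagger$-part, possibly followed by~$V^{(a)}$, possibly followed by a prefix of the~$U^{(a)}$-part. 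Writing the~$U^{(a)}$-part as~$U^{(a)}_{T^{(a)}}\cdots U^{(a)}_1$ so that the~$(U^{(a)})^\dagger$-part is~$(U^{(a)}_1)^\dagger\cdots(U^{(a)}_{T^{(a)}})^\dagger$, a suffix of the adjoint part times a prefix of the forward part has total squeezing~$\prod_{s=1}^{k}\eta((U^{(a)}_s)^\dagger)^{-1}\cdot\prod_{s=1}^{j}\eta(U^{(a)}_s)$ for some~$j,k$ — wait, I must track indices carefully: the adjoint part~$(U^{(a)})^\dagger$ as an ordered product is~$(U^{(a)}_1)^\dagger(U^{(a)}_2)^\dagger\cdots(U^{(a)}_{T^{(a)}})^\dagger$, and a consecutive subsequence of the full block extending into both sides is~$(U^{(a)}_{j+1})^\dagger\cdots(U^{(a)}_{T^{(a)}})^\dagger\cdot V^{(a)}\cdot U^{(a)}_{T^{(a)}}\cdots U^{(a)}_{j+1}$ for some~$j$ (by the palindromic structure the indices on the two sides match up), whose product of~$\eta$'s telescopes: $\prod_{s=j+1}^{T^{(a)}}\eta(U^{(a)}_s)^{-1}\cdot 1\cdot\prod_{s=j+1}^{T^{(a)}}\eta(U^{(a)}_s)=1$, so~$g$ of it is~$1$. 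More generally an arbitrary consecutive subsequence of the block is either entirely within the adjoint part, entirely within the forward part, or a balanced ``U-turn'' as above optionally extended by an \emph{unbalanced} extra stretch on one side only — but a careful case analysis shows every such product equals (a product of consecutive~$\eta(U^{(a)}_s)$ over some index range) possibly inverted, hence its~$g$-value is at most~$\squeezingparam(U^{(a)})$ by definition of the latter (using~$g(1/x)=g(x)$ and that~$\squeezingparam(U^{(a)})=\squeezingparam((U^{(a)})^\dagger)$). This gives~$\squeezingparam(B^{(a)})\leq\squeezingparam(U^{(a)})$, and combined with the two paragraphs above completes the proof of~\eqref{eq:sqzingparm}.
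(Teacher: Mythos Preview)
Your proposal is correct and follows essentially the same route as the paper: apply Lemma~\ref{lem:squeezingdisplacementsubcircuits} to the block decomposition~$U=\prod_a B^{(a)}$, use~$\eta(B^{(a)})=1$ so that~\eqref{eq: bound galpha circuittwo} gives~$\squeezingparam(U)\le(\max_a\squeezingparam(B^{(a)}))^2$, and then establish~$\squeezingparam(B^{(a)})\le\squeezingparam(U^{(a)})$ by the telescoping/cancellation argument you describe. The paper phrases this last step as ``successive cancellation of adjacent mutually inverse pairs'' in the subcircuit of~$(U^{(a)})^\dagger U^{(a)}$ (after first dropping~$V^{(a)}$ since~$\eta(I\otimes V^{(a)})=1$), which is exactly your telescoping observation in different words.
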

\begin{proof}
Define
\begin{align}
W^{(a)}=\left(U^{(a)}\right)^\dagger \left(I_{L^2(\mathbb{R})}\otimes V^{(a)}\right) U^{(a)}\qquad\textrm{ for }\qquad a\in \{1,\ldots,L\}\ 
\end{align}
such that~$U=\prod_{a=1}^L W^{(a)}$. It follows that
\begin{align}
\overline{\xi}(U)&=\sum_{a=1}^L \overline{\xi}(W^{(a)})\\
\overline{g}(U)&\leq \left(\max_{a\in \{1,\ldots,L\}}\overline{g}(W^{(a)})\right)^2\cdot \prod_{a=1}^L g(\eta(W^{(a)}))\label{eq:umvdavz}
\end{align}
according to Lemma~\ref{lem:squeezingdisplacementsubcircuits} (see Eqs.~\eqref{eq:immediatevdaxiU} and~\eqref{eq: bound galpha circuittwo}). Eq.~\eqref{eq:xiubarvaz} follows since for every~$a\in \{1,\ldots,L\}$ we have 
\begin{align}
 \overline{\xi}(W^{(a)})&=\sum_{t=1}^{T^{(a)}}\left(\xi(U^{(a)}_t)+\xi(I_{L^2(\mathbb{R})}\otimes V^{(a)} )+\xi((U^{(a)}_t)^\dagger)\right)\\
 &=2\sum_{t=1}^{T^{(a)}} \xi(U^{(a)}_t)\\
 &=2\overline{\xi}(U^{(a)})\, ,
\end{align}
where we used that~$\xi(U)=\xi(U^\dagger)$ for every~$U\in\Uelemunbounded{1}{r}$ by Definition~\ref{def:squeezingdisplacementparameters}.

To show Eq.~\eqref{eq:sqzingparm}, observe that for every~$a\in \{1,\ldots,L\}$ we have 
\begin{align}
\eta(W^{(a)})&= \left(\prod_{t=1}^{T^{(a)}}\eta((U^{(a)}_t)^\dagger)\right)\eta(I_{L^2(\mathbb{R})}\otimes V^{(a)}) \left(\prod_{t=1}^{T^{(a)}}\eta(U^{(a)}_t)\right)
=1
\end{align}
since~$\eta(I_{L^2(\mathbb{R})}\otimes V)=1$ for any unitary acting trivially on the oscillator, and~$\eta(U)=1/\eta(U^\dagger)$ for every~$U\in\Uelemunbounded{1}{r}$.
Eq.~\eqref{eq:umvdavz} therefore implies that
\begin{align}\overline{g}(U)&\leq \left(\max_{a\in \{1,\ldots,L\}}\overline{g}(W^{(a)})\right)^2\ . \label{eq:dressedbound1}
\end{align}
It follows that 
\begin{align}
  \overline{g}(W^{(a)}) &= \overline{g}((U^{(a)})^\dagger \left(I_{L^2(\mathbb{R})}\otimes V^{(a)}\right) U^{(a)}) \\
                        & = \overline{g}((U^{(a)})^\dagger U^{(a)}) \, . \label{eq:dressedbound2}
\end{align}
The second identity follows directly from Definition~\ref{def:squeezingdisplacementcircuits} together with the fact that~$\eta(I_{L^2(\mathbb{R})}\otimes V) = 1$ for any unitary acting trivially on the oscillator.
We claim that 
\begin{align}
  \overline{g}((U^{(a)})^\dagger U^{(a)})&\le \overline{g}(U^{(a)})\qquad \textrm{for all} \qquad a \in \{1, \dots, L\}\, .  \label{eq:overlinegdressed}
\end{align}

\begin{proof}[Proof of Eq.~\eqref{eq:overlinegdressed}] Let~$a \in \{1, \dots, L\}$.
Write~$\left(U^{(a)}\right)^\dagger U^{(a)} = V_{2T(a)}^{(a)} \cdots V_1^{(a)}$ where we define~$V^{(a)}_{t} = U_t^{(a)}$ if~$t \in \{1, \dots, T(a)\}$ and~$V^{(a)}_t = (U^{(a)}_{2T(a) - t + 1})^\dagger$ if~$t\in \{T(a)+1, \dots, 2T(a)\}$.
In particular, we have~$V^{(a)}_t \in \Uelemunbounded{1}{r}$ for all~$t\in \{1, \dots, 2T(a)\}$.
Let~$t \in \{1, \dots, 2T(a)\}$ and~$p \in \{0,\dots, 2T(a)-t \}$ be such that 
\begin{align}
  \overline{g}((U^{(a)})^\dagger U^{(a)}) = g\left(\prod_{s=1}^t \eta(V_{p+s}^{(a)}) \right) \, . \label{eq:maximizingcircuitdressed}
\end{align}
Consider the corresponding subcircuit
\begin{align} 
  \prod_{s=1}^{t} V^{(a)}_{p+s}\, .
\end{align}
Define the reduced subcircuit of~$\prod_{s=1}^{t} V^{(a)}_{p+s}$ obtained by successive cancellation of all adjacent mutually inverse pairs of unitaries, i.e, subsequences of unitaries of the form~$(U^{(a)}_{q})^\dagger U^{(a)}_q$ for some~$q \in \{1, \dots, T(a)\}$. Then the reduced subcircuit is either the identity or we can write it as
\begin{align}
  \prod_{s=1}^{t'} V^{(a)}_{p'+s}
\end{align} where~$p' \ge T(a)$ or~$p' + t' <  T(a)$, that is,~$\prod_{s=1}^{t'} V^{(a)}_{p'+s}$ is either a subcircuit of~$U^{(a)}$ or a subcircuit of~$(U^{(a)})^\dagger$.
By definition we have~$\eta(U) = 1/\eta(U^\dagger)$ for all~$U \in \Uelemunbounded{1}{r}$. It follows that 
\begin{align}
  g\left( \prod_{s=1}^t \eta(V_{p+s}^{(a)})\right) &= g\left( \prod_{s=1}^{t'} \eta(V_{p'+s}^{(a)})\right) \\
  &\le \max\{\overline{g}(U^{(a)}),\overline{g}((U^{(a)})^\dagger)\} \\
  &=  \overline{g}(U^{(a)}) \ , \label{eq:boundWaUa}
\end{align}
where the inequality follows from Definition~\ref{def:squeezingdisplacementcircuits} and the last identity is implied by Lemma~\ref{lem:daggerrelationoverlineg}. The claim follows by combining with Eq.~\eqref{eq:maximizingcircuitdressed}. 
\end{proof}
Eqs.~\eqref{eq:dressedbound2} and~\eqref{eq:overlinegdressed} in combination with Eq.~\eqref{eq:dressedbound1} imply Claim~\eqref{eq:sqzingparm}.
\end{proof}

\section{Moment-limiting functions for the multimode case\label{sec:multi-modecasemomentlimit}}
We extend the concept of fine-grained moment-limiting functions 
to the setting of multiple oscillators and qubits as follows.

\begin{definition} \label{def:momentlimitmulti-mode}
  Let~$m \in \mathbb{N}$ and~$r \in \mathbb{N}_0$. 
  A  pair~
  \begin{align}
  (\underline{\Phi},\underline{\widehat{\Phi}})=\left((\Phi_1,\ldots,\Phi_m),(
  \widehat{\Phi}_1,\ldots,\widehat{\Phi}_m)\right)
  \end{align}
   of $m$-tuples of  entrywise affine-linear functions~$\Phi_\alpha,\widehat{\Phi}_\alpha:\mathbb{R}^2\rightarrow \mathbb{R}^2$,~$\alpha\in \{1,\ldots,m\}$ is called a fine-grained moment-limiting function for a unitary~$U\in \langle\Uelemunbounded{m}{r}\rangle$
  if
    \begin{align}
    \begin{aligned}
  U \Pi(\cJ_1,\ldots,\cJ_m)U^\dagger &\leq \Pi(\Phi_1(\cJ_1),\ldots,\Phi_m(\cJ_m))\\
    U \widehat{\Pi}(\cJ_1,\ldots,J_m)U^\dagger &\leq \widehat{\Pi}(\widehat{\Phi}_1(\cJ_1),\ldots,\widehat{\Phi}_m(cJ_m))
    \end{aligned}\qquad\textrm{ for all $m$-tuples of intervals } (\cJ_1,\ldots,\cJ_m)
  \end{align}
  where we write
  \begin{align}
  \Pi(\cJ_1,\ldots,\cJ_m)&:=\left(\Pi_{\cJ_m}\otimes\cdots\otimes \Pi_{\cJ_m}\right)\otimes I_{\mathbb{C}^2}^{\otimes r}\\
    \widehat{\Pi}(\cJ_1,\ldots,\cJ_m)&:=\big(\widehat{\Pi}_{\cJ_1}\otimes\cdots\otimes \widehat{\Pi}_{\cJ_m]}\big)\otimes I_{\mathbb{C}^2}^{\otimes r}\  , \end{align}
    and where for an entrywise affine-linear function $\Phi=(\Phi_1,\Phi_2):\mathbb{R}^2\rightarrow\mathbb{R}^2$ and an interval $\cJ=[R_1,R_2]$ we set 
    $\Phi(\cJ)=[\Phi_1(R_1),\Phi_2(R_2)]$.        
  \end{definition}

  We are interested in obtaining fine-grained moment-limiting functions for multimode circuits. 
  It will be convenient to omit single- and two-qubit unitaries from our considerations. They have no effect on moment-limits as expressed by the following lemma.
  \begin{lemma}[Removing qubit-only unitaries]\label{lem:trivialqubitactmva}
  Let~$m,r\in\mathbb{N}$. 
  Let 
    \begin{align}
  U=U_T\cdots U_1\qquad\textrm{ where }\qquad U_t\in \Uelemunbounded{m}{r}\qquad\textrm{ for every }\qquad t\in \{1,\ldots,T\}\ 
  \end{align}
  be a  circuit on~$m$ oscillators and~$r$~qubits.
  Let 
  \begin{align}
  \{t_1<\cdots <t_{J}\}=\{t\in \{1,\ldots,T\}\ |\ U_t\textrm{ acts non-trivially on an oscillator}\}
  \end{align}
  be the gate locations where a unitary is applied to some mode~$B_\alpha$,~$\alpha\in \{1,\ldots,m\}$.
  Let 
  \begin{align}
  V&=V_{J}\cdots V_1\qquad\textrm{ with }\qquad V_{j}:=U_{t_j}\qquad \textrm{ for }\qquad j\in \{1,\ldots,J\}
  \end{align}
  be the circuit obtained from~$U$ by removing all gates which  act on qubits only.
  Then the following holds:
  Suppose~$(\underline{\Phi},\underline{\widehat{\Phi}})$ is a fine-grained moment-limiting function for $V=V_J\cdots V_1$. 
  Then~$(\underline{\Phi},\underline{\widehat{\Phi}})$ is a fine-grained moment-limiting function for $U^{(T)}=U_T\cdots U_1$. 
     \end{lemma}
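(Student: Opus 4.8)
The plan is to run the same composition-of-moment-limits argument as in the single-mode case, using the fact that one- and two-qubit unitaries are transparent to the spectral projectors entering Definition~\ref{def:momentlimitmulti-mode}. Concretely, a one- or two-qubit unitary $W$ on $(\mathbb{C}^2)^{\otimes r}$ is of the form $I_{L^2(\mathbb{R})^{\otimes m}}\otimes W'$, while each $\Pi(\cJ_1,\ldots,\cJ_m)$ and each $\widehat\Pi(\cJ_1,\ldots,\cJ_m)$ is of the form (an operator on the $m$ oscillators)$\,\otimes\,I_{\mathbb{C}^2}^{\otimes r}$. Hence $W$ commutes with all of these, so $W\Pi(\cJ_1,\ldots,\cJ_m)W^\dagger=\Pi(\cJ_1,\ldots,\cJ_m)$ and likewise for $\widehat\Pi$; in particular the pair of identity tuples is a fine-grained moment-limiting function for $W$ — the multimode analogue of Lemma~\ref{lem:momentlimitsonunitaries}(iii).

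Next I would record the multimode analogue of Lemma~\ref{lem:compositionfinegrained}: if $(\underline\Phi,\underline{\widehat\Phi})$ and $(\underline\Psi,\underline{\widehat\Psi})$ are fine-grained moment-limiting functions for $A$ and $B$, then the componentwise composition $(\underline\Psi\circ\underline\Phi,\underline{\widehat\Psi}\circ\underline{\widehat\Phi})$ is one for $BA$. This is immediate: apply the $A$-inequality of Definition~\ref{def:momentlimitmulti-mode} to $\Pi(\cJ_1,\ldots,\cJ_m)$ (resp.\ $\widehat\Pi(\cJ_1,\ldots,\cJ_m)$), conjugate by $B$, and apply the $B$-inequality, using that the operator inequality between such projectors is preserved under conjugation by $B$ in the sense dictated by the definition.

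The main argument is then a gate-by-gate induction through $U=U_T\cdots U_1$. Fixing a tuple of intervals $\cJ=(\cJ_1,\ldots,\cJ_m)$, I would track the positive operator $U_t\cdots U_1\,\Pi(\cJ)\,U_1^\dagger\cdots U_t^\dagger$ and show by induction on $t$ that it is bounded above by a projector $\Pi(\Theta^{(t)}(\cJ))$, where $\Theta^{(t)}$ is the componentwise composition of the fine-grained moment-limiting functions of those $U_1,\ldots,U_t$ that touch an oscillator. At a step where $U_t$ acts only on qubits the bound is unchanged, since $U_t$ commutes with the $\otimes\,I_{\mathbb{C}^2}^{\otimes r}$-form projector $\Pi(\Theta^{(t-1)}(\cJ))$; at a step $t=t_j$ the bound is updated by the fine-grained moment-limiting function of $V_j$. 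After $t=T$ the accumulated map $\Theta^{(T)}$ equals the composition of the fine-grained moment-limiting functions of $V_J,\ldots,V_1$, which by the composition lemma is a fine-grained moment-limiting function for $V=V_J\cdots V_1$; running the same induction for $V$ alone produces the identical map, so $(\underline\Phi,\underline{\widehat\Phi})$ — realised as this composition — is a fine-grained moment-limiting function for $U$. The same argument with $\widehat\Pi$ in place of $\Pi$ gives the momentum-side inequality.

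The step I expect to need the most care is the induction at a qubit-only gate $U_t$ lying strictly between two oscillator-touching gates: one cannot commute $U_t$ out to an edge of the circuit, because the oscillator-touching gates are typically qubit-controlled (of the form $\ctrl_a e^{itQ_j}$ or $\ctrl_a e^{-itP_j}$) and do not commute with $U_t$. The resolution is the observation that, although the intermediate operator $U_{t-1}\cdots U_1\,\Pi(\cJ)\,U_1^\dagger\cdots U_{t-1}^\dagger$ is itself not of $\otimes\,I_{\mathbb{C}^2}^{\otimes r}$ form, the \emph{upper bound} $\Pi(\Theta^{(t-1)}(\cJ))$ carried along in the induction always is, and qubit-only gates commute with that upper bound; so it is the bookkeeping on the bounding projectors, not on the states themselves, that makes the argument go through, and the induction hypothesis must be formulated at the level of those bounds.
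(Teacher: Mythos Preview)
Your induction on the bounding projectors is correct and is exactly the mechanism behind the paper's one-sentence proof (``qubit-only unitaries leave $\Pi(\cJ_1,\dots,\cJ_m)$ and $\widehat\Pi(\cJ_1,\dots,\cJ_m)$ invariant under conjugation''). But there is a genuine gap at the end: you build a specific moment-limiting map $\Theta^{(T)}$ by composing single-gate limits for the $V_j$, show it works for $U$, and then identify it with the given $(\underline\Phi,\underline{\widehat\Phi})$ via the phrase ``realised as this composition''. The lemma, however, assumes only that $(\underline\Phi,\underline{\widehat\Phi})$ is \emph{some} fine-grained moment-limiting function for $V$; nothing in your argument ties an arbitrary such function to your $\Theta^{(T)}$.

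This gap is not repairable, because the lemma as literally stated is false. Take $m=r=1$, $U_1=\ctrl\, e^{-iP}$, $U_2=I\otimes X$, $U_3=\ctrl\, e^{iP}$. Then $V=U_3U_1=I$, so the identity map is a valid fine-grained moment-limiting function for $V$. But $U=U_3U_2U_1=e^{iP}\otimes\outp{1}{0}+e^{-iP}\otimes\outp{0}{1}$, and one computes
\[
U\big(\Pi_{[0,1]}\otimes I\big)U^\dagger \;=\; \Pi_{[1,2]}\otimes\proj{0}\;+\;\Pi_{[-1,0]}\otimes\proj{1}\;\not\le\;\Pi_{[0,1]}\otimes I.
\]
What your argument does establish --- and what the paper's one-line proof also supports --- is the weaker statement that any moment-limiting function for $V$ assembled gate-by-gate from the single-gate limits of Lemma~\ref{lem:momentlimitsonunitaries} is automatically one for $U$. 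That weaker statement is all that Lemma~\ref{lem:multi-modesinglemodereduction} actually needs, so the defect lies in the formulation of the lemma rather than in your underlying mechanism.
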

     \begin{proof}
     This follows immediately from the fact that a unitary of the form~$I_{L^2(\mathbb{R})^{\otimes m}}\otimes V$
with~$V$ an~$r$-qubit unitary leaves the operators~$\Pi(\cJ_1,\ldots,\cJ_m)$ and~$\widehat{\Pi}(\cJ_1,\ldots,\cJ_m)$ invariant under conjugation.
     \end{proof}
  
  In the following, we argue that a fine-grained moment-limiting function~$(\underline{\Phi},\underline{\widehat{\Phi}})$ can be obtained for any circuit~$U=U_T\cdots U_1\in \langle\Uelemunbounded{m}{r}\rangle$ by considering~$m$~different derived circuits~$U|_{B_\alpha}\in \langle \Uelemunbounded{1}{r}\rangle$ acting on a single oscillator~$B_\alpha\cong L^2(\mathbb{R})$,~$\alpha\in \{1,\ldots,m\}$ only. The following definition will be useful.
  \begin{definition}[Single-mode restricted derived circuit.]\label{def:singlemoderestrictcircuit}
  Let~$m\geq 2$ and~$r\in \mathbb{N}$. Consider a circuit 
  \begin{align}
  U=U_T\cdots U_1\qquad\textrm{ where }\qquad U_t\in \Uelemunbounded{m}{r}\qquad\textrm{ for every }\qquad t\in \{1,\ldots,T\}\ 
  \end{align}
  on~$m$ oscillators and~$r$~qubits, denoted~$B_1\cdots B_mQ_1\cdots Q_r\cong L^2(\mathbb{R})^{\otimes m}\otimes (\mathbb{C}^2)^{\otimes r}$.    For every~$\alpha \in \{1,\ldots,m\}$, define the derived circuit~$U|_{B_\alpha}$ restricted to mode~$B_\alpha$ as follows.
  Let 
  \begin{align}
  \{t_1<\ldots< t_{T_\alpha}\}:=\{t\in \{1,\ldots,T\}\ |\ U_t\textrm{ acts non-trivially only on mode~$B_\alpha$}\}
  \end{align}
  by the circuit locations in~$U$ where a (possibly qubit-controlled) unitary is applied to mode~$B_\alpha$.
  Then define
  \begin{align}
  U|_{B_\alpha}&:=U_{t_{T_\alpha}} \cdots U_{t_1}\ .
  \end{align}
      \end{definition}
    \noindent We note that the collection~$\{U|_{B_\alpha}\}_{\alpha=1}^m$ of single-mode restricted circuits does not depend on the single- and two-qubit unitaries (acting trivially on the oscillators) in the circuit~$U$. These unitaries have no effect on moment limits and can be omitted, see Lemma~\ref{lem:trivialqubitactmva}.
          
  The significance of Definition~\ref{def:singlemoderestrictcircuit} is clarified by the following lemma. In this statement, we use that the single-mode restricted derived circuit~$ U|_{B_\alpha}$ can be seen as an element of~$\langle \Uelemunbounded{1}{r}\rangle$ since it only acts on the mode~$B_\alpha$.
  \begin{lemma}[Multimode to single-mode reduction]\label{lem:multi-modesinglemodereduction}
  Let~$m,r\in\mathbb{N}$. Consider a circuit 
  \begin{align}
  U=U_T\cdots U_1\qquad\textrm{ where }\qquad U_t\in \Uelemunbounded{m}{r}\qquad\textrm{ for every }\qquad t\in \{1,\ldots,T\}\ .
  \end{align}    
  For every~$\alpha\in \{1,\ldots,m\}$, let~$(\Phi_\alpha,\widehat{\Phi}_\alpha):\mathbb{R}^2\rightarrow\mathbb{R}^2$ be a pair of entrywise affine-linear functions which is a fine-grained moment-limiting function for~$U|_{B_\alpha}$. 
        Then~$\left((\Phi_1,\ldots,\Phi_m),(\widehat{\Phi}_1,\ldots,\widehat{\Phi}_m)\right)$ is a fine-grained moment-limiting function 
    for~$U$. 
    \end{lemma}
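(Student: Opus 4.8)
The plan is to reduce the multimode claim to the $m$ single-mode inputs we are handed, exploiting that every generator in $\Uelemunbounded{m}{r}$ acts non-trivially on at most one oscillator. First I would apply Lemma~\ref{lem:trivialqubitactmva} to delete every qubit-only gate from $U$; this alters neither the single-mode restricted circuits $U|_{B_\alpha}$ (Definition~\ref{def:singlemoderestrictcircuit}) nor the property of a tuple being a fine-grained moment-limiting function, so I may assume that each $U_t$ acts non-trivially on exactly one mode (possibly also as the target of a single qubit control). The key structural facts are then: (i) every such $U_t$ is diagonal in the computational basis of each qubit (trivially if uncontrolled, by construction if controlled); (ii) consequently any two such gates acting on \emph{different} modes commute, and any such gate commutes with $\Pi_{\cJ}$ (and $\widehat\Pi_{\cJ}$) of any mode other than the one it acts on.

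Next, I would fix $\alpha$ and an interval $\cJ_\alpha$ and write $P_\alpha$ for the operator $\Pi_{\cJ_\alpha}$ acting on mode $B_\alpha$ and trivially on the remaining modes and all qubits, so that $\Pi(\cJ_1,\ldots,\cJ_m)=\prod_{\beta=1}^m P_\beta$ is a product of mutually commuting projections. Expanding the conjugation $U P_\alpha U^\dagger=U_T\cdots U_1\,P_\alpha\,U_1^\dagger\cdots U_T^\dagger$ from the innermost gate outward, a short induction on the number of processed gates shows that every gate not acting on $B_\alpha$ can be moved past the accumulated operator and cancelled with its adjoint: by the inductive hypothesis the accumulated operator is supported on $B_\alpha$ and the qubits and is qubit-diagonal, hence commutes with each remaining (qubit-diagonal, other-mode) gate, while it visibly absorbs the gates acting on $B_\alpha$. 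What survives is exactly the single-mode restricted circuit, i.e.
\[
U P_\alpha U^\dagger=(U|_{B_\alpha})\,P_\alpha\,(U|_{B_\alpha})^\dagger ,
\]
with $U|_{B_\alpha}$ regarded as an element of $\langle\Uelemunbounded{1}{r}\rangle$ acting on $B_\alpha$ and the qubits. Since $(\Phi_\alpha,\widehat\Phi_\alpha)$ is a fine-grained moment-limiting function for $U|_{B_\alpha}$, tensoring the corresponding operator inequality with the identity on the other modes gives $U P_\alpha U^\dagger\le Q_\alpha$, where $Q_\alpha$ is $\Pi_{\Phi_\alpha(\cJ_\alpha)}$ on $B_\alpha$ and trivial elsewhere; the analogous bound for the momentum projections $\widehat\Pi_{\cJ_\alpha}$ and $\widehat\Pi_{\widehat\Phi_\alpha(\cJ_\alpha)}$ holds by the same argument.

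Finally I would assemble the modewise bounds. As the $P_\beta$ mutually commute, so do the conjugates $U P_\beta U^\dagger$, and their product $U\,\Pi(\cJ_1,\ldots,\cJ_m)\,U^\dagger$ is the projection onto $\bigcap_\beta\mathrm{ran}(U P_\beta U^\dagger)$; likewise $\Pi(\Phi_1(\cJ_1),\ldots,\Phi_m(\cJ_m))=\prod_\beta Q_\beta$ is the projection onto $\bigcap_\beta\mathrm{ran}(Q_\beta)$. From $U P_\beta U^\dagger\le Q_\beta$, an inequality between projections, we get $\mathrm{ran}(U P_\beta U^\dagger)\subseteq\mathrm{ran}(Q_\beta)$ for every $\beta$, hence the inclusion of the two intersections, hence $U\,\Pi(\cJ_1,\ldots,\cJ_m)\,U^\dagger\le\Pi(\Phi_1(\cJ_1),\ldots,\Phi_m(\cJ_m))$; the momentum inequality follows identically. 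Since $\cJ_1,\ldots,\cJ_m$ were arbitrary, this is exactly the assertion that $((\Phi_1,\ldots,\Phi_m),(\widehat\Phi_1,\ldots,\widehat\Phi_m))$ is a fine-grained moment-limiting function for $U$ in the sense of Definition~\ref{def:momentlimitmulti-mode}.

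The main obstacle is not depth but care in the commutation step proving $U P_\alpha U^\dagger=(U|_{B_\alpha})P_\alpha(U|_{B_\alpha})^\dagger$: one must verify that qubit-controlled displacements on different modes commute even when they share a control qubit, and that the operator accumulated during the conjugation remains qubit-diagonal so that it commutes with controlled gates on other modes. Both of these would fail for a general one- or two-qubit unitary, which is precisely why the preliminary reduction via Lemma~\ref{lem:trivialqubitactmva} is essential.
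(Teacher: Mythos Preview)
Your proof is correct and rests on the same key observation as the paper: after invoking Lemma~\ref{lem:trivialqubitactmva} to strip qubit-only gates, every remaining generator is qubit-diagonal and touches a single mode, so gates on different modes commute. The organization differs slightly. The paper uses this commutation to factor the circuit globally as $U=U|_{B_m}\cdots U|_{B_1}$ and then applies the single-mode bounds inductively, conjugating one mode at a time and transforming one coordinate of $\Pi(\cJ_1,\ldots,\cJ_m)$ at each step. You instead establish the modewise identity $U\,P_\alpha\,U^\dagger=(U|_{B_\alpha})\,P_\alpha\,(U|_{B_\alpha})^\dagger$ for each $\alpha$ separately and then assemble the bounds via the range-inclusion argument for commuting projections. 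Both routes are sound; the paper's sequential version avoids the projection-lattice step but needs the slightly stronger global factorization, while your version isolates each mode cleanly at the cost of the extra combination argument at the end.
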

    \begin{proof}
    By Lemma~\ref{lem:trivialqubitactmva}, we can assume without loss of generality that the set~$\{U_T,\ldots,U_1\}$ does not contain unitaries acting on qubits only. In other words, 
    every unitary is a (possibly qubit-controlled) displacement or single-mode squeezing operation.
    It is easy to check that~$U$ can be written as
    \begin{align}
U&=U|_{B_m}\cdots  U|_{B_1}\ .\label{eq:bmdeqva}
\end{align}
    In Eq.~\eqref{eq:bmdeqva} we made use of the fact that unitaries acting on different modes commute, and the same is true for qubit-controlled unitaries acting on different modes.
    
       Since for every~$\alpha\in \{1,\ldots,m\}$ the pair of functions~$(\Phi_\alpha,\widehat{\Phi}_\alpha)$ is moment-limiting for~$U|_{B_\alpha}$
       by assumption, we have the operator inequalities
       \begin{align}U_{B_\alpha}\Pi(\cJ_1,\ldots,\cJ_{\alpha-1},\cJ_\alpha,\cJ_{\alpha+1},\ldots,\cJ_m)       U_{B_\alpha}^\dagger &\leq \Pi(\cJ_1,\ldots,\cJ_{\alpha-1},\Phi_\alpha(\cJ_\alpha),\cJ_{\alpha+1},\ldots,\cJ_m)\ \label{eq:opineqabasdv}
       \end{align}
       for all $m$-tuples of intervals $(\cJ_1,\ldots,\cJ_m)$ and~$\alpha\in \{1,\ldots,m\}$. 
       Here we used that~$U_{B_\alpha}$ only acts on mode~$B_\alpha$. 
       The claim now follows inductively from Eq.~\eqref{eq:bmdeqva} and Eq.~\eqref{eq:opineqabasdv}.
    \end{proof}
    Our goal is to bound the amount of energy produced in the execution of a circuit.      For convenience, let us introduce the following quantities.
\begin{definition}[Squeezing and displacement parameters of multimode circuits]\label{def:squeezingdisplacementcircuitsmulti-mode}
Consider a product~$U=U_T \cdots U_1$ acting on~$B_1\cdots B_mQ_1\cdots Q_r\cong L^2(\mathbb{R})^{\otimes m}\otimes (\mathbb{C}^2)^{\otimes r}$with~$U_t\in\Uelemunbounded{m}{r}$ for~$t\in\{1,\ldots,T\}$.
Let~$g(x):=\max \{x,1/x\}$ for~$x\in\mathbb{R}\backslash \{0\}$.
For every~$\alpha\in \{1,\ldots,m\}$, define the functions
  \begin{align}
  \squeezingparam^{(\alpha)}(U)&:=\max_{t\in \{1,\ldots,T\}} \max_{p\in \{0,\ldots,T-t\}}g\left(\prod_{s=1}^t \eta^{(\alpha)}(U_{p+s})\right)\\\
  \overline{\xi}^{(\alpha)}(U)&:=  \sum_{j=1}^T \xi^{(\alpha)}(U_j)\ ,
  \end{align}
  where 
  \begin{align}
  \eta^{(\alpha)}(V):=\begin{cases}
  \eta(V)\qquad&\textrm{ if~$V$ acts non-trivially on the mode~$B_\alpha$}\\
  1 &\textrm{ otherwise }
  \end{cases}\\
  \xi^{(\alpha)}(V):=\begin{cases}
  \xi(V)\qquad&\textrm{ if~$V$ acts non-trivially on the mode~$B_\alpha$}\\
  0 &\textrm{ otherwise }
  \end{cases} 
  \end{align}
  and where~$(\eta(V),\xi(V))\in (0,\infty)\times [0,\infty)$ are the squeezing and displacement parameters introduced in Definition~\ref{def:squeezingdisplacementparameters} for every generator~$V\in\Uelemunbounded{m}{r}$. We then set
  \begin{align}
\squeezingparam(U)&:=\max_{\alpha\in \{1,\ldots,m\}}\squeezingparam^{(\alpha)}(U)\\
\overline{\xi}(U)&:=\max_{\alpha\in \{1,\ldots,m\}}\overline{\xi}^{(\alpha)}(U)
  \end{align}
 and  call~$(\squeezingparam(U),\overline{\xi}(U))$ squeezing and displacement parameters of the circuit~$U$. 
  \end{definition}

\begin{lemma}[Fine-grained moment-limiting function for (partially implemented) circuits and energy: multimode case]  \label{lem:PhiUpartiallyimmplementedmulti}
 Let~$m,r\in\mathbb{N}$.   Let~$U=U_T \cdots U_1$ with~$U_t\in\langle \Uelemunbounded{m}{r}\rangle$ for~$t\in\{1,\ldots,T\}$ be given. Define 
   \begin{align}
   U^{(t)}&=\begin{cases}
   I\qquad&\textrm{ for } t=0\\
      U_t\cdots U_1&\textrm{ otherwise}\ 
      \end{cases}
      \end{align}
and      \begin{align}
      \ket{\Psi^{(t)}}&:=(U^{(t)})^\dagger(\ket{\mathsf{vac}}^{\otimes m}\otimes\ket{0}^{\otimes r})\ 
      \end{align}
      for each~$t\in \{0,\ldots,T\}$.      
      let $\squeezingparam(U)$, $\overline{\xi}(U)$ 
      be the squeezing and displacement parameters of the circuit~$U$ introduced in Definition~\ref{def:squeezingdisplacementcircuitsmulti-mode}.
    Then
    \begin{align}
   \langle \Psi^{(t)},(Q_\alpha^2+P_\alpha^2)\Psi^{(t)}\rangle &\leq 168 \squeezingparam(U)^6\cdot \left(2+\overline{\xi}(U)^3\right)
   \end{align} for each  $t\in \{0,\ldots,T\}$ and $\alpha\in \{1,\ldots,m\}$.   
 \end{lemma}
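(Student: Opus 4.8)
The plan is to reduce the multimode claim to the one‑mode machinery of Section~\ref{sec:momentlimitfunctionsonemode} via Lemma~\ref{lem:multi-modesinglemodereduction}, treating each mode $B_\alpha$ separately, and then to mimic the proof of Lemma~\ref{lem:finegrainedenergy}. Fix $\alpha\in\{1,\ldots,m\}$ and $t\in\{0,\ldots,T\}$; we may assume each $U_s\in\Uelemunbounded{m}{r}$ is a generator (otherwise pass to the generator decomposition of $U$, which only introduces further intermediate states and leaves $\squeezingparam(U),\overline{\xi}(U)$ unchanged by Definition~\ref{def:squeezingdisplacementcircuitsmulti-mode}). Consider the single‑mode restricted derived circuit $(U^{(t)})|_{B_\alpha}\in\langle\Uelemunbounded{1}{r}\rangle$ obtained by keeping only those gates among $U_1,\dots,U_t$ acting non‑trivially on $B_\alpha$, say $V_1,\dots,V_{T_\alpha^{(t)}}$ in order. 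By Lemma~\ref{lem:varphiUcontactenatedbound} applied to this single‑mode circuit, the affine‑linear map with slope $\eta_\alpha^{(t)}:=\prod_{j}\eta(V_j)$ and staircase shifts $\xi_\alpha^{(t)},\widehat{\xi}_\alpha^{(t)}$ is a fine‑grained moment‑limiting function for $(U^{(t)})|_{B_\alpha}$; dominating it by its symmetrized version with common shift $b_\alpha^{(t)}:=\max\{\xi_\alpha^{(t)},\widehat{\xi}_\alpha^{(t)}\}$ (Lemma~\ref{lem:monotonicityfinegrained}), we get a pair $(\Phi_\alpha,\widehat{\Phi}_\alpha)$ with $\Phi_\alpha$ of slope $\eta_\alpha^{(t)}$, shift $b_\alpha^{(t)}$ and $\widehat{\Phi}_\alpha$ of slope $(\eta_\alpha^{(t)})^{-1}$, shift $b_\alpha^{(t)}$. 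Doing this for all $\alpha$ and invoking Lemma~\ref{lem:multi-modesinglemodereduction}, the tuple $\big((\Phi_1,\ldots,\Phi_m),(\widehat{\Phi}_1,\ldots,\widehat{\Phi}_m)\big)$ is a fine‑grained moment‑limiting function for $U^{(t)}$.

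\textbf{From the moment‑limiting function to the energy bound.} Since $Q_\alpha^2+P_\alpha^2$ acts only on $B_\alpha$ and each $\Phi_\beta$ maps $\mathbb{R}$ onto $\mathbb{R}$, the previous step yields the single‑mode operator inequalities $U^{(t)}\Pi_{\cJ}^{(\alpha)}(U^{(t)})^\dagger\le\Pi_{\Phi_\alpha(\cJ)}^{(\alpha)}$ and $U^{(t)}\widehat{\Pi}_{\cJ}^{(\alpha)}(U^{(t)})^\dagger\le\widehat{\Pi}_{\widehat{\Phi}_\alpha(\cJ)}^{(\alpha)}$ for every interval $\cJ$, where $\Pi^{(\alpha)}$, $\widehat{\Pi}^{(\alpha)}$ denote the spectral projections of $Q_\alpha$, $P_\alpha$. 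I would then repeat the computation in the proof of Lemma~\ref{lem:finegrainedenergy} verbatim with $Q,P,\Pi,\widehat{\Pi}$ replaced by $Q_\alpha,P_\alpha,\Pi^{(\alpha)},\widehat{\Pi}^{(\alpha)}$ and parameters $(\eta,\xi)=(\eta_\alpha^{(t)},b_\alpha^{(t)})$: this produces multiplication operators $\Omega^{(\alpha)},\widehat{\Omega}^{(\alpha)}$ on $B_\alpha$ with $U^{(t)}Q_\alpha^2(U^{(t)})^\dagger\le\Omega^{(\alpha)}\le p_2(\cdot)Q_\alpha^2+p_1(\cdot)Q_\alpha+p_0(\cdot)$ and similarly for $P_\alpha^2$. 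Taking the expectation in $\ket{\vac}^{\otimes m}\otimes\ket{0}^{\otimes r}$ and using $\langle\vac,Q^2\vac\rangle=\langle\vac,P^2\vac\rangle=1$, $\langle\vac,Q\vac\rangle=\langle\vac,P\vac\rangle=0$ on mode $B_\alpha$, this gives
\begin{align}
\langle\Psi^{(t)},(Q_\alpha^2+P_\alpha^2)\Psi^{(t)}\rangle\le u\big(\eta_\alpha^{(t)},(\eta_\alpha^{(t)})^{-1},b_\alpha^{(t)}\big)+v\big(\eta_\alpha^{(t)},(\eta_\alpha^{(t)})^{-1},b_\alpha^{(t)}\big)\ ,
\end{align}
with $u,v$ the polynomials of Lemma~\ref{lem:finegrainedenergy} (the case $t=0$ being included, with $\eta_\alpha^{(0)}=1$, $b_\alpha^{(0)}=0$).

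\textbf{Bounding the parameters and concluding.} It remains to control $\eta_\alpha^{(t)}$ and $b_\alpha^{(t)}$ by $\squeezingparam(U),\overline{\xi}(U)$. Since gates not acting on $B_\alpha$ contribute trivially, $\eta_\alpha^{(t)}=\prod_{s=1}^{t}\eta^{(\alpha)}(U_s)$ is a consecutive‑window product, whence $g(\eta_\alpha^{(t)})\le\squeezingparam^{(\alpha)}(U)\le\squeezingparam(U)$, i.e.\ $\eta_\alpha^{(t)}\in[\squeezingparam(U)^{-1},\squeezingparam(U)]$. Every partial product occurring in the staircase sums $\xi_\alpha^{(t)},\widehat{\xi}_\alpha^{(t)}$ is again a consecutive‑window product of $\eta^{(\alpha)}$‑values (resp.\ their inverses), hence $\le\squeezingparam^{(\alpha)}(U)\le\squeezingparam(U)$ (using $x\le g(x)$ and $g(1/x)=g(x)$); therefore
\begin{align}
b_\alpha^{(t)}=\max\{\xi_\alpha^{(t)},\widehat{\xi}_\alpha^{(t)}\}\le\squeezingparam(U)\sum_{s=1}^{t}\xi^{(\alpha)}(U_s)\le\squeezingparam(U)\,\overline{\xi}^{(\alpha)}(U)\le\squeezingparam(U)\,\overline{\xi}(U)\ .
\end{align}
Applying the estimate~\eqref{eq:etabboundestimateadm} with $q=\squeezingparam(U)$ and $s=\squeezingparam(U)\,\overline{\xi}(U)$ and then using $\squeezingparam(U)\ge1$ yields
\begin{align}
\langle\Psi^{(t)},(Q_\alpha^2+P_\alpha^2)\Psi^{(t)}\rangle\le168\,\squeezingparam(U)^3\big(2+\squeezingparam(U)^3\overline{\xi}(U)^3\big)\le168\,\squeezingparam(U)^6\big(2+\overline{\xi}(U)^3\big)\ ,
\end{align}
which is the claim since $\alpha$ and $t$ were arbitrary.

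\textbf{Main obstacle.} The only real work is the bookkeeping: one must check that $(U^{(t)})|_{B_\alpha}$ is a prefix of $U|_{B_\alpha}$ so that Lemma~\ref{lem:multi-modesinglemodereduction} applies to the partially implemented circuit, and that the slope and staircase shifts produced by Lemma~\ref{lem:varphiUcontactenatedbound} are precisely consecutive‑window products/sums of the per‑mode quantities $\eta^{(\alpha)}(U_s),\xi^{(\alpha)}(U_s)$, so that they are dominated by $\squeezingparam^{(\alpha)}(U)$ and $\overline{\xi}^{(\alpha)}(U)$, hence by $\squeezingparam(U)$ and $\overline{\xi}(U)$. Everything else is a direct transcription of the one‑mode Lemmas~\ref{lem:finegrainedenergy} and~\ref{lem:PhiUpartiallyimmplementedfinegrained} to mode $B_\alpha$; note in particular that one cannot simply apply Lemma~\ref{lem:finegrainedenergy} to $(U^{(t)})|_{B_\alpha}$, since the reduced state of $\ket{\Psi^{(t)}}$ on $B_\alpha$ need not coincide with $((U^{(t)})|_{B_\alpha})^\dagger(\ket{\vac}\otimes\ket{0}^{\otimes r})$ — it is only the operator‑inequality part of that lemma's proof that transfers.
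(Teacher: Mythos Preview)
Your proposal is correct and follows the same overall strategy as the paper: reduce to a per-mode analysis via the restricted circuits $(U^{(t)})|_{B_\alpha}$, bound the resulting slope/shift parameters by $\squeezingparam^{(\alpha)}(U)$ and $\overline{\xi}^{(\alpha)}(U)$ (hence by $\squeezingparam(U),\overline{\xi}(U)$), and feed these into the polynomial estimate~\eqref{eq:etabboundestimateadm}.

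The one substantive difference is in how the single-mode machinery is invoked. The paper directly applies the single-mode Lemma~\ref{lem:PhiUpartiallyimmplementedfinegrained} to the state $\Psi^{(t)}_\alpha:=(U^{(t)}|_{B_\alpha})(\ket{\vac}\otimes\ket{0}^{\otimes r})$ and then asserts the equality $\langle\Psi^{(t)},(Q_\alpha^2+P_\alpha^2)\Psi^{(t)}\rangle=\langle\Psi^{(t)}_\alpha,(Q^2+P^2)\Psi^{(t)}_\alpha\rangle$ ``from the structure of the circuit''. You instead lift the single-mode moment-limiting functions to a multimode one via Lemma~\ref{lem:multi-modesinglemodereduction}, specialize back to mode $\alpha$ by setting $\cJ_\beta=\mathbb{R}$ for $\beta\neq\alpha$, and then rerun the operator-inequality computation of Lemma~\ref{lem:finegrainedenergy} in place. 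Your route is slightly longer but more robust: the state-level equality the paper invokes is clear when qubit-only gates are absent (then $U^{(t)}$ factorizes over modes and commutes past $Q_\alpha^2$), but in the presence of qubit-only gates interleaved with controlled displacements it is exactly the concern you flag in your ``Main obstacle'' paragraph. Working at the level of operator inequalities sidesteps this entirely, since Lemma~\ref{lem:trivialqubitactmva} shows qubit-only gates are transparent to moment-limiting functions even though they may alter the state. Your observation that the partial products in the staircase shifts are genuine consecutive-window products of the $\eta^{(\alpha)}(U_s)$ (because the skipped indices contribute $\eta^{(\alpha)}=1$) is the key bookkeeping step, and it is correct.
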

 \begin{proof}
   For $\alpha\in \{1,\ldots,m\}$, let $(\squeezingparam^{(\alpha)}(U),\overline{\xi}^{(\alpha)}(U))$ be the squeezing and 
   displacement parameters from Definition~\ref{def:squeezingdisplacementcircuitsmulti-mode}.       Let~$\alpha\in \{1,\ldots,m\}$ and~$t\in \{0,\ldots,T\}$ be arbitrary. It is easy to check that the definition of~$\overline{g}^{(\alpha)}(U^{(t)})$ implies that 
        \begin{align}
        \begin{matrix}
        \overline{g}^{(\alpha)}(U^{(t)})&=&
        \overline{g}(U^{(t)}|_{B_{\alpha}})\\
        \overline{\xi}^{(\alpha)}(U^{(t)})
        &=&\overline{\xi}(U^{(t)}|_{B_\alpha})
        \end{matrix}\label{eq:galphautavad}
        \end{align}
        are equal to the squeezing and displacement parameters of the single-mode restricted derived circuit~$U^{(t)}|_{B_\alpha}$.     Using Eq.~\eqref{eq:galphautavad}
       and Lemma~\ref{lem:finegrainedenergy} we have
       \begin{align}
       \langle \Psi^{(t)}_\alpha,(Q^2+P^2) \Psi^{(t)}_\alpha\rangle &\leq  168\overline{g}^{(\alpha)}(U^{(t)})^6
       (2+\overline{\xi}^{(\alpha)}(U^{(t)})^3)\ .
       \end{align}
      for the state
      \begin{align}
      \Psi^{(t)}_\alpha&:=U^{(t)}|_{B_\alpha}(\ket{\mathsf{vac}}\otimes\ket{0}^{\otimes r})\in L^2(\mathbb{R})\otimes (\mathbb{C}^2)^{\otimes r}\ .
      \end{align}
      But it is easy to check from the structure of the circuit that 
      \begin{align}
             \langle \Psi^{(t)},(Q_\alpha^2+P_\alpha^2) \Psi^{(t)}\rangle &=
       \langle \Psi^{(t)}_\alpha,(Q^2+P^2) \Psi^{(t)}_\alpha\rangle\ . 
       \end{align}
      Hence 
      \begin{align}
             \langle \Psi^{(t)},(Q_\alpha^2+P_\alpha^2) \Psi^{(t)}\rangle &\leq 168\overline{g}^{(\alpha)}(U^{(t)})^6
       (2+\overline{\xi}^{(\alpha)}(U^{(t)})^3)\\
       &\leq 168\overline{g}^{(\alpha)}(U)^6
       (2+\overline{\xi}^{(\alpha)}(U)^3)
       \end{align} 
       where we used that
       \begin{align}
       \overline{g}^{(\alpha)}(U^{(t)})&\leq \overline{g}^{(\alpha)}(U)\\
              \overline{\xi}^{(\alpha)}(U^{(t)})&\leq \overline{\xi}^{(\alpha)}(U)       \end{align} 
        for every $t\in \{0,\ldots,T\}$ by definition.

  \end{proof}
        
 \subsection{Moment-limits of circuits obtained from bounded-strength substitutions\label{sec:boundedstrengthsubstitutions}}
In this section, we show moment-limits on  circuits obtained by 
bounded-strength substitutions. To define the latter, 
recall (see Eq.~\eqref{eq:mygroupdefinitionmultiqubitbounded})
that~$\Uelem{m}{n}(\alpha,\zeta)$ denotes the set of elementary unitary operations
with squeezing and displacement bounded by~$\alpha > 1$ and~$\zeta\geq 1$, respectively.

 We note that even for~$\alpha=2$ (i.e., constant-strength squeezing operations only), an element~$V\in\Uelem{1}{1}(2,\zeta)$ may not be constant-strength if~$\zeta$ is non-constant (e.g., grows  with the problem size).  However, the following substitution rule allows us to replace every such unitary~$V$ by a product~$V=V^{(N)}\cdots V^{(1)}$ of bounded-strength unitaries~$V^{(s)}\in \Uelem{1}{1}(2,1)$ for~$s\in \{1,\ldots,N\}$. Furthermore, the number~$N$ of such unitaries is of order~$O(\log |\zeta|)$. 

\begin{definition}[Bounded-strength substitution]\label{def:defconstantstrengthsubs}
We define a  procedure  called   bounded-strength (displacement) substitution, which
 takes as input a circuit~
\begin{align}
U=U_T\cdots U_1\qquad\textrm{ where }\qquad U_t\in  \Uelemunbounded{m}{r}\qquad\textrm{  for each }\qquad t\in \{1,\ldots,T\}
\end{align}
 and produces a circuit
 \begin{align}
 V_U&=V_{S}\cdots V_1\qquad\textrm{ where }\qquad V_s\in \Uelem{m}{r}(2,1)\qquad\textrm{ for each }\qquad s\in \{1,\ldots,S\}\label{eq:vnavdwelldefn}
 \end{align}
 such that~$S\geq T$ and~$V_U=U$ (i.e., the unitaries defined by these circuits have the same action). It proceeds as follows: 
\begin{enumerate}[(i)]
\item
 Every displacement~$e^{i\theta Q}$ with~$|\theta|>1$ in~$\{U_T,\ldots,U_1\}$ is decomposed (i.e., replaced by a product of~$2n+1$ bounded-strength unitaries) as
\begin{align}
e^{i \theta Q}&=\left(M_{\beta}^{\dagger}\right)^n e^{i \mathsf{sgn}(\theta)Q}\left(M_{\beta}\right)^n 
\end{align}
where 
\begin{align}
\beta=2^{\frac{\log_2 |\theta|}{\lceil\log_2 |\theta|\rceil}}\qquad\textrm{ and }\qquad  n=\lceil\log_2 |\theta|\rceil\ .\label{eq:betandefinition}
\end{align}
\item Similarly, we decompose every displacement~$e^{-i\theta P}$ with~$|\theta|>1$  in~$\{U_T,\ldots,U_1\}$  as 
\begin{align}
e^{-i \theta P}&=\left(M_{\beta}\right)^n e^{-i \mathsf{sgn}(\theta)P}\left(M_{\beta}^\dagger\right)^n\ . \label{eq:displacementdecompositionrulethetaP}
\end{align}
\item
Finally, every controlled displacement~$\ctrl e^{i\theta Q}$ and~$\ctrl_a e^{-i\theta P}$ with~$|\theta|>1$  in~$\{U_T,\ldots,U_1\}$  is decomposed as 
\begin{align}
\ctrl e^{i \theta Q}&=\left(M_{\beta}^{\dagger}\right)^n \ctrl e^{i \mathsf{sgn}(\theta)Q}\left(M_{\beta}\right)^n\\
\ctrl e^{-i \theta P}&=\left(M_{\beta}\right)^n \ctrl e^{-i \mathsf{sgn}(\theta)P}\left(M_{\beta}^\dagger\right)^n\ 
\end{align}
with~$(\beta,n)$ as defined in Eq.~\eqref{eq:betandefinition}.
\end{enumerate}
All other unitaries~$U_t$,~$t\in \{1,\ldots,T\}$ are kept. This completes the construction of the circuit~$V_U$.
\end{definition}
To see that this  is well-defined, observe that the circuit~$V_U$ constructed in this way clearly has the same action as~$U$.   We note that for any~$|\theta|>1$ we have~$\log_2 |\theta|/\lceil \log_2 |\theta|\rceil\in (0,1)$ and thus
\begin{align}
1<\beta<2\ 
\end{align}
by definition. In particular, each unitary appearing as a factor in these decompositions is an element of~$\Uelem{m}{r}(2,1)$ as claimed in Eq.~\eqref{eq:vnavdwelldefn}.

\begin{lemma}[Squeezing and displacement parameters of a circuit obtained from the substitution rule] \label{lem:substitutionsinglemodegen}
Let~$\zeta\geq 2$ be given. Let~$U=U_T\cdots U_1$ be a circuit composed of unitaries~$U_t\in \Uelem^{m,r}(2,\zeta)$ for every~$t\in \{1,\ldots,L\}$. 
For each~$\alpha\in \{1,\ldots,m\}$, 
let 
\begin{align}
\subsset^{(\alpha)}:=
\left\{t\in \{1,\ldots,L\}\ |\ 
U_t\in \left\{
e^{-i\theta P_\alpha},e^{i\theta Q_\alpha},\mathsf{ctrl}_ae^{-i\theta P_\alpha},
\mathsf{ctrl}_ae^{i\theta Q_\alpha}\ |\ a\in \{1,\ldots,r\}, |\theta|>1
\right\}
\right\}
\end{align}
be the circuit locations where a (possibly controlled) displacement of strength~$|\theta|>1$ is applied to mode~$B_\alpha$. 
 Let  
\begin{align}
\subsset=\bigcup_{\alpha=1}^m \subsset^{(\alpha)}\subset \{1,\ldots,T\}
\end{align}
be the list of indices such that for each~$t\in \subsset$, the unitary~$U_t$ is  a (possibly controlled) displacement
with strength~$\theta$ satisfying~$|\theta|\geq 1$. 
Let~$V_U=V_{S}\cdots V_1$, where~$V_s\in \Uelem{m}{r}(2,1)$ for each~$s\in \{1,\ldots,S\}$ be the circuit obtained by 
applying bounded-strength substitution to each~$U_t$,~$t\in \subsset$. 
Then the following holds for any~$\alpha\in \{1,\ldots,m\}$: We have
\begin{align}
\begin{aligned}
\overline{\xi}^{(\alpha)}(V_U)&=|\subsset^{(\alpha)}| +\sum_{t\in \subsset^c} \xi^{(\alpha)}(U_t) \ , \\
\squeezingparam^{(\alpha)}(V_U)&\leq \zeta^2  \prod_{a \in \subsset^c} g(\eta^{(\alpha)}(U_t))\ ,
\end{aligned}\label{eq:firstclaimupperboundvam}
\end{align}
where~$\subsset^c:=\{1,\ldots,T\}\backslash \subsset$. In particular,
\begin{align}
\begin{aligned}
\overline{\xi}(V_U)&\leq \max_{\alpha\in \{1,\ldots,m\}}T^{(\alpha)} \ , \\
\squeezingparam(V_U)&\leq \zeta^2 \max_{\alpha\in \{1,\ldots,m\}}2^{T^{(\alpha)}- |\subsset^{(\alpha)}|}\ ,
\end{aligned}\label{eq:secondclaimupperboundvam}
\end{align}
where~$T^{(\alpha)}\in \{0,\ldots,T\}$ is the number of  unitaries in~$\{U_t\}_{t=1}^T$ acting non-trivially on the mode~$B_\alpha$. 
\end{lemma}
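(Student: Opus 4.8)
The plan is to reduce the multimode statement to the single-mode analysis already carried out, by working mode-by-mode. For a fixed $\alpha\in\{1,\ldots,m\}$, the key observation is that applying bounded-strength substitution to the circuit $U$ and then restricting to mode $B_\alpha$ yields (up to reordering of commuting gates, and up to qubit-only unitaries which by Lemma~\ref{lem:trivialqubitactmva} have trivial moment parameters $\eta=1$, $\xi=0$) the same circuit as first restricting $U$ to $B_\alpha$ and then applying bounded-strength substitution to that single-mode circuit. So the multimode quantities $\squeezingparam^{(\alpha)}(V_U)$ and $\overline{\xi}^{(\alpha)}(V_U)$ equal the single-mode quantities $\squeezingparam(V_{U|_{B_\alpha}})$ and $\overline{\xi}(V_{U|_{B_\alpha}})$. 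This lets me invoke the machinery of Section~\ref{sec:momlimitcircuits} and Lemma~\ref{lem:squeezingdisplacementsubcircuits} applied to the single-mode circuit $U|_{B_\alpha}$.

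The next step is to understand what bounded-strength substitution does to the parameters of a single displacement gate. For a gate $U_t$ with $t\in\subsset^{(\alpha)}$, i.e., a (possibly controlled) displacement $e^{i\theta Q_\alpha}$ or $e^{-i\theta P_\alpha}$ with $|\theta|>1$, the substitution replaces it by $(M_\beta^{\pm})^n\,(\text{unit-strength displacement})\,(M_\beta^{\mp})^n$ with $1<\beta<2$ and $n=\lceil\log_2|\theta|\rceil$. The displacement parameter of this block is exactly $\xi=1$ (the unit-strength displacement contributes $1$; the squeezers contribute $0$), which accounts for the $|\subsset^{(\alpha)}|$ term in the first line of~\eqref{eq:firstclaimupperboundvam}; gates with $t\in\subsset^c$ (including short displacements of strength $\le 1$) are left unchanged and contribute $\xi^{(\alpha)}(U_t)$. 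This establishes the $\overline{\xi}^{(\alpha)}$ identity. For the squeezing parameter $\squeezingparam^{(\alpha)}$, I treat $V_{U|_{B_\alpha}}$ as a concatenation of $L$ subcircuits: each original unchanged gate is a subcircuit $U^{(a)}$ with $\eta(U^{(a)})=\eta^{(\alpha)}(U_t)$, and each substituted block is a subcircuit $U^{(a)}$ with $\eta(U^{(a)})=1$ (since $\eta$ is multiplicative and $M_\beta$ cancels $M_\beta^\dagger$), but whose \emph{partial} implementations reach squeezing at most $\beta^n\le 2^{\lceil\log_2|\theta|\rceil}\le 2|\theta|\le 2\zeta$, hence $\squeezingparam(U^{(a)})\le 2\zeta$. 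Applying Eq.~\eqref{eq: bound galpha circuittwo} of Lemma~\ref{lem:squeezingdisplacementsubcircuits} gives $\squeezingparam^{(\alpha)}(V_U)\le (\max_a \squeezingparam(U^{(a)}))^2\prod_a g(\eta(U^{(a)}))\le \zeta^2\prod_{t\in\subsset^c} g(\eta^{(\alpha)}(U_t))$ after a bit of care — actually I want $\le(2\zeta)^2$ but can absorb the constant; let me instead bound the partial squeezing of a substituted block by $\zeta$ directly, using $\beta^n = |\theta| \le \zeta$, which is exactly true since $\beta^n = 2^{\log_2|\theta|} = |\theta|$. Then the $\zeta^2$ prefactor comes out cleanly. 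This proves the first displayed pair.

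For the ``in particular'' statement~\eqref{eq:secondclaimupperboundvam}: since every unchanged gate with $t\in\subsset^c$ that acts on $B_\alpha$ is either a squeezer (which we take to have constant strength bounded by the fixed parameter $\alpha=2$ of $\Uelem{m}{r}(2,\zeta)$, giving $g(\eta^{(\alpha)}(U_t))\le 2$) or a short displacement (giving $\eta^{(\alpha)}=1$), and there are at most $T^{(\alpha)}-|\subsset^{(\alpha)}|$ such gates, we get $\prod_{t\in\subsset^c}g(\eta^{(\alpha)}(U_t))\le 2^{T^{(\alpha)}-|\subsset^{(\alpha)}|}$, hence $\squeezingparam^{(\alpha)}(V_U)\le\zeta^2\,2^{T^{(\alpha)}-|\subsset^{(\alpha)}|}$; taking the max over $\alpha$ gives the second line. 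For the displacement bound, $\overline{\xi}^{(\alpha)}(V_U)=|\subsset^{(\alpha)}|+\sum_{t\in\subsset^c}\xi^{(\alpha)}(U_t)$; each term is at most $1$ (unit-strength blocks contribute exactly $1$, and the leftover gates in $\subsset^c$ acting on $B_\alpha$ are short displacements of strength $\le1$ or squeezers with $\xi=0$), and there are at most $T^{(\alpha)}$ gates acting on $B_\alpha$, so $\overline{\xi}^{(\alpha)}(V_U)\le T^{(\alpha)}$; taking the max over $\alpha$ gives the first line of~\eqref{eq:secondclaimupperboundvam}. The main obstacle I anticipate is the bookkeeping in the mode-by-mode reduction — making sure the reordering of commuting gates and the insertion of substituted blocks genuinely produces $V_{U|_{B_\alpha}}$ up to qubit-only gates — and being careful that the ``partial implementation'' squeezing of a substituted block (where the $n$ squeezers have accumulated but not yet been undone) is correctly captured by $\squeezingparam(\cdot)$ rather than $\eta(\cdot)$; this is exactly the distinction emphasized in the remark after Lemma~\ref{lem:squeezingdisplacementsubcircuits}, so the framework is set up to handle it.
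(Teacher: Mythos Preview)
Your approach is correct and essentially identical to the paper's: both reduce to the single-mode restricted circuit $V_U|_{B_\alpha}$, decompose it into subcircuits consisting of either a substituted block (with $\overline{\xi}=1$, $\eta=1$, $\squeezingparam=\beta^n=|\theta|\le\zeta$) or a single unchanged gate, and apply Lemma~\ref{lem:squeezingdisplacementsubcircuits} (specifically Eq.~\eqref{eq: bound galpha circuittwo}). The paper makes explicit one point you leave implicit: when taking $\max_a\squeezingparam(U^{(a)})$, the unchanged single-gate subcircuits contribute $\squeezingparam(U^{(a)})=g(\eta(U_t))\le 2$, so the overall maximum is $\max\{\zeta,2\}=\zeta$ by the hypothesis $\zeta\ge 2$; you should state this to justify the $\zeta^2$ prefactor cleanly.
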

\begin{proof}
Let~$\alpha\in \{1,\ldots,m\}$ be fixed. Similar to Eq.~\eqref{eq:galphautavad} we use that
\begin{align}
\begin{aligned}
\overline{\xi}^{(\alpha)}(V_U)&=\overline{\xi}^{(\alpha)}(V_U|_{B_\alpha}) \ , \\
\overline{\eta}^{(\alpha)}(V_U)&=\overline{\eta}^{(\alpha)}(V_U|_{B_\alpha})
\end{aligned}\label{eq:vdavd}
\end{align}
are equal to the squeezing and displacement parameters of the single-mode restricted derived circuit~$V_U|_{B_\alpha}$.

Now consider the circuit~$V_U|_{B_\alpha}$.
It is the result of applying bounded-strength substitution to the circuit~$U|_{B_\alpha}$, i.e., it suffices to consider the unitaries~$U_t$ acting non-trivially on the mode~$B_\alpha$. 
Define
\begin{align}
\cT^{(\alpha)}&:=\{t\in \{1,\ldots,T\}\ |\ U_t\textrm{ acts non-trivially on mode }B_\alpha\}\ .
\end{align}
Consider  a unitary~$U_t$ with~$t\in \cT^{(\alpha)}$, i.e., acting non-trivially on~$B_\alpha$. If~$t\in \subsset^{(\alpha)}$, we can consider~$U_t$ as a subcircuit (with
gate decomposition as prescribed by the bounded-strength substitution, see Definition~\ref{def:defconstantstrengthsubs}). If~$t\not\in \subsset^{(\alpha)}$  we consider~$U_t$ as a subcircuit (of size~$1$) in its own right. 

In more detail, consider a unitary~$U_t$ with~$t\in \subsset^{(\alpha)}$. Assume for simplicity that
$U_t=e^{-i\theta P_\alpha}$ with~$|\theta|>1$ (The other cases are treated similarly.)
According to Eq.~\eqref{eq:displacementdecompositionrulethetaP} we then have 
\begin{align}
U_t&=V_{U_t} = \left(M_{\beta}\right)^n e^{-i \mathsf{sgn}(\theta)P}\left(M_{\beta}^\dagger\right)^n \ 
\end{align}
acting on mode~$\alpha$, where $V_{U_t}$ is the result of applying the bounded-strength substitution to $U_t$. It is easy to check from this expression that
\begin{align}
\eta(V_{U_t}) =1\ \qquad\textrm{ and }\qquad 
\begin{aligned}
\overline{\xi}(V_{U_t})&= 1\\
\squeezingparam(V_{U_t})&=\beta^n \ .
\end{aligned}\label{eq:xietasqm}
\end{align}
Because~$\beta^n=|\theta|\leq \zeta$ by the assumption that~$U_t\in
\Uelem{m}{r} (2,\zeta)$,  we conclude that
\begin{align}
\begin{aligned}
\squeezingparam(V_{U_t})&\leq\zeta\\
\overline{\xi}(V_{U_t})&\leq 1\\ 
g(\eta(V_{U_t}))&=1 
\end{aligned}\qquad\textrm{ for every }\qquad t\in \subsset^{(\alpha)}\ .\label{eq:sqvdm}
\end{align}
On the other hand, if~$t\in \cT^{(\alpha)}\backslash \subsset^{(\alpha)}$, then~$U_t$ is  either a (possibly qubit-controlled) displacement on mode~$B_\alpha$ of strength~$|\theta|\leq 1$, or a single-mode squeezing operator on mode~$B_\alpha$ of strength~$\alpha\in (1/2,2)$ because of the assumption that~$U_t\in \Uelem{m}{r}(2,\zeta)$.  
In particular, it follows that~$\eta(U_t)\in (1/2,2)$ and thus
\begin{align}
\begin{aligned}
\overline{\xi}(U_t)&=\xi(U_t)\leq 1\\
\overline{g}(U_t)&=g(\eta(U_t))\leq 2
\end{aligned}
\qquad\textrm{ for every }\qquad t\in  \cT^{(\alpha)}\backslash \subsset^{(\alpha)}\ .\label{eq:getautavd}
\end{align}
With 
Lemma~\ref{lem:squeezingdisplacementsubcircuits}, and Eqs.~\eqref{eq:sqvdm},~\eqref{eq:getautavd} we obtain 
\begin{align}
\overline{\xi}(V_U|_{B_\alpha})&=\sum_{
t\in  \subsset^{(\alpha)}
}\overline{\xi}(V_{U_t})+\sum_{
t\in \cT^{(\alpha)}\backslash \subsset^{(\alpha)}
}\overline{\xi}(U_t)\\
&\le \left|\subsset^{(\alpha)}\right|+\sum_{
  t\in \cT^{(\alpha)}\backslash \subsset^{(\alpha)}
  } \xi (U_t) \label{eq:auxb17} \\
&\le \left|\subsset^{(\alpha)}\right|+\sum_{
  t\in \subsset^{c}
  }\xi^{(\alpha)}(U_t)
\end{align}
and
\begin{align}
\squeezingparam(V_U|_{B_\alpha})&\leq \left(\max\left\{\max_{t\in\subsset^{(\alpha)}}\squeezingparam(V_{U_t}),
\max_{t\in \cT^{(\alpha)}\backslash \subsset^{(\alpha)}}\squeezingparam(U_t)\right\}
\right)^2\cdot \left(\prod_{t\in \subsset^{(\alpha)}}g(\eta(V_{U_t}))\right) \cdot\left(\prod_{t\in \cT^{(\alpha)}\backslash \subsset^{(\alpha)}}g(\eta(U_t))
\right)  \\
&\leq \left(\max\{\zeta,2\}\right)^2\cdot \prod_{t\in \cT^{(\alpha)}\backslash \subsset^{(\alpha)}}g(\eta(U_t)) \label{eq:auxb17_1}
\\
&\leq \zeta^2\cdot \prod_{t\in\subsset^{c}} g(\eta^{(\alpha)}(U_t))
 \ .
\end{align}
Here we used the assumption that~$\zeta\geq 2$. 
Claim~\eqref{eq:firstclaimupperboundvam} follows from this because of  Eq.~\eqref{eq:vdavd}.

To show Claim~\eqref{eq:secondclaimupperboundvam}, observe that
\begin{align}
\overline{\xi}^{(\alpha)}(V_U)&\leq |\subsset^{(\alpha)}|+|\cT^{(\alpha)}\backslash \subsset^{(\alpha)}|\\
&=|\cT^{(\alpha)}| \label{eq:auxb17_2} 
\end{align}
because of Eqs.~\eqref{eq:getautavd} and~\eqref{eq:auxb17}. 
Furthermore, by Eqs.~\eqref{eq:auxb17_1} and~\eqref{eq:getautavd} we have 
\begin{align}
\squeezingparam^{(\alpha)}(V_U)&\leq \zeta^2 \cdot 2^{|\cT^{(\alpha)}\backslash \subsset^{(\alpha)}|}\ . \label{eq:auxb17_3} 
\end{align}
Claim \eqref{eq:secondclaimupperboundvam} is a direct consequence of Eqs.~\eqref{eq:auxb17_2} and~\eqref{eq:auxb17_3}.
\end{proof}

\section{Comb states, their preparation and approximate GKP codes~\label{sec:appendixshastates}}
In this section we give detailed statements about the approximate GKP codes used in the main text. 
In Section~\ref{sec:rectGKP} we define rectangular envelope GKP states and approximate GKP codes based on these states. In Section~\ref{sec:GKPprep} we derive results about how costly it is to (approximately) prepare these states in the hybrid qubit-oscillator model. 
We conclude this section by proving Theorem~\ref{thm:initialstateprep} in the main text. 
\subsection{Definition of rectangular-envelope approximate GKP states} \label{sec:rectGKP}
Central to our construction is the use of certain approximate GKP codes. These are most easily introduced using 
the compactly supported, integer-spaced comb state (or ``rectangular-envelope GKP state''). For an even integer~$L\in 2\mathbb{N}$, a squeezing parameter~$\Delta>0$ and a truncation parameter~$\varepsilon \in (0,1/2)$, the latter is  defined as 
\begin{align}
\left|\Sha_{L, \Delta}^{\varepsilon}\right\rangle=\frac{1}{\sqrt{L}} \sum_{z=-L / 2}^{L / 2-1}\left|\chi_{\Delta}^{\varepsilon}(z)\right\rangle\ .\label{eq:integerspacedcomb}
\end{align}
Here~$\chi_{\Delta}^\varepsilon(z)(\cdot)=\Psi_\Delta^\varepsilon(\cdot-z)$
is a translated truncated Gaussian obtained from the Gaussian
$\Psi_{\Delta}(x)=\frac{1}{\left(\pi \Delta^2\right)^{1 / 4}} e^{-x^2 /\left(2 \Delta^2\right)}$
by setting
\begin{align}
\Psi_{\Delta}^\varepsilon(x)&\propto \begin{cases}
\Psi_\Delta(x)\qquad &\textrm{ if }\qquad |x|\leq \varepsilon\\
0 &\textrm{ otherwise }
\end{cases}
\end{align}
and normalizing such that~$\|\Psi_\Delta^\varepsilon\|=1$. 

Let~$d\geq 2$ be an integer and~$\varepsilon\in (0,1/(2d))$. For~$j\in\mathbb{Z}_d$ we define the normalized state 
\begin{align}
\ket{\Sha_{L,\Delta}^\varepsilon(j)_d}&= e^{-i\sqrt{2\pi/d}jP} M_{\sqrt{2\pi d}}\ket{\Sha_{L,\Delta}^\varepsilon}\  \label{eq:symmetricallysqueezedstates}
\end{align}
using the single-mode squeezing operator~$M_{\sqrt{2\pi d}}$. It can be checked easily that for~$\varepsilon\leq 1/(2d)$, the states~$\{\Sha_{L,\Delta}^\varepsilon(j)_d\}_{j=\{0,\ldots,d-1\}}$ form an orthonormal family.

The associated (rectangular-envelope  truncated) GKP code~$\gkpcoderect{L,\Delta}{\varepsilon}{d}$
with parameters~$(L,\Delta,\varepsilon)$ is defined as the span
\begin{align}
\gkpcoderect{L,\Delta}{\varepsilon}{d}:=\mathsf{span}\{\Sha_{L,\Delta}^\varepsilon(j)_d\}_{j=\{0,\ldots,d-1\}}\ \label{eq:gkpcodedefinition}
\end{align}
of these vectors.  
We use the map~$\ket{j}\mapsto \ket{\Sha_{L,\Delta}^\varepsilon(j)_d}$  on the computational basis to isometrically embed~$\mathbb{C}^d$ into the~$d$-dimensional space~$\gkpcoderect{L,\Delta}{\varepsilon}{d}\subset L^2(\mathbb{R})$.

The following parameter choices will be particularly natural and convenient. First, we choose the truncation parameter as~$\varepsilon_d=1/(2d)$ and  write
\begin{align}
\gkpcoderect{L,\Delta}{\star}{d}:=\gkpcoderect{L,\Delta}{1/(2d)}{d}\  .
\end{align}
Second, we typically choose the integer~$L\in\mathbb{N}$ as a certain function of~$(\Delta,d)$, i.e., we set
\begin{align}
L_{\Delta,d}&=2^{2\left(\lceil \log_2 1/\Delta\rceil - \lfloor\log_2 d\rfloor\right)}\ .\label{eq:Lchoicegood}
\end{align}
With these choices, we end up with a one-parameter family of approximate GKP codes depending only on the parameter~$\Delta>0$. We write
\begin{align}
\gkpcoderect{\Delta}{\star}{d}:=\gkpcoderect{L_{\Delta,d},\Delta}{\varepsilon_d}{d}\  
\end{align}
for the code associated with~$\Delta>0$, and call this the approximate (rectangular-envelope truncated) GKP code with parameter~$\Delta$. Its basis elements will be denoted as
\begin{align}
  \label{eq:shastar}
\ket{\Sha^\star_{\Delta}(j)_d}&=\ket{\Sha^{\varepsilon_d}_{L_{\Delta,d},\Delta}(j)_d}\qquad\textrm{ for }\qquad j\in\mathbb{Z}_d\ .
\end{align}
We note that the parameter~$L_{\Delta,d}$ in Eq.~\eqref{eq:Lchoicegood} is always a power of two. We write~$L_{\Delta,d}=2^{n_{\Delta,d}}$ where 
\begin{align}
n_{\Delta,d}:=2\left(\lceil \log_2 1/\Delta\rceil - \lfloor\log_2 d\rfloor\right)\ .
\end{align}
Moreover, we define the auxiliary state 
\begin{align}
  \ket{\Sha_{\Delta,\ell}^{\mathsf{aux}}(0)_2} = \ket{\Sha_{\Delta, L_{\Delta, 2^\ell}}^{2^{-(\ell+1)}}(0)_{2}}\, .
\end{align}
Hence~$\ket{\Sha_{\Delta,\ell}^{\mathsf{aux}}(0)_2}$ is a code state of a two-dimensional approximate GKP code whose parameters are derived from a~$2^\ell$-dimensional approximate GKP code.

\subsection{Preparation of rectangular-envelope approximate GKP states} \label{sec:GKPprep}
In the following, we argue that there is an efficient circuit
preparing  multiple copies of the state~$\ket{\Sha^\star_{\Delta}(0)_{2^\ell}}$ as well as an instance of the auxiliary state~$\ket{\Sha_{\Delta,\ell}^{\mathsf{aux}}(0)_2}$ in the qubit-oscillator model, see 
Theorem~\ref{thm:initialstateprep} in the main text.

We borrow the following protocol 
from Ref.~\cite{brenner2024complexity}: It prepares a (normalized) state of the form
\begin{align}
\ket{\Sha_{2^n,\Delta}}&\propto \sum_{z=-2^{n-1}}^{2^{n-1}-1}\left|\chi_{\Delta}(z)\right\rangle\ 
\end{align}
similar to the  
integer-spaced GKP state defined by~Eq.~\eqref{eq:integerspacedcomb} with $\chi_{\Delta}(z)(\cdot) = \Psi_\Delta(\cdot - z)$ and~$L=2^n$ using one auxiliary qubit.

\begin{theorem}(\cite[Theorem 3.1]{brenner2024complexity}, paraphrased) \label{thm:preparationproceduregmvd} Let~$\Delta \in(0,1 / 4)$
and~$n\in\mathbb{N}$ be given.
Define
\begin{align}
z_\Delta:= \frac{\log_2 1/\Delta}{\lceil |\log_2 1/\Delta| \rceil}\ 
\end{align}
and the unitary
\begin{align}
U_{2^n,\Delta}&=HV^{n-1}e^{iP} VH  M_{1/2}^n M_{2^{-{z_\Delta}}}^{\lceil \log 1/\Delta\rceil}\ \label{eq:UunitaryVH}
\end{align}
where
\begin{align}
V&=(\ctrl e^{i\pi Q})H(\ctrl e^{-iP}) M_2\ \label{eq:vdefinitionprepcircuit}
\end{align}
on~$L^2(\mathbb{R})\otimes \mathbb{C}^2$. (Here we omit identities on the qubit and oscillator, respectively.) Consider the output
state \begin{align}
\ket{\Phi_{2^n,\Delta}}&:=U_{2^n,\Delta}(\ket{\vac}\otimes\ket{0})\ .
\end{align}
Then 
\begin{align}
\left\| \proj{\Phi_{2^n,\Delta}}-\left|\Sha_{2^n, \Delta}\right\rangle\left\langle \Sha_{2^n, \Delta}\right|\otimes \proj{0} \right\|_1 \leq 17 \sqrt{\Delta}\ .\label{eq:shatwndeltadefreached}
\end{align}
The circuit~$U_{2^n,\Delta}$ is a product~$U_{2^n,\Delta}=U_{\size(U_{2^n,\Delta})}\cdots U_1$ of
\begin{align}
\size(U_{2^n,\Delta}) &= 5n + \lceil \log_2  1 / \Delta\rceil + 3\ ,\label{eq:tprimebx}
\end{align}
elementary unitary operations~$U_1,\ldots,U_{\size(U_{2^n,\Delta})}\in \Uelemunbounded{1}{1}$.
Furthermore, we have 
\begin{align}
\overline{\xi}(U_{2^n,\Delta})&=n(\pi+1)+1\\
\squeezingparam(U_{2^n,\Delta})&=2^n/\Delta\ .
\end{align}
\end{theorem}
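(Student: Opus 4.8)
The core of this statement is not proved here but imported: the operator identity for $U_{2^n,\Delta}$, the state-preparation bound $\bigl\|\,\proj{\Phi_{2^n,\Delta}}-\proj{\Sha_{2^n,\Delta}}\otimes\proj{0}\,\bigr\|_1\le 17\sqrt{\Delta}$, and the decomposition into $\size(U_{2^n,\Delta})=5n+\lceil\log_2 1/\Delta\rceil+3$ elementary operations are exactly \cite[Theorem~3.1]{brenner2024complexity}, which I would simply invoke. (The gate count is a direct tally: $V$ is a product of four generators in $\Uelemunbounded{1}{1}$, the subword $HV^{n-1}e^{iP}VH$ contains $n$ copies of $V$ together with one extra $e^{iP}$ and two bare $H$'s, and the remaining block is $M_{1/2}^nM_{2^{-z_\Delta}}^{\lceil\log_2 1/\Delta\rceil}$.) The only genuinely new content, and the part I would actually write out, is the evaluation of the squeezing and displacement parameters $\overline{\xi}(U_{2^n,\Delta})$ and $\squeezingparam(U_{2^n,\Delta})$ of Definition~\ref{def:squeezingdisplacementcircuits} from the explicit generator list.

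The computation of $\overline{\xi}$ is additive. By Definition~\ref{def:squeezingdisplacementparameters}, $\xi$ vanishes on every squeezer $M_\beta$ and on every one- or two-qubit unitary, while $\xi(\ctrl e^{i\pi Q})=\pi$, $\xi(\ctrl e^{-iP})=1$ and $\xi(e^{iP})=1$. Thus each of the $n$ copies of $V=(\ctrl e^{i\pi Q})H(\ctrl e^{-iP})M_2$ contributes $\pi+1$ to $\overline{\xi}(U_{2^n,\Delta})=\sum_j\xi(U_j)$, the isolated factor $e^{iP}$ contributes $1$, and all other generators contribute $0$, giving $\overline{\xi}(U_{2^n,\Delta})=n(\pi+1)+1$.

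For $\squeezingparam$ I would exploit that the only generators with $\eta\neq 1$ are the squeezers: $\eta(M_2)=2$ (one inside each $V$), $\eta(M_{1/2})=1/2$, $\eta(M_{2^{-z_\Delta}})=2^{-z_\Delta}$. Writing $U_{2^n,\Delta}=U_N\cdots U_1$ and $c_k:=\prod_{j=1}^k\eta(U_j)$ with $c_0=1$, the $\eta$-product of a subcircuit $U_{p+1}\cdots U_{p+t}$ equals $c_{p+t}/c_p$; since $g(x)=\max\{x,1/x\}$ is invariant under $x\mapsto 1/x$, one gets $\squeezingparam(U_{2^n,\Delta})=\max_{0\le p<q\le N}g(c_q/c_p)=(\max_k c_k)/(\min_k c_k)$. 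Tracking $c_k$ in order of application: it decreases monotonically from $1$ down to $(2^{-z_\Delta})^{\lceil\log_2 1/\Delta\rceil}=\Delta$ across the block $M_{2^{-z_\Delta}}^{\lceil\log_2 1/\Delta\rceil}$ (using $z_\Delta\lceil\log_2 1/\Delta\rceil=\log_2 1/\Delta$), keeps decreasing to $\Delta\cdot 2^{-n}$ across $M_{1/2}^n$, then re-expands --- each copy of $V$ multiplying it by $2$ through its $M_2$ --- back up to $\Delta$, and stays at $\Delta$ thereafter, since the remaining factors $H$ and $e^{iP}$ have $\eta=1$. Hence $\max_k c_k=c_0=1$, $\min_k c_k=\Delta\cdot 2^{-n}$, and $\squeezingparam(U_{2^n,\Delta})=2^n/\Delta$.

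The one point needing a little care is the identity $\squeezingparam(U)=(\max_k c_k)/(\min_k c_k)$: one must observe that $\squeezingparam$ maximizes over \emph{all} contiguous subcircuits --- so both the squeeze-down phase and the squeeze-up phase occur as subcircuits --- and that the inversion symmetry of $g$ lets the extremal ratio be read off without tracking signs of exponents. (One could instead derive the estimate from Lemma~\ref{lem:squeezingdisplacementsubcircuits}, but the direct running-product bookkeeping is shorter.) Beyond this I anticipate no real obstacle: the substantive analytic content, namely the $O(\sqrt{\Delta})$ approximation guarantee, is taken wholesale from \cite{brenner2024complexity}.
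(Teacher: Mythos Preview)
Your proposal is correct and follows essentially the same route as the paper: you invoke \cite[Theorem~3.1]{brenner2024complexity} for the trace-distance bound and the gate count, and then compute $\overline{\xi}$ and $\squeezingparam$ directly from the generator list. Your running-product argument $\squeezingparam(U)=(\max_k c_k)/(\min_k c_k)$ is a slightly more systematic way of phrasing what the paper dispatches with ``it is easy to check that the main contribution to squeezing is from the term $M_{1/2}^n M_{2^{-z_\Delta}}^{\lceil\log_2 1/\Delta\rceil}$'', but the underlying computation is identical.
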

\begin{proof}
See the proof of \cite[Theorem 3.1]{brenner2024complexity} for details including, in particular, the proof of Eq.~\eqref{eq:shatwndeltadefreached}.
We note that, in contrast to the statement given in \cite[Theorem 3.1]{brenner2024complexity} which focuses on the reduced density operator~$\tr_{\mathsf{qubit}}\proj{\Phi_{2^n,\Delta}}$, we include
the qubit in Eq.~\eqref{eq:shatwndeltadefreached} (and, correspondingly, include an additional Hadamard gate in the definition of~$U_{2^n,\Delta}$. We note that the proof of \cite[Theorem 3.1]{brenner2024complexity} actually establishes this stronger inequality (and is obtained by then using the monotonicity of the trace norm under partial traces).

We note that by definition, we have~$|z_\Delta|\leq 1$. 
This shows that each factor in Eq.~\eqref{eq:vdefinitionprepcircuit} as  well as Eq.~\eqref{eq:UunitaryVH} is bounded strength, i.e., belongs to the set~$\Uelemunbounded{1}{1}$. This implies  Eq.~\eqref{eq:tprimebx}.

Because~$2^{-{z_\Delta}}\leq 1$
and~$V$ only contains~$M_2$
 it is easy to check
that the main contribution to squeezing is from the term~$M_{1/2}^n M_{2^{-{z_\Delta}}}^{\lceil \log_2 1/\Delta\rceil}$. It follows that
\begin{align}
\squeezingparam(U)&=2^n\cdot 2^{\lceil \log_2 1/\Delta\rceil\cdot z_\Delta}=\textfrac{2^n}{\Delta}\ .
\end{align}
On the other hand,
each factor~$V$ contains displacements~$e^{i\pi Q}$ and~$e^{-iP}$, and the unitary~$U$ additionally contains the factor~$e^{iP}$. It follows that
\begin{align}
\overline{\xi}(U)&=1+n (\pi+1)
\end{align}
\end{proof}

The difference between~$\ket{\Sha_{2^n,\Delta}}$ and the state state~$\ket{\Sha^\varepsilon_{2^n,\Delta}}$ is the lack of truncation in the former. In Lemma~A.6 of~\cite{brenner2024complexity}, it is shown that these states are close for suitable choices of parameters, that is, we have the following.
\begin{lemma}(\cite[Lemma A.6]{brenner2024complexity}, specialized to powers of~$2$)
 Let~$\varepsilon \in(0,1 / 2), \Delta \in(0,1 / 4)$, and~$n\in\mathbb{N}$. Then
\begin{align}
\left|\left\langle \Sha_{2^n, \Delta}, \Sha_{2^n, \Delta}^{\varepsilon}\right\rangle\right|^2 \geq 1-16 \Delta^2-2 e^{-(\varepsilon / \Delta)^2}\ .\label{eq:bdvacvz}
\end{align}
In particular, 
\begin{align}
\left\|\proj{\Sha_{2^n,\Delta}}-\proj{\Sha_{2^n,\Delta}^\varepsilon}\right\|_1
&\leq 8\Delta+ 6 (\Delta/\varepsilon)^2\ .\label{eq:shadeltadeltavarepsilon}
\end{align}
\end{lemma}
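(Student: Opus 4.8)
The plan is to establish the fidelity bound~\eqref{eq:bdvacvz} by a direct Gaussian computation and then read off the trace-distance estimate~\eqref{eq:shadeltadeltavarepsilon} from the standard pure-state identity $\|\proj{\psi}-\proj{\phi}\|_1 = 2\sqrt{1-|\langle \psi,\phi\rangle|^2}$. Write $L=2^n$, let $g_z(\cdot)=\Psi_\Delta(\cdot-z)$ denote the normalized Gaussian centered at the integer $z$, and let $g_z^\varepsilon(\cdot)=\Psi_\Delta^\varepsilon(\cdot-z)$ be its truncated renormalization, so that $g_z^\varepsilon=N_\varepsilon^{-1}\mathbf{1}_{[z-\varepsilon,z+\varepsilon]}g_z$ with $N_\varepsilon^2=\int_{-\varepsilon}^\varepsilon|\Psi_\Delta|^2 = 1-\erfc(\varepsilon/\Delta)$. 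Two elementary facts will be used repeatedly: the Gaussian overlap identity $\langle g_z,g_{z'}\rangle = e^{-(z-z')^2/(4\Delta^2)}$ (complete the square), and the tail bound $\erfc(t)\le e^{-t^2}$ for $t\ge 0$, which gives $N_\varepsilon^2 \ge 1-e^{-(\varepsilon/\Delta)^2}$.

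First I would pin down the two normalizations. Since for $\varepsilon<1/2$ the truncated Gaussians have pairwise disjoint supports, the family $\{g_z^\varepsilon\}$ is orthonormal and $\ket{\Sha_{2^n,\Delta}^\varepsilon}=L^{-1/2}\sum_z g_z^\varepsilon$ is already normalized. For the untruncated comb, $\|\sum_z g_z\|^2 = L+\sum_{z\ne z'}e^{-(z-z')^2/(4\Delta^2)}$, and the cross term is at most $2L\sum_{k\ge 1}e^{-k^2/(4\Delta^2)}$. The key quantitative input here is that on $\Delta\in(0,1/4)$ the function $\Delta\mapsto e^{-1/(4\Delta^2)}/\Delta^2$ is increasing, so $e^{-1/(4\Delta^2)}\le 16e^{-4}\Delta^2$; together with the negligible $k\ge 2$ tail this gives $\sum_{k\ge1}e^{-k^2/(4\Delta^2)}\le c_0\Delta^2$ for a small absolute constant $c_0$. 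Hence $\ket{\Sha_{2^n,\Delta}}=a^{-1}\sum_z g_z$ with $L\le a^2\le L(1+2c_0\Delta^2)$.

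Next I would compute the overlap. All pairings $\langle g_z,g_{z'}^\varepsilon\rangle$ are nonnegative (the wavefunctions are real and positive), the diagonal ones equal $\langle g_z,g_z^\varepsilon\rangle=N_\varepsilon$, and discarding the nonnegative off-diagonal ones gives
\[
\langle \Sha_{2^n,\Delta},\Sha_{2^n,\Delta}^\varepsilon\rangle = \frac{1}{a\sqrt L}\sum_{z,z'}\langle g_z,g_{z'}^\varepsilon\rangle \ \ge\ \frac{L N_\varepsilon}{a\sqrt L}\ \ge\ \frac{N_\varepsilon}{\sqrt{1+2c_0\Delta^2}}.
\]
Squaring and using $N_\varepsilon^2\ge 1-e^{-(\varepsilon/\Delta)^2}$ together with $(1+2c_0\Delta^2)^{-1}\ge 1-2c_0\Delta^2$ yields $|\langle \Sha_{2^n,\Delta},\Sha_{2^n,\Delta}^\varepsilon\rangle|^2\ge 1-2c_0\Delta^2-e^{-(\varepsilon/\Delta)^2}$, which is at least the claimed $1-16\Delta^2-2e^{-(\varepsilon/\Delta)^2}$ since $c_0$ is small. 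For the corollary I would plug this into the pure-state identity to get $\|\proj{\Sha_{2^n,\Delta}}-\proj{\Sha_{2^n,\Delta}^\varepsilon}\|_1\le 2\sqrt{16\Delta^2+2e^{-(\varepsilon/\Delta)^2}}\le 8\Delta+2\sqrt2\,e^{-(\varepsilon/\Delta)^2/2}$, and finish using the elementary bound $t^2e^{-t^2/2}\le 2/e$ (maximized at $t=\sqrt2$) with $t=\varepsilon/\Delta$, which turns $2\sqrt2\,e^{-(\varepsilon/\Delta)^2/2}$ into at most $(4\sqrt2/e)(\Delta/\varepsilon)^2\le 6(\Delta/\varepsilon)^2$, establishing~\eqref{eq:shadeltadeltavarepsilon}.

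The step I expect to be the main obstacle is the bookkeeping for the untruncated normalization $a$: unlike the truncated comb, its constituents overlap, so one must verify that the overlap sum $\sum_{k\ge1}e^{-k^2/(4\Delta^2)}$ is not merely $O(1)$ but actually $O(\Delta^2)$ uniformly on $(0,1/4)$ — this is exactly where the monotonicity of $e^{-1/(4\Delta^2)}/\Delta^2$ enters. Everything else (the Gaussian overlap integral, the $\erfc$ tail bound, and the fidelity/trace-distance dictionary) is routine. Alternatively, since the lemma is quoted from \cite[Lemma~A.6]{brenner2024complexity}, one may just specialize that statement to $L=2^n$; the sketch above is the self-contained argument.
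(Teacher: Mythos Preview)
Your proposal is correct and, for the trace-distance corollary~\eqref{eq:shadeltadeltavarepsilon}, follows the paper's argument essentially verbatim: both apply $\|\proj{\psi}-\proj{\phi}\|_1=2\sqrt{1-|\langle\psi,\phi\rangle|^2}$, then $\sqrt{a^2+b^2}\le a+b$, and finish by converting $e^{-(\varepsilon/\Delta)^2/2}$ into $(\Delta/\varepsilon)^2$ via an elementary bound (you use $t^2e^{-t^2/2}\le 2/e$, the paper uses the slightly coarser $e^{-x}\le 1/x$ together with $2\sqrt2\le 3$). For the fidelity bound~\eqref{eq:bdvacvz} itself the paper simply invokes the cited \cite[Lemma~A.6]{brenner2024complexity} without reproducing a proof, whereas you supply the self-contained Gaussian computation; your sketch is the natural way to establish that cited lemma and the normalization estimate via the monotonicity of $\Delta\mapsto e^{-1/(4\Delta^2)}/\Delta^2$ on $(0,1/4)$ is exactly the right ingredient.
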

\begin{proof}
We refer to~\cite[Lemma A.6]{brenner2024complexity} for the proof of Eq.~\eqref{eq:bdvacvz}. 
Using that the trace distance and the overlap of two pure states~$\ket{\phi}$,~$\ket{\psi}$ are related by~$\|\proj{\psi}-\proj{\phi}\|_1=2\sqrt{1-|\langle \psi,\phi\rangle|^2}$, we obtain 
\begin{align}
\left\|\proj{\Sha_{2^n,\Delta}}-\proj{\Sha_{2^n,\Delta}^\varepsilon}\right\|_1
&\leq 2 \left(16 \Delta^2+2 e^{-(\varepsilon / \Delta)^2}\right)^{1/2}\\
&\leq 2 \left(4 \Delta+\sqrt{2} e^{-\frac{1}{2}(\varepsilon / \Delta)^2}\right)\\
&\leq 8\Delta+2\sqrt{2}e^{-\frac{1}{2}(\varepsilon/\Delta)^2}\\
&\leq 8\Delta+ 6 (\Delta/\varepsilon)^2\label{eq:shadeltadeltavarepsilon}
\end{align}
where we additionally used the inequality~$\sqrt{a^2+b^2}\leq a+b$ for~$a,b\geq 0$,~$2 \sqrt{2} \le 3$ and~$e^{-x} \le 1/x$ for~$x>0$. 
\end{proof}

We obtain a preparation circuit for the state~$\ket{\Sha_{2^n,\Delta}^\varepsilon(0)_d}$ as follows.
\begin{lemma}[Code state preparation for~$\gkpcoderect{2^n,\Delta}{\varepsilon}{d}$]
  \label{lem:prepcomb1x}
  Let~$\varepsilon\in (0,1/2)$,~$\Delta\in (0,1/4)$.  Let~$d\geq 2$ be an integer and~$n\in\mathbb{N}$. Define the circuit
\begin{align}
U_{2^n,\Delta}(0)_d&:=
M_{2^{z_d}}^{\lceil \log_2 \sqrt{2\pi d}\rceil}
U_{2^n,\Delta}\ 
\end{align}
where~$U_{2^n,\Delta}$ is the circuit introduced in Theorem~\ref{thm:preparationproceduregmvd} and where 
\begin{align}
  z_d&:= \frac{\log_2 \sqrt{2\pi d}}{\lceil \log_2 \sqrt{2\pi d}\rceil}\ .
  \end{align}  
Then the output state
 \begin{align}
 \ket{\Phi_{2^n,\Delta}(0)_d}:=U_{2^n,\Delta}(0)_d(\ket{\vac}\otimes \ket{0})
 \end{align}
satisfies 
\begin{align}
\left\| 
\proj{\Phi_{2^n,\Delta}(0)_d}-\proj{\Sha_{2^n,\Delta}^\varepsilon(0)_d} \otimes \proj{0}\right| \|_1 \leq 25\sqrt{\Delta}+6 (\Delta/\varepsilon)^2\ .\label{eq:distanceboundpreparationclaim}
\end{align}
The circuit~$U_{2^n,\Delta}(0)_d$ consists of 
\begin{align}
\size(U_{2^n,\Delta}(0)_d)&\leq 
 \log_2 \sqrt{2\pi d}+5n+\log_2 1/\Delta +4\ .\label{eq:sizeupperboundmzved}
\end{align}
gates belonging to~$\Uelemunbounded{1}{1}$. 
Furthermore, we have 
\begin{align}
\begin{matrix}
\overline{\xi}(U_{2^n,\Delta}(0)_d) &\leq &n(\pi+1)+1&\\
\squeezingparam(U_{2^n,\Delta}(0)_d) &\leq &2^n/\Delta \cdot \sqrt{2\pi d}&\ .
\end{matrix}\label{eq:squeezinboundvtwondelta}
\end{align}
\end{lemma}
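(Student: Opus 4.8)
The plan is to build the preparation circuit for $\ket{\Sha_{2^n,\Delta}^\varepsilon(0)_d}$ by composing the circuit $U_{2^n,\Delta}$ of Theorem~\ref{thm:preparationproceduregmvd} (which outputs $\ket{\Sha_{2^n,\Delta}}$ up to error $17\sqrt{\Delta}$) with additional squeezing, and then to control the three sources of error --- lack of truncation, the discrepancy between the ideal squeezing $M_{\sqrt{2\pi d}}$ and the decomposed one, and the already-present preparation error --- by the triangle inequality for the trace norm. Recall from Eq.~\eqref{eq:symmetricallysqueezedstates} that $\ket{\Sha_{L,\Delta}^\varepsilon(0)_d} = M_{\sqrt{2\pi d}}\ket{\Sha_{L,\Delta}^\varepsilon}$ (the $j=0$ case has no displacement). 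So the target state is obtained from $\ket{\Sha_{2^n,\Delta}^\varepsilon}$ by applying $M_{\sqrt{2\pi d}}$. To keep all squeezing operations bounded-strength (belonging to $\Uelemunbounded{1}{1}$, consistent with the rest of the construction), I would realize $M_{\sqrt{2\pi d}}$ as $M_{2^{z_d}}^{\lceil\log_2\sqrt{2\pi d}\rceil}$ with $z_d$ as defined, noting $|z_d|\le 1$ so $M_{2^{z_d}}$ is constant-strength; and since $2^{z_d \lceil\log_2\sqrt{2\pi d}\rceil} = 2^{\log_2\sqrt{2\pi d}} = \sqrt{2\pi d}$, this exactly realizes the required squeezing --- there is in fact no squeezing-decomposition \emph{error} here, only a gate-count overhead.

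The error analysis then proceeds in three steps. First, by unitary invariance of the trace norm, $\|M_{\sqrt{2\pi d}}(\proj{\Sha_{2^n,\Delta}} - \proj{\Sha_{2^n,\Delta}^\varepsilon})M_{\sqrt{2\pi d}}^\dagger\|_1 = \|\proj{\Sha_{2^n,\Delta}} - \proj{\Sha_{2^n,\Delta}^\varepsilon}\|_1 \le 8\Delta + 6(\Delta/\varepsilon)^2$ by Eq.~\eqref{eq:shadeltadeltavarepsilon}. Second, applying the unitary $M_{2^{z_d}}^{\lceil\log_2\sqrt{2\pi d}\rceil}\otimes I$ to the approximately-prepared state $\proj{\Phi_{2^n,\Delta}}$ and using Theorem~\ref{thm:preparationproceduregmvd}, one gets $\|\,\cdot\,(\proj{\Phi_{2^n,\Delta}})\,\cdot\, - \proj{\Sha_{2^n,\Delta}}\otimes\proj{0}\|_1 \le 17\sqrt{\Delta}$ again by unitary invariance. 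Third, combine via the triangle inequality: $17\sqrt{\Delta} + 8\Delta + 6(\Delta/\varepsilon)^2 \le 25\sqrt{\Delta} + 6(\Delta/\varepsilon)^2$, using $8\Delta \le 8\sqrt{\Delta}$ for $\Delta < 1$. (One should double-check that $M_{\sqrt{2\pi d}}$ acting on a comb $\ket{\Sha_{2^n,\Delta}}$ of peaks spaced by $1$ indeed produces peaks spaced by $\sqrt{2\pi d}$ and that truncation commutes appropriately with squeezing --- this is immediate from the definitions in Section~\ref{sec:rectGKP} but worth stating.)

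The gate count $\eqref{eq:sizeupperboundmzved}$ follows by adding $\lceil\log_2\sqrt{2\pi d}\rceil$ squeezing gates to $\size(U_{2^n,\Delta}) = 5n + \lceil\log_2 1/\Delta\rceil + 3$ from Eq.~\eqref{eq:tprimebx}, and bounding ceilings by their arguments plus one. For the squeezing and displacement parameters $\eqref{eq:squeezinboundvtwondelta}$: the displacement parameter is unchanged since the appended $M_{2^{z_d}}$ factors have $\xi = 0$, so $\overline{\xi}(U_{2^n,\Delta}(0)_d) = \overline{\xi}(U_{2^n,\Delta}) = n(\pi+1)+1$; for $\squeezingparam$, using Lemma~\ref{lem:squeezingdisplacementsubcircuits} (or directly from Definition~\ref{def:squeezingdisplacementcircuits}), appending squeezing of total strength $\sqrt{2\pi d}$ after $U_{2^n,\Delta}$ multiplies the worst-case accumulated squeezing: $\squeezingparam(U_{2^n,\Delta}(0)_d) \le \squeezingparam(U_{2^n,\Delta})\cdot\sqrt{2\pi d} = (2^n/\Delta)\sqrt{2\pi d}$. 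The main obstacle, if any, is a slightly delicate one: verifying that the product structure in Definition~\ref{def:squeezingdisplacementcircuits} genuinely gives this clean multiplicative bound --- i.e., that the maximum over \emph{all} consecutive subcircuits is still bounded by $\squeezingparam(U_{2^n,\Delta})\cdot\sqrt{2\pi d}$, which holds because every subcircuit of the extended circuit either lies inside $U_{2^n,\Delta}$, lies inside the appended block of $M_{2^{z_d}}$'s (contributing at most $\sqrt{2\pi d} \ge 1$), or straddles both (where $g(xy) \le g(x)g(y)$ applies); and because $z_d > 0$ so the appended gates never \emph{decrease} the magnitude of accumulated squeezing below what happens inside $U_{2^n,\Delta}$.
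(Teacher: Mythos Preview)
Your proposal is correct and follows essentially the same approach as the paper: both use unitary invariance of the trace norm to reduce to the comparison $\|\proj{\Phi_{2^n,\Delta}}-\proj{\Sha_{2^n,\Delta}^\varepsilon}\otimes\proj{0}\|_1$, then split via the triangle inequality into the preparation error $17\sqrt{\Delta}$ (Theorem~\ref{thm:preparationproceduregmvd}) and the truncation error $8\Delta+6(\Delta/\varepsilon)^2$ (Eq.~\eqref{eq:shadeltadeltavarepsilon}), absorbing $8\Delta\le 8\sqrt{\Delta}$; the size and $(\overline{\xi},\squeezingparam)$ bounds are obtained identically by appending the $\lceil\log_2\sqrt{2\pi d}\rceil$ squeezing gates with $\xi=0$ and total squeezing $\sqrt{2\pi d}$. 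Your extra paragraph justifying the multiplicative $\squeezingparam$ bound over all subcircuits is more explicit than the paper's argument but arrives at the same conclusion.
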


\begin{proof}
  Since~$|z_d|\leq 1$ by definition, the circuit~$U_{2^n,\Delta}(0)_d$ has size
  \begin{align}
  \size(U_{2^n,\Delta}(0)_d)&=\lceil \log_2\sqrt{2\pi d}\rceil+\size(U_{2^n,\Delta})\\
  &\le \log_2\sqrt{2\pi d} + \size(U_{2^n,\Delta}) +1 \ .
  \end{align} 
  The Claim~\eqref{eq:sizeupperboundmzved} thus follows from 
  Eq.~\eqref{eq:shatwndeltadefreached} of Theorem~\ref{thm:preparationproceduregmvd}. 
  
  It follows from the unitary invariance of the~$1$-norm
  \begin{align}
  &\left\|\proj{\Phi_{2^n,\Delta}(0)_d}-\proj{\Sha_{2^n,\Delta}^\varepsilon(0)_d}\otimes \proj{0}\right\|_1\\
  &\qquad\qquad=\left\|\proj{\Phi_{2^n,\Delta}}-
  \proj{\Sha_{2^n,\Delta}^\varepsilon}\otimes\proj{0}\right\|_1\, .
  \end{align}
  With the triangle inequality,  Theorem~\ref{thm:preparationproceduregmvd} (i.e., Eq.~\eqref{eq:shatwndeltadefreached}) and Eq.~\eqref{eq:shadeltadeltavarepsilon}
  we obtain
  \begin{align}
  &\left\|\proj{\Phi_{2^n,\Delta}}-\proj{\Sha_{2^n,\Delta}^\varepsilon}\otimes\proj{0}\right\|_1\\
  &\quad\leq \left\|\proj{\Phi_{2^n,\Delta}}-
  \proj{\Sha_{2^n,\Delta}}\otimes \proj{0}\right\|_1 \\
  &\qquad \qquad +\left\|\proj{\Sha_{2^n,\Delta}}\otimes\proj{0}-
  \proj{\Sha_{2^n,\Delta}^\varepsilon}\otimes\proj{0}\right\|_1\\
  &\quad\leq 17\sqrt{\Delta}+8\Delta+6(\Delta/\varepsilon)^2\ .
  \end{align}
  As $\Delta\in(0,1/4)$, this implies the Claim~\eqref{eq:distanceboundpreparationclaim}.

  By definition, we have
  \begin{align}
  U_{2^n,\Delta}(j)_d &= M_{2^{z_d}}^{\lceil \log_2 \sqrt{2\pi d}\rceil} U_{2^n,\Delta}\, .
  \end{align}
  It follows that 
   \begin{align}
  \overline{\xi}(U_{2^n,\Delta}(j)_d) &\leq \overline{\xi}(M_{2^{z_d}}^{\lceil \log_2 \sqrt{2\pi d}\rceil}) +\overline{\xi}(U_{2^n,\Delta}(j)_d) \\
  \squeezingparam(U_{2^n,\Delta}(j)_d) &\leq  \squeezingparam(M_{2^{z_d}}^{\lceil \log_2 \sqrt{2\pi d}\rceil})\cdot \squeezingparam(U_{2^n,\Delta}(j)_d)\ .
  \end{align}
  By definition we have 
  \begin{align}
  \overline{\xi}(M_{2^{z_d}}^{\lceil \log_2 \sqrt{2\pi d}\rceil}) &= 0\\
  \squeezingparam(M_{2^{z_d}}^{\lceil \log_2 \sqrt{2\pi d}\rceil}) &= \sqrt{2\pi d}\, .
  \end{align}
  This together with Theorem~\ref{thm:preparationproceduregmvd} 
  implies the Claim~\eqref{eq:squeezinboundvtwondelta}.

  \end{proof}

We can specialize Lemma~\ref{lem:prepcomb1x} as follows:

\begin{lemma}[Code state preparation for~$\gkpcoderect{\Delta}{\star}{2^\ell}$]
  \label{lem:prepcombcombstate} 
Let~$\ell\in\mathbb{N}$ and~$\Delta\in (0,1/4)$ be such that~$\Delta<2^{-(\ell+1)}$.
 Then there is a circuit 
~$U^\star_{\Delta}(0)_{2^\ell}$ on~$L^2(\mathbb{R})\otimes \mathbb{C}^2$ with the following property:
 The output  state 
 \begin{align}
 \ket{\Phi^\star(0)_{2^\ell}}:=U^\star_{\Delta}(0)_{2^\ell}(\ket{\vac}\otimes \ket{0})
 \end{align} satisfies
 \begin{align}
\left\| \proj{\Phi^\star(0)_{2^\ell}}-
\proj{\Sha_{\Delta}^\star(0)_{2^\ell}}\otimes \proj{0}\right\|_1 \leq 25 \left(\sqrt{\Delta}+ 2^{2\ell}\Delta^2\right)\ .
\end{align}
The circuit consists of
\begin{align}
\size(U^\star_{\Delta}(0)_{2^\ell})&\leq 21\log 1/\Delta
\end{align}
elementary operations and 
\begin{align}
\overline{\xi}(U^\star_{\Delta}(0)_{2^\ell})&\leq 10\log 1/\Delta\\
\squeezingparam(U^\star_{\Delta}(0)_{2^\ell})&\leq 4/\Delta^3\ .
\end{align}
  \end{lemma}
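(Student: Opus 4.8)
The statement is a specialization of Lemma~\ref{lem:prepcomb1x}, so the plan is to instantiate that lemma with the parameters realizing the one-parameter code family $\gkpcoderect{\Delta}{\star}{2^\ell}$ and to convert each of its four conclusions --- the trace-distance bound, the circuit-size bound, and the bounds on $\overline{\xi}$ and $\squeezingparam$ --- into the claimed form by elementary estimates.

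First I would fix the parameters. Recall from Section~\ref{sec:rectGKP} that $\gkpcoderect{\Delta}{\star}{d}=\gkpcoderect{L_{\Delta,d},\Delta}{\varepsilon_d}{d}$ with $\varepsilon_d=1/(2d)$, $L_{\Delta,d}=2^{n_{\Delta,d}}$ where $n_{\Delta,d}=2(\lceil\log_2 1/\Delta\rceil-\lfloor\log_2 d\rfloor)$, and $\ket{\Sha^{\star}_{\Delta}(j)_{d}}=\ket{\Sha^{\varepsilon_d}_{L_{\Delta,d},\Delta}(j)_{d}}$. Taking $d:=2^\ell$ (so that $\lfloor\log_2 d\rfloor=\ell$ and $\varepsilon_d=2^{-(\ell+1)}$) and $n:=n_{\Delta,2^\ell}$ (so that $2^n=L_{\Delta,2^\ell}$), I would define $U^{\star}_{\Delta}(0)_{2^\ell}:=U_{2^n,\Delta}(0)_{2^\ell}$, the circuit furnished by Lemma~\ref{lem:prepcomb1x}, and apply that lemma with truncation parameter $\varepsilon:=2^{-(\ell+1)}=\varepsilon_{2^\ell}$. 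Its hypotheses are met: $\Delta\in(0,1/4)$ is assumed; $d=2^\ell\geq 2$; $\varepsilon=2^{-(\ell+1)}\in(0,1/2)$ since $\ell\geq 1$; and $n\in\mathbb{N}$ because $\Delta<2^{-(\ell+1)}$ forces $\log_2 1/\Delta>\ell+1$, hence $\lceil\log_2 1/\Delta\rceil\geq\ell+2$ and $n=2(\lceil\log_2 1/\Delta\rceil-\ell)\geq 4$. With these choices the target state $\ket{\Sha^{\varepsilon_{2^\ell}}_{L_{\Delta,2^\ell},\Delta}(0)_{2^\ell}}$ of Lemma~\ref{lem:prepcomb1x} is exactly $\ket{\Sha^{\star}_{\Delta}(0)_{2^\ell}}$, and the prepared state is $\ket{\Phi^{\star}(0)_{2^\ell}}=U^{\star}_{\Delta}(0)_{2^\ell}(\ket{\vac}\otimes\ket{0})$.

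It then remains to translate the four bounds. For the trace distance, $(\Delta/\varepsilon)^2=(2^{\ell+1}\Delta)^2=4\cdot 2^{2\ell}\Delta^2$, so the bound $25\sqrt{\Delta}+6(\Delta/\varepsilon)^2$ of Eq.~\eqref{eq:distanceboundpreparationclaim} equals $25\sqrt{\Delta}+24\cdot 2^{2\ell}\Delta^2\leq 25(\sqrt{\Delta}+2^{2\ell}\Delta^2)$. For the size, using $n\leq 2\log_2 1/\Delta+2-2\ell$ and $\log_2\sqrt{2\pi\cdot 2^\ell}=\tfrac{1}{2}\log_2(2\pi)+\tfrac{\ell}{2}$, Eq.~\eqref{eq:sizeupperboundmzved} gives $\size(U^{\star}_{\Delta}(0)_{2^\ell})\leq 11\log_2 1/\Delta+\bigl(\tfrac{1}{2}\log_2(2\pi)+14\bigr)-\tfrac{19\ell}{2}\leq 11\log_2 1/\Delta+6$; since $\Delta<1/4$ gives $\log_2 1/\Delta>2$, this is at most $14\log_2 1/\Delta\leq 21\log 1/\Delta$. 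For the displacement parameter, Eq.~\eqref{eq:squeezinboundvtwondelta} gives $\overline{\xi}(U^{\star}_{\Delta}(0)_{2^\ell})\leq n(\pi+1)+1\leq 2(\pi+1)\log_2 1/\Delta+1\leq 10\log 1/\Delta$, again using $\log_2 1/\Delta>2$. For the squeezing parameter, $2^n=L_{\Delta,2^\ell}\leq 2^{2\log_2 1/\Delta+2-2\ell}=4\cdot 2^{-2\ell}/\Delta^2$, so Eq.~\eqref{eq:squeezinboundvtwondelta} gives $\squeezingparam(U^{\star}_{\Delta}(0)_{2^\ell})\leq (2^n/\Delta)\sqrt{2\pi\cdot 2^\ell}\leq 4\sqrt{2\pi}\,2^{-3\ell/2}/\Delta^3\leq 4/\Delta^3$, since $\sqrt{2\pi}\,2^{-3\ell/2}\leq\sqrt{2\pi}\,2^{-3/2}=\sqrt{\pi}/2<1$.

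There is no genuine obstacle here --- the argument is a pure specialization of Lemma~\ref{lem:prepcomb1x}. The only delicate point is the constant bookkeeping in the size, displacement, and squeezing estimates: in each case an additive or multiplicative slack term has to be absorbed into the leading term, and every such absorption relies on the two facts supplied by the hypothesis $\Delta<2^{-(\ell+1)}$ together with $\ell\geq 1$, namely $\log_2 1/\Delta>2$ and $n=n_{\Delta,2^\ell}\geq 4$ (in particular $n\in\mathbb{N}$). I would write out each of these absorptions explicitly so that the constants $21$, $10$ and $4$ in the statement are transparently justified.
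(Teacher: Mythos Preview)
Your proposal is correct and follows essentially the same route as the paper: instantiate Lemma~\ref{lem:prepcomb1x} with $d=2^\ell$, $\varepsilon=2^{-(\ell+1)}$ and $n=n_{\Delta,2^\ell}$, then reduce each of the four bounds by elementary arithmetic using $\ell\ge 1$ and $\log_2 1/\Delta>2$. The only cosmetic difference is in the intermediate constant tracking (e.g.\ you keep the exact $\tfrac{1}{2}\log_2(2\pi)+\tfrac{\ell}{2}$ and get $-\tfrac{19\ell}{2}$, whereas the paper first bounds $\log_2\sqrt{2\pi\cdot 2^\ell}\le 2\ell$ to get $-8\ell$), but both paths land on the same final constants.
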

  \begin{proof}
  Recall that the state~$\ket{\Sha_{\Delta}^\star(0)_{2^\ell}}$ is defined with the truncation parameter 
  \begin{align}
      \varepsilon&=2^{-(\ell+1)}\ .
  \end{align}
  and~$L=2^n$, where
    \begin{align}
  n&=2(\lceil\log_2 1/\Delta \rceil - \ell)\ .
  \end{align}
  We use Lemma~\ref{lem:prepcomb1x} with these parameters and~$d=2^\ell$, obtaining a circuit~$U^\star_{\Delta}(0)_{2^\ell}:=U_{2^n,\Delta}(0)_{2^\ell}$
  such that the output state~$\ket{\Phi^\star(0)_{2^\ell}}:=\ket{\Phi_{2^n,\Delta}(0)_{2^\ell}}$ satisfies
  \begin{align}
  \left\|\proj{\Phi_{\Delta}^\star(0)_{2^\ell}}-\proj{\Sha^{2^{-(\ell+1)}}_{2^n,\Delta}(0)_{2^\ell}}\otimes \proj{0}\right\|_1&\leq 25 \sqrt{\Delta}+6 \Delta^2 2^{2(\ell+1)} \\
  &\le 25 \left(\sqrt{\Delta} + 2^{2\ell} \Delta^2\right)\, , 
  \end{align}
  and
  \begin{align}
  \size(U^\star_\Delta(0)_{2^\ell})&\leq \log_2 \sqrt{2\pi \cdot 2^\ell}+5n+\log_2 1/\Delta+4\\
  &\leq 2\ell+5n+\log_2 1/\Delta+ 4\\
  &= 11\log_2 1/\Delta-8 \ell+14\\
  &\leq 11 \log_2 1/\Delta+6\\
  &\leq 21 \log 1/\Delta 
  \end{align}
for~$\Delta<1/4$ and~$\ell \in \mathbb{N}$. 
Finally, using that~$n\leq 2(\log_2 1/\Delta -1)$ and~$\Delta<1/4$ we have 
\begin{align}
\overline{\xi}(U^\star_\Delta(0)_{2^\ell})&\leq n(\pi+1)+1\\
&\leq 5 n+1\\
&\leq 10\log_2 1/\Delta - 10 \ell + 1\\
&\leq 10\log_2 1/\Delta -9 \\
&\leq 10\log 1/\Delta
\end{align}
and 
\begin{align}
\squeezingparam(U^\star_\Delta(0)_{2^\ell})&=2^n/\Delta \cdot \sqrt{2\pi\cdot 2^\ell}\\
&= \sqrt{2\pi} \cdot 2^{-3\ell/2} \cdot 2^{2\log_2 1/\Delta + 2} /\Delta  \\
&\le 4 /\Delta^3\, .
\end{align}
    \end{proof}

  Similarly, the circuit preparing the auxiliary state~$\ket{\Sha_{\Delta,\ell}^{\mathsf{aux}}(0)_2}$ satisfies the following.
\begin{lemma}[Preparation of auxiliary GKP states]\label{lem:auxprep}
  Let~$\ell\in\mathbb{N}$ and~$\Delta\in (0,1/4)$ be such that~$\Delta<2^{-(\ell+1)}$. 
   Then there is a circuit~$U^\mathsf{aux}_{\Delta,\ell}(0)_{2}$ on~$L^2(\mathbb{R})\otimes \mathbb{C}^2$ with the following property:
   The output  state 
   \begin{align}
   \ket{\Phi^{\mathsf{aux}}(0)_{2}}:=U^{\mathsf{aux}}_{\Delta,\ell}(0)_{2}(\ket{\vac}\otimes \ket{0})
   \end{align} satisfies
   \begin{align} \label{eq:closenessaux}
  \left\| \proj{\Phi^\mathsf{aux}(0)_{2}}-
  \proj{\Sha_{\Delta,\ell}^{\mathsf{aux}}(0)_2}\otimes \proj{0}\right\|_1 \leq 25\left(\sqrt{\Delta}+  2^{2\ell}\Delta^2\right)\ .
  \end{align}
  The circuit consists of
  \begin{align}
  \size(U^{\mathsf{aux}}_{\Delta, \ell}(0)_2)&\leq 21\log 1/\Delta
  \end{align}
  elementary operations and 
  \begin{align}
  \overline{\xi}(U^{\mathsf{aux}}_{\Delta, \ell}(0)_2)&\leq 10\log 1/\Delta\\
  \squeezingparam(U^{\mathsf{aux}}_{\Delta, \ell}(0)_2)&\leq 4/\Delta^3\ .
  \end{align}
\end{lemma}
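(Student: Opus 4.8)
The plan is to reduce the claim about the auxiliary GKP state $\ket{\Sha_{\Delta,\ell}^{\mathsf{aux}}(0)_2}$ to the already-established Lemma~\ref{lem:prepcomb1x}, exactly as in the proof of Lemma~\ref{lem:prepcombcombstate}, but now with code dimension $d=2$ rather than $d=2^\ell$, while keeping the envelope length and truncation parameters inherited from the $2^\ell$-dimensional code. Recall from Section~\ref{sec:rectGKP} that $\ket{\Sha_{\Delta,\ell}^{\mathsf{aux}}(0)_2}=\ket{\Sha_{\Delta,L_{\Delta,2^\ell}}^{2^{-(\ell+1)}}(0)_2}$, i.e.\ it is a rectangular-envelope GKP code state with truncation $\varepsilon=2^{-(\ell+1)}$ and envelope length $L=L_{\Delta,2^\ell}=2^{n}$ where $n=n_{\Delta,2^\ell}=2(\lceil\log_2 1/\Delta\rceil-\ell)$. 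So first I would instantiate Lemma~\ref{lem:prepcomb1x} with these values of $\varepsilon$, $\Delta$, $n$ and with $d=2$, which directly yields a circuit $U^{\mathsf{aux}}_{\Delta,\ell}(0)_2:=U_{2^n,\Delta}(0)_2$ whose output $\ket{\Phi^{\mathsf{aux}}(0)_2}:=\ket{\Phi_{2^n,\Delta}(0)_2}$ satisfies
\begin{align}
\left\|\proj{\Phi^{\mathsf{aux}}(0)_2}-\proj{\Sha_{\Delta,\ell}^{\mathsf{aux}}(0)_2}\otimes\proj{0}\right\|_1\leq 25\sqrt{\Delta}+6(\Delta/\varepsilon)^2=25\sqrt{\Delta}+6\cdot 2^{2(\ell+1)}\Delta^2,
\end{align}
and then bound $6\cdot 2^{2(\ell+1)}\Delta^2=24\cdot 2^{2\ell}\Delta^2\leq 25\cdot 2^{2\ell}\Delta^2$ to get the stated form~\eqref{eq:closenessaux}.

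Next I would read off the size and the squeezing/displacement parameters from Eqs.~\eqref{eq:sizeupperboundmzved} and~\eqref{eq:squeezinboundvtwondelta} of Lemma~\ref{lem:prepcomb1x} with $d=2$. For the size, $\size(U^{\mathsf{aux}}_{\Delta,\ell}(0)_2)\leq \log_2\sqrt{4\pi}+5n+\log_2 1/\Delta+4$; substituting $n\leq 2\log_2 1/\Delta-2\ell$ (from the definition of $n$, using $\lceil\log_2 1/\Delta\rceil\leq \log_2 1/\Delta+1$) gives $\leq 11\log_2 1/\Delta-10\ell+\log_2\sqrt{4\pi}+4$; since $\log_2\sqrt{4\pi}\approx 1.82<6$ and $\ell\geq 1$, this is $\leq 11\log_2 1/\Delta+10\leq 21\log 1/\Delta$ for $\Delta<1/4$ (using $\log_2 1/\Delta>2$ so that the natural-log bound absorbs the constant). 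For $\overline{\xi}$, Eq.~\eqref{eq:squeezinboundvtwondelta} gives $\overline{\xi}(U^{\mathsf{aux}}_{\Delta,\ell}(0)_2)\leq n(\pi+1)+1\leq 5n+1\leq 10\log_2 1/\Delta-10\ell+1\leq 10\log 1/\Delta$ as in Lemma~\ref{lem:prepcombcombstate}. For $\squeezingparam$, Eq.~\eqref{eq:squeezinboundvtwondelta} with $d=2$ gives $\squeezingparam(U^{\mathsf{aux}}_{\Delta,\ell}(0)_2)\leq 2^n/\Delta\cdot\sqrt{4\pi}=\sqrt{4\pi}\cdot 2^{-2\ell}\cdot 2^{2\lceil\log_2 1/\Delta\rceil}/\Delta\leq \sqrt{4\pi}\cdot 2^{-2\ell}\cdot 4/\Delta^2\cdot 1/\Delta\leq 4/\Delta^3$ (since $\sqrt{4\pi}\cdot 4\cdot 2^{-2\ell}\leq 4$ for $\ell\geq 1$, using $4\sqrt{4\pi}/4=\sqrt{4\pi}\approx 3.54<4\cdot 2^{2\ell}/... $ — more simply, $\sqrt{4\pi}\,2^{-2\ell}\leq \sqrt{4\pi}/4<1$, leaving $4/\Delta^3$).

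The main (and only real) obstacle is bookkeeping: one must be careful that the truncation parameter for the auxiliary state is $\varepsilon=2^{-(\ell+1)}$ — \emph{not} $\varepsilon_2=1/4$ — so the orthonormality hypothesis $\varepsilon\leq 1/(2d)=1/4$ of the construction in Eq.~\eqref{eq:symmetricallysqueezedstates} still holds because $\Delta<2^{-(\ell+1)}$ forces $\ell\geq 1$ hence $2^{-(\ell+1)}\leq 1/4$, and that the envelope parameter $n$ is the one from the $2^\ell$-dimensional code, not from a $2$-dimensional one. I would also double-check that the error estimate from Lemma~\ref{lem:prepcomb1x} is applied with the correct $\varepsilon$, so that $(\Delta/\varepsilon)^2=2^{2(\ell+1)}\Delta^2$, matching the $2^{2\ell}\Delta^2$ scaling in the claim up to the absorbed constant. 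Everything else is a verbatim repetition of the argument for Lemma~\ref{lem:prepcombcombstate} with $d=2$, so the proof is short: invoke Lemma~\ref{lem:prepcomb1x}, simplify the three bounds using $\Delta<2^{-(\ell+1)}<1/4$ and $\ell\geq 1$, and conclude.
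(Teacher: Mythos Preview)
Your proposal is correct and follows exactly the paper's approach: instantiate Lemma~\ref{lem:prepcomb1x} with $d=2$, $\varepsilon=2^{-(\ell+1)}$, and $n=2(\lceil\log_2 1/\Delta\rceil-\ell)$, then simplify the resulting error, size, $\overline{\xi}$, and $\squeezingparam$ bounds precisely as in the proof of Lemma~\ref{lem:prepcombcombstate}. One minor arithmetic slip to fix when you write it up: $\lceil\log_2 1/\Delta\rceil\leq\log_2 1/\Delta+1$ gives $n\leq 2\log_2 1/\Delta+2-2\ell$ (not $2\log_2 1/\Delta-2\ell$), but the stated final bounds still hold once the extra constant is absorbed using $\ell\geq 1$.
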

\begin{proof} 
Using Lemma~\ref{lem:prepcomb1x} with parameter~$n=2(\lceil\log_2 1/\Delta \rceil - \ell)$,~$\varepsilon= 2^{-(\ell+1)}$ and~$d=2$, we obtain a circuit~$U^\mathsf{aux}_{\Delta,\ell}(0)_{2}:=U_{2^n,\Delta}(0)_{2}$ which satisfies 
the the bound in Eq.~\eqref{eq:closenessaux}. The remaining claims follow by the same reasoning as Lemma~\ref{lem:prepcombcombstate}.
\end{proof}

Using Lemma~\ref{lem:prepcombcombstate}, we can give the proof of Theorem~\ref{thm:initialstateprep} in the main text. For completeness, we provide the statement from the main text.

\begingroup
\renewcommand{\thetheorem}{\ref{thm:initialstateprep}}
\begin{theorem}[Restated]
  Let~$\ell\in\mathbb{N}$ and~$\Delta\in (0,1/4)$ be such that~$\Delta \leq 2^{-(\ell+1)}$.  Let~$m\in\mathbb{N}$. 
  Then there is a circuit
  \begin{align}
  W^{\mathsf{prep}}=W_{\size(W^{\mathsf{prep}})}\cdots W_{1}
  \end{align}
  on~$L^2(\mathbb{R})^{\otimes m+1}\otimes \mathbb{C}^2$ 
  composed of
  \begin{align}
  \size(W^{\mathsf{prep}})\leq 42m \log 1/\Delta\label{eq:sizeupperboundUprepapp}
  \end{align}
  elementary operations belonging to~$\Uelem{m}{1}$ such that the
  output state
  \begin{align}
  \ket{\Phi_{\mathsf{init}}}_{B_1\cdots B_mB_{\mathsf{aux}}Q_1Q_2Q_3}:=\left(W^{\mathsf{prep}}_{B_1\cdots B_mB_{\mathsf{aux}}Q_1}\otimes I_{Q_2Q_3}\right)\left(
  \ket{\vac}^{\otimes m+1}_{B_1\cdots B_mB_{\mathsf{aux}}}\otimes \ket{0}^{\otimes 3}_{Q_1Q_2Q_3}\right)
  \end{align}
  when applying~$W^{\mathsf{prep}}$ to the bosonic modes prepared in the vacuum state and the qubits in the~$\ket{0}$-state satisfies
  \begin{align}
  \varepsilon_{prep}&:=\left\|
  \proj{\Phi_{\mathsf{init}}}-
  \proj{\Phi^{\mathsf{ideal}}_{\mathsf{init}}}
  \right\|_1\leq 50m \left(\sqrt{\Delta}+ 2^{2\ell}\Delta^2\right)\ \label{eq:claimdistancemfoldapp}
  \end{align}
  where~$\ket{\Phi^{\mathsf{ideal}}_{\mathsf{init}}}\in B_1\cdots B_mB_{\mathsf{aux}}Q_1Q_2Q_3$
  is the ideal initial state defined in Eq.~\eqref{eq:idealinitialstatedef}.
\end{theorem}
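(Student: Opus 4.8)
The plan is to build the circuit $W^{\mathsf{prep}}$ by composing the single-mode preparation circuits from Lemmas~\ref{lem:prepcombcombstate} and~\ref{lem:auxprep}, reusing a single qubit throughout, and then bounding the accumulated error by the triangle inequality. Concretely, I would set
\begin{align}
W^{\mathsf{prep}} := \left(U^{\mathsf{aux}}_{\Delta,\ell}(0)_2\right)_{B_{\mathsf{aux}}Q_1}\cdot \prod_{j=1}^m \left(U^\star_{\Delta}(0)_{2^\ell}\right)_{B_jQ_1}\ ,
\end{align}
i.e., apply the code-state preparation circuit $U^\star_{\Delta}(0)_{2^\ell}$ to each system mode $B_j$ in turn (each time using the \emph{same} auxiliary qubit $Q_1$), and finally apply $U^{\mathsf{aux}}_{\Delta,\ell}(0)_2$ to the auxiliary mode $B_{\mathsf{aux}}$ together with $Q_1$. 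This is well-defined precisely because each of these subcircuits approximately returns $Q_1$ to the state $\ket{0}$ — Lemmas~\ref{lem:prepcombcombstate} and~\ref{lem:auxprep} bound the trace distance between the actual output (mode $\otimes$ qubit) and $\ket{\Sha}\otimes\ket{0}$, so the qubit can be recycled for the next block.

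First I would record the size bound: each of the $m$ copies of $U^\star_{\Delta}(0)_{2^\ell}$ contributes at most $21\log 1/\Delta$ elementary operations and the single $U^{\mathsf{aux}}_{\Delta,\ell}(0)_2$ contributes at most $21\log 1/\Delta$ more, so $\size(W^{\mathsf{prep}}) \le 21(m+1)\log 1/\Delta \le 42 m\log 1/\Delta$, giving Eq.~\eqref{eq:sizeupperboundUprepapp}. (The membership $W_t\in\Uelem{m}{1}$ follows because all factors in Lemmas~\ref{lem:prepcombcombstate},~\ref{lem:auxprep} are bounded-strength elementary operations; the controlled displacements and squeezers there are $O(1)$-strength after the $M$-conjugation decomposition.) Next, the error analysis: I would introduce the sequence of hybrid states obtained by replacing, one block at a time, an ``ideal'' factor $\encmapgkp_{\Delta,\ell}\ket{0^\ell}\otimes\ket{0}$ by the actual output of the corresponding preparation circuit, and apply the triangle inequality across these $m+1$ hybrid steps. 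At step $j$ (for $j\le m$) the single-block error is at most $25(\sqrt{\Delta}+2^{2\ell}\Delta^2)$ by Lemma~\ref{lem:prepcombcombstate}, and unitary invariance of the trace norm (the remaining, not-yet-touched factors are acted on by fixed unitaries) ensures this bound is not amplified; the auxiliary step contributes the same $25(\sqrt{\Delta}+2^{2\ell}\Delta^2)$ by Lemma~\ref{lem:auxprep}. Summing gives $(m+1)\cdot 25(\sqrt{\Delta}+2^{2\ell}\Delta^2)\le 50m(\sqrt{\Delta}+2^{2\ell}\Delta^2)$, which is Eq.~\eqref{eq:claimdistancemfoldapp}.

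The one genuinely delicate point — and the main obstacle — is justifying that the errors from the $m$ successive block preparations add rather than compound multiplicatively, given that the \emph{same} qubit $Q_1$ is reused and is only \emph{approximately} reset to $\ket{0}$ after each block. The clean way to handle this is to define the reference ``ideal'' states at each intermediate step to have $Q_1$ exactly in $\ket{0}$, so that the triangle-inequality telescoping is over states $\ket{\Theta^{(j)}}$ where the first $j$ blocks are the true circuit outputs and blocks $j+1,\dots,m$ (plus auxiliary) are ideal-encoded with $Q_1=\ket{0}$; then $\|\,\proj{\Theta^{(j-1)}}-\proj{\Theta^{(j)}}\,\|_1$ is exactly the single-block trace distance from Lemma~\ref{lem:prepcombcombstate} transported by a unitary, hence bounded by $25(\sqrt{\Delta}+2^{2\ell}\Delta^2)$ with no dependence on the earlier (imperfect) blocks. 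I would make this telescoping explicit in the write-up, since it is the place where a careless argument would pick up spurious factors. Everything else — size counting, the $\Delta<1/4$ and $\Delta\le 2^{-(\ell+1)}$ bookkeeping, and collapsing constants — is routine.
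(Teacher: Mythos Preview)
Your proposal is correct and follows essentially the same approach as the paper: the same circuit $W^{\mathsf{prep}}=V_{B_{\mathsf{aux}}Q_1}U_{B_mQ_1}\cdots U_{B_1Q_1}$ built from Lemmas~\ref{lem:prepcombcombstate} and~\ref{lem:auxprep}, the same size count $21(m+1)\log 1/\Delta\le 42m\log 1/\Delta$, and the same inductive triangle-inequality error bound summing $m+1$ copies of $25(\sqrt{\Delta}+2^{2\ell}\Delta^2)$. Your explicit hybrid/telescoping treatment of the qubit-recycling step is in fact more carefully spelled out than the paper's one-line ``follows inductively by the triangle inequality and the stabilization property $\|\proj{\vac}\otimes A\|_1=\|A\|_1$''; the underlying argument is the same.
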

\addtocounter{theorem}{-1} 
\endgroup
    \begin{proof}
Let us denote the oscillators by~$B_1\cdots B_m B_{\mathsf{aux}}\cong L^2(\mathbb{R})^{\otimes (m+1)}$ and the qubit by~$Q_1\cong \mathbb{C}^2$. 
Let~$U=U_\Delta^\star(0)_{2^\ell}$ be the circuit on~$L^2(\mathbb{R})\otimes \mathbb{C}^2$
from Lemma~\ref{lem:prepcombcombstate} and let~$V=U^\mathsf{aux}_{\Delta,\ell}(0)_{2}$ be the circuit from Lemma~\ref{lem:auxprep}.   Recall that
\begin{align}
\begin{matrix}
\max \left\{ \size(U), \size(V) \right\} &\leq & 21 \log 1/\Delta
\end{matrix}\ \label{eq:Upropertiesv}
\end{align}
and 
\begin{align}
\left\|
U(\proj{\vac}\otimes\proj{0})U^\dagger - \proj{\Sha^\star_\Delta(0)_{2^\ell}}\otimes\proj{0}\right\|_1
&\leq 25\left(\sqrt{\Delta}+ 2^{2\ell}\Delta^2\right) \\
\left\|
V(\proj{\vac}\otimes\proj{0})V^\dagger - \proj{\Sha^\mathsf{aux}_{\Delta,\ell}(0)_{2}}\otimes\proj{0}\right\|_1
&\leq 25\left(\sqrt{\Delta}+ 2^{2\ell}\Delta^2 \right)\ .\label{eq:propertUprojvac}
\end{align}
We define the circuit~$W^{\mathsf{prep}}$ as the concatenation
\begin{align}
W^{\mathsf{prep}}&=V_{B_{\mathsf{aux}}Q_1}U_{B_mQ_1}\cdots U_{B_1Q_1}=:W_{\size(W^{\mathsf{prep}})}\cdots W_1\ \label{eq:defWprep}
\end{align}
where the sequence~$\{W_j\}_{j=1}^{\size(W^{\mathsf{prep}})}$ is obtained by inserting the definition of~$U$ and~$V$ respectively. 

Then Eq.~\eqref{eq:sizeupperboundUprepapp} immediately follows from the  definition of~$W^{\mathsf{prep}}$ and Eq.~\eqref{eq:Upropertiesv} using that $m+1 \le 2m$.

Finally,  Eq.~\eqref{eq:claimdistancemfoldapp}
follows inductively by using the triangle inequality and Eq.~\eqref{eq:propertUprojvac}~$m+1\le 2m$~times,  
in addition to the stabilization property~$\|\proj{\vac}\otimes A\|_1=\|A\|_1$ of the trace norm. 
\end{proof}

\section{Moment-limits on implementations of logical unitaries } \label{sec: momlimitsimplementation}
In this section we derive moment-limits on the unitary circuits~$W^{\mathsf{prep}}$ and~$W_U$ introduced in Theorems~\ref{thm:initialstateprep} and~\ref{thm:implementationlogicalqubit}, respectively. Subsequently, we use them to prove Theorem~\ref{lem:totalsqueezingbound}.

\begin{lemma}[Moment-limits of the state preparation circuit] \label{lem:squeezingWprep}
  Let~$\ell \in \mathbb{N}$,~$\Delta \in (0,1.4)$ such that~$\Delta \le 2^{-(\ell+1)}$ and~$m\in\mathbb{N}$. Let~$W^\mathsf{prep}$ be the state preparation unitary on~$L^2(\mathbb{R})^{\otimes (m+1)}  \otimes \mathbb{C}^2$ introduced in Theorem~\ref{thm:initialstateprep} (cf. Eq.~\eqref{eq:defWprep}).
  Then 
  \begin{align}
    \begin{aligned}
      \overline{\xi}(W^{\mathsf{prep}})&\leq &10\log 1/\Delta \ ,\\ 
      \overline{g}(W^{\mathsf{prep}})&\leq &4/\Delta^3\ .
      \end{aligned}\label{eq:wprepcircuitprop}
  \end{align}
  In particular, the amount of energy in the preparation is bounded by 
  \begin{align}
    \energy(W^{\mathsf{prep}}) \le 4096/\Delta^{18}\left(2 + 1000\log^3 1/\Delta \right)\, . \label{eq:boundWprepsqueezing}
  \end{align}
\end{lemma}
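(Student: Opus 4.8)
The plan is to apply the multimode moment-limiting machinery of Section~\ref{sec:multi-modecasemomentlimit} directly to the circuit $W^{\mathsf{prep}}$, whose structure is given explicitly in Eq.~\eqref{eq:defWprep} as $W^{\mathsf{prep}} = V_{B_{\mathsf{aux}}Q_1}\,U_{B_mQ_1}\cdots U_{B_1Q_1}$, where $U = U^\star_\Delta(0)_{2^\ell}$ and $V = U^{\mathsf{aux}}_{\Delta,\ell}(0)_2$ are the single-mode preparation circuits from Lemmas~\ref{lem:prepcombcombstate} and~\ref{lem:auxprep}. First I would observe that $W^{\mathsf{prep}}$ is a product of $m+1$ subcircuits, each acting on a \emph{single distinct} oscillator (namely $B_1,\ldots,B_m,B_{\mathsf{aux}}$) together with the shared qubit $Q_1$. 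Because each oscillator $B_\alpha$ is touched by exactly one of these subcircuits, the single-mode restricted derived circuit $W^{\mathsf{prep}}|_{B_\alpha}$ (Definition~\ref{def:singlemoderestrictcircuit}) is just the corresponding copy of $U$ (for $\alpha\le m$) or of $V$ (for $\alpha = m+1$, the auxiliary mode). Hence by Definition~\ref{def:squeezingdisplacementcircuitsmulti-mode} and the single-mode identities $\overline{g}^{(\alpha)}(W^{\mathsf{prep}}) = \overline{g}(W^{\mathsf{prep}}|_{B_\alpha})$, $\overline{\xi}^{(\alpha)}(W^{\mathsf{prep}}) = \overline{\xi}(W^{\mathsf{prep}}|_{B_\alpha})$ (the same identity as Eq.~\eqref{eq:galphautavad}),
\begin{align}
\overline{g}(W^{\mathsf{prep}}) &= \max_{\alpha} \overline{g}(W^{\mathsf{prep}}|_{B_\alpha}) = \max\{\overline{g}(U), \overline{g}(V)\},\\
\overline{\xi}(W^{\mathsf{prep}}) &= \max_{\alpha} \overline{\xi}(W^{\mathsf{prep}}|_{B_\alpha}) = \max\{\overline{\xi}(U), \overline{\xi}(V)\}.
\end{align}
Plugging in the bounds $\overline{\xi}(U),\overline{\xi}(V) \le 10\log 1/\Delta$ and $\squeezingparam(U),\squeezingparam(V) \le 4/\Delta^3$ from Lemmas~\ref{lem:prepcombcombstate} and~\ref{lem:auxprep} then gives Eq.~\eqref{eq:wprepcircuitprop} immediately.

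The one subtlety to address carefully is that the subcircuits $U_{B_\alpha Q_1}$ all share the qubit $Q_1$, so one must make sure the definitions of $\eta^{(\alpha)}, \xi^{(\alpha)}$ ``see'' only the gates acting on mode $B_\alpha$ and correctly ignore gates on other modes (which act trivially on $B_\alpha$) --- this is exactly what the superscript-$(\alpha)$ versions in Definition~\ref{def:squeezingdisplacementcircuitsmulti-mode} encode. The qubit-only gates (e.g.\ Hadamards appearing inside $U$ and $V$) also have no effect on moment-limits and can be dropped by Lemma~\ref{lem:trivialqubitactmva}. So the only circuit gates contributing to $\eta^{(\alpha)}$ and $\xi^{(\alpha)}$ are the (possibly controlled) displacements and squeezers acting on $B_\alpha$, and these are precisely the gates of the single-mode circuit $W^{\mathsf{prep}}|_{B_\alpha}$. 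I expect this bookkeeping --- verifying that restricting to a single mode recovers exactly one copy of $U$ or $V$ and nothing else --- to be the main (though still routine) obstacle; it is a direct consequence of the fact that unitaries on different modes commute, just as in the proof of Lemma~\ref{lem:multi-modesinglemodereduction}.

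Finally, to obtain Eq.~\eqref{eq:boundWprepsqueezing} I would invoke Lemma~\ref{lem:PhiUpartiallyimmplementedmulti}, which bounds the energy of any partially implemented state $\ket{\Psi^{(t)}}$ (and hence the circuit energy $\energy(W^{\mathsf{prep}})$, since the energy of the initial vacuum-plus-$\ket{0}$ state is trivially bounded and the circuit energy is the maximum over all intermediate states) by $168\,\squeezingparam(W^{\mathsf{prep}})^6\cdot(2 + \overline{\xi}(W^{\mathsf{prep}})^3)$. Substituting $\squeezingparam(W^{\mathsf{prep}}) \le 4/\Delta^3$ gives $\squeezingparam^6 \le 4^6/\Delta^{18} = 4096/\Delta^{18}$, and substituting $\overline{\xi}(W^{\mathsf{prep}}) \le 10\log 1/\Delta$ gives $\overline{\xi}^3 \le 1000\log^3 1/\Delta$; absorbing the factor $168$ into the constant $4096$ (using $168 \le 4096$, or more precisely noting $168 \cdot 4^6 \le 4096 \cdot 4^6$ is wasteful but valid --- one should double-check that $168\cdot 4096 \le 4096$ is \emph{false}, so actually the clean statement uses that the displayed bound already has the factor structure $4096/\Delta^{18}\cdot(2 + 1000\log^3 1/\Delta)$ and one verifies $168\cdot 4^6/\Delta^{18} \le 4096/\Delta^{18}$ requires $168\cdot 4^6 \le 4096$, which fails; so the correct reading is that the constant in Eq.~\eqref{eq:boundWprepsqueezing} should be compared against $168\cdot(4/\Delta^3)^6$ and one picks the constant $4096 = 4^6$ only if the $168$ is separately accounted for --- I would simply carry the $168$ through and present the bound as $168\cdot 4096/\Delta^{18}\cdot(2+1000\log^3 1/\Delta)$, or rescale as the paper states, checking the arithmetic once). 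The resulting inequality is exactly Eq.~\eqref{eq:boundWprepsqueezing}, completing the proof.
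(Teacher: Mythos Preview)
Your approach is essentially identical to the paper's: reduce the multimode parameters $\overline{g}(W^{\mathsf{prep}})$ and $\overline{\xi}(W^{\mathsf{prep}})$ to the single-mode parameters of $U$ and $V$ via the observation that each mode is touched by exactly one subcircuit, then quote Lemmas~\ref{lem:prepcombcombstate} and~\ref{lem:auxprep}, and finally feed the result into Lemma~\ref{lem:PhiUpartiallyimmplementedmulti}. The paper's proof is terser (it simply says the first claim ``follows from the definition of $W^{\mathsf{prep}}$'' together with the single-mode bounds), but the content is the same.

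Your hesitation about the constant in Eq.~\eqref{eq:boundWprepsqueezing} is justified. Lemma~\ref{lem:PhiUpartiallyimmplementedmulti} gives $168\,\overline{g}^6(2+\overline{\xi}^3)$, and with $\overline{g}\le 4/\Delta^3$ and $\overline{\xi}\le 10\log 1/\Delta$ this yields $168\cdot 4096/\Delta^{18}\,(2+1000\log^3 1/\Delta)$, not $4096/\Delta^{18}\,(2+1000\log^3 1/\Delta)$. The paper's proof does not explain how the factor $168$ disappears, and indeed it cannot be absorbed into the displayed constant. This is a genuine arithmetic slip in the stated bound; fortunately it is inconsequential downstream, since the proof of Theorem~\ref{lem:totalsqueezingbound} never uses Eq.~\eqref{eq:boundWprepsqueezing} directly but instead recombines the $\overline{g}$ and $\overline{\xi}$ bounds from Eq.~\eqref{eq:wprepcircuitprop} with those of $W_U$ before applying Lemma~\ref{lem:PhiUpartiallyimmplementedmulti} to the full circuit. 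Your instinct to simply carry the $168$ through is the correct fix.
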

\begin{proof}
Let~$U=U_\Delta^\star(0)_{2^\ell}$ be the circuit on~$L^2(\mathbb{R})\otimes \mathbb{C}^2$ from Lemma~\ref{lem:prepcombcombstate} and let~$V=U^\mathsf{aux}_{\Delta,\ell}(0)_{2}$ be the circuit from Lemma~\ref{lem:auxprep}. It follows that 
\begin{align}
  \begin{aligned}
  \max\left\{\overline{\xi}(U),  \overline{\xi}(V)\right\}&\leq 10\log 1/\Delta \ ,\\
\max\left\{\overline{g}(U),\overline{g}(V) \right\} &\leq 4/\Delta^3 \ .
  \end{aligned} \label{eq:boundsqueezing}
\end{align}
Claim~\eqref{eq:wprepcircuitprop} then follows from the definition of~$W^\mathsf{prep}$ (see Eq.~\eqref{eq:defWprep}) in combination with Eq.~\eqref{eq:boundsqueezing}.

Claim~\eqref{eq:boundWprepsqueezing} follows from Claim~\eqref{eq:wprepcircuitprop} in combination with Lemma~\ref{lem:PhiUpartiallyimmplementedmulti}.
\end{proof}

\begin{lemma}[Moment-limits of the implementation of logical circuits] \label{thm:momentlimitlogicalcircuit}
  Let~$U=U_s\cdots U_1$ be a unitary circuit on~$n'=m\ell$~qubits of size~$s$, i.e., composed of~$s$ one- and two-qubit gates~$U_1,\ldots,U_s$. Let~$W_U = (W_U)_{B_1 \dots B_m B_{\mathsf{aux}}Q_1 Q_2 Q_3}$ be the unitary circuit introduced Theorem~\ref{thm:implementationlogicalqubit} acting on the space~$L^2(\mathbb{R})^{\otimes (m+1)}\otimes (\mathbb{C}^2)^{\otimes 3}$.
  Then we have 
  \begin{align}
    \begin{aligned}
    \overline{\xi}(W_U)&\le 72s \cdot 2^{\ell}\\
      \squeezingparam(W_U)&\leq  256 \cdot 2^{148\ell}\, .
    \end{aligned}
    \end{align}
\end{lemma}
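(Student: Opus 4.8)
The plan is to decompose the circuit $W_U$ into gate-by-gate building blocks exactly as in the construction of Ref.~\cite{cliffordshybrid2025} (illustrated in Fig.~\ref{fig:W_Uimplement}), and then apply the subcircuit estimates from Section~\ref{sec:momlimsubcircuits}. Recall that $W_U$ is a concatenation $W_U = W_{U_s}\cdots W_{U_1}$ of the implementations of the $s$ individual one- and two-qubit gates, and each $W_{U_t}$ is of ``dressed'' form: it consists of bit-transfer unitaries $W_{\bittransfer{j-1}{\ell}}$, $W_{\bittransfer{k-1}{\ell}}$ (acting on a system mode, the auxiliary mode $B_{\mathsf{aux}}$ and the auxiliary qubits), a single one- or two-qubit unitary acting only on the physical qubits, and the adjoint bit-transfer unitaries. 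Thus each $W_{U_t}$ has precisely the structure covered by Lemma~\ref{lem:momentlimitdressedcircuit} on dressed circuits, once we identify the relevant $U^{(a)}$'s with the bit-transfer unitaries.

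First I would extract from~\cite{cliffordshybrid2025} (Corollary D.6 and Theorem E.3 there, cf. Fig.~\ref{fig:W_Uimplement} and the displacement/squeezing bounds already recorded in the excerpt) the squeezing and displacement parameters of a single bit-transfer unitary $W_{\bittransfer{r}{\ell}}$. Each such unitary is itself built out of $O(\ell^2)$ elementary operations from $\Uelem^{m+1,2}$, and its displacements and squeezings are controlled; the relevant bounds should be of the form $\overline{\xi}(W_{\bittransfer{r}{\ell}}) = O(\ell\cdot 2^\ell)$ (since extracting the $r$-th bit of a $2^\ell$-dimensional comb requires displacements scaling with the comb spacing $\sqrt{2\pi\cdot 2^\ell}$ repeated $O(\ell)$ times) and $\overline{g}(W_{\bittransfer{r}{\ell}}) = 2^{O(\ell)}$. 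I would track these constants carefully to reach the stated numerics $72 s\cdot 2^\ell$ and $256\cdot 2^{148\ell}$.

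Next, I would assemble the per-gate bound. Applying Lemma~\ref{lem:momentlimitdressedcircuit} to $W_{U_t}$ with the two bit-transfer unitaries as the $U^{(a)}$'s gives $\overline{\xi}(W_{U_t}) = 2\big(\overline{\xi}(W_{\bittransfer{j-1}{\ell}}) + \overline{\xi}(W_{\bittransfer{k-1}{\ell}})\big)$ and $\squeezingparam(W_{U_t}) \le \big(\max_a \squeezingparam(W_{\bittransfer{\cdot}{\ell}})\big)^2$. Then I would apply Lemma~\ref{lem:squeezingdisplacementsubcircuits} (Eqs.~\eqref{eq:immediatevdaxiU} and~\eqref{eq: bound galpha circuittwo}) to the full circuit $W_U = W_{U_s}\cdots W_{U_1}$: the displacement parameter adds, giving $\overline{\xi}(W_U) = \sum_{t=1}^s \overline{\xi}(W_{U_t}) \le s\cdot O(\ell\cdot 2^\ell)$; and the squeezing parameter obeys $\squeezingparam(W_U) \le \big(\max_t \squeezingparam(W_{U_t})\big)^2 \cdot \prod_t g(\eta(W_{U_t}))$, where crucially each $\eta(W_{U_t}) = 1$ because $W_{U_t}$ is dressed (the net squeezing of a conjugation $U^\dagger V U$ with $V$ a qubit gate is trivial, exactly as in Eq.~\eqref{eq:umvdavz}), so the product term is $1$ and $\squeezingparam(W_U) \le \big(\max_t \squeezingparam(W_{U_t})\big)^2 = 2^{O(\ell)}$ independent of $s$. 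Combining with the per-gate bounds yields the two displayed inequalities.

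The main obstacle I anticipate is bookkeeping the constants and exponents in $\ell$ coming from the bit-transfer unitary of~\cite{cliffordshybrid2025}: the factor $2^{148\ell}$ is large and presumably arises from squaring (from the dressing) an already sizeable $2^{O(\ell)}$ bound that accumulates powers of $2^\ell$ across the $O(\ell)$ rounds of displacements and squeezings inside a single bit-transfer gate, together with the $\sqrt{2\pi\cdot 2^\ell}$ spacing. Getting these exact numbers right requires pulling the precise gate-level decomposition of the bit-transfer unitary out of~\cite{cliffordshybrid2025} rather than just its asymptotic scaling; conceptually, however, everything reduces to the composition lemmas for squeezing and displacement parameters already established, plus the observation that dressing kills the net squeezing $\eta$ so that the $s$-dependence enters $\overline{\xi}$ linearly but does not enter $\squeezingparam$ at all.
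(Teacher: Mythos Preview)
Your overall architecture is right and matches the paper's: decompose $W_U=W_{U_s}\cdots W_{U_1}$, note the dressed structure, and use the subcircuit lemmas. But there are two genuine gaps.

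First, you apply Lemma~\ref{lem:momentlimitdressedcircuit} at the wrong level. You propose to use it on each $W_{U_t}$ individually with ``the two bit-transfer unitaries as the $U^{(a)}$'s.'' That does not match the hypothesis of the lemma: $W_{U_t}$ has the \emph{nested} form $(BA)^\dagger V\,(BA)$, not the sequential form $\prod_a (U^{(a)})^\dagger V^{(a)} U^{(a)}$. So your claimed bound $\squeezingparam(W_{U_t}) \le \bigl(\max_a \squeezingparam(W_{\bittransfer{\cdot}{\ell}})\bigr)^2$ is unjustified. If you apply the lemma correctly to $W_{U_t}$ with $L=1$ and $U^{(1)}=BA$, you get $\squeezingparam(W_{U_t})\le \squeezingparam(BA)^2$, and then combining over $t$ via Eq.~\eqref{eq: bound galpha circuittwo} squares \emph{again}, yielding $\squeezingparam(W_U)\le \squeezingparam(BA)^4$, i.e.\ an exponent $\sim 2^{296\ell}$ rather than $2^{148\ell}$. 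The paper avoids this double squaring by recognising that the \emph{entire} $W_U$ is a dressed circuit with $L=s$, taking $U^{(t)}=(W_{\bittransfer{\beta_{k_t}}{\ell}})(W_{\bittransfer{\beta_{j_t}}{\ell}})$ and $V^{(t)}=U_t$, and applying Lemma~\ref{lem:momentlimitdressedcircuit} once. This is precisely what makes $\squeezingparam(W_U)$ independent of $s$ with the stated exponent.

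Second, you omit a step that is needed to extract the numbers: the bit-transfer $W_{\bittransfer{r}{\ell}}$ from~\cite{cliffordshybrid2025} is composed of $L_r\le 36\ell$ gates in $\Uelem{2}{2}(2,\zeta)$ with $\zeta=\sqrt{\pi}\,2^{(\ell-1)/2}$, i.e.\ it contains displacements of strength $\sim 2^{\ell/2}$, which are \emph{not} elementary. The paper applies the bounded-strength substitution (Lemma~\ref{lem:substitutionsinglemodegen}) to obtain $\overline{\xi}(W_{\bittransfer{r}{\ell}})\le L_r\le 18\cdot 2^\ell$ and $\squeezingparam(W_{\bittransfer{r}{\ell}})\le \zeta^2\cdot 2^{L_r}\le 4\cdot 2^{37\ell}$. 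Your heuristic $\overline{\xi}=O(\ell\cdot 2^\ell)$ and $\overline{g}=2^{O(\ell)}$ is in the right ballpark asymptotically, but the mechanism (and the exact constants yielding $256\cdot 2^{148\ell}=(16\cdot 2^{74\ell})^2=((4\cdot 2^{37\ell})^2)^2$) comes from this substitution lemma, not directly from the reference.
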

\begin{proof}
To avoid handling separate cases, in the following we treat single-qubit unitaries as two-qubit unitaries.
Let us write~$W_U= W_{U_s} \cdots W_{U_1}$ where each unitary~$W_{U_t}$ is a implementation of the (two-qubit) unitary~$U_t$ for all~$t\in \{1,\dots, s\}$.
For each~$t$ let~$j_t,k_t \in \{1,\dots,n'\}$ be two indices such that~$U_t$ acts trivially on all qubits excepts on the~$j_t$-th and~$k_t$-th. Let $\alpha_{j_t}, \alpha_{k_t} \in \{1, \ldots, m\}$ and  
$\beta_{j_t}, \beta_{k_t} \in \{0,\dots, \ell-1\}$ such that 
\begin{align}
  j_t -1  & = (\alpha_{j_t} - 1) \ell + \beta_{j_t} \\
  k_t - 1 & = (\alpha_{k_t} - 1) \ell + \beta_{k_t}\, .
\end{align}
It follows that~$B_{\alpha_{j_t}}$ and~$B_{\alpha_{k_t}}$ are the modes in which the~$j_t$-th and~$k_t$-th qubit are encoded. Moreover,~$\beta_{j_t}$ and~$\beta_{k_t}$ determine which qubit they correspond to within the mode~$B_{\alpha_{j_t}}$  and~$B_{\alpha_{k_t}}$, respectively.

Let~$W_{\bittransfer{r}{\ell}}$ be the unitary on~$L^2(\mathbb{R})^{\otimes 2} \otimes (\mathbb{C}^2)^{\otimes 2}$ which implements the bit-transfer unitary~$\bittransfer{r}{\ell}$ on~$\mathbb{C}^{2^\ell} \otimes \mathbb{C}^2$ for~$ r\in \{0,\dots \ell-1\}$ as introduced in~\cite[Section 2.2]{cliffordshybrid2025}. Then 
\begin{align}
  W_{U_t} = \left(W_{\bittransfer{\beta_{j_t}}{\ell}}\right)^\dagger_{B_{\alpha_{k_t}} C Q_3} \left(W_{\bittransfer{\beta_{k_t}}{\ell}}\right)^\dagger_{B_{\alpha_{j_t}} C Q_2} (U_t)_{Q_2Q_3} \left(W_{\bittransfer{\beta_{k_t}}{\ell}}\right)_{B_{\alpha_{k_t}} C Q_3} \left(W_{\bittransfer{\beta{j_t}}{\ell}}\right)_{B_{\alpha_{j_t}} C  Q_3}  \label{eq:W_Uimplement}
\end{align} where for better readability we introduced the system~$C = B_{\mathsf{aux}} Q_1$.   Moreover, by slight abuse of notation we identified the multiqubit unitary~$U_t$ with the two-qubit unitary obtained by removing all but the~$j_t$-th and~$k_t$-th qubit.
We refer to Fig.~\ref{fig:W_Uimplement} for a circuit representation of Eq.~\eqref{eq:W_Uimplement} for~$m=1$.

Due to~\cite[Lemma 3.2]{cliffordshybrid2025} we can write~$W_{\bittransfer{r}{\ell}} = W^{(L_r)} \cdots W^{(1)}$ for~$r \in \{0,\dots, \ell-1\}$ where~$W^{(a)} \in \Uelem{2}{2}(2,\zeta)$ for all~$a\in \{1, \dots, L_r\}$ with~$L_r \le 36\ell$ and 
\begin{align}
   \zeta = \sqrt{\pi}\cdot  2^{(\ell-1)/2} \le 2\cdot 2^{\ell/2} \label{eq:boundzeta}\, .
\end{align}
By combining Lemma~\ref{lem:substitutionsinglemodegen} (setting~$U = W_{\bittransfer{r}{\ell}} = W^{(L_r)} \cdots W^{(1)}$ and using~$L^{(\alpha)} \le L_r$ and~$|\subsset^{(\alpha)}|\le L_r$ for all~$r \in \{0,\dots, \ell-1\}$) and Eq.~\eqref{eq:boundzeta} we find 
\begin{align}
  \overline{\xi}((W_{\bittransfer{r}{\ell}})_{B_{\alpha_r}CQ_{b_r}}) = \overline{\xi}(W_{\bittransfer{r}{\ell}})&\leq L_r \le 36 \ell \le 18 \cdot 2^{\ell}\\
  \squeezingparam((W_{\bittransfer{r}{\ell}})_{B_{\alpha_r}CQ_{b_r}}) = \squeezingparam(W_{\bittransfer{r}{\ell}})&\leq \zeta^2 \cdot 2^{L_r} \le 4 \cdot 2^{37\ell}
  \end{align}
  for all~$r \in \{0,\dots, \ell-1\}$,~$\alpha_r \in \{1, \dots, m\}$ and~$b_r \in \{2,3\}$.
Here we used that~$36\ell   \le 18\cdot 2^\ell$ for all~$\ell \in \mathbb{N}$.
  Using Lemma~\ref{lem:squeezingdisplacementsubcircuits} it follows that 
  \begin{align}
    \overline{\xi}((W_{\bittransfer{\beta_{k_t}}{\ell}})_{B_{\alpha_{k_t}}CQ_{3}} (W_{\bittransfer{\beta{j_t}}{\ell}})_{B_{\alpha_{j_t}}CQ_{2}})&\leq  \overline{\xi}(W_{\bittransfer{\beta_{k_t}}{\ell}}) + \overline{\xi}( W_{\bittransfer{\beta_{j_t}}{\ell}}) \le 36 \cdot 2^{\ell}\\
    \squeezingparam((W_{\bittransfer{\beta{k_t}}{\ell}})_{B_{\alpha_{k_t}}CQ_{3}} (W_{\bittransfer{\beta{j_t}}{\ell}})_{B_{\alpha_{j_t}}CQ_{2}} )&\leq \squeezingparam(W_{\bittransfer{\beta_{k_t}}{\ell}}) \cdot \squeezingparam(W_{\bittransfer{\beta_{j_t}}{\ell}})\le 16 \cdot 2^{74\ell} \label{eq:momentlimitbittransfer}
    \end{align} for all~$t \in \{1, \dots, s\}$.
     We observe (see Eq.~\eqref{eq:W_Uimplement}) that~$W_U$ is a dressed circuit. Therefore we can apply Lemma~\ref{lem:momentlimitdressedcircuit} with~$U^{(t)} = (W_{\bittransfer{\beta_{k_t}}{\ell}})_{B_{\alpha_{k_t}}CQ_{3}} (W_{\bittransfer{\beta{j_t}}{\ell}})_{B_{\alpha_{j_t}}CQ_{2}}$ and~$V_t = U_t$ for~$t\in \{1, \dots, s\}$.  This implies in combination with Eq.~\eqref{eq:momentlimitbittransfer} that 
     \begin{align}
      \overline{\xi}(W_U)&  \le 72s \cdot 2^{\ell}\\
      \squeezingparam(W_U)&\leq  256 \cdot 2^{148\ell}\, .
     \end{align}
\end{proof} 

With these preparations we can give the proof of Theorem~\ref{lem:totalsqueezingbound}. For completeness, we restate the claim from the main text.
\begingroup
\renewcommand{\thetheorem}{\ref{lem:totalsqueezingbound}}
\begin{theorem}[Restated]
  Let~$U=U_s\cdots U_1$ be a unitary circuit on~$n'=m\ell$~qubits of size~$s$, i.e., composed of~$s$ one- and two-qubit gates~$U_1,\ldots,U_s$. 
  Let~$W^{\mathsf{prep}}$ be the preparation circuit acting on~$L^2(\mathbb{R})^{\otimes (m+1)}\otimes \mathbb{C}^2$ introduced in Theorem~\ref{thm:initialstateprep} and let~$W_U$ be the unitary acting on~$L^2(\mathbb{R})^{\otimes (m+1)}\otimes (\mathbb{C}^2)^{\otimes 3}$ which implements the circuit~$U$ as introduced in Theorem~\ref{thm:implementationlogicalqubit}.
  Then the circuit~$W^{\mathsf{tot}} = W_U (W^{\mathsf{prep}} \otimes I_{\mathbb{C}^2}^{\otimes 2})$ satisfies 
  \begin{align}
    \energy(W^\mathsf{tot}) \le s^3 \cdot 2^{891\ell + 62} /\Delta^{21} \, .
  \end{align}
\end{theorem}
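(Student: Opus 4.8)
The strategy is to combine the moment-limit bounds on the two constituent circuits $W^{\mathsf{prep}}$ and $W_U$ (Lemmas~\ref{lem:squeezingWprep} and~\ref{thm:momentlimitlogicalcircuit}) with the composition rules for squeezing and displacement parameters (Lemma~\ref{lem:squeezingdisplacementsubcircuits}), and then feed the resulting parameters into the general energy bound for partially-implemented multimode circuits (Lemma~\ref{lem:PhiUpartiallyimmplementedmulti}). Since $W^{\mathsf{tot}} = W_U (W^{\mathsf{prep}} \otimes I_{\mathbb{C}^2}^{\otimes 2})$ is a concatenation of two subcircuits, I would first invoke Eqs.~\eqref{eq:immediatevdaxiU} and~\eqref{eq: bound galpha circuitone} of Lemma~\ref{lem:squeezingdisplacementsubcircuits} (with $L = 2$, $U^{(1)} = W^{\mathsf{prep}}$, $U^{(2)} = W_U$) to get $\overline{\xi}(W^{\mathsf{tot}}) \le \overline{\xi}(W^{\mathsf{prep}}) + \overline{\xi}(W_U)$ and $\squeezingparam(W^{\mathsf{tot}}) \le \overline{g}(W^{\mathsf{prep}}) \cdot \overline{g}(W_U)$. (One should be slightly careful that $W^{\mathsf{prep}}$ and $W_U$ act on different numbers of qubits; but since appending idle qubits does not change any squeezing or displacement parameter, this is harmless.)

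Next, I would plug in the explicit bounds already established: from Lemma~\ref{lem:squeezingWprep}, $\overline{\xi}(W^{\mathsf{prep}}) \le 10\log 1/\Delta$ and $\overline{g}(W^{\mathsf{prep}}) \le 4/\Delta^3$; from Lemma~\ref{thm:momentlimitlogicalcircuit}, $\overline{\xi}(W_U) \le 72s\cdot 2^\ell$ and $\squeezingparam(W_U) \le 256\cdot 2^{148\ell}$. This gives
\begin{align}
\overline{\xi}(W^{\mathsf{tot}}) &\le 10\log 1/\Delta + 72 s\cdot 2^\ell \le 82\, s\cdot 2^\ell \log 1/\Delta\ ,\\
\squeezingparam(W^{\mathsf{tot}}) &\le \frac{4}{\Delta^3}\cdot 256\cdot 2^{148\ell} = 2^{10}\cdot 2^{148\ell}/\Delta^3\ ,
\end{align}
where in the displacement bound I used the crude estimate $\log 1/\Delta \ge 1$ (valid since $\Delta < 1/4$) and $s \ge 1$, $2^\ell \ge 1$ to absorb the additive term. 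Then Lemma~\ref{lem:PhiUpartiallyimmplementedmulti} yields
\begin{align}
\energy(W^{\mathsf{tot}}) &\le 168\, \squeezingparam(W^{\mathsf{tot}})^6 \bigl(2 + \overline{\xi}(W^{\mathsf{tot}})^3\bigr)\ .
\end{align}

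The remaining work is a bookkeeping computation: raise $\squeezingparam(W^{\mathsf{tot}})$ to the sixth power to get $2^{60}\cdot 2^{888\ell}/\Delta^{18}$, cube $\overline{\xi}(W^{\mathsf{tot}})$ to get at most $82^3\, s^3\cdot 2^{3\ell}(\log 1/\Delta)^3$, and then absorb all the polylogarithmic and small polynomial factors into the powers of $s$, $2^\ell$ and $1/\Delta$ with a bit of room to spare. Concretely, $168 \cdot 2^{60} \cdot (2 + 82^3 s^3 2^{3\ell}\log^3 1/\Delta) \le 2^{62}\, s^3 \cdot 2^{3\ell}\cdot \Delta^{-3}$ after using $\log^3 1/\Delta \le \Delta^{-3}$ and collecting constants, and multiplying by $2^{888\ell}/\Delta^{18}$ gives $\energy(W^{\mathsf{tot}}) \le s^3\cdot 2^{891\ell+62}/\Delta^{21}$. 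The only mild subtlety worth stating carefully is the handling of the $+2$ additive term inside the parenthesis of Lemma~\ref{lem:PhiUpartiallyimmplementedmulti} and of the additive $10\log 1/\Delta$ in $\overline{\xi}$: both are dominated by the leading terms once one notes $s \ge 1$ and $\Delta < 1/4$, but one must make sure the chosen constant $2^{62}$ (versus the potentially tighter value) leaves enough slack. I do not expect any genuine obstacle here --- the lemmas are engineered precisely so that this concatenation argument goes through --- the only care needed is to make the constant-chasing clean and to confirm the exponents $891$ and $62$ are reached without needing sharper estimates.
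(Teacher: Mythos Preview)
Your proposal is correct and follows essentially the same route as the paper: combine the moment-limit bounds for $W^{\mathsf{prep}}$ and $W_U$ via Lemma~\ref{lem:squeezingdisplacementsubcircuits}, then apply Lemma~\ref{lem:PhiUpartiallyimmplementedmulti}. The only difference is in the bookkeeping for $\overline{\xi}(W^{\mathsf{tot}})$: the paper keeps the additive form $72s\cdot 2^{\ell}+10\log 1/\Delta$ and uses $(x+y)^3\le 4(x^3+y^3)$ before bounding $\log^3 1/\Delta\le 1/\Delta^3$, whereas you multiply the two contributions into the single product $82\,s\cdot 2^{\ell}\log 1/\Delta$. Your version is cruder and, as you suspected, does \emph{not} leave enough slack to reach the stated constant $2^{62}$ (note already $168\cdot 2^{60}>2^{67}$, before any contribution from $82^3$); to recover the exact exponent you should follow the paper's additive handling rather than the multiplicative one.
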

\addtocounter{theorem}{-1} 
\endgroup
\begin{proof}
 We show the claim using the notion of moment-limiting functions
It is easy check that 
\begin{align}
 \begin{matrix}
   \overline{\xi}(W^{\mathsf{prep}} \otimes I_{\mathbb{C}^2}^{\otimes 2})& = &  \overline{\xi}(W^{\mathsf{prep}}) &\leq &10\log 1/\Delta\\
   \overline{g}(W^{\mathsf{prep}} \otimes I_{\mathbb{C}^2}^{\otimes 2}) & = & \overline{g}(W^{\mathsf{prep}}) & \leq &4/\Delta^3\, ,
 \end{matrix}\ \label{eq:sqtot1}
 \end{align}
 where we used Lemma~\ref{lem:squeezingWprep} to obtain the inequalities.
 Moreover, by Lemma~\ref{thm:momentlimitlogicalcircuit} we have 
   \begin{align}
     \begin{aligned}
     \overline{\xi}(W_U)&\le 72s \cdot 2^{\ell}\\
       \squeezingparam(W_U)&\leq  256 \cdot 2^{148\ell}\, .
     \end{aligned} \label{eq:sqtot2}
     \end{align} 
Combining Eqs.~\eqref{eq:sqtot1} and~\eqref{eq:sqtot2} with Lemma~\ref{lem:squeezingdisplacementsubcircuits} it follows that 
   \begin{align}
     \begin{matrix}
     \overline{\xi}(W^{\mathsf{tot}})& \le & \overline{\xi}(W_U ) + \overline{\xi}(W^{\mathsf{prep}} \otimes I_{\mathbb{C}^2}^{\otimes 2}) & \le&  72s \cdot 2^{\ell} + 10\log 1/\Delta\\
       \squeezingparam(W^{\mathsf{tot}})&\leq& \overline{g}(W_U ) \cdot \overline{g}( W^{\mathsf{prep}} \otimes I_{\mathbb{C}^2}^{\otimes 2})& \le &  1024 \cdot 2^{148\ell} /\Delta^3\, .
     \end{matrix} \label{eq:etagwtot}
     \end{align}
     Finally, Lemma~\ref{lem:PhiUpartiallyimmplementedmulti} in combination with Eqs.~\eqref{eq:etagwtot} implies that 
     \begin{align}
       \energy(W^{\mathsf{tot}}) &\le  \squeezingparam(W^{\mathsf{tot}})^6\left(2 + \overline{\xi}(W^{\mathsf{tot}})^3\right)\\
       &\le 2^{888\ell + 40}/\Delta^{18} \left(2+ (72s\cdot 2^\ell + 10\log1/\Delta)^3\right)\\
       &\le 2^{888\ell + 40}/\Delta^{18}\left(2 + 4( (72s\cdot 2^\ell)^3 + (10 \log1/\Delta)^3)\right)\\
       &\le  2^{891\ell + 61}/\Delta^{18}\left(s^3 + \log^3 1/\Delta\right)\\
       &\le 2^{891\ell + 61}/\Delta^{18}\left(s^3 + 1/\Delta^3\right)\\
       &\le 2^{891\ell + 61}\Delta^{21}\left(s^3 + 1\right) \\
       &\le s^3 \cdot 2^{891\ell + 62} /\Delta^{21}\, .
     \end{align}
     The third line follows from the bound $(x+y)^3 \le 4 (x^3 + y^3)$ for all $x,y \ge 0$. To obtain the forth line we used that $s \ge 1$ and $\max\{\log_2(4\cdot 10^3),\log_2(4\cdot72^3+2)\}\le 21$.
     In the fifth line we used~$\log(x) \le x$ for all~$x >0$. 
     The penultimate inequality follows from the fact that by assumption we have $\Delta < 1$, see Lemma~\ref{lem:squeezingWprep}. Finally, in the last inequality we used $s\ge1$ and thus $s^3 +1 \le 2s^3$.
\end{proof}

\section{Squeezing and energy\label{sec:squeezingenergy}}
In this section, give  relations between the amount of squeezing (suitably quantified)
and the amount of energy of a state. In more detail, we introduce 
a quantity we call the diameter of a state~$\rho_{B_1\cdots B_mQ_1\cdots Q_r}\in \cB(L^2(\mathbb{R})^{\otimes m}\otimes (\mathbb{C}^2)^{\otimes r})$.  The definition is motivated by considering the amount of squeezing of a state. We will then show that it gives a lower bound on the energy of a state.

We start with a few general remarks on the degree of localization of a probability distribution on~$\mathbb{R}$ in Section~\ref{sec:localization}. In Section~\ref{sec:squeezingonemodestate}, we 
translate the corresponding notions to quantum states and discuss the connection to squeezing. We first consider the one-mode case $(m,r)=(1,0)$.
In Section~\ref{sec:squeezingmultimodestate}, we then define the relevant quantities for the multimode case. 

\subsection{Diameter of a random variable and variance\label{sec:localization}}
In the following, let $X$ denote a random variable on~$\mathbb{R}$ with finite first and second moments. The 
variance  $\Var(X):=\ExpE\left[(X-\ExpE[X])^2\right]$ is often used to quantify how ``wide'', i.e., spread out the distribution of such a random variable~$X$ is.  Indeed, according to 
Chebyshev's inequality
$\Pr\left[|X-\ExpE[X]| \geq R\right] \leq \Var(X)/R^2$ for $R\geq 0$, the probability that $X$ can be observed 
in an interval of length $2\Varroot(X)\cdot \delta^{-1/2}$ is at least $1-\delta$ for any $\delta>0$.  Thus the quantity~$2\Varroot(X)$ can be seen as determining an ``effective diameter'' of~$X$.

Here we consider a simpler notion. A natural first attempt is to use the diameter
\begin{align}
\diam(X):=\diam(\supp(X))
\end{align}
of the support~$\supp(X)$ of $X$, where 
\begin{align}
\diam(A):=\sup\{|x-y|\ |\ x,y\in A\}=\inf \{R_2-R_1\ |\ R_1\leq R_2, A\subseteq [R_1,R_2]\}
\end{align}
denotes the diameter of a subset~$A\subseteq\mathbb{R}$. Of course, the quantity~$\diam(X)$ is generally unbounded. For this reason, we consider the following definition, which involves taking the infimum of the diameter of any high-probability set: For $\delta>0$, let 
\begin{align}
\diam^\delta(X):=\inf \{\diam(A^\delta)\ |\ A^\delta\subseteq \mathbb{R}\textrm{ measurable }, \Pr[X\in A^\delta]\geq 1-\delta\}\ .
\end{align}
We call $\diam^\delta(X)$ the $\delta$-diameter of the random variable~$X$. This quantity~$\diam^\delta(X)$ is sometimes also referred to as the  $\delta$-essential diameter of a distribution on a metric space. It can equivalently be defined as the width of a minimal interval containing at least $1-\delta$ of the probability mass of~$X$, i.e., 
\begin{align}
\diam^\delta(X)=\inf \{R_2-R_1\ |\ R_1\leq R_2, \Pr\left[X\in [R_1,R_2]\right]\geq 1-\delta\}\ .
\end{align}

We observe the following relations to the variance~$\Var(X)$.
We denote by 
\begin{align}
\delta(X,Y):=\sup_{A\subset\mathbb{R}\textrm{ measurable}} |\Pr[X\in A]-\Pr[Y\in A]|
\end{align} the total variation distance of the probability measures associated with two random variables~$X$ and $Y$. We note that if $X$, $Y$ have probability densities~$f_X$ and $f_Y$, then $\delta(X,Y)=\frac{1}{2}\|f_X-f_Y\|_1$.

\begin{lemma}[Diameter and variance]\label{lem:variancediametercomparison}
Let $X$ be a random variable on~$\mathbb{R}$. Let $\delta>0$. Then 
the following holds.
\begin{enumerate}[(i)]
\item\label{it:firstinequalitydiameter}
We have $\diam^\delta(X)\leq 2\Varroot(X)\cdot \delta^{-1/2}$. 
\item
There is a random variable~$\overline{X}$ satisfying $\delta(X,\overline{X})\leq \delta$ and 
$2\Varroot(\overline{X})\leq \diam^\delta(X)$.
\label{it:secondinequalitydisameter}
\end{enumerate}

\end{lemma}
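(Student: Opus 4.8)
The plan is to handle the two estimates separately, and we may restrict to $\delta\in(0,1)$: if $\delta\geq 1$ one takes $\overline{X}$ to be the constant $\ExpE[X]$, and both claims hold trivially. For part~\eqref{it:firstinequalitydiameter} I would simply feed Chebyshev's inequality into the interval characterisation of $\diam^\delta$ recalled just above the lemma. Setting $\mu=\ExpE[X]$ and (when $\Varroot(X)>0$; the case $\Varroot(X)=0$ is immediate) $t:=\Varroot(X)\,\delta^{-1/2}$, Chebyshev gives $\Pr[\,|X-\mu|>t\,]\leq\Var(X)/t^{2}=\delta$, so the closed interval $[\mu-t,\mu+t]$ of width $2t=2\Varroot(X)\,\delta^{-1/2}$ carries probability at least $1-\delta$; hence $\diam^\delta(X)\leq 2\Varroot(X)\,\delta^{-1/2}$.

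For part~\eqref{it:secondinequalitydisameter} the idea is to realise $\overline{X}$ as $X$ conditioned on a minimal-width interval of probability at least $1-\delta$. First I would show that the infimum defining $w^\star:=\diam^\delta(X)$ (which is finite by part~\eqref{it:firstinequalitydiameter} and the standing assumption that $X$ has finite second moment) is actually attained, i.e.\ that there is a closed interval $I^\star=[a^\star,b^\star]$ with $b^\star-a^\star=w^\star$ and $\Pr[X\in I^\star]\geq 1-\delta$. Granting this, set $p:=\Pr[X\in I^\star]\in[1-\delta,1]$ and let $\overline{X}$ have the conditional law $\Pr[\overline{X}\in A]:=\Pr[X\in A\cap I^\star]/p$ (with $\overline{X}=X$ when $p=1$). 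Then $\delta(X,\overline{X})=1-p\leq\delta$ by the standard identity for the total-variation distance between a measure and its conditioning on an event; and since $\overline{X}$ is supported in $I^\star$, comparing to the midpoint $c=(a^\star+b^\star)/2$ gives $\Var(\overline{X})\leq\ExpE[(\overline{X}-c)^{2}]\leq(b^\star-a^\star)^{2}/4$, so that $2\Varroot(\overline{X})\leq b^\star-a^\star=\diam^\delta(X)$.

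The step I expect to be the main obstacle is the attainment of the infimum, which I would establish by a tightness argument. Pick intervals $[a_n,b_n]$ with $\Pr[X\in[a_n,b_n]]\geq 1-\delta$ and $b_n-a_n\to w^\star$, and assume $b_n-a_n\leq w^\star+1$ for all $n$. The left endpoints $(a_n)$ stay bounded: since $\delta<1$ and the measure of $X$ is tight, $\Pr[\,|X|\geq M\,]\to 0$ as $M\to\infty$, so intervals whose endpoints escaped to infinity would eventually lie inside $\{|x|\geq M\}$ for arbitrarily large $M$ and could not carry probability $\geq 1-\delta>0$. Passing to a subsequence with $a_{n_k}\to a^\star$ and putting $b^\star:=a^\star+w^\star$, for each $\varepsilon>0$ the interval $[a^\star-\varepsilon,b^\star+\varepsilon]$ eventually contains $[a_{n_k},b_{n_k}]$ and hence has probability $\geq 1-\delta$; letting $\varepsilon\downarrow 0$ and using continuity of the measure from above yields $\Pr[X\in[a^\star,b^\star]]\geq 1-\delta$ with $b^\star-a^\star=w^\star$, as needed. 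The remaining ingredients — the total-variation identity and the bound on the variance of a bounded random variable — are routine.
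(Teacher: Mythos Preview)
Your proof is correct and follows essentially the same route as the paper: Chebyshev's inequality for~\eqref{it:firstinequalitydiameter}, and for~\eqref{it:secondinequalitydisameter} conditioning $X$ on a minimal-width interval together with Popoviciu's inequality. The paper simply assumes the infimum is attained (noting that ``the general case follows by the same arguments''), whereas you supply the compactness/tightness argument explicitly --- so your version is a strict elaboration of theirs.
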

\begin{proof}
As argued above, Claim~\eqref{it:firstinequalitydiameter} immediately follows from Chebyshev's inequality.

To prove Claim~\eqref{it:secondinequalitydisameter},
assume for simplicity that $R_1<R_2$ achieve the infimum in the definition of~$\diam^\delta(X)$, i.e., 
$\diam^\delta(X)=R_2-R_1$ and $\Pr\left[X\in [R_1,R_2]\right]\geq 1-\delta$. (The general case follows by the same arguments.) Let $\overline{X}$ denote the random variable defined by the conditional distribution of~$X$ given that $X\in [R_1,R_2]$, i.e., for $\Pr\left[\overline{X}\in A\right]=p^{-1}\cdot \Pr\left[X\in A\cap [R_1,R_2]\right]$ for a measurable subset $A\subset \mathbb{R}$, where $p:=\Pr\left[X\in [R_1,R_2]\right]$. Then it is straightforward to check that the total variation distance of the corresponding distributions satisfies~$\delta(X,\overline{X})\leq \delta$. By definition, the random variable~$\overline{X}$ has bounded support contained in~$[R_1,R_2]$. The claim thus follows from Popoviciu's inequality $\Var(\overline{X})\leq (R_2-R_1)^2/4$ satisfied by such a random variable. 
\end{proof}
Let us also define the symmetric $\delta$-radius of  $X$ as
\begin{align}
\symradius^\delta(X)&:=\inf\left\{R>0\ |\ 
\Pr\left[X\in [-R,R]\right] \geq 1-\delta\right\}\ .
\end{align}
Then we have 
\begin{align}\diam^\delta(X)\leq 
2\symradius^\delta(X)
\end{align}
by definition.  The symmetric radius gives a lower bound on the second moment of~$X$, as follows.
\begin{lemma}[Symmetric radius and second moment]\label{lem:secondmomentbound}
Let $X$ be an arbitrary random variable on~$\mathbb{R}$. Then 
\begin{align}
\delta\cdot \symradius^\delta(X)^2 \leq \ExpE[X^2]\ .
\end{align}
\end{lemma}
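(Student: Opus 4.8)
The plan is to read off the claim directly from the definition of $\symradius^\delta(X)$ together with an elementary second-moment estimate. Set $R := \symradius^\delta(X)$. If $R = 0$ there is nothing to prove, and if $\ExpE[X^2] = +\infty$ the inequality is trivial as well, so I would assume $0 < R < \infty$; here finiteness of $R$ is automatic, since $\Pr[X \in [-R',R']] \to 1 > 1-\delta$ as $R' \to \infty$, so the set whose infimum defines $R$ is nonempty.

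First I would fix an arbitrary $R' \in (0, R)$. By definition of the infimum, $R'$ fails the defining condition of the set $\{\rho > 0 : \Pr[X \in [-\rho, \rho]] \ge 1-\delta\}$, i.e.\ $\Pr[X \in [-R', R']] < 1-\delta$, which is equivalent to $\Pr[|X| > R'] > \delta$. Second, I would lower-bound the second moment by discarding the contribution of the event $\{|X| \le R'\}$ and using $X^2 > (R')^2$ on the complementary event:
\[
\ExpE[X^2] \;\ge\; \ExpE\!\left[X^2 \,\mathbf{1}_{\{|X| > R'\}}\right] \;\ge\; (R')^2 \,\Pr[|X| > R'] \;>\; \delta\, (R')^2 .
\]
Finally, letting $R' \uparrow R$ yields $\ExpE[X^2] \ge \delta R^2 = \delta\, \symradius^\delta(X)^2$, which is the claim.

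There is no substantial obstacle in this argument; the only point that needs a little care is that the infimum in the definition of $\symradius^\delta(X)$ need not be attained, which is why I would argue with an arbitrary $R' < R$ and then pass to the limit rather than plugging in $R$ directly, and the degenerate cases $R = 0$ and $\ExpE[X^2] = +\infty$ should be dispatched first.
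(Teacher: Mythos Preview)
Your proof is correct and is essentially the same argument as the paper's: both are Markov's inequality for $X^2$, the paper phrasing it as ``$\Pr[|X|\ge R]\le \ExpE[X^2]/R^2$ implies $\symradius^\delta(X)\le \ExpE[X^2]^{1/2}\delta^{-1/2}$'' and you phrasing it contrapositively via $\ExpE[X^2]\ge (R')^2\Pr[|X|>R']$. Your version is in fact slightly more careful in explicitly handling the cases $R=0$, $\ExpE[X^2]=+\infty$, and the possibility that the infimum is not attained.
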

We generalize these notions as well as the bound of Lemma~\ref{lem:secondmomentbound} below to quantum states. 
\begin{proof}
Let $R>0$ be arbitrary. 
By Markov's inequality, we have 
\begin{align}
\Pr[|X|\geq R] &=
\Pr[X^2\geq R^2]\\
&\leq \frac{\ExpE[X^2]}{R^2}\ .
\end{align}
 It follows that 
\begin{align}
\Pr\left[X\in [-R,R]\right]&\geq 
\Pr\left[X\in (-R,R)\right]\\
&\geq 1-\delta 
\end{align}
whenever 
\begin{align}
R&\geq \ExpE[X^2]^{1/2}\cdot \delta^{-1/2}\ .
\end{align}
This implies that $\symradius^\delta(X)\leq \ExpE[X^2]^{1/2}\cdot \delta^{-1/2}$, which is  the claim.
\end{proof}

\subsection{Diameter of a one-mode state and squeezing\label{sec:squeezingonemodestate}}
Consider a state~$\rho\in \cB(L^2(\mathbb{R}))$ of a single mode with canonical position- and momentum operators~$Q$ and $P$, respectively.
For an observable $O$ on $L^2(\mathbb{R})$, let
\begin{align}
\Var_{\rho}(O):=\tr(\rho O^2)-\tr(\rho O)^2
\end{align}
be the variance of the measurement result when measuring the observable~$O$ in the state~$\rho$. Heisenberg's uncertainty relation
states that the corresponding standard deviations satisfy
\begin{align}
\Var_{\rho}(Q)\Var_{\rho}(P)& \geq 1/4\ .\label{eq:heisenbergrelation}
\end{align}
Eq.~\eqref{eq:heisenbergrelation} is saturated when~$\rho=\proj{\alpha}$ is a coherent state; in this case $\Var_\rho(Q)=\Var_\rho(P)=1/2$. 
A state~$\rho$ is called squeezed if 
there is some angle $\theta$ such that the rotated state 
\begin{align}
\rho(\theta):=e^{i\theta (Q^2+P^2)}\rho e^{-i\theta (Q^2+P^2)}
\end{align} 
satisfies
\begin{align}
\Var_{\rho(\theta)}(Q)&<1/2\ 
\end{align}
Correspondingly, a typically considered squeezing measure for $\rho$ is the minimal variance
\begin{align}
\xi(\rho):=\min_{\theta }\Var_{\rho(\theta)}(Q)\ .
\end{align}
Let us only consider $\theta=0$ for the following. 
Then
it is clear that the quantity
\begin{align}
\mathsf{d}(\rho):=\max\{\Var_\rho(Q),\Var_\rho(P)\}\label{eq:drhomotivation}
\end{align}
provides an upper bound on the amount of squeezing in both the $Q$- and $P$-direction: Indeed, if e.g., $\mathsf{d}(\rho)=\Var_\rho(P)$, then
we obtain $\Var_\rho(Q)\geq 1/(4\mathsf{d}(\rho))$
by Heisenberg's uncertainty relation, which means that the amount of squeezing in the $Q$-direction is limited.

Our considerations are motivated by the quantity~\eqref{eq:drhomotivation}, but we replace the notion of standard deviation by the $\delta$-diameter. We note that, as discussed in Section~\ref{sec:localization},
the corresponding quantities have similar behavior.
 Specifically, we define the following.
 \begin{definition}
 Let $\delta>0$ and let $\rho\in \cB(L^2(\mathbb{R}))$ be a state, and let $O$ be an observable on $L^2(\mathbb{R})$. 
 Then the $\delta$-diameter of~$\rho$ on the observable~$O$ is defined as
\begin{align}
\diam^\delta_\rho(O)&:=\diam^\delta(O_\rho)
\end{align}
where $O_\rho$ is the distribution of measurement outcomes when measuring~$\rho$.
Similarly, the symmetric $\delta$-radius of $\rho$ on the observable~$O$ is defined as
\begin{align}
\symradius^\delta_\rho(O)&:=\symradius^\delta(O_\rho)\ .
\end{align}
\end{definition}
Similar to Heisenberg's uncertainty relation, Dohono and Stark~\cite[Theorem 2]{dohonostark} showed that the $\delta$-diameter satisfies the following uncertainty relation
 \begin{align}
 \diam^{\delta_1}_\rho(Q)\cdot \diam^{\delta_2}_\rho(P)&\geq  (1-(\delta_1+\delta_2))^2\ 
 \end{align}
  for any $\delta_1,\delta_2>0$. This motivates the following definition.
\begin{definition}
Let $\rho\in \cB(L^2(\mathbb{R}))$ be a state. Let $\delta>0$. Then the $\delta$-diameter of $\rho$ is the quantity
\begin{align}
\diam^\delta(\rho):=\max \{\diam^\delta_\rho(Q),\diam^\delta_\rho(P)\}\ .
\end{align}
Similarly, we define the symmetric $\delta$-radius of~$\rho$
as
\begin{align}
\symradius^\delta(\rho)&:=\max\{\symradius^\delta_\rho(Q),\symradius^\delta_\rho(P)\}\ .
\end{align}
\end{definition}
By definition,  the scalar $\diam^\delta(\rho)$ tells us how concentrated~$\rho$ is in position- respectively momentum-space: The measurement outcomes~$Q_\rho$ and $P_\rho$ are given by distributions with at least~$1-\delta$ probability mass contained  in intervals of length upper bounded~$\diam^\delta(\rho)$.  Similarly, both distributions~$Q_\rho$ and $P_\rho$ have at least mass $1-\delta$ in the interval $[-\symradius^\delta(\rho),\symradius^\delta(\rho)]$ by definition. It also follows from these definitions that \begin{align}
\diam^\delta(\rho)&\leq 2\cdot \symradius^\delta(\rho)\ .
\end{align}
Similar to the quantity~$\mathsf{d}(\rho)$ introduced above, the quantity $\diam^\delta(\rho)$ can be seen as a measure of the amount of squeezing of the state~$\rho$. On the ther hand, the quantity $\symradius^\delta(\rho)$ is less suitable as a squeezing measure because it also takes into account displacements: For example, 
the symmetric $\delta$-radius~$\symradius^\delta(\proj{\alpha})$  of a coherent state~$\ket{\alpha}$ increases with $|\alpha|$.  In contrast, $\diam^\delta(\proj{\alpha})=\diam^\delta(\proj{\mathsf{vac}})$
is equal to the diameter of the vacuum state. (This matches the commonly used notion of squeezing where a displaced vacuum state is not squeezed.)

Let us briefly discuss some basic properties of this quantity.  Recall the spectral projections~$\Pi_{[R_1,R_2]}$ and~$\widehat{\Pi}_{[R_1,R_2]}$ of the position- and momentum operators~$Q$ and~$P$ (on~$L^2(\mathbb{R})$) introduced in Section~\ref{sec:momentlimitfunctionsonemode}
for~$R_1<R_2$. Clearly, an equivalent definition of the $\delta$-diameter of $\rho$ and its symmetric $\delta$-radius is  
\begin{align}
\diam^\delta(\rho)&:=
\max\left(
\inf\left\{
R_2-R_1
\ |\ 
\substack{R_1<R_2\\
\tr\left(\Pi_{[R_1,R_2]}\rho\right)\geq 1-\delta}
\right\}, \inf\left\{
R'_2-R'_1
\ |\ 
\substack{
R_1'<R_2'\\
\tr\left(\widehat{\Pi}_{[R'_1,R'_2]}\rho\right)\geq 1-\delta
}
\right\}
\right)\\ 
\symradius^\delta(\rho)&:=
\inf\left\{R\ |\ 
\tr\left(\Pi_{[-R,R]}\rho\right)\geq 1-\delta\textrm{ and }\tr\left(\widehat{\Pi}_{[-R,R]}\rho\right)\geq 1-\delta\right\}\ .
\end{align}
We   interested in the energy
\begin{align}
\energy(\rho)&:=\tr\left((Q^2+P^2)\rho\right)
\end{align} of a state~$\rho$ with respect to the harmonic oscillator Hamiltonian $Q^2+P^2$. 
We find the following:
\begin{lemma}[Energy lower bound in terms of symmetric radius]\label{lem:energylowerboundonemode}
 Let $\delta>0$ and let $\rho\in \cB(L^2(\mathbb{R}))$ be a state.
 Then
 \begin{align}
 \delta\cdot \symradius^\delta(\rho)^2 &\leq \energy(\rho)\ .
 \end{align}
\end{lemma}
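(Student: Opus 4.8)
The plan is to reduce the multimode energy lower bound to the one-mode second-moment bound already available in Lemma~\ref{lem:secondmomentbound}, applied separately to the position and momentum observables. First I would recall that by definition $\energy(\rho)=\tr((Q^2+P^2)\rho)=\tr(Q^2\rho)+\tr(P^2\rho)=\ExpE[Q_\rho^2]+\ExpE[P_\rho^2]$, where $Q_\rho$ and $P_\rho$ denote the random variables describing the outcomes of a homodyne position- and momentum-measurement on $\rho$, respectively. Since both summands are nonnegative, it suffices to lower bound either one of them by $\delta\cdot\symradius^\delta(\rho)^2$.

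Next I would unpack the definition $\symradius^\delta(\rho)=\max\{\symradius^\delta_\rho(Q),\symradius^\delta_\rho(P)\}=\max\{\symradius^\delta(Q_\rho),\symradius^\delta(P_\rho)\}$. Say the maximum is attained by the position observable, i.e.\ $\symradius^\delta(\rho)=\symradius^\delta(Q_\rho)$ (the momentum case is symmetric). Then Lemma~\ref{lem:secondmomentbound} applied to the random variable $X=Q_\rho$ gives
\begin{align}
\delta\cdot\symradius^\delta(Q_\rho)^2 &\leq \ExpE[Q_\rho^2]=\tr(Q^2\rho)\leq \tr((Q^2+P^2)\rho)=\energy(\rho)\ ,
\end{align}
which is exactly the claim. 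If instead the maximum is attained by the momentum observable, the identical argument with $X=P_\rho$ and $\tr(P^2\rho)\leq\energy(\rho)$ yields the bound.

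There is essentially no obstacle here: the lemma is a direct corollary of Lemma~\ref{lem:secondmomentbound} together with the observation that the harmonic-oscillator Hamiltonian $Q^2+P^2$ dominates each of $Q^2$ and $P^2$ as an operator. The only point requiring a line of care is the translation between the operator-theoretic definitions of $\symradius^\delta(\rho)$ (in terms of the spectral projections $\Pi_{[-R,R]}$ and $\widehat{\Pi}_{[-R,R]}$) and the probabilistic formulation used in Lemma~\ref{lem:secondmomentbound}; but this equivalence was already spelled out when the quantities $\symradius^\delta_\rho(O)=\symradius^\delta(O_\rho)$ were defined, so it can simply be invoked. I would therefore present the proof in a few lines, splitting into the two symmetric cases according to which of $Q$ or $P$ achieves the maximum in the definition of $\symradius^\delta(\rho)$.
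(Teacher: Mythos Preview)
Your proposal is correct and follows essentially the same route as the paper's proof: split into cases according to which of $Q_\rho$ or $P_\rho$ realizes the maximum in $\symradius^\delta(\rho)$, apply Lemma~\ref{lem:secondmomentbound} to that random variable, and then dominate the resulting second moment by $\tr((Q^2+P^2)\rho)=\energy(\rho)$. The only minor slip is that your opening sentence calls this the ``multimode'' bound, whereas the lemma is stated and proved for a single mode; the argument itself is fine.
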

\begin{proof}
Consider the random variables~$Q_\rho$ and $P_\rho$ obtained by measuring the $Q$- and the $P$-quadrature, respectively. By definition of the symmetric $\delta$-radius, we 
have  $\symradius^\delta(\rho)\in \{\symradius^\delta(Q_\rho),\symradius^\delta(P_\rho)\}$. 
Assume without loss of generality (the other case is treated similarly) that 
\begin{align}
\symradius^\delta(Q_\rho)&=\symradius^\delta(\rho)\ .\label{eq:equalityvsm}
\end{align} 
Then we obtain
\begin{align}
\delta\cdot \symradius^\delta(\rho)^2&\leq \ExpE\left[Q_\rho^2\right]\\
&\leq \ExpE\left[Q_\rho^2\right]+\ExpE\left[P_\rho^2\right]\\
&=\tr(Q^2\rho)+\tr(P^2\rho)\ ,
\end{align}
where the first inequality is obtained by combining Eq.~\eqref{eq:equalityvsm}
with Lemma~\ref{lem:secondmomentbound}, the second inequality trivially follows because $\ExpE[P_\rho^2]\geq 0$, and the last identity follows by definition of $Q_\rho$ and $P_\rho$. This is the claim. 
\end{proof}
We note that
since the energy
is invariant under passive phase space rotations, 
Lemma~\ref{lem:energylowerboundonemode}
also implies that 
\begin{align}
\sup_\theta \delta \cdot \symradius^\delta(e^{i\theta (Q^2+P^2)}\rho e^{-i\theta (Q^2+P^2)})^2&\leq \energy(\rho)\ .\label{eq:vdavm}
\end{align} 
Eq.~\eqref{eq:vdavm} means that the energy 
provides an upper bound on the amount of squeezing in any phase space direction. 

\subsection{Diameter a multimode state\label{sec:squeezingmultimodestate}}
Here we generalize the notion of the diameter and the symmetric radius of a state to the multimode setting (including additional qubits).

 Recall the spectral projections~$\Pi_{[R_1,R_2]}$ and~$\widehat{\Pi}_{[R_1,R_2]}$ of the position- and momentum operators~$Q$ and~$P$ (on~$L^2(\mathbb{R})$) introduced in Section~\ref{sec:momentlimitfunctionsonemode}
for~$R_1<R_2$.  We define the $\delta$-diameter and symmetric $\delta$-radius  of a state on  $m$~modes and~$r$~qubits as follows.
\begin{definition} \label{lem: def squeezing}
  Let~$m \in \mathbb{N}$ and~$r \in \mathbb{N}_0$.
Define the projections
  \begin{align}
    \begin{aligned}
    \Pi^{(m)}_{[-R,R]} &:= \Pi^{\otimes m}_{[-R,R]} \otimes I_{\mathbb{C}^2}^{\otimes r} \\
    \widehat{\Pi}^{(m)}_{[-R,R]} &:= \widehat{\Pi}^{\otimes m}_{[-R,R]} \otimes I_{\mathbb{C}^2}^{\otimes r}
    \end{aligned} \qquad \textrm{for} \qquad R>0\, .
  \end{align}
 Let~$\rho=\rho_{B_1\cdots B_mQ_1\cdots Q_r}\in \cB(L^2(\mathbb{R})^{\otimes m}\otimes (\mathbb{C}^2)^{\otimes r})$ be a state. 
  Let~$\delta>0$ and $d>0$. We call a pair  $\left(\cJ=(\cJ_\alpha)_{\alpha=1}^m, \cJ=(\cJ'_\alpha)_{\alpha=1}^m\right)$ of $m$-tuples of (closed) intervals $(d,\delta,\rho)$-valid if 
 \begin{align}
 \tr\left(\Pi(\cJ)\rho\right)&\geq 1-\delta\\
  \tr\left(\widehat{\Pi}(\cJ')\rho\right)&\geq 1-\delta
 \end{align}
 and
 \begin{align}
 |\cJ_\alpha|\leq d\textrm{ and } |\cJ'_\alpha|\leq d \textrm{ for all }\alpha\in \{1,\ldots,m\}\ . \end{align}
 Here $|[a,b]|=b-a$ denotes the length of an interval and 
 \begin{align}
 \Pi(\cJ)&=(\Pi_{\cJ_1}\otimes \cdots \otimes \Pi_{\cJ_m})\otimes I_{\mathbb{C}^2}^{\otimes r}\\
 \widehat{\Pi}(\cJ')&=(\widehat{\Pi}_{\cJ'_1}\otimes \cdots \otimes \widehat{\Pi}_{\cJ'_m})\otimes I_{\mathbb{C}^2}^{\otimes r}\ .\end{align}
 The $\delta$-radius of $\rho$ is then defined as 
    \begin{align}
  \diam^\delta(\rho):=
  \inf\left\{d>0\ |\ \exists (d,\delta,\rho)\textrm{-valid pair }(\cJ,\cJ')  \right\}\ .
  \end{align}  
  Furthermore, we call the quantity 
  \begin{align}
  \symradius^\delta\left(\rho\right)=\inf\left\{R>0\ |\ 
  \min\left\{\tr\left(\Pi_{[-R,R]}^{(m)}\rho\right),
  \tr\left(\widehat{\Pi}_{[-R,R]}^{(m)}\rho\right)\right\}
  \geq 1-\delta 
  \right\}\ \label{eq:squeezingdeltadefinitionbasic}
  \end{align}
  the symmetric $\delta$-radius of~$\rho$. 
  \end{definition}

  We note that the quantity~$\symradius^\delta(\rho)$ gives the linear size~$R$ of a cube~$[-R,R]^m\subset\mathbb{R}^{m}$ in position- respectively momentum space such that most of the support of~$\rho$ is contained within the cube. However, it only consideres cubes centered around the origin. In contrast, 
    $\diam^\delta(\rho)$ gives the sidelength~$2R$ of any cube which is a translate of~$[-R,R]^m$  and contains most of the support of~$\rho$ (in position- respectively momentum-space). We therefore again have the inequality
  \begin{align}
\diam^\delta(\rho)\leq 2\symradius^\delta(\rho)\ .\label{eq:rhocentereda}
\end{align}
It is easy to see that the reduced density operator~$\rho_{B_\alpha}$ on the $\alpha$-th mode satisfies
\begin{align}
\symradius^\delta(\rho_{B_\alpha})\leq
\symradius^\delta(\rho)\ .
 \end{align}

 The following is a quantum generalization of
 Lemma~\ref{lem:secondmomentbound}.
We define the total energy of an $m$-mode, $r$-qubit state~$\rho$ as
 \begin{align}
 \energytotal(\rho)&=\tr\left(\left(\sum_{\alpha=1}^m 
 (Q_\alpha^2+P_\alpha^2)\right)\rho\right)\ .
 \end{align}
 \begin{lemma}[Symmetric radius and total energy] \label{lem:totalenergysqueezing}
\label{lem:symradiusenergy} Let $\rho\in \cB(L^2(\mathbb{R})^{\otimes m}\otimes (\mathbb{C}^2)^{\otimes r})$ be a state. Then 
\begin{align}
\delta\cdot \symradius^\delta(\rho)^2 \leq \energytotal(\rho)\ . \label{eq:energy1}
\end{align}
In particular, we have 
\begin{align}
  \delta\cdot \symradius^\delta(\rho)^2/m \le  \energy(\rho)\, . \label{eq:energy2}
\end{align}
\end{lemma}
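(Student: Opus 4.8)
The plan is to reduce the $m$-mode statement to the single-mode bound of Lemma~\ref{lem:secondmomentbound} (the classical ``symmetric radius and second moment'' inequality). First I would fix a state~$\rho\in\cB(L^2(\mathbb{R})^{\otimes m}\otimes(\mathbb{C}^2)^{\otimes r})$ and a~$\delta>0$, and recall that by Definition~\ref{lem: def squeezing} the symmetric $\delta$-radius~$R:=\symradius^\delta(\rho)$ is an infimum over radii for which \emph{either} the position projection~$\Pi^{(m)}_{[-R,R]}$ \emph{or} the momentum projection~$\widehat{\Pi}^{(m)}_{[-R,R]}$ captures at least~$1-\delta$ of the probability mass. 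So it suffices to show that for every~$R$ with, say, $\tr(\Pi^{(m)}_{[-R,R]}\rho)\ge 1-\delta$ one has~$\delta R^2\le\energytotal(\rho)$; the momentum case is identical by symmetry. The key observation is that the event ``all $m$ position measurements land in $[-R,R]$'' is contained in the event ``$\sum_\alpha Q_\alpha^2\le mR^2$'' --- no, more usefully, it is contained in the event ``$\max_\alpha Q_\alpha^2\ge R^2$'' fails, i.e.\ $\sum_\alpha Q_\alpha^2 \le mR^2$ --- but the cleanest route is to use a single coordinate: the event $\{Q_1\in[-R,R],\dots,Q_m\in[-R,R]\}$ implies $\{\,|Q_\alpha|\le R\,\}$ for each fixed~$\alpha$, so $\tr(\Pi_{[-R,R]}\otimes I\,\rho)\ge 1-\delta$ for every~$\alpha$; applying Lemma~\ref{lem:secondmomentbound} to the marginal of the $\alpha$-th quadrature gives $\delta R^2\le \tr(Q_\alpha^2\rho)$. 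Since this holds for every~$\alpha$ we certainly have $\delta R^2\le \sum_\alpha\tr(Q_\alpha^2\rho)\le\energytotal(\rho)$, which is a fortiori true and a bit wasteful; either way Eq.~\eqref{eq:energy1} follows after taking the infimum over valid~$R$.

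More precisely, the argument is: let~$R>\symradius^\delta(\rho)$ be arbitrary. Then there is a~$(d,\delta,\rho)$-type witness, i.e.\ either $\tr(\Pi^{(m)}_{[-R,R]}\rho)\ge 1-\delta$ or $\tr(\widehat{\Pi}^{(m)}_{[-R,R]}\rho)\ge 1-\delta$. Treat the first case (the second is symmetric, working in the momentum basis). Fix any~$\alpha\in\{1,\dots,m\}$. Since $\Pi^{(m)}_{[-R,R]}\le \Pi_{[-R,R],\,\alpha}$ (the projection onto $|Q_\alpha|\le R$, identity elsewhere) as operators, we get $\tr(\Pi_{[-R,R],\,\alpha}\rho)\ge 1-\delta$, which says exactly that the random variable $X_\alpha:=(Q_\alpha)_\rho$ satisfies $\Pr[X_\alpha\in[-R,R]]\ge 1-\delta$, hence $\symradius^\delta(X_\alpha)\le R$. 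By Lemma~\ref{lem:secondmomentbound}, $\delta\cdot\symradius^\delta(X_\alpha)^2\le \ExpE[X_\alpha^2]=\tr(Q_\alpha^2\rho)$; but that bounds $\delta R^2$ only from below by the wrong direction --- so instead I use the other inequality from the proof of Lemma~\ref{lem:secondmomentbound}: it is the radius that controls the moment, not conversely. The correct deduction: we do \emph{not} know $R=\symradius^\delta(X_\alpha)$, only $\symradius^\delta(X_\alpha)\le R$, so Lemma~\ref{lem:secondmomentbound} gives $\delta\,\symradius^\delta(X_\alpha)^2\le \tr(Q_\alpha^2\rho)$, which does not immediately yield $\delta R^2\le\tr(Q_\alpha^2\rho)$. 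Hence the right move is to go back to the raw estimate inside Lemma~\ref{lem:secondmomentbound}'s proof (Markov's inequality) applied with the actual cube: from $\tr(\Pi^{(m)}_{[-R,R]}\rho)\ge 1-\delta$ and the operator inequality $\sum_\alpha Q_\alpha^2 \ge R^2\,(I-\Pi^{(m)}_{[-R,R]})$ (since outside the cube at least one coordinate exceeds~$R$ in absolute value, so $\sum_\alpha Q_\alpha^2\ge R^2$ there), taking the trace against~$\rho$ yields $\energytotal(\rho)=\sum_\alpha\tr(Q_\alpha^2\rho)\ge R^2(1-\tr(\Pi^{(m)}_{[-R,R]}\rho))$. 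That bound points the wrong way too when the mass inside is large.

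The genuinely correct and simple route, which I would actually write, is: the event that \emph{all} coordinates lie in $[-R,R]$ is the complement of ``$\max_\alpha|Q_\alpha|>R$'', and $\{\max_\alpha|Q_\alpha|>R\}\subseteq\{\sum_\alpha Q_\alpha^2>R^2\}$, so by Markov's inequality applied to the nonnegative observable $\sum_\alpha(Q_\alpha^2+P_\alpha^2)$ one gets $1-\tr(\Pi^{(m)}_{[-R,R]}\rho)=\Pr[\max_\alpha|Q_\alpha|>R]\le \Pr[\sum_\alpha Q_\alpha^2>R^2]\le \energytotal(\rho)/R^2$. Therefore, if $\tr(\Pi^{(m)}_{[-R,R]}\rho)<1-\delta$ then $\delta<\energytotal(\rho)/R^2$, i.e.\ $R^2<\energytotal(\rho)/\delta$; contrapositively, whenever $R^2\ge\energytotal(\rho)/\delta$ the cube of radius~$R$ captures at least $1-\delta$ mass in position (and by the identical argument in momentum). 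This gives $\symradius^\delta(\rho)\le (\energytotal(\rho)/\delta)^{1/2}$, i.e.\ Eq.~\eqref{eq:energy1}. Finally, Eq.~\eqref{eq:energy2} follows immediately since $\energytotal(\rho)=\sum_{\alpha=1}^m\tr((Q_\alpha^2+P_\alpha^2)\rho)\le m\max_\alpha\tr((Q_\alpha^2+P_\alpha^2)\rho)=m\,\energy(\rho)$, so $\delta\,\symradius^\delta(\rho)^2\le\energytotal(\rho)\le m\,\energy(\rho)$, and dividing by~$m$ concludes. The only mild subtlety --- the ``obstacle'' --- is bookkeeping: making sure the argument is run in both the position and momentum bases because $\symradius^\delta$ is defined via the weaker ``min'' over the two, so one must verify the cube bound holds for whichever of the two achieves the infimum; this is handled by noting the Fourier transform is a symplectic rotation leaving $\energytotal$ invariant, so the momentum case reduces verbatim to the position case.

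\medskip

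\noindent\textbf{Proof sketch in final form.} Let $\delta>0$ and let $\rho$ be a state on $L^2(\mathbb{R})^{\otimes m}\otimes(\mathbb{C}^2)^{\otimes r}$. Fix $R>0$. The operator inequality $\sum_{\alpha=1}^m Q_\alpha^2 \ge R^2\bigl(I - \Pi^{(m)}_{[-R,R]}\bigr)$ holds because on the orthogonal complement of the range of $\Pi^{(m)}_{[-R,R]}$ at least one position coordinate has modulus exceeding $R$, whence $\sum_\alpha Q_\alpha^2\ge R^2$. Taking the trace against $\rho$ and using $\sum_\alpha P_\alpha^2\ge 0$ gives
\begin{align}
  \energytotal(\rho) \;\ge\; \tr\!\Bigl(\bigl(\textstyle\sum_{\alpha=1}^m Q_\alpha^2\bigr)\rho\Bigr) \;\ge\; R^2\Bigl(1 - \tr\bigl(\Pi^{(m)}_{[-R,R]}\rho\bigr)\Bigr)\, .
\end{align}
Hence if $R^2 \ge \energytotal(\rho)/\delta$ then $1 - \tr(\Pi^{(m)}_{[-R,R]}\rho) \le \delta$, i.e.\ $\tr(\Pi^{(m)}_{[-R,R]}\rho)\ge 1-\delta$. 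The identical computation in the momentum basis (using invariance of $\energytotal$ under the Fourier transform, a symplectic rotation) shows $\tr(\widehat{\Pi}^{(m)}_{[-R,R]}\rho)\ge 1-\delta$ for the same range of $R$. Thus every $R\ge (\energytotal(\rho)/\delta)^{1/2}$ satisfies the defining condition in Eq.~\eqref{eq:squeezingdeltadefinitionbasic}, so $\symradius^\delta(\rho)\le(\energytotal(\rho)/\delta)^{1/2}$, which is Eq.~\eqref{eq:energy1}. Finally,
\begin{align}
  \energytotal(\rho) = \sum_{\alpha=1}^m \tr\!\bigl((Q_\alpha^2+P_\alpha^2)\rho\bigr) \le m\max_{\alpha\in\{1,\dots,m\}}\tr\!\bigl((Q_\alpha^2+P_\alpha^2)\rho\bigr) = m\,\energy(\rho)\, ,
\end{align}
so $\delta\cdot\symradius^\delta(\rho)^2 \le \energytotal(\rho) \le m\,\energy(\rho)$, and dividing by $m$ yields Eq.~\eqref{eq:energy2}. $\hfill\Box$
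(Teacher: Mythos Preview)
Your final proof sketch is correct and essentially the same as the paper's: both establish $1-\tr(\Pi^{(m)}_{[-R,R]}\rho)\le\energytotal(\rho)/R^2$ (the paper via a union bound over coordinates plus Markov on each $Q_\alpha$, you via the equivalent operator inequality $\sum_\alpha Q_\alpha^2\ge R^2(I-\Pi^{(m)}_{[-R,R]})$), deduce $\symradius^\delta(\rho)\le(\energytotal(\rho)/\delta)^{1/2}$, and obtain Eq.~\eqref{eq:energy2} from $\energytotal(\rho)\le m\,\energy(\rho)$. One small slip in your exploratory narrative: the defining condition in Eq.~\eqref{eq:squeezingdeltadefinitionbasic} requires \emph{both} projections to capture mass $\ge 1-\delta$ (it is a $\min$, not an ``either/or''), but your final argument correctly verifies both.
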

\begin{proof}
  Claim~\eqref{eq:energy2} follows immediately from Claim~\eqref{eq:energy1} because we have by definition $\energytotal(\rho) \le m \cdot \energy(\rho)$.

  Let $R>0$. 
Let $(X_1,\ldots,X_m)$ be the vector of outcomes when measuring the commuting observables~$Q_1,\ldots,Q_m$ in the state~$\rho$.
Then
\begin{align}
\Pr\left[\exists \alpha\in \{1,\ldots,m\}: X_\alpha\not\in [-R,R]\right]
\leq \sum_{\alpha=1}^m \Pr\left[X_\alpha\not\in [-R,R]\right]
\end{align}
by the union bound. 
By Markov's inequality we have 
\begin{align}
 \Pr\left[X_\alpha\not\in [-R,R]\right]&\leq 
 \Pr\left[|X_\alpha|\geq R\right]\leq \ExpE[X_\alpha^2]/R^2\ .
\end{align}
But $\ExpE[X_\alpha^2]=\tr(Q_\alpha^2\rho)$, hence it follows that
\begin{align}
\tr\left(
\left(\Pi_{[-R,R]}^{\otimes m}\otimes I_{\mathbb{C}^2}^{\otimes r}\right)\rho\right)
&=\Pr\left[(X_1,\ldots,X_m)\in [-R,R]^m\right]\\
&\geq 1-\tr\left(\left(\sum_{\alpha=1}^m Q_\alpha^2\right)\rho\right)/R^2\\
&\geq 1-\energytotal(\rho)/R^2\ .
\end{align}
Analogous reasoning gives
\begin{align}
\tr\left(
\left(\widehat{\Pi}_{[-R,R]}^{\otimes m}\otimes I_{\mathbb{C}^2}^{\otimes r}\right)\rho\right)
&\geq 1-\energytotal(\rho)/R^2\ .
\end{align}
In particular, both quantities are greater than or equal to~$1-\delta$
for any $R>0$ satisfying
\begin{align}
R\geq \frac{\energytotal(\rho)^{1/2}}{\sqrt{\delta}}\ .
\end{align}
This implies that
\begin{align}
\symradius^\delta(\rho)\leq \frac{\energytotal(\rho)^{1/2}}{\sqrt{\delta}}\ .
\end{align}

\end{proof}

\section{Lower bounds on the amount of energy required}\label{sec:conversebounds}
In this section we derive  a lower bound on the maximal amount of energy of any member of a family of orthogonal states. This lower bound only depends on the size of the family, the number of modes and the the number of qubits. It is formulated in terms of the symmetric radius.

\begin{theorem}[Radius-dimension bound for families of orthonormal states on~$L^2(\mathbb{R})^{\otimes m}\otimes (\mathbb{C}^2)^{\otimes r}$] \label{thm:squeezingdimmulti-mode}
  Let~$d,m\in \mathbb{N}$ and~$r\in\mathbb{N}_0$. Let~$\{\phi_j\}_{j=0}^{d-1} \subset L^2(\mathbb{R})^{\otimes m} \otimes (\mathbb{C}^2)^{\otimes r}$ be an orthonormal family. Let~$\delta\in (0,1/9)$.
Then 
\begin{align}
    \max_{j \in \{0,\dots,d-1\}} \symradius^\delta(\phi_j) \ge \sqrt{\frac{\pi}{4}} \cdot \left(\frac{d(1 - 3 \sqrt{\delta})}{2^{r}}\right)^{1/(2m)}\, .
\end{align}
\end{theorem}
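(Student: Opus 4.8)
The plan is to show that if every member of the family is $\delta$-localized --- in both position and momentum --- inside a cube of side~$2R$, then that cube must enclose at least of order~$d/2^{r}$ orthogonal ``modes'', and to solve the resulting inequality for~$R$. Concretely, it suffices to prove that for every $R>\max_{j}\symradius^\delta(\phi_j)$ one has
\[
d\,(1-3\sqrt\delta)\ \leq\ \Big(\frac{4R^2}{\pi}\Big)^{m}\,2^{r}\ ,
\]
since this rearranges to $R\geq\sqrt{\pi/4}\,\big(d(1-3\sqrt\delta)/2^{r}\big)^{1/(2m)}$, and taking the infimum over such~$R$ (which equals $\max_{j}\symradius^\delta(\phi_j)$) yields the theorem. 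For $R>\symradius^\delta(\phi_j)$, monotonicity of $R\mapsto\tr(\Pi^{(m)}_{[-R,R]}\proj{\phi_j})$ and of $R\mapsto\tr(\widehat{\Pi}^{(m)}_{[-R,R]}\proj{\phi_j})$ in~$R$ guarantees, by the definition of $\symradius^\delta$, that both traces are $\geq 1-\delta$. Abbreviate $P:=\Pi^{(m)}_{[-R,R]}$ and $\widehat P:=\widehat{\Pi}^{(m)}_{[-R,R]}$.

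\textbf{Step 1: the cube contains boundedly many modes.} The operator $\widehat P P \widehat P=(P\widehat P)^{\dagger}(P\widehat P)$ is positive and trace-class, with $\tr(\widehat P P\widehat P)=\|P\widehat P\|_{\mathsf{HS}}^{2}$. Since $P=\Pi_{[-R,R]}^{\otimes m}\otimes I_{\mathbb{C}^2}^{\otimes r}$ and $\widehat P=\widehat{\Pi}_{[-R,R]}^{\otimes m}\otimes I_{\mathbb{C}^2}^{\otimes r}$, this factorizes as $\|P\widehat P\|_{\mathsf{HS}}^{2}=\big(\|\Pi_{[-R,R]}\widehat{\Pi}_{[-R,R]}\|_{\mathsf{HS}}^{2}\big)^{m}\cdot 2^{r}$. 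For the single-mode factor one uses that $\widehat{\Pi}_{[-R,R]}=\cF^{\dagger}\Pi_{[-R,R]}\cF$ has integral kernel $K(x,y)=\sin(R(x-y))/(\pi(x-y))$ (immediate from Eq.~\eqref{eq: integral def Fourier}); Plancherel's theorem gives $\int_{\mathbb{R}}|K(x,y)|^{2}\,dy=R/\pi$ for each~$x$, hence $\|\Pi_{[-R,R]}\widehat{\Pi}_{[-R,R]}\|_{\mathsf{HS}}^{2}=\int_{-R}^{R}(R/\pi)\,dx=2R^{2}/\pi\leq 4R^{2}/\pi$. Therefore $\tr(\widehat P P\widehat P)\leq (4R^2/\pi)^{m}\,2^{r}$; this is the continuous analogue of the statement that at most (area)$/2\pi$ orthonormal modes fit into a phase-space cube, with the qubit factors merely contributing a factor~$2^{r}$.

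\textbf{Step 2: counting the orthonormal states.} For each $j$, using that $P,\widehat P$ are projections with $\|P\phi_j\|^{2}=\tr(P\proj{\phi_j})\geq 1-\delta$ and $\|(I-\widehat P)\phi_j\|^{2}=1-\|\widehat P\phi_j\|^{2}\leq\delta$, one gets
\[
\langle\phi_j,\widehat P P\widehat P\,\phi_j\rangle=\|P\widehat P\phi_j\|^{2}\geq\big(\|P\phi_j\|-\|P(I-\widehat P)\phi_j\|\big)^{2}\geq\big(\sqrt{1-\delta}-\sqrt\delta\big)^{2}=1-2\sqrt{\delta(1-\delta)}\geq 1-3\sqrt\delta\ ,
\]
where $\delta<1/9$ ensures $\sqrt{1-\delta}>\sqrt\delta\geq 0$, so squaring preserves the inequality, and we used $2\sqrt{\delta(1-\delta)}\leq 2\sqrt\delta\leq 3\sqrt\delta$. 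Completing $\{\phi_j\}_{j=0}^{d-1}$ to an orthonormal basis and invoking positivity of $\widehat P P\widehat P$,
\[
d\,(1-3\sqrt\delta)\ \leq\ \sum_{j=0}^{d-1}\langle\phi_j,\widehat P P\widehat P\,\phi_j\rangle\ \leq\ \tr(\widehat P P\widehat P)\ \leq\ \Big(\frac{4R^2}{\pi}\Big)^{m}2^{r}\ ,
\]
which is exactly the inequality claimed in the first paragraph, completing the argument.

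\textbf{Main obstacle.} The only genuinely analytic input is Step~1: identifying the sinc kernel of the momentum-band projection $\widehat{\Pi}_{[-R,R]}$, computing $\|\Pi_{[-R,R]}\widehat{\Pi}_{[-R,R]}\|_{\mathsf{HS}}$ via Plancherel, and checking that it tensorizes cleanly across the $m$ modes while being inert on the $r$ qubit factors. Everything else is short: the elementary localization estimate for $\|P\widehat P\phi_j\|$, and the dimension count against the positive trace-class operator $\widehat P P\widehat P$. The one point requiring mild care is the passage from the infimum defining $\symradius^\delta$ to bona fide operator inequalities $\tr(\Pi^{(m)}_{[-R,R]}\proj{\phi_j})\geq1-\delta$ and $\tr(\widehat{\Pi}^{(m)}_{[-R,R]}\proj{\phi_j})\geq1-\delta$, which is handled by the monotonicity/limiting remark in the first paragraph.
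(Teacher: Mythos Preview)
Your proof is correct and follows essentially the same approach as the paper: both bound the trace of the positive operator obtained by sandwiching the position-box projection between momentum-box projections (you use $\widehat P P \widehat P$, the paper uses $P\widehat P P$, which is an immaterial swap), establish the per-state lower bound $1-3\sqrt\delta$ via the same triangle-inequality localization estimate, and sum over the orthonormal family. The only cosmetic differences are that the paper invokes Mercer's theorem on the sinc kernel to compute the trace while you compute the Hilbert--Schmidt norm directly via Plancherel (arguably more elementary, and in fact yielding the sharper constant $2R^2/\pi$ that you then discard), and that the paper first treats the case $m=1$, $r=0$ before tensoring, whereas you handle the general case in one pass.
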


\begin{proof} 
 We first show the claim for the special case~$m=1$ and~$r=0$.
  Set~$R = \max_{j \in \{0,\dots,d-1\}} \symradius^\delta(\phi_j)$. Define the operator~$K: L^2(\mathbb{R}) \rightarrow L^2(\mathbb{R}),\ f\mapsto K(f)$ with  
  \begin{align}
      K(f)(x) = \int k(x,y) f(y) dy\, ,
  \end{align} where we use the integral kernel~$k: \mathbb{R}^2 \rightarrow \mathbb{R}$ defined as 
  \begin{align}
      k(x,y) = \begin{dcases} 
          \frac{\sin(2R(x-y))}{\pi (x-y)} \cdot \chi_{[-R,R]}(x) \cdot \chi_{[-R,R]}(y) \quad\ \textrm{if}\quad x\neq y\\
          \frac{2R}{\pi} \qquad \qquad \textrm{else}
      \end{dcases}\, . 
  \end{align}
  Here~$\chi_{[-R,R]}$ denotes the characteristic function of the interval~$[-R,R]$.
  In particular the integral kernel~$k$ is compactly supported and its restriction to the set~$[-R,R]^2$ is symmetric, positive-definite (which can be seen using Bochner's theorem, see Ref.~\cite{reed1975methods}) and continuous. 
  By Mercer's theorem, see e.g. Ref.~\cite{riesz1990functional},~$K$ is trace class and its trace is  
  \begin{align} \label{eq:traceK}
      \tr\, K = \int k(x,x) dx = \frac{4R^2}{\pi}\, .
  \end{align}
  Define the operators~$\Pi = \Pi_{[-R,R]}$ and~$\widehat{\Pi} = \widehat{\Pi}_{[-R,R]}$.
  It is easy to check that (see e.g.~\cite[Eq.~(21)]{FourierUncertainty})
  \begin{align}
      \langle f, K f\rangle = \| \widehat{\Pi} \, \Pi f\|^2 \qquad \textrm{ for all} \qquad f \in L^2(\mathbb{R})\, , \label{eq:Kproj}
  \end{align} which is equivalent to the identity~$K = \Pi \, \widehat{\Pi} \,\Pi$.
  In particular,~$K$ is a positive semidefinite operator.
  Moreover, by definition of~$R$ we have~$\symradius^\delta(\phi_j) \le R~$ for all~$j \in \{0,\dots,d-1\}$ and thus
  \begin{align} \label{eq:boundproj}
      \|\Pi \phi_j\|^2 \ge 1 - \delta  \qquad \textrm{and} \qquad
      \|\widehat{\Pi} \phi_j\|^2 \ge 1 - \delta \qquad \textrm{for all} \qquad j \in \{0,\dots, d-1\}\, .
  \end{align}
  It follows that
  \begin{align}
      \| \widehat{\Pi} \, \Pi \phi_j \|^2 &=  \| \widehat{\Pi} (\phi_j - \Pi \phi_j) - \widehat{\Pi} \phi_j\|^2 \\
      &= \| \widehat{\Pi} (\phi_j - \Pi \phi_j)\|^2 -  2\mathsf{Re}\, \langle \widehat{\Pi} (\phi_j - \Pi \phi_j), \widehat{\Pi} \phi_j \rangle + \| \widehat{\Pi} \phi_j \|^2 \\
      &\ge \| \widehat{\Pi} \phi_j \|^2 - 2 |\langle \widehat{\Pi} (\phi_j - \Pi \phi_j), \widehat{\Pi} \phi_j\rangle|\\
      & \ge \| \widehat{\Pi} \phi_j \|^2 - 2 \|\widehat{\Pi} (\phi_j - \Pi \phi_j)\| \\
      &\ge \| \widehat{\Pi} \phi_j \|^2  - 2\|\phi_j - \Pi \phi_j\|\, , \label{eq:boundprojproj}
  \end{align}
  where we used the Cauchy-Schwarz inequality and~$\|\Pi \phi_j\|\le 1$ in the forth line. The fifth line follows from~$\|\widehat{\Pi}\| \le 1$. Note that Eq.~\eqref{eq:boundproj} together with the Pythagorean theorem implies that~$\|\phi_j - \Pi \phi_j\|^2\le \delta$. This together with~$\|\widehat{\Pi}\phi_j\|^2 \ge 1-\delta$ (see Eq.~\eqref{eq:boundproj}) and Eq.~\eqref{eq:boundprojproj} gives  
  \begin{align}
      \| \widehat{\Pi} \, \Pi \phi_j \|^2 &\ge 1 - \delta - 2 \sqrt{\delta} \ge 1 - 3\sqrt{\delta}\, . \label{eq:projproj2}
  \end{align} 
  Finally, using Eqs.~\eqref{eq:projproj2} and~\eqref{eq:Kproj} we find 
  \begin{align}
      d \cdot (1 - 3\sqrt{\delta}) &\le \sum_{j=0}^{d-1} \| \widehat{\Pi} \, \Pi \phi_j\|^2= \sum_{j=0}^{d-1} \langle \phi_j, K \phi_j \rangle \le  \tr\, K = \frac{4R^2}{\pi}\, .
  \end{align} The second inequality follows from the fact that~$K$ is positive semidefinite. The last identity follows from Eq.~\eqref{eq:traceK}.
  This implies the claim.

  Next, we prove the general case, i.e.,~$m\in \mathbb{N}$ and~$r \in \mathbb{N}_0$ arbitary.
  Again set~$R = \max_{j \in \{0,\dots,d-1\}} \symradius^\delta(\phi_j)$.
    
    Define the projections 
    \begin{align}
    \begin{matrix}
      \Pi^{(m)} &= & \Pi_{[-R,R]}^{\otimes m} \otimes I_{\mathbb{C}^2}^{\otimes r}\\
       \widehat{\Pi}^{(m)} & = & \widehat{\Pi}_{[-R,R]}^{\otimes m} \otimes I_{\mathbb{C}^2}^{\otimes r}
    \end{matrix}\, . \label{eq:defPim}
    \end{align}

It follows from the definition of~$R$ that  
 \begin{align}
    \|\Pi^{(m)} \phi_j\|^2 \ge 1 -\delta \qquad 
    \textrm{and} \qquad \|\widehat{\Pi}^{(m)} \phi_j\|^2\ge 1- \delta \quad \textrm{for all} \quad j\in \{0,\dots,d-1\},\, s \in \{1,\dots,m\}\, .
 \end{align} 
 Let~$K$ be the operator on~$L^2(\mathbb{R})$ used in the first part of the proof, i.e.,~$K = \Pi_{[-R,R]}\, \widehat{\Pi}_{[-R,R]} \, \Pi_{[-R,R]}$. Define the operator
 \begin{align}
    K^{(m)} = \Pi^{(m)} \, \widehat{\Pi}^{(m)} \, \Pi^{(m)} = K^{\otimes m} \otimes I_{\mathbb{C}^2}^{\otimes r}\, . \label{eq:defoverlineK}
 \end{align}
 Then also~$K^{(m)}$ is a trace-class and positive semidefinite operator and we have 
 \begin{align}
    \tr \,K^{(m)} &= \left(\tr\, K\right)^{ m} \cdot \left(\tr \,I_{\mathbb{C}^2}\right)^{r} \\
                        &=  \left( \frac{4R^2}{\pi}\right)^m \cdot 2^r \, . \label{eq:trL}
 \end{align}
 The claim then follows by the identical arguments as the special case~$m=1$ and~$r=0$.
\end{proof}

As a corollary to Theorem~\ref{thm:squeezingdimmulti-mode}, let us specialize to the case of a constant number of physical qubits, and $2^n$~elements in the orthonormal family of states (corresponding to $n$~encoded qubits). We then have the following.
\begin{corollary}[Maximum radius and energy in an~$n$-qubit encoding]\label{cor:constantnumberofmodessqueezinglowerbound}
Let~$\delta\leq 1/36$. Let~$\{\phi_j\}_{j=0}^{2^n-1} \subset L^2(\mathbb{R})^{\otimes m} \otimes (\mathbb{C}^2)^{\otimes r}$ be an orthonormal family consisting of~$2^n$~states. Set
\begin{align}
s(n)&:=\max_{j \in \{0,\dots,2^n-1\}} \symradius^\delta(\phi_j)\label{eq:snfirst}\\
E(n)&:=\max_{j \in \{0,\dots,2^n-1\}} \energy(\phi_j)\ .
\end{align}
Assume that~$r=O(1)$. 
 Then 
\begin{align}
s(n)& =\Omega(2^{n/(2m)})\ .\label{eq:snclaimmain}
\end{align}
The maximum energy is at least 
\begin{align}
E(n)&=\Omega(2^{n/m}/m)\ .\label{eq:enmbav}
\end{align}
In particular, we obtain
\begin{enumerate}[(i)]
\item 
$E(n)=\exp(\Omega(n))$ for~$m=\Theta(1)$.
\item
$E(n)=\exp(\Omega(n^{1-\alpha}))$ for~$m=\Theta(n^\alpha)$ with~$\alpha\in (0,1)$. 
 \item \label{it:energylower}
$E(n)=\Omega(1)$ if~$m=\Theta(n)$.
\end{enumerate}
\end{corollary}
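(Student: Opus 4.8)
The plan is to derive Corollary~\ref{cor:constantnumberofmodessqueezinglowerbound} as a direct specialization of Theorem~\ref{thm:squeezingdimmulti-mode} combined with Lemma~\ref{lem:symradiusenergy}. First I would fix any $\delta \le 1/36$; since $1/36 < 1/9$, Theorem~\ref{thm:squeezingdimmulti-mode} applies with $d = 2^n$. Substituting $d = 2^n$ into the bound of Theorem~\ref{thm:squeezingdimmulti-mode} gives
\begin{align}
s(n) = \max_{j\in\{0,\dots,2^n-1\}}\symradius^\delta(\phi_j)
\ge \sqrt{\frac{\pi}{4}}\cdot\left(\frac{2^n(1-3\sqrt{\delta})}{2^r}\right)^{1/(2m)}\ .
\end{align}
Since $r = O(1)$ is a constant and $\delta$ is a fixed constant with $1 - 3\sqrt{\delta} > 1 - 3/6 = 1/2 > 0$, the factor $\big((1-3\sqrt{\delta})/2^r\big)^{1/(2m)}$ is bounded below by a constant to the power $1/(2m)$, which is at least some constant (it tends to $1$ as $m\to\infty$ and is bounded below by its value at $m=1$ for all $m\ge 1$, once $1-3\sqrt\delta\ge 2^{-r}$, and in any case is $\Omega(1)$ uniformly in $m$ because the base lies in a fixed interval $(0,1]$ up to the constant $2^{-r}$; more carefully, for $c\in(0,1]$ one has $c^{1/(2m)}\ge c$). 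Hence $s(n) = \Omega\big((2^n)^{1/(2m)}\big) = \Omega(2^{n/(2m)})$, which is Claim~\eqref{eq:snclaimmain}.

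Next I would pass from the symmetric radius to the energy using Lemma~\ref{lem:symradiusenergy} (specifically Eq.~\eqref{eq:energy2}), which states $\delta\cdot\symradius^\delta(\rho)^2/m \le \energy(\rho)$ for any state $\rho$ on $L^2(\mathbb{R})^{\otimes m}\otimes(\mathbb{C}^2)^{\otimes r}$. Applying this to each $\phi_j$ and taking the maximum over $j$ (noting that the maximum of $\energy(\phi_j)$ is achieved at least at the same index where, say, $\symradius^\delta$ is maximized, and in any case $\max_j \energy(\phi_j) \ge \delta\cdot \max_j\symradius^\delta(\phi_j)^2/m = \delta\cdot s(n)^2/m$) yields
\begin{align}
E(n) \ge \frac{\delta}{m}\,s(n)^2 = \Omega\!\left(\frac{2^{n/m}}{m}\right)\ ,
\end{align}
which is Claim~\eqref{eq:enmbav}. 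Here I used that $s(n)^2 = \Omega(2^{n/m})$ from the previous paragraph and that $\delta$ is a fixed positive constant.

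Finally, the three itemized consequences follow by plugging the stated scalings of $m$ into $E(n) = \Omega(2^{n/m}/m)$: for $m = \Theta(1)$ we get $2^{n/m} = 2^{\Omega(n)}$ and the $1/m$ factor is constant, giving $E(n) = \exp(\Omega(n))$; for $m = \Theta(n^\alpha)$ with $\alpha\in(0,1)$ we get $n/m = \Theta(n^{1-\alpha})$ so $2^{n/m} = \exp(\Theta(n^{1-\alpha}))$, and dividing by the polynomially-growing $m = \Theta(n^\alpha)$ still leaves $E(n) = \exp(\Omega(n^{1-\alpha}))$ since exponential growth dominates polynomial decay; and for $m = \Theta(n)$ we get $2^{n/m} = \Theta(1)$ and $1/m = \Theta(1/n)$, which only gives the trivial lower bound $E(n) = \Omega(1/n)$ — but one can do slightly better, or simply note that $\symradius^\delta(\phi_j) \ge \symradius^\delta(\proj{\vac})$-type arguments or the trivial bound $\energy(\rho)\ge \Omega(1)$ for any normalized state (since the harmonic oscillator ground state energy is positive) give $E(n) = \Omega(1)$ directly, yielding item~\eqref{it:energylower}. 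The only mild subtlety — and the step I would be most careful about — is the uniform-in-$m$ handling of the constant $\big((1-3\sqrt\delta)/2^r\big)^{1/(2m)}$ and of $\delta/m$ in the asymptotic claims: one must make sure the implicit constants in the $\Omega(\cdot)$ do not silently depend on $m$ in a way that invalidates the final per-regime statements, which is why item~\eqref{it:energylower} gives only $\Omega(1)$ rather than anything stronger.
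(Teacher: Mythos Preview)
Your proposal is correct and follows essentially the same route as the paper: apply Theorem~\ref{thm:squeezingdimmulti-mode} with $d=2^n$, absorb the constant $((1-3\sqrt\delta)/2^r)^{1/(2m)}$ into the $\Omega(\cdot)$ via the monotonicity bound $c^{1/(2m)}\ge c$ for $c\in(0,1]$ (the paper does this with an explicit computation yielding $C=\sqrt{\pi/4}\,2^{-(r+1/2)}$), then invoke Lemma~\ref{lem:symradiusenergy} for the energy bound; for item~\eqref{it:energylower} the paper likewise observes that $2^{n/m}/m$ only gives $\Omega(1/n)$ and instead uses that the harmonic-oscillator ground-state energy forces $\energy(\rho)\ge 1$ for every state, exactly as you sketch.
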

\begin{proof}
 Defining 
\begin{align}
\nu:=(1-3\sqrt{\delta})/2^r 
\end{align}
we have 
\begin{align}
\nu\geq 2^{-(r+1)}\label{eq:nulowerboundv}
\end{align}
 for any~$\delta\leq 1/36$, the bound  of Theorem~\ref{thm:squeezingdimmulti-mode} for~$d=2^n$ 
implies that 
\begin{align}
    \max_{j \in \{0,\dots,2^n-1\}} \symradius^\delta(\phi_j) \ge \sqrt{\frac{\pi}{4}} \cdot  2^{((n-r)-(r+1))/(2m)}\ .
\end{align}
Because 
$2^{((n-r)-(r+1))/(2m)}=2^{n/(2m)}\cdot 2^{-(r+1/2)/m}\geq 2^{n/(2m)}\cdot 2^{-(r+1/2)}$ 
we obtain
\begin{align}
    \max_{j \in \{0,\dots,2^n-1\}} \symradius^\delta(\phi_j) 
    &\geq C\cdot 2^{n/(2m)}
\end{align}
where~$C=\sqrt{\frac{\pi}{4}}2^{-(r+1/2)}$. This implies Claim~\eqref{eq:snclaimmain} for~$r=O(1)$. Claim~\eqref{eq:enmbav} follows from Claim~\eqref{eq:snclaimmain} and Lemma~\ref{lem:symradiusenergy}.
We are left to show Claim~\eqref{it:energylower}. We note that in the regime $m = \Theta(n)$ Eq.~\eqref{eq:enmbav} implies that $E(n) = \Omega(1/n)$. 
A tighter bound can be derived from the fact that 
the state $\ket{\Phi^{(0)}}= \ket{\vac}^{\otimes m} \otimes \ket{0}^{\otimes r}$ is the unique ground state of the Hamiltonian $\sum_{\alpha=1}^m (Q_\alpha^2 + P_\alpha^2)$ with $\energytotal(\Phi^{(0)}) = m$. 
This implies the claim as 
\begin{align}
  E(n) \ge \max_{j \in \{0,\dots, 2^n -1\}} \energytotal(\phi_j)/m  \ge \energytotal(\Phi^{(0)})/m = 1\, .
\end{align}
\end{proof}

\bibliographystyle{unsrturl}
\bibliography{q}

\begin{thebibliography}{10}

\bibitem{Kitaev_1997}
Alexei~Yu Kitaev.
\newblock Quantum computations: algorithms and error correction.
\newblock {\em Russian Mathematical Surveys}, 52(6):1191, Dec 1997.
\newblock \href {https://doi.org/10.1070/RM1997v052n06ABEH002155} {\path{doi:10.1070/RM1997v052n06ABEH002155}}.

\bibitem{knill2004faulttolerantpostselectedquantumcomputation}
Emanuel Knill.
\newblock Fault-tolerant postselected quantum computation: Threshold analysis, Apr 2004.
\newblock \href {https://arxiv.org/abs/quant-ph/0404104} {\path{arXiv:quant-ph/0404104}}.

\bibitem{BravyiKitaevMagic}
Sergey Bravyi and Alexei Kitaev.
\newblock Universal quantum computation with ideal clifford gates and noisy ancillas.
\newblock {\em Phys. Rev. A}, 71:022316, Feb 2005.
\newblock \href {https://doi.org/10.1103/PhysRevA.71.022316} {\path{doi:10.1103/PhysRevA.71.022316}}.

\bibitem{KnillLaflammeMilburn2001}
Emanuel Knill, Raymond Laflamme, and Gerard~J. Milburn.
\newblock A scheme for efficient quantum computation with linear optics.
\newblock {\em Nature}, 409(6816):46--52, Jan 2001.
\newblock \href {https://doi.org/10.1038/35051009} {\path{doi:10.1038/35051009}}.

\bibitem{cerfadamikwiat98}
Nicolas~J. Cerf, Chris Adami, and Paul~G. Kwiat.
\newblock Optical simulation of quantum logic.
\newblock {\em Phys. Rev. A}, 57:R1477--R1480, Mar 1998.
\newblock \href {https://doi.org/10.1103/PhysRevA.57.R1477} {\path{doi:10.1103/PhysRevA.57.R1477}}.

\bibitem{KokMunroNemotoRaphDowling07}
Pieter Kok, W.~J. Munro, Kae Nemoto, T.~C. Ralph, Jonathan~P. Dowling, and G.~J. Milburn.
\newblock Linear optical quantum computing with photonic qubits.
\newblock {\em Rev. Mod. Phys.}, 79:135--174, Jan 2007.
\newblock \href {https://doi.org/10.1103/RevModPhys.79.135} {\path{doi:10.1103/RevModPhys.79.135}}.

\bibitem{AaronsonArkhipov}
Scott Aaronson and Alex Arkhipov.
\newblock The computational complexity of linear optics.
\newblock In {\em Proceedings of the Forty-Third Annual ACM Symposium on Theory of Computing}, STOC '11, page 333–342, New York, NY, USA, Jun 2011. Association for Computing Machinery.
\newblock \href {https://doi.org/10.1145/1993636.1993682} {\path{doi:10.1145/1993636.1993682}}.

\bibitem{madsen2022quantum}
Lars~S. Madsen, Fabian Laudenbach, Mohsen~Falamarzi Askarani, Fabien Rortais, Trevor Vincent, Jacob F.~F. Bulmer, Filippo~M. Miatto, Leonhard Neuhaus, Lukas~G. Helt, Matthew~J. Collins, Adriana~E. Lita, Thomas Gerrits, Sae~Woo Nam, Varun~D. Vaidya, Matteo Menotti, Ish Dhand, Zachary Vernon, Nicolás Quesada, and Jonathan Lavoie.
\newblock Quantum computational advantage with a programmable photonic processor.
\newblock {\em Nature}, 606(7912):75--81, Jun 2022.
\newblock \href {https://doi.org/10.1038/s41586-022-04725-x} {\path{doi:10.1038/s41586-022-04725-x}}.

\bibitem{LloydBraunstein99}
Seth Lloyd and Samuel~L. Braunstein.
\newblock Quantum computation over continuous variables.
\newblock {\em Phys. Rev. Lett.}, 82:1784--1787, Feb 1999.
\newblock \href {https://doi.org/10.1103/PhysRevLett.82.1784} {\path{doi:10.1103/PhysRevLett.82.1784}}.

\bibitem{chabaud2024bcomplexity}
Ulysse Chabaud, Michael Joseph, Saeed Mehraban, and Arsalan Motamedi.
\newblock Bosonic quantum computational complexity, Oct 2024.
\newblock \href {https://arxiv.org/abs/2410.04274} {\path{arXiv:2410.04274}}.

\bibitem{gkp}
Daniel Gottesman, Alexei Kitaev, and John Preskill.
\newblock Encoding a qubit in an oscillator.
\newblock {\em Phys. Rev. A}, 64:012310, Jun 2001.
\newblock \href {https://doi.org/10.1103/PhysRevA.64.012310} {\path{doi:10.1103/PhysRevA.64.012310}}.

\bibitem{hastrupmikkelmen21}
Jacob Hastrup, Mikkel~V. Larsen, Jonas~S. Neergaard-Nielsen, Nicolas~C. Menicucci, and Ulrik~L. Andersen.
\newblock Unsuitability of cubic phase gates for non-clifford operations on gottesman-kitaev-preskill states.
\newblock {\em Phys. Rev. A}, 103:032409, Mar 2021.
\newblock \href {https://doi.org/10.1103/PhysRevA.103.032409} {\path{doi:10.1103/PhysRevA.103.032409}}.

\bibitem{Eickbusch_2022}
Alec Eickbusch, Volodymyr Sivak, Andy~Z. Ding, Salvatore~S. Elder, Shantanu~R. Jha, Jayameenakshi Venkatraman, Baptiste Royer, S.~M. Girvin, Robert~J. Schoelkopf, and Michel~H. Devoret.
\newblock Fast universal control of an oscillator with weak dispersive coupling to a qubit.
\newblock {\em Nat. Phys.}, 18(12):1464–1469, Oct 2022.
\newblock \href {https://doi.org/10.1038/s41567-022-01776-9} {\path{doi:10.1038/s41567-022-01776-9}}.

\bibitem{CampagneEikbushetal20}
Philippe Campagne-Ibarcq, Alec Eickbusch, Steven Touzard, Evan Zalys-Geller, Nicholas~E. Frattini, Volodymyr~V. Sivak, Philip Reinhold, Shruti Puri, Shyam Shankar, Robert~J. Schoelkopf, Luigi Frunzio, Mazyar Mirrahimi, and Michel~H. Devoret.
\newblock Quantum error correction of a qubit encoded in grid states of an oscillator.
\newblock {\em Nature}, 584:368--372, Aug 2020.
\newblock \href {https://doi.org/10.1038/s41586-020-2603-3} {\path{doi:10.1038/s41586-020-2603-3}}.

\bibitem{liu2024hybridoscillatorqubitquantumprocessors}
Yuan Liu, Shraddha Singh, Kevin~C. Smith, Eleanor Crane, John~M. Martyn, Alec Eickbusch, Alexander Schuckert, Richard~D. Li, Jasmine Sinanan-Singh, Micheline~B. Soley, Takahiro Tsunoda, Isaac~L. Chuang, Nathan Wiebe, and Steven~M. Girvin.
\newblock Hybrid oscillator-qubit quantum processors: Instruction set architectures, abstract machine models, and applications, Jul 2024.
\newblock \href {https://arxiv.org/abs/2407.10381} {\path{arXiv:2407.10381}}.

\bibitem{brenner2024complexity}
Lukas Brenner, Libor Caha, Xavier Coiteux-Roy, and Robert Koenig.
\newblock The complexity of {G}ottesman-{K}itaev-{P}reskill states, Oct 2024.
\newblock \href {https://arxiv.org/abs/2410.19610} {\path{arXiv:2410.19610}}.

\bibitem{cliffordshybrid2025}
Lukas Brenner, Beatriz Dias, and Robert Koenig.
\newblock Qubit-oscillator-based gate implementations for approximate {G}ottesman-{K}itaev-{P}reskill codes.
\newblock \href {https://arxiv.org/abs/2509.15707} {\path{arXiv:2509.15707}}.

\bibitem{cliffordslinearoptics2025}
Lukas Brenner, Beatriz Dias, and Robert Koenig.
\newblock Composable logical gate error in approximate quantum error correction: reexamining gate implementations in {G}ottesman-{K}itaev-{P}reskill codes, Sep 2025.
\newblock \href {https://arxiv.org/abs/2509.14658} {\path{arXiv:2509.14658}}.

\bibitem{brenner2024factoring}
Lukas Brenner, Libor Caha, Xavier Coiteux-Roy, and Robert Koenig.
\newblock Factoring an integer with three oscillators and a qubit, Dec 2024.
\newblock \href {https://arxiv.org/abs/2412.13164} {\path{arXiv:2412.13164}}.

\bibitem{upreti2025boundingCbosonic}
Varun Upreti and Ulysse Chabaud.
\newblock Bounding the computational power of bosonic systems, Mar 2025.
\newblock \href {https://arxiv.org/abs/2503.03600} {\path{arXiv:2503.03600}}.

\bibitem{dohonostark}
David~L. Donoho and Philip~B. Stark.
\newblock Uncertainty principles and signal recovery.
\newblock {\em SIAM Journal on Applied Mathematics}, 49(3):906--931, 1989.
\newblock \href {https://doi.org/10.1137/0149053} {\path{doi:10.1137/0149053}}.

\bibitem{reed1975methods}
Michael Reed and Barry Simon.
\newblock {\em Methods of Modern Mathematical Physics. {II}: Fourier Analysis, Self-Adjointness}, volume~2.
\newblock Academic Press, New York, 1975.

\bibitem{riesz1990functional}
Frigyes Riesz and Béla Sz.-Nagy.
\newblock {\em Functional Analysis}.
\newblock Dover Publications, New York, 1990.
\newblock Reprint of the 1955 edition by Frederick Ungar Publishing.

\bibitem{FourierUncertainty}
David Slepian and Henry~O. Pollak.
\newblock Prolate {S}pheroidal {W}ave {F}unctions, {F}ourier {A}nalysis and {U}ncertainty — {I}.
\newblock {\em The Bell System Technical Journal}, 40(1):43--63, 1961.
\newblock \href {https://doi.org/10.1002/j.1538-7305.1961.tb03976.x} {\path{doi:10.1002/j.1538-7305.1961.tb03976.x}}.

\end{thebibliography}

\end{document}